\theoremstyle{plain}
\newtheorem{theorem}{Theorem}[section]
\newtheorem{lemma}[theorem]{Lemma}
\DeclareMathOperator{\pr}{\mathbb P}
\DeclareMathOperator{\E}{\mathbb E}
\DeclareMathOperator{\ind}{\mathbb I}
\newcommand{\Real}{\mathbb R}
\newcommand{\NatInt}{\mathbb N}
\newcommand{\CalO}{\mathcal O}
\newcommand{\CalH}{\mathcal H}
\newcommand{\CalX}{\mathcal X}
\newcommand{\CalP}{\mathcal P}
\newcommand{\CalU}{\mathcal U}
\newcommand{\CalK}{\mathcal K}
\newcommand{\BX}{\bold X}
\newcommand{\BI}{\bold I}
\newcommand{\Bx}{\bold x}
\newcommand{\Bs}{\bold s}
\newcommand{\BFx}{\bold x}
\newcommand{\BB}{\bold B}
\newcommand{\BU}{\bold U}
\newcommand{\Bu}{\bold u}
\newcommand{\BK}{\bold K}
\newcommand{\im}{\text{i}}
\newcommand{\BPhi}{\boldsymbol{\Phi}}
\newcommand{\CalF}{\mathcal{F}}
\newcommand{\CalW}{\mathcal{W}}
\newcommand{\Bomega}{\boldsymbol{\omega}}
\newtheorem{corollary}{Corollary}
\theoremstyle{remark}
\newtheorem{definition}[theorem]{Definition}
\begin{document}

\begin{frontmatter}
\title{The BdryMat\'ern GP: Reliable incorporation of boundary information on irregular domains for Gaussian process modeling}
\runtitle{BdryMat\'ern GP}

\begin{aug}
\author[A]{\fnms{Liang}~\snm{Ding}\ead[label=e1]{liang\_ding@fudan.edu.cn}},
\author[B]{\fnms{Simon}~\snm{Mak}\ead[label=e2]{sm769@duke.edu}}
\and
\author[C]{\fnms{C. F. Jeff}~\snm{Wu}\ead[label=e3]{jeffwu@cuhk.edu.cn}}
\address[A]{School of Data Science,
Fudan University\printead[presep={ ,\ }]{e1}}

\address[B]{Department of Statistical Science,
Duke University\printead[presep={,\ }]{e2}}
\address[C]{School of Data Science,
The Chinese University of Hong Kong (Shenzhen)\printead[presep={,\ }]{e3}}
\end{aug}

\begin{abstract}
Gaussian processes (GPs) are broadly used as surrogate models for expensive computer simulators of complex phenomena. However, a key bottleneck is that its training data are generated from this expensive simulator and thus can be highly limited. A promising solution is to supplement the learning model with boundary information from scientific knowledge. However, despite recent work on boundary-integrated GPs, such models largely cannot accommodate boundary information on irregular (i.e., non-hypercube) domains, and do not provide sample path smoothness control or approximation error analysis, both of which are important for reliable surrogate modeling. We thus propose a novel BdryMat\'ern GP modeling framework, which can reliably integrate Dirichlet, Neumann and Robin boundaries on an irregular connected domain with a boundary set that is twice-differentiable almost everywhere. Our model leverages a new BdryMat\'ern covariance kernel derived in path integral form via a stochastic partial differential equation formulation. Similar to the GP with Mat\'ern kernel, we prove that sample paths from the BdryMat\'ern GP satisfy the desired boundaries with smoothness control on its derivatives. We further present an efficient approximation procedure for the BdryMat\'ern kernel using finite element modeling with rigorous error analysis. Finally, we demonstrate the effectiveness of the BdryMat\'ern GP in a suite of numerical experiments on incorporating broad boundaries on irregular domains.

\end{abstract}

\begin{keyword}[class=MSC]
\kwd[Primary ]{62M40}
\kwd[; secondary ]{62G08}
\end{keyword}

\begin{keyword}
\kwd{Bayesian Nonparametrics}
\kwd{Computer Experiments}
\kwd{Gaussian Processes}
\kwd{Partial Differential Equations}
\kwd{Uncertainty Quantification}
\end{keyword}

\end{frontmatter}

\section{Introduction}

With groundbreaking advances in scientific modeling and computing, complex phenomena (e.g., rocket propulsion and particle collisions) can now be reliably simulated via computer code, which typically solve a sophisticated system of partial differential equations (PDEs). Such ``computer experiments'' have had demonstrated success in advancing broad scientific and engineering applications, including particle physics \citep{sengupta2025hybrid,liyanage2022efficient}, aerospace engineering \citep{mak2018efficient,miller2024expected} and robotics \citep{choi2021use,liu2025quip}. A key bottleneck of such virtual experiments, however, is that each run demands a large amount of computing power. For example, a single simulation of a heavy-ion particle collision and its subsequent subatomic interactions can require over thousands of CPU hours \citep{ji2024conglomerate}. Given a limited computational budget, this makes the exploration of the simulation response surface over an input space $\mathcal{X}$ a highly challenging task.

A proven solution is probabilistic surrogate modeling \citep{gramacy2020surrogates}. The idea is simple but effective. First, one runs the expensive simulator (denoted $f$) on a set of $n$ designed input points on $\mathcal{X} \subseteq \mathbb{R}^d$, where $\mathcal{X}$ is the compact domain of its $d$ input parameters. Second, using such data, one trains a probabilistic model that can predict with uncertainty the simulator output $f(\mathbf{x}_{\rm new})$ at an untested input $\mathbf{x}_{\rm new} \in \mathcal{X}$, thus bypassing an expensive simulation run. Gaussian processes (GPs; \citep{gramacy2020surrogates}) and their extensions are widely used as surrogate models as they provide a closed-form predictive distribution on $f(\cdot)$, which facilitates timely downstream scientific decision-making, e.g., optimization \citep{chen2024hierarchical} and inverse problems \citep{ehlers2024bayesian}. In practice, one faces a further obstacle of limited sample sizes (i.e., small $n$) for surrogate training, due to the costly nature of the expensive computer simulator. For complex applications, e.g., in aerospace engineering \citep{narayanan2024misfire}, this can restrict the experimenter to only tens of runs over the domain $\mathcal{X}$, which results in mediocre surrogate performance.

To address this, recent work has explored the integration of known ``physics'' from scientific knowledge within surrogate models; this field of ``physics-integrated machine learning'' \citep{willard2020integrating} is a rising and promising area. For GP surrogates, this includes the incorporation of known shape constraints \citep{WangBerger16,Golchi15}, mechanistic equations \citep{Wheeler14}, manifold embeddings \citep{li2023additive,seshadri2019dimension} and boundary information \citep{tan2018gaussian,vernon2019known,jackson2023efficient,ding2019bdrygp}; we focus on the last topic in this paper. Here, boundary information refers to known information on $f$ along the boundaries of its domain $\mathcal{X}$. Such knowledge may be elicited via a careful analysis of the underlying PDE system or via scientific intuition. Take, e.g., the bending of a metal beam under a fixed force, and consider its displacement $f(\mathbf{x})$ as either beam inverse thickness ($x_1$) and beam length ($x_2$) varies. As thickness grows large (i.e., $x_1 \rightarrow 0$) or length becomes small (i.e., $x_2 \rightarrow 0$), intuition suggests that its displacement $f(\mathbf{x})$ should approach zero; such boundary information can be integrated for surrogate training. Similar information can be elicited in the simulation of viscous flows \cite{white2006viscous}, which are used in climatology and high energy physics. Such flows are dictated by the complex Navier-Stokes equations \cite{temam2001navier}, but at the limits of inputs (e.g., zero viscosity or fluid incompressibility), these equations may admit closed-form solutions \cite{kiehn2001some,humphrey2016introduction}. The integration of this boundary information can greatly improve surrogate modeling with limited simulation runs.

There is a growing literature on integrating boundary information within GPs. Here, Dirichlet, Neumann and Robin boundaries (see \cite{zwillinger2021handbook}) refer to knowledge of $f$, its derivative, or a linear combination of both along the boundaries of $\mathcal{X}$, respectively. \cite{tan2018gaussian, li2022improving} proposed a boundary-modified GP, which modifies the mean and covariance structure of a stationary GP to incorporate Dirichlet boundaries on discrete points. \cite{vernon2019known} and \cite{jackson2023efficient} adopted a projection-based approach for integrating Dirichlet boundaries on axis-symmetric and/or axis-parallel boundaries on a hypercube domain $\mathcal{X}$. \cite{gulian2022gaussian,solin2019know,solin2020hilbert} advocated for a low-rank kernel approximation strategy, where its underlying basis functions explicitly satisfy the prescribed Dirichlet boundary conditions. Finally, \cite{dalton2024boundary} makes use of so-called ``approximate distance functions'' (ADFs) for constructing kernels that can integrate Dirichlet, Robin or Neumann boundaries on non-hypercube domains.

Such models, however, have several limitations. First, aside from \cite{dalton2024boundary}, they consider only the integration of Dirichlet boundaries and not Neumann or Robin boundaries, which can often be elicited in applications \cite{dalton2024boundary}. Second, existing models do not provide a characterization of smoothness (i.e., differentiability) for its resulting sample paths. As such, they do not permit \textit{smoothness control} of the fitted probabilistic model, which is important for effective surrogate modeling \cite{santner2003}. Finally, the input space $\mathcal{X}$ can be ``irregular'', i.e., it takes the form of a non-hypercube domain; see Figure \ref{fig:task} (left) for an example. Such irregular domains are broadly encountered in physical science applications, including glaciology \citep{pratola2017design} and fluid dynamics \cite{mak2018efficient,dalton2024boundary}. Incorporating boundary information on irregular domains requires a careful approximation approach with rigorous \textit{error analysis} for reliable surrogate modeling, which existing methods do not provide. For modern applications that feature rich boundary information on irregular input spaces, there is thus a need for a theoretically grounded GP framework that reliably integrates such information with smoothness and approximation error control for effective surrogate modeling.

The proposed BdryMat\'ern GP targets such a framework (see Figure \ref{fig:task} for a visualization). Recall that a key appeal of the Mat\'ern GP, i.e., a GP with (isotropic) Mat\'ern kernel, is its control of sample path differentiability via a smoothness parameter $\nu$ \citep{stein1999}. To extend this for the boundary-integrated setting, we begin with the stochastic partial differential equation (SPDE) representation of a Mat\'ern GP \citep{LindgrenRueLindstrom11}, and impose the known boundary information on domain $\mathcal{X}$ as a constraint on this SPDE system. Provided $\mathcal{X}$ is connected with a boundary set that is twice-differentiable almost everywhere, we can then solve this system to derive the proposed BdryMat\'ern covariance kernel in path integral form using the well-known Feynman-Kac formula \citep{ito1957fundamental,stroock1971diffusion,papanicolaou1990probabilistic} for elliptical PDEs. With this new BdryMat\'ern kernel, we prove that sample paths from the BdryMat\'ern GP enjoy similar smoothness control as the Mat\'ern GP while satisfying the desired boundaries. We further present an efficient approach for approximating the BdryMat\'ern kernel on an irregular domain $\mathcal{X}$ via finite element modeling (FEM; \cite{huebner2001finite}) with rigorous error analysis. We finally demonstrate in a suite of numerical experiments the effectiveness of the BdryMat\'ern GP in reliably integrating broad boundary information on irregular domains.

This paper is organized as follows: Section \ref{sec:GP_review} provides an overview of GPs and existing work on boundary-integrated GPs. Section \ref{sec:Bdrymat_SPDE} derives the new BdryMat\'ern kernel in path integral form, and proves smoothness control on sample paths from its corresponding BdryMat\'ern GP. Section \ref{sec:kernel_approx} outlines a finite-element-method approach for kernel approximation with supporting error analysis. Section \ref{sec:tensor_mat} presents a tensor form of the BdryMat\'ern GP that provides closed-form kernels on the unit hypercube domain. Section \ref{sec:numeric} demonstrates the effectiveness of the BdryMat\'ern GP in numerical experiments, and Section \ref{sec:conclusion} concludes the paper.

\begin{figure}[!t]
\begin{center}
\includegraphics[width=\textwidth]
{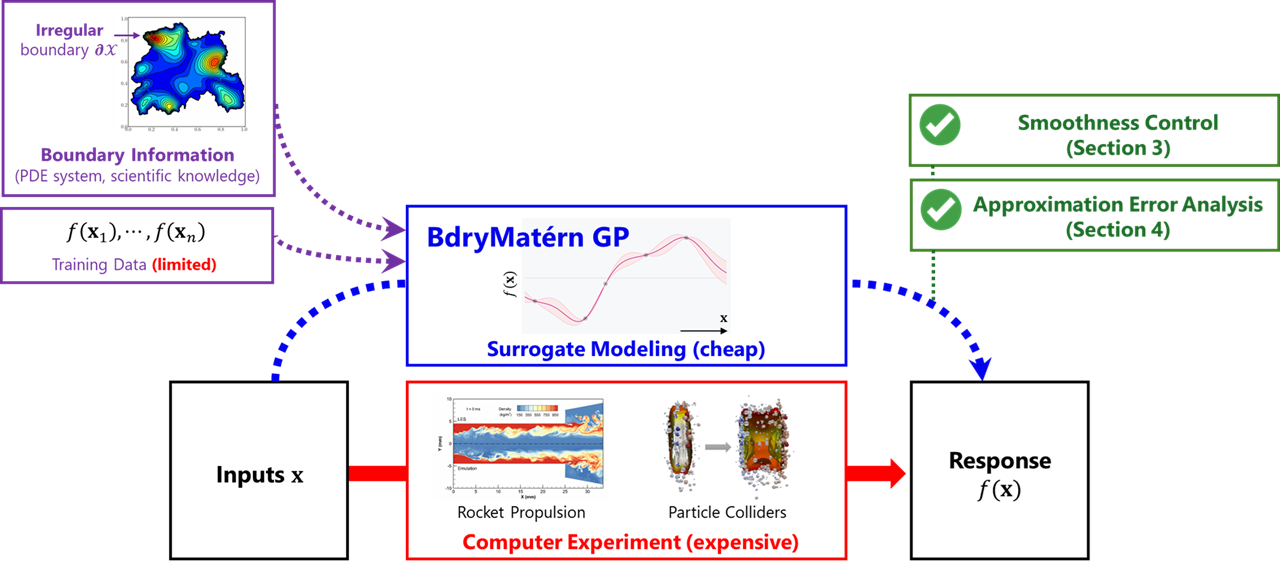}
\end{center}
\caption{Visualizing the proposed BdryMat\'ern GP for reliable incorporation of boundary information for probabilistic surrogate modeling. Figures in this workflow adapted from \cite{dalton2024boundary}, \cite{mak2018efficient} and \cite{ji2024graphical}.}
\label{fig:task}
\end{figure}

\section{Background \& Motivation}
\label{sec:GP_review}
We first provide in this section a brief overview on GP modeling, the isotropic Mat\'ern kernel and its representation as an SPDE. We then review existing GP models that integrate boundaries, and their potential limitations.

\subsection{GP Modeling and the Mat\'ern Kernel}
Let $\Bx \in \mathcal{X}\subseteq\Real^d$ be an input vector lying on a compact domain $\mathcal{X}$, and let $f(\Bx)$ denote its black-box output from the expensive computer simulator. We adopt the following Gaussian process model on $f$:
\begin{equation}
    f(\cdot) \sim \text{GP}\{\mu(\Bx),k(\Bx,\Bx')\}.
\end{equation}
Here, $\mu(\Bx)$ is its mean function, and $k(\Bx,\Bx')$ is its covariance function or kernel. With little prior information on $f$, $\mu(\Bx)$ is typically specified as a constant $\mu$ (to be estimated from data), and $k(\Bx,\Bx')$ is selected to ensure a desirable degree of smoothness for GP sample paths (e.g., the Mat\'ern kernel later). Given known boundaries on $f$, however, both $\mu(\Bx)$ and $k(\Bx,\Bx')$ should be carefully specified so that its sample paths satisfy such boundaries. We will later denote this \textit{boundary-integrated} mean and covariance function as $\mu_{\mathcal{B}}(\mathbf{x})$ and $k_{\mathcal{B}}(\Bx,\Bx')$, respectively.


Next, suppose the computer simulator is run at a set of $n$ designed input points $\BX =\{\Bx_1,\cdots,\Bx_n\}\subset \mathcal{X}$, yielding observations $f(\BX)=[f(\Bx_1),\cdots,f(\Bx_n)]^T$. In what follows, we presume that the simulator $f$ is \textit{deterministic}, meaning the data $f(\BX)$ are observed without noise. This is often assumed in the computer experiments literature \citep{santner2003}, when the simulator solves a deterministic PDE system and returns the same solution $f(\Bx)$ given the same inputs $\Bx$. One can easily accommodate for Gaussian simulation noise via the use of a nugget term \citep{peng2014choice} in the below predictive equations.

Conditional on data $f(\BX)$, the posterior predictive distribution of $f(\Bx_{\rm new})$ at an unevaluated input point $\Bx_{\rm new}$ takes the form of a Gaussian distribution $[f(\mathbf{x}_{\rm new})|f(\mathbf{X})] \sim \mathcal{N}\{\hat{f}_n(\Bx_{\rm new}),k_n(\Bx_{\rm new},\Bx_{\rm new})\}$, with posterior mean and variance given by:
\begin{align}
\begin{split}
    \hat{f}_n(\Bx)&=\mu(\Bx)+\mathbf{k}(\Bx,\BX)\mathbf{K}^{-1}(\BX,\BX)\left[f(\BX)-\mu(\BX)\right],\\
    k_n(\Bx,\Bx')&=k(\Bx,\Bx')-\mathbf{k}(\Bx,\BX)\mathbf{K}^{-1}(\BX,\BX)\mathbf{k}(\BX,\Bx').
    \end{split}
    \label{eq:kriging}
\end{align}
Here, $\mathbf{k}(\BX,\Bx_{\rm new}) = [k(\Bx_{\rm new},\Bx_1),\cdots,k(\Bx_{\rm new},\Bx_n)]^T$ is the covariance vector between the design points $\BX$ and a new point $\Bx_{\rm new}$, $k(\Bx,\BX) = k(\BX,\Bx)^T$, and $\mathbf{K}(\BX,\BX) = {{[k(\Bx_i},\Bx_j)]_{i=1}^n}_{j=1}^n$ is the covariance matrix over design points. Equation \eqref{eq:kriging} provides the means for ``emulating'' the expensive simulator $f$ with probabilistic uncertainty quantification over the domain $\mathcal{X}$.


For GP modeling, a popular covariance function choice is the (isotropic) Mat\'ern kernel \citep{stein1999}, given by:
\begin{equation}
    \label{eq:matern}
    k_{\nu}(\BFx,\BFx')=\frac{\sigma^2}{2^{\nu-1}\Gamma(\nu)}(\kappa\|\BFx-\BFx'\|)^\nu K_\nu(\kappa\|\BFx-\BFx'\|).
\end{equation}
Here, $\nu > 0$ is its smoothness parameter, $\kappa >0$ is its inverse length-scale parameter, $\sigma^2 > 0$ is its variance parameter, and $K_\nu$ is the modified Bessel function of the second kind \citep{abramowitz1965handbook}. A key appeal for the Mat\'ern kernel with smoothness $\nu$ is that its sample path smoothness can be controlled by $\nu$; in particular, its sample paths are \textup{($\lceil\nu\rceil$-1)}-differentiable almost surely \citep{da2023sample}. Such smoothness control is desirable in broad scientific applications, e.g., climatology \citep{wiens2020modeling} and geostatistics \citep{pardo2008geostatistics}. It is particularly important for our motivating surrogate modeling application, where the black-box function $f$ (which often represents the solution of a PDE system) may have known smoothness, i.e., differentiability, properties \citep{Evans15}. Integrating this smoothness information via a careful selection of $\nu$ in the Mat\'ern kernel permits interpretable surrogate modeling with improved predictive performance \citep{stein1999}.


Consider now $\mathcal{W}_{\nu}(\Bx)$, the zero-mean Mat\'ern GP with smoothness parameter $\nu$, inverse length-scale parameter $\kappa$, and process variance $\sigma^2 = \Gamma(\nu)[\Gamma(\alpha)(4\pi)^{d/2}\kappa^{2\nu}\tau^2]^{-1}$, where $\alpha=\nu+d/2$ and $\tau^2 > 0$ is given. (For brevity, all parameters except for $\nu$ are suppressed in the notation $\mathcal{W}_{\nu}(\Bx)$.) One can show (see, e.g., \cite{whittle1954stationary,whittle1963stochastic,LindgrenRueLindstrom11}) that $\mathcal{W}_{\nu}(\Bx)$ is the unique stationary solution to the following SPDE system:
\begin{equation}
    \label{eq:SPDE}
    \tau (\kappa^2-\bigtriangleup)^{\alpha/2}f(\BFx)=\mathcal{W}(\Bx),\quad \BFx\in\Real^d.
\end{equation}
Here, $\CalW(\Bx)$ is a Gaussian white noise process with unit spectral, and $\bigtriangleup=\sum_{j=1}^d\partial_{x_j^2}$ is the Laplacian operator. This SPDE representation is leveraged in various contexts for scalable computation and/or approximation, e.g., within Integrated Nested Laplace Approximations (INLA; \cite{rue2009approximate}). Furthermore, from \eqref{eq:SPDE}, it follows that $\mathcal{W}_{\nu+2}(\Bx)$ is a solution to the SPDE:
\begin{equation}
    (\kappa^2-\bigtriangleup)f(\BFx)=\mathcal{W}_\nu(\Bx),\quad \BFx\in\Real^d.
    \label{eq:SPDE2}
\end{equation}
We will use a modification of this representation later to derive the BdryMat\'ern GP model.




\subsection{Existing Work on Boundary-Integrated GPs}

Next, we briefly review existing literature on boundary-integrated GPs, and highlight their potential limitations. This literature can be broadly categorized into projection-based, eigenfunction-based and ADF-based approaches; we discuss each of these categories below.


\textbf{Projection-based approaches:} One class of boundary-integrated GPs (see \cite{vernon2019known,jackson2023efficient}) makes use of careful projections onto known boundaries on the input space $\mathcal{X}$. Suppose $\mathcal{X} = [0,1]^d$ is the unit hypercube, and further suppose we know the Dirichlet boundaries of $f$ along the $(d-1)$-dimensional hyperplane $\mathcal{B} = \{\Bx: x_1 = 0\}$. Let $f(\cdot) \sim \text{GP}\{\mu,k(\Bx,\Bx')\}$, where the covariance kernel takes the product form $k(\mathbf{x},\mathbf{x}') = \sigma^2 \prod_{l=1}^d k_l(x_l,x_l')$. Consider the prediction of $f(\mathbf{x})$ at a new point $\mathbf{x}$, conditional on evaluations of $f$ on a finite set of points $\mathbf{B} = \{\mathbf{b}_1, \cdots, \mathbf{b}_m,\Bx_{\mathcal{B}}\} \subset \mathcal{B}$ on the boundary, where $\Bx_{\mathcal{B}}$ is the projection of $\Bx$ onto $\mathcal{B}$. Then \cite{vernon2019known} shows that the posterior predictive distribution $f(\mathbf{x})|f(\mathbf{B})$ takes the form $\mathcal{N}\{\mu_{\mathcal{B}}(\mathbf{x}),k_{\mathcal{B}}(\Bx,\Bx)\}$, where:
\begin{align}
\begin{split}
\mu_{\mathcal{B}}(\mathbf{x}) &= \mu + k_1(x_1)[f(\Bx_{\mathcal{B}}) - \mu],\\
k_{\mathcal{B}}(\Bx,\Bx')&= \sigma^2 [k_1(x_1-x_1')-k_1(x_1)k_1(x_1')] \prod_{l=2}^d k_l(x_l,x_l').
\label{eq:proj}
\end{split}
\end{align}

Note that the above predictive equations depend only on the projected point $\Bx_{\mathcal{B}}$ and not the other boundary points $\mathbf{b}_1, \cdots, \mathbf{b}_m$. Thus, the use of the projected point $\Bx_{\mathcal{B}}$ in the boundary-integrated mean function and an appropriately modified covariance function is sufficient for capturing boundary information in this setting. One can then directly use $\mu_{\mathcal{B}}$ and $k_{\mathcal{B}}$ within the GP equations \eqref{eq:kriging} for probabilistic surrogate modeling. This projection-based framework can be extended (see \cite{jackson2023efficient}) to accommodate multiple hyperplane boundaries of varying dimensions on the unit hypercube domain $\mathcal{X} = [0,1]^d$.

Despite its elegance, there are several limitations for such approaches. First, it is unclear whether projection-based methods extend for the integration of Neumann and Robin boundaries, which are present in many problems \citep{dalton2024boundary}. Second, the above framework may not extend for the case of irregular input spaces $\mathcal{X}$, which arises in complex applications. We will address both limitations via the proposed BdryMat\'ern GP.


\textbf{Eigenfunction-based approaches:} Another class of boundary-integrated GPs (see \cite{solin2019know,solin2020hilbert,gulian2022gaussian}) makes use of an eigenfunction approach for kernel construction. The key idea is to employ basis functions that inherently satisfy the provided boundary conditions. Here, such works consider the setting of zero (i.e., homogeneous) boundary conditions, thus $\mu_{\mathcal{B}}(\mathbf{x})$ is set as 0. The covariance kernel $k$ is then constructed as:
\[
k_{\mathcal{B}}(\mathbf{x},\mathbf{x}') = \sum_{j=1} \lambda_j \phi_j(\mathbf{x}) \phi_j(\mathbf{x}'),
\]
where \(\{\phi_j\}\) are basis functions that satisfy known boundary conditions. For rectangular domains $\mathcal{X}$, these basis functions can be taken as the eigenfunctions of the Laplace operator under Dirichlet boundary conditions, which can be derived in closed form \citep{dalton2024boundary}. 

There are two limitations of such eigenfunction-based approaches for boundary-integrated surrogates. First, the derivation of closed-form eigenfunctions is possible only under restricted boundary conditions on simple domains $\mathcal{X}$, and thus may not be possible in broad applications. Second, there is little work investigating smoothness control (of GP sample paths) for such approaches. As mentioned previously, this smoothness control is desirable for surrogate modeling, as one may have prior knowledge on the smoothness of $f$ from the considered PDE system.


\textbf{ADF-based approaches:} The last class of boundary-integrated GPs makes use of the so-called ``approximate distance functions''. Such a term was coined by \cite{dalton2024boundary}, but related concepts have been explored in \cite{tan2018gaussian,li2022improving}. The key idea is to construct the boundary-integrated kernel $k_{\mathcal{B}}$ by modifying a base kernel $k$ (e.g., the Mat\'ern kernel) by a distance function $\varsigma(\Bx)$:
\begin{equation}
k_{\mathcal{B}}(\mathbf{x},\mathbf{x}') = \varsigma(\Bx) \varsigma(\Bx') k(\mathbf{x},\mathbf{x}').
\label{eq:adf}
\end{equation}
Here, $\varsigma(\Bx)$ should be specified such that $\lim_{\Bx \rightarrow \mathbf{b}} \varsigma(\Bx) = 0$ for any boundary point $\mathbf{b} \in \mathcal{B}$. This condition ensures that the prior variance of the GP, i.e., $k_{\mathcal{B}}(\mathbf{x},\mathbf{x})$, converges to zero as $\Bx$ approaches a known boundary, thus reflecting a modeler's certainty of boundary information. Such distance functions can further be used to construct a mean function $\mu_{\mathcal{B}}(\mathbf{x})$ that interpolates known boundaries. \cite{tan2018gaussian} makes use of a product distance function that incorporates Dirichlet boundaries at discrete points. \cite{dalton2024boundary} employs the so-called $R$-function method \citep{rvachev1995r} to construct ADFs for incorporating Dirichlet boundaries on irregular domains $\mathcal{X}$. The same paper further extends this approach (with appropriate modifications on \eqref{eq:adf}) for incorporating Neumann and Robin boundaries.

While such approaches offer a flexible framework for boundary integration, they share a similar limitation with earlier methods: they do not provide a means for investigating smoothness control of the resulting GP sample paths, which is desirable for surrogate modeling. Furthermore, given the need for approximation with complex irregular boundaries, these approaches do not provide analysis of such approximation error. This error analysis is useful in surrogate modeling, where it is important to quantify and aggregate all sources of uncertainty for confident scientific inference. The proposed BdryMat\'ern GP presented next addresses these needs for smoothness control and approximation error analysis.




\section{The BdryMat\'ern GP}
\label{sec:Bdrymat_SPDE}

We now present the proposed BdryMat\'ern GP model. We first outline its kernel specification for the Dirichlet boundary setting via a boundary-modified formulation of the SPDE \eqref{eq:SPDE2}. Using an extension of the Feynman-Kac formula for elliptical PDEs, we prove that this BdryMat\'ern kernel admits a path integral form on a connected domain with a boundary set that is twice-differentiable almost everywhere. Such a form permits efficient kernel approximation via finite-element-based algorithms (see Section \ref{sec:kernel_approx}). We then derive analogous BdryMat\'ern kernels with path integral forms for the Neumann and Robin boundary settings. 

\subsection{The BdryMat\'ern Kernel for Dirichlet boundaries}
\label{sec:BdryGP_Green}

Consider first the Dirichlet boundary setting, where the black-box function $f$ is known on $\mathcal{B} = \partial \mathcal{X}$, the boundary set of a potentially irregular domain $\mathcal{X}$. As before, in the homogeneous setting with $f(\partial \mathcal{X}) = 0$, $\mu_{\mathcal{B}}(\mathbf{x})$ can be set as 0. In the inhomogeneous setting, $\mu_{\mathcal{B}}(\mathbf{x})$ can be specified via a careful interpolation of known boundaries, using, e.g., Remark 10 of \cite{dalton2024boundary}. Thus, without loss of generality, we presume in this subsection (unless otherwise stated) the homogeneous setting of $f(\partial \mathcal{X}) = 0$.

To incorporate this boundary information, we investigate the SPDE system in \eqref{eq:SPDE2} with an additional boundary condition on domain $\mathcal{X}$:
\begin{align}
\begin{split}
    (\kappa^2-\bigtriangleup)f(\BFx)&=\CalW_\nu(\BFx), \quad \BFx\in\CalX, 
    \\
    f(\Bx)&=0,\quad \quad \; \; \; \; \; \Bx\in\partial\CalX,\\
\end{split}
\label{eq:bdry_SPDE}
\end{align}
where $\CalW_\nu$ is the Mat\'ern GP with smoothness $\nu>0$ as defined earlier. Constrained to a general irregular domain $\mathcal{X}$ and not $\Real^d$, the SPDE \eqref{eq:bdry_SPDE} may not have a closed-form solution. However, since $(\kappa^2-\bigtriangleup)$ is a linear operator, a unique solution $f$ of \eqref{eq:bdry_SPDE} should also a GP (see \cite[Chapter 4]{lifshits2012lectures}). Our goal is then to represent the covariance function of $f$ in a form that can be efficiently approximated. To do so, we leverage the Green's function representation \citep{folland2009fourier} of the solution $f$; further details can be found in \cite{sakellaris2021scale}. Here, the Green's function associated with the operator $(\kappa^2-\bigtriangleup)$ on domain $\mathcal{X}$ is defined as a function $G(\cdot)$ satisfying:
\begin{align}
\begin{split}
    (\kappa^2-\bigtriangleup)\int_\CalX G(\BFx,\Bs)\phi(\Bs) d\Bs&=\int_\CalX G(\BFx,\Bs)(\kappa^2-\bigtriangleup)\phi(\Bs) d\Bs=\phi(\Bs),\\
    G(\BFx,\Bx')&=G(\BFx',\Bx),\\
    G(\Bx,\cdot)&=0, \quad \quad \Bx\in \partial\CalX.
\end{split}
\label{eq:greendef}
\end{align}
for any integrable function $\phi : \mathcal{X} \rightarrow \Real$.

While the Green's function $G$ has no closed-form solution here, it does provide a useful integral representation of $f$ as:
\begin{equation}
\label{eq:Green_SPDE}
    f(\BFx)=\int_{\CalX}G(\BFx,\Bs)\CalW_\nu(\Bs)d\Bs.
\end{equation}
Using such a representation, we can write the covariance function of $f$ as:
\begin{align}
    k_{\nu+2,\mathcal{B}}(\BFx,\BFx') 
    &= \E\left[\int_{\CalX} G(\BFx, \Bs) \CalW_\nu(\Bs) \, d\Bs \int_{\CalX} G(\BFx', \bold{s}') \CalW_\nu(\bold{s}') \, d\bold{s}' \right] \nonumber \\
    &= \int_{\CalX \times \CalX} G(\BFx, \Bs) k_\nu(\Bs, \bold{s}') G(\bold{s}', \BFx') \, d\Bs \, d\bold{u}.
    \label{eq:kernel_green}
\end{align}
Here, $k_\nu(\Bs, \bold{s}')$ is the Mat\'ern kernel from \eqref{eq:matern}, and the second line follows from applying Fubini's theorem to exchange the integral and expectation. The notation $k_{\nu+2,\mathcal{B}}(\BFx,\BFx')$ will be explained later. Combining \eqref{eq:greendef} and \eqref{eq:kernel_green}, we see that if we let the linear operator \((\kappa^2 - \Delta)\) act on one variable of \(k_{\nu+2,\mathcal{B}}(\BFx, \BFx')\) and then act on the other variable, the resulting function reduces to the Mat\'ern kernel \(k_\nu(\BFx, \BFx')\).



Using \eqref{eq:kernel_green}, we can then leverage the Feynman-Kac formula for Dirichlet boundaries \citep{ito1957fundamental,stroock1971diffusion,papanicolaou1990probabilistic} to derive a path integral form for the new kernel $k_{\nu+2,\mathcal{B}}(\BFx, \BFx')$. This is given by the following theorem:
\begin{theorem}[BdryMat\'ern kernel for Dirichlet boundary]
\label{thm:bdryMatern_path_integral_dirichlet}
    Suppose the domain $\CalX$ is connected, with its boundary twice-differentiable almost everywhere. Then the kernel $k_{\nu+2,\mathcal{B}}(\BFx,\BFx')$ in \eqref{eq:kernel_green} takes the path integral form:
    \begin{align}
    \begin{split}
        k_{\nu+2,\mathcal{B}}(\BFx,\BFx')
        =  &k_{\nu+2}(\BFx,\BFx')-\E\left[k_{\nu+2}(\BFx,\bold{B}'_{\tau})e^{-\kappa^2\tau}\right]\\
        &-\E\left[k_{\nu+2}(\bold{B}_{\gamma},\BFx')e^{-\kappa^2\gamma}\right]+\E\left[k_{\nu+2}(\bold{B}_{\gamma},\bold{B}'_{\tau})e^{-\kappa^2(\tau+\gamma)}\right], \quad \mathbf{x},\mathbf{x}' \in \mathcal{X}.
        \end{split}
        \label{eq:bdryMatern_path_integral_dirichlet}
    \end{align}
    Here, $\bold{B}_t$ and $\bold{B}_s'$ are mutually independent $d$-dimensional Brownian motions initialized at $\bold{B}_0=\Bx$ and $\bold{B}'_0=\Bx'$, respectively, and $\tau$ and $\gamma$ are the hitting times to the boundary of the Brownian motions $\bold{B}_t$ and $\bold{B}_s'$, i.e.:
    \begin{align*}
        &\tau=\inf\{t:\bold{B}_t\in\partial\CalX\},\quad \gamma=\inf\{s:\bold{B}'_s\in\partial\CalX\}.
    \end{align*}
\end{theorem}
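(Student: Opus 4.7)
My plan is to identify the right-hand side $R(\mathbf{x},\mathbf{x}')$ of \eqref{eq:bdryMatern_path_integral_dirichlet} with $k_{\nu+2,\mathcal{B}}(\mathbf{x},\mathbf{x}')$ by showing both solve the same elliptic boundary-value problem on $\CalX\times\CalX$ and then invoking Dirichlet uniqueness. Writing $L := \kappa^2-\bigtriangleup$, the Green's function representation \eqref{eq:kernel_green} combined with the defining relations \eqref{eq:greendef} gives, after applying $L_{\mathbf{x}}$ and $L_{\mathbf{x}'}$ to the two copies of $G$,
\[
L_{\mathbf{x}} L_{\mathbf{x}'}\, k_{\nu+2,\mathcal{B}}(\mathbf{x},\mathbf{x}') = k_\nu(\mathbf{x},\mathbf{x}') \quad \text{on } \CalX\times\CalX,
\]
while the Dirichlet row of \eqref{eq:greendef} forces $k_{\nu+2,\mathcal{B}}(\mathbf{x},\mathbf{x}')=0$ whenever either argument lies on $\partial\CalX$.

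Next I would verify that $R$ satisfies the same system. The full-space identity $L_{\mathbf{x}} L_{\mathbf{x}'}\, k_{\nu+2}(\mathbf{x},\mathbf{x}') = k_\nu(\mathbf{x},\mathbf{x}')$ follows directly from taking covariances of both sides of the SPDE \eqref{eq:SPDE2}. For each of the three expectation terms in $R$, the classical Feynman-Kac formula for elliptic operators with killing yields $L$-harmonicity in the appropriate variable: $\E[k_{\nu+2}(\mathbf{x},\mathbf{B}'_\tau)e^{-\kappa^2\tau}]$ is $L$-harmonic in $\mathbf{x}'$, $\E[k_{\nu+2}(\mathbf{B}_\gamma,\mathbf{x}')e^{-\kappa^2\gamma}]$ is $L$-harmonic in $\mathbf{x}$, and the double-expectation term is $L$-harmonic in both. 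Each of these three terms is therefore annihilated by $L_{\mathbf{x}} L_{\mathbf{x}'}$, so $L_{\mathbf{x}} L_{\mathbf{x}'} R = k_\nu$ on $\CalX\times\CalX$. For the boundary conditions, when $\mathbf{x}\in\partial\CalX$ we have $\gamma=0$ almost surely, so $\mathbf{B}_\gamma=\mathbf{x}$ and $e^{-\kappa^2\gamma}=1$; substituting, the four terms of $R$ cancel in pairs to give $0$, and the case $\mathbf{x}'\in\partial\CalX$ is symmetric.

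Setting $D := k_{\nu+2,\mathcal{B}} - R$, these two steps give $L_{\mathbf{x}} L_{\mathbf{x}'} D = 0$ on $\CalX\times\CalX$ with $D=0$ on $\partial(\CalX\times\CalX)$. For fixed $\mathbf{x}'$, the function $\mathbf{x}\mapsto L_{\mathbf{x}'} D(\mathbf{x},\mathbf{x}')$ solves the homogeneous Dirichlet problem $Lu=0$ on $\CalX$ with $u=0$ on $\partial\CalX$, so by uniqueness $L_{\mathbf{x}'} D \equiv 0$ on $\CalX\times\CalX$; a second application of Dirichlet uniqueness, this time in the $\mathbf{x}'$ variable, then yields $D\equiv 0$.

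The main obstacle is making these steps rigorous on a domain whose boundary is only twice-differentiable almost everywhere rather than globally smooth. Specifically, one needs (i) existence and uniqueness of the Dirichlet Green's function $G$ and of solutions to the Dirichlet problem for $L$ on $\CalX$ (which is where the connectedness hypothesis enters), (ii) Wiener regularity of $\partial\CalX$ so that $\mathbf{B}_\gamma,\mathbf{B}'_\tau\in\partial\CalX$ almost surely and the Feynman-Kac representations attain their prescribed boundary data continuously up to $\partial\CalX$, and (iii) justification for interchanging the operators $L_{\mathbf{x}},L_{\mathbf{x}'}$ with the double integral in \eqref{eq:kernel_green} and with the Brownian expectations. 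These ingredients are standard under the stated a.e.\ $C^2$ hypothesis on $\partial\CalX$, but their careful verification on such an irregular domain constitutes the substantive technical content of the proof.
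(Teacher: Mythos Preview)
Your argument is correct and constitutes a genuinely different route from the paper's. The paper works \emph{constructively} in Fourier space: it inserts the spectral representation $k_\nu(\Bx,\Bx')=C_{\CalF^{-1}}\int (\kappa^2+\|\Bomega\|^2)^{-\alpha}e^{\im(\Bx-\Bx')^T\Bomega}\,d\Bomega$ into \eqref{eq:kernel_green}, reduces the double Green's-function integral to a single integral over $\Bomega$ involving $g_{\Bomega}(\Bx)=\int_{\CalX}G(\Bx,\Bs)e^{\im\Bs^T\Bomega}\,d\Bs$, and then splits $g_{\Bomega}=h_{\Bomega}-v_{\Bomega}$ into the explicit full-space solution $h_{\Bomega}(\Bx)=e^{\im\Bx^T\Bomega}/(\kappa^2+\|\Bomega\|^2)$ and the boundary corrector $v_{\Bomega}$. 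This decomposition yields the four terms $K_{hh},K_{hv},K_{vh},K_{vv}$, with $K_{hh}=k_{\nu+2}$ read off from the spectrum and the remaining terms evaluated via Feynman--Kac applied to $v_{\Bomega}$. Your approach instead \emph{verifies} that both sides satisfy $L_{\Bx}L_{\Bx'}u=k_\nu$ with zero Dirichlet data and concludes by a two-step uniqueness argument; you use Feynman--Kac only to recognize the expectation terms as $L$-harmonic.

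Each route has its advantages. Yours is cleaner and sidesteps the Fourier machinery entirely, at the cost of being verification-only: one must already have the formula in hand. The paper's spectral decomposition is more laborious but is not disposable scaffolding---the explicit frequency-domain integral forms of $K_{hh},K_{hv},K_{vv}$ (equations \eqref{eq:pf_path_integral_7}--\eqref{eq:pf_path_integral_10}) are reused verbatim in the proof of Theorem~\ref{thm:empirical_BdryMat_Dirichlet} to establish positive definiteness of the coupled Monte Carlo estimator, where the quadratic form is manipulated directly in the $\Bomega$-integral. So if you adopt your proof here, you would need to supply those spectral representations separately before Section~\ref{sec:kermat}.
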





\begin{proof}
Our proof first decomposes \eqref{eq:kernel_green} into four terms corresponding to the terms in \eqref{eq:bdryMatern_path_integral_dirichlet}. For each term, we then rewrite its Green's function representation in path integral integral form. Recall that the Mat\'ern kernel $k_\nu$ admits the spectral representation \cite{Wendland10}:
\begin{equation}
    \label{eq:pf_path_integral_1}
    k_\nu(\BFx,\BFx')=\CalF^{-1}[{(\kappa^2+\|\Bomega\|^2)^{-\alpha}}](\BFx-\BFx')=C_{\CalF^{-1}}\int_{\Real^d}\frac{e^{ \im (\BFx-\BFx')^T\Bomega }}{(\kappa^2+\|\Bomega\|^2)^{\alpha}}d\Bomega,
\end{equation}
where $\mathcal{F}$ is the Fourier transform, and $C_{\CalF^{-1}}$ is the inverse Fourier transform constant that depends only on dimension $d$. Substituting \eqref{eq:pf_path_integral_1} into  \eqref{eq:kernel_green} and applying Fubini's theorem, we get:
\begin{align}
    k_{\nu+2,\mathcal{B}}(\BFx,\BFx')=&C_{\CalF^{-1}}\int_{\Real^d}\left[\int_{\CalX\times\CalX}\frac{G(\BFx,\Bs)e^{ \im (\Bs-\Bu)^T\Bomega}G(\BFx',\Bu)}{(\kappa^2+\|\Bomega\|^2)^{\alpha}}\right]d\Bs d\Bu d\Bomega\nonumber\\
    =& C_{\CalF^{-1}}\int_{\Real^d}(\kappa^2+\|\Bomega\|^2)^{-\alpha}\left[\int_{\CalX}G(\BFx,\Bs)e^{ \im\Bs^T\Bomega} d\Bs\right]\overline{\left[ \int_{\CalX}G(\BFx',\Bu)e^{ \im\Bu^T\Bomega}d\Bu\right]} d\Bomega\nonumber\\
    =& C_{\CalF^{-1}}\int_{\Real^d}\psi(\Bomega) g_{\Bomega}(\BFx)\overline{g_{\Bomega}(\BFx')}d\Bomega\label{eq:pf_path_integral_2},
\end{align}
where $\psi(\Bomega)=(\kappa^2+\|\Bomega\|^2)^{-\alpha}$. Here, following the definition of Green's function, $g_{\Bomega}$ is the solution to the following PDE with zero boundary condition on $\partial\CalX$:
\begin{align*}
    (\kappa^2-\bigtriangleup)g_{\Bomega}=e^{ \im \BFx^T\Bomega},\quad \BFx\in\CalX, \quad g_{\Bomega}(\partial\CalX)=0. 
\end{align*}


From this, $g_{\Bomega}$ can then be decomposed into a homogeneous solution $h_{\Bomega}$ and a particular solution $v_{\Bomega} = h_{\Bomega} - g_{\Bomega}$ such that:
\begin{align}
    &(\kappa^2-\bigtriangleup) h_{\Bomega}(\BFx)=e^{ \im \BFx^T\Bomega},\quad  \BFx\in\Real^d, \label{eq:pf_path_integral_3}\\
    &(\kappa^2-\bigtriangleup) v_{\Bomega}(\BFx)=0,\quad \quad \quad \BFx\in\CalX, \quad v_{\Bomega}(\partial\CalX)=h_\omega(\partial\CalX).\label{eq:pf_path_integral_4}
\end{align}
Substituting the decomposition $g_{\Bomega}=h_{\Bomega}-v_{\Bomega}$ into \eqref{eq:pf_path_integral_2}, the kernel $k_{\nu+2,\mathcal{B}}(\BFx,\BFx')$ can then be decomposed into four different terms:
\begin{equation}
    k_{\nu+2,\mathcal{B}}(\BFx,\BFx')=K_{hh}(\BFx,\BFx')-K_{hv}(\BFx,\BFx')-K_{hv}(\BFx',\BFx)+K_{vv}(\BFx,\BFx'),\label{eq:pf_path_integral_5}
\end{equation}
where:
\begin{align}
\begin{split}
    &K_{hh}(\BFx,\BFx')=C_{\CalF^{-1}}\int_{\Real^d}\psi(\Bomega)h_{\Bomega}(\BFx)\overline{h_{\Bomega}(\BFx')}d\Bomega,\\
    &K_{hv}(\BFx,\BFx')=C_{\CalF^{-1}}\int_{\Real^d}\psi(\Bomega)h_{\Bomega}(\BFx)\overline{v_{\Bomega}(\BFx')}d\Bomega,\\
    &K_{vv}(\BFx,\BFx')=C_{\CalF^{-1}}\int_{\Real^d}\psi(\Bomega)v_{\Bomega}(\BFx)\overline{v_{\Bomega}(\BFx')}d\Bomega.
    \label{eq:khh_int}
\end{split}
\end{align}

Consider the first term $K_{hh}(\BFx,\BFx')$. From \eqref{eq:pf_path_integral_3}, we see that $h_{\Bomega}$ is the inverse Fourier transform of the Green's function of the operator $(\kappa^2-\bigtriangleup)$ over the whole domain $\Real^d$. From our direct calculations, we have:
\begin{equation}
    h_{\Bomega}(\BFx)=\frac{e^{\im \BFx^T\Bomega}}{\kappa^2+\|\Bomega\|^2}\label{eq:pf_path_integral_6}.
\end{equation}
Substituting \eqref{eq:pf_path_integral_6} into the integral representation of $K_{hh}$, we can show that $K_{hh}$ corresponds to the Mat\'ern-$(\nu+2)$ kernel, since:
\begin{equation}
    K_{hh}(\BFx,\BFx')=C_{\CalF^{-1}}\int_{\Real^d}\frac{e^{\im(\Bx-\Bx')^T\Bomega}}{(1+\|\Bomega\|^2)^{\alpha+2}}d\Bomega=k_{\nu+2}(\BFx,\BFx'). \label{eq:pf_path_integral_7}
\end{equation}

Consider the next three terms in \eqref{eq:pf_path_integral_5}. Note that $v_{\Bomega}$ has no closed-form solution in general. However, since it is the solution of the elliptical PDE \eqref{eq:pf_path_integral_4} and given the twice-differentiability of the boundary set $\partial \mathcal{X}$ almost everywhere, we can apply the Feynman-Kac formula for Dirichlet boundaries \citep{Chung1981} to obtain the representation:
\begin{equation}
    \label{eq:pf_path_integral_8}
    v_{\Bomega}(\BFx)=\E\left[h_{\Bomega}(\bold{B}_\gamma)e^{-\kappa^2\gamma}|\bold{B}_0=\BFx\right]=\frac{\E\left[e^{\im \bold{B}_\gamma^T\Bomega}e^{-\kappa^2\gamma}|\bold{B}_0=\BFx\right]}{\kappa^2+\|\Bomega\|^2}.
\end{equation} 
Substituting \eqref{eq:pf_path_integral_8} into the integral representation of $K_{hv}$ and $K_{vv}$ in \eqref{eq:khh_int}, we can apply Fubini's theorem for Gaussian measures to obtain: 
\small
\begin{equation}
\label{eq:pf_path_integral_9}
    K_{hv}(\BFx,\BFx')=\E\left[C_{\CalF^{-1}}\int_{\Real^d}\frac{e^{\im(\Bx-\bold{B}_\tau)^T\Bomega}}{(1+\|\Bomega\|^2)^{\alpha+2}}d\Bomega e^{-\kappa^2\tau}\bigg|\bold{B}_0=\Bx'\right]=\E\left[k_{\nu+2}(\BFx,\bold{B}_\tau)e^{-\kappa^2\tau}|\bold{B}_0=\Bx'\right],
\end{equation}
\normalsize
and:
\begin{align}
\label{eq:pf_path_integral_10}
\begin{split}
    K_{vv}(\BFx,\BFx')=&\E\left[C_{\CalF^{-1}}\int_{\Real^d}\frac{e^{\im(\bold{B}_\tau-\bold{B}'_\gamma)^T\Bomega}}{(1+\|\Bomega\|^2)^{\alpha+2}}d\Bomega e^{-\kappa^2(\tau+\gamma)}\bigg|\bold{B}_0=\Bx, \bold{B}'_0=\Bx'\right]\\
    =&\E\left[k_{\nu+2}(\bold{B}_{\tau},\bold{B}'_\gamma)e^{-\kappa^2(\tau+\gamma)}|\bold{B}_0=\Bx,\bold{B}'_0=\Bx'\right].
\end{split}
\end{align}
The proof is then complete by substituting \eqref{eq:pf_path_integral_7}-\eqref{eq:pf_path_integral_10} into \eqref{eq:pf_path_integral_5}.

\end{proof}

In what follows, we define the kernel $k_{\nu,\mathcal{B}}(\BFx, \BFx')$ in \eqref{eq:bdryMatern_path_integral_dirichlet} as the \textit{BdryMat\'ern kernel} with smoothness parameter $\nu > 2$ for \textit{Dirichlet} boundaries, with process variance $\sigma^2$ and inverse length-scale $\kappa > 0$ inherited from the base Mat\'ern kernel $k_\nu(\BFx, \BFx')$. Here, the restriction of $\nu > 2$ is needed for the Feynman-Kac representation \eqref{eq:pf_path_integral_8}; we thus presume $\nu > 2$ for the BdryMat\'ern kernel for the remainder of the paper. With the mean function $\mu_{\mathcal{B}}(\mathbf{x})$ specified as described at the start of the subsection, the BdryMat\'ern GP for Dirichlet boundaries is then defined as $f \sim \text{GP}\{\mu_{\mathcal{B}}(\mathbf{x}),k_{\nu,\mathcal{B}}(\BFx, \BFx')\}$; such a model embeds the desired boundary information. We will discuss later in Section \ref{sec:kernel_approx} how the posterior predictive distribution of this model can be efficiently approximated.


A key appeal of the BdryMat\'ern kernel is that it not only ensures the resulting GP sample paths satisfy known Dirichlet boundary conditions, but also provides smoothness control of such paths via the smoothness parameter $\nu$. The latter is analogous to the smoothness control from the Mat\'ern kernel \eqref{eq:matern}. The following theorem proves this for the zero-mean BdryMat\'ern GP given the homogeneous Dirichlet boundary condition $f(\partial \mathcal{X}) = 0$:
\begin{theorem}[Smoothness control for Dirichlet BdryMat\'ern GP]
Suppose the domain $\CalX$ is connected, with its boundary twice-differentiable almost everywhere. Consider the zero-mean BdryMat\'ern GP $f(\cdot) \sim \textup{GP}\{0,k_{\nu,\mathcal{B}}(\BFx, \BFx')\}$, where $k_{\nu,\mathcal{B}}(\BFx, \BFx')$ is the BdryMat\'ern kernel defined in \eqref{eq:bdryMatern_path_integral_dirichlet}. Then:
\begin{enumerate}[label=(\alph*)]
    \item Its sample paths satisfy the zero Dirichlet boundary condition $f(\partial \mathcal{X}) = 0$ almost surely,
    \item Its sample paths are \textup{($\lceil\nu\rceil$-1)}-differentiable almost surely, where $\lceil\cdot\rceil$ denotes the ceiling function.
\end{enumerate}
\label{thm:smoothdir}
\end{theorem}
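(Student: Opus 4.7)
My plan is to prove the two parts separately, with part (b) supplying the sample-path continuity required to close out part (a).

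For part (a), I would compute the variance $k_{\nu,\mathcal{B}}(\Bx,\Bx)$ at a boundary point $\Bx\in\partial\CalX$ directly from the path integral representation \eqref{eq:bdryMatern_path_integral_dirichlet}. Since the Brownian motions $\bold{B}_t$ and $\bold{B}'_s$ are initialized at $\Bx\in\partial\CalX$, both hitting times $\tau$ and $\gamma$ are zero almost surely, so $\bold{B}_\gamma=\bold{B}'_\tau=\Bx$ a.s.\ and $e^{-\kappa^2\tau}=e^{-\kappa^2\gamma}=1$. Substituting these in, all four terms on the right-hand side of \eqref{eq:bdryMatern_path_integral_dirichlet} collapse to $k_\nu(\Bx,\Bx)$ and cancel, so $k_{\nu,\mathcal{B}}(\Bx,\Bx)=0$. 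This gives $f(\Bx)=0$ almost surely at each fixed boundary point. To upgrade this pointwise statement to $f\equiv 0$ on $\partial\CalX$ almost surely, I would combine it with almost sure sample path continuity of $f$ (a consequence of part (b), since $\nu>2$) together with a countable dense subset of $\partial\CalX$.

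For part (b), my plan leverages the SPDE \eqref{eq:bdry_SPDE} defining the BdryMat\'ern GP, namely $(\kappa^2-\bigtriangleup)f=\CalW_{\nu-2}$ on $\CalX$ with $f|_{\partial\CalX}=0$. I would introduce the free Mat\'ern-$\nu$ GP $\tilde f$ on $\Real^d$ constructed from an extension of the same driving white noise, so that $(\kappa^2-\bigtriangleup)\tilde f=\CalW_{\nu-2}$ on $\Real^d$. Then the pathwise difference $u:=\tilde f|_{\CalX}-f$ satisfies the deterministic (conditional on $\tilde f$) homogeneous Dirichlet problem
\[
    (\kappa^2-\bigtriangleup)u=0 \text{ on } \CalX, \qquad u|_{\partial\CalX}=\tilde f|_{\partial\CalX}.
\]
Because $\kappa^2-\bigtriangleup$ has constant coefficients, classical interior elliptic regularity (indeed, real-analyticity of solutions) yields $u\in C^\infty(\CalX^\circ)$ path-wise, regardless of the regularity of the boundary data. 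Consequently $f=\tilde f|_{\CalX}-u$ inherits the sample path smoothness of $\tilde f$ in the interior of $\CalX$, which is $(\lceil\nu\rceil-1)$-differentiable almost surely by the classical Mat\'ern sample-path result of \cite{da2023sample}.

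The main obstacle will be rigorously justifying the pathwise decomposition $f=\tilde f|_{\CalX}-u$, as opposed to a mere equality of covariance structures. Specifically, I must (i) realize $f$ and $\tilde f$ on a common probability space driven by the same white noise; (ii) define the boundary trace $\tilde f|_{\partial\CalX}$, which requires almost sure continuity of $\tilde f$ up to $\partial\CalX$ and exploits the assumed twice-differentiability almost everywhere of the boundary; and (iii) confirm that the elliptic extension $u$ of this random trace yields a process whose difference with $\tilde f|_{\CalX}$ realizes the BdryMat\'ern covariance in \eqref{eq:bdryMatern_path_integral_dirichlet}. Step (iii) follows from uniqueness of the Gaussian solution to \eqref{eq:bdry_SPDE} established in Theorem \ref{thm:bdryMatern_path_integral_dirichlet}, so the crux of the argument is the trace-and-measurability construction in (ii); once in place, the smoothness conclusion is immediate from elliptic regularity of $u$ and the classical Mat\'ern sample-path theorem.
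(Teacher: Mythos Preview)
Your proposal is correct but takes a genuinely different route from the paper on both parts. For (a), you compute $k_{\nu,\mathcal{B}}(\Bx,\Bx)=0$ at boundary points directly from the path-integral formula (hitting times vanish when the Brownian motions start on $\partial\CalX$) and then invoke sample-path continuity from (b) together with a countable dense subset; the paper instead expands the driving Mat\'ern field in a Karhunen--Lo\`eve series $\CalW_{\nu-2}=\sum_i\sqrt{\lambda_i}\xi_iZ_i$, solves the Dirichlet problem term by term to obtain deterministic functions $g_i$ vanishing on $\partial\CalX$, and passes to the a.s.\ limit of the partial sums via a convergence theorem of Lifshits. For (b), you decompose $f=\tilde f|_{\CalX}-u$ with $\tilde f$ the free Mat\'ern-$\nu$ GP on $\Real^d$ and $u$ a pathwise solution of the homogeneous equation $(\kappa^2-\bigtriangleup)u=0$, so $u\in C^\infty(\CalX^\circ)$ by interior elliptic regularity and $f$ inherits the smoothness of $\tilde f$ from the classical Mat\'ern result; the paper argues directly that the SPDE $(\kappa^2-\bigtriangleup)f=\CalW_{\nu-2}$ with $\CalW_{\nu-2}\in\bigcap_{\gamma<\nu-2}C^\gamma$ a.s.\ forces $f\in\bigcap_{\gamma<\nu}C^\gamma$ by a two-derivative elliptic gain. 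Your approach to (a) is more elegant and showcases the path-integral representation, but it couples (a) to (b); note also that your argument in (b) as written yields only \emph{interior} smoothness of $u$, so to close the dense-subset argument in (a) you should add that $u$ is continuous up to $\partial\CalX$ (which follows since the Dirichlet data $\tilde f|_{\partial\CalX}$ is a.s.\ continuous and the boundary points are regular under the stated assumptions). The paper's route is more self-contained in that part (a) does not rely on (b), and its proof of (b) is a one-line bootstrap, though it leans implicitly on Schauder-type regularity without spelling it out.
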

\noindent The proof of this theorem is provided in Appendix \ref{pf:smoothdir}. This theorem holds analogously for inhomogeneous Dirichlet boundary information on $\partial \mathcal{X}$ (i.e., $f(\partial \mathcal{X}) \neq 0$), as long as the specified mean function $\mu_{\mathcal{B}}(\Bx)$ (see Remark 10 of \cite{dalton2024boundary}) satisfies the same smoothness properties from part (b).

\subsection{The BdryMat\'ern Kernel for Neumann and Robin Boundaries}
Consider next the Neumann and Robin boundary setting. Homogeneous Robin boundaries typically presume (see \cite{zwillinger2021handbook}) known boundaries of the form:
\begin{equation}
\frac{\partial f(\Bx)}{\partial \bold{n}} + c(\Bx)f(\Bx) = 0, \quad \quad \Bx \in \partial \mathcal{X},
\label{eq:robinbdry}
\end{equation}
where $c(\mathbf{x}) \geq 0$ is given and $\bold{n}$ is the inward unit
normal vector on $\partial\CalX$. Again, without loss of generality, we presume in this subsection the homogeneous setting; the inhomogeneous setting can be accommodated via a careful specification of the mean function $\mu_{\mathcal{B}}(\mathbf{x})$ via interpolation; see Remark 10 of \cite{dalton2024boundary}. Note that Neumann boundaries correspond to a specific case of Robin boundaries with $c(\mathbf{x})=0$; the following development on Robin boundaries thus holds for Neumann boundaries as well. 


To incorporate the above Robin boundary information, we investigate the SPDE system in \eqref{eq:SPDE2} with an additional boundary condition on domain $\mathcal{X}$:
\begin{align}
\label{eq:bdry_SPDE3}
\begin{split}
    (\kappa^2-\bigtriangleup)f(\BFx) & =\CalW_\nu(\BFx), \quad \BFx\in\CalX, 
    \\
    \frac{\partial f(\Bx)}{\partial \bold{n}}+c(\mathbf{x})f(\Bx) & =0,\quad \quad \quad  \Bx\in\partial \CalX,
\end{split}
\end{align}
where $\mathcal{W}_\nu$ is the Mat\'ern GP with smoothness parameter $\nu > 0$. Again, the SPDE \eqref{eq:bdry_SPDE3} may not have a closed-form solution on an irregular domain $\mathcal{X}$. We thus apply a similar Green's function approach as for the earlier Dirichlet boundary setting. Here, the Green's function $G(\cdot$) for operator $(\kappa^2 - \bigtriangleup)$ on $\mathcal{X}$ satisfies:
\begin{align}
\begin{split}
    (\kappa^2-\bigtriangleup)\int_\CalX G(\BFx,\Bs)\phi(\Bs) d\Bs & =\int_\CalX G(\BFx,\Bs)(\kappa^2-\bigtriangleup)\phi(\Bs) d\Bs=\phi(\Bs),\\
    G(\BFx,\Bx')&=G(\BFx',\Bx),\\
    \partial_{\bold{n}}G(\Bx,\cdot)+c(\mathbf{x})G(\Bx,\cdot)&=0, \quad \Bx\in \partial\CalX,
\end{split}
\label{eq:greendef2}
\end{align}
for any integrable function $\phi: \mathcal{X} \rightarrow \Real$.

Again, while $G$ has no closed-form solution here, it does provide a useful representation of a solution $f$ of \eqref{eq:bdry_SPDE3}, which should be a GP since $(\kappa^2 - \bigtriangleup)$ is a linear operator. In particular, the covariance function of such a solution $f$, again denoted $k_{\nu+2,\mathcal{B}}(\BFx, \BFx')$, can be written in the same form as \eqref{eq:kernel_green}. Using this representation with the Feynman-Kac formula for Robin boundaries \citep{ito1957fundamental,stroock1971diffusion,papanicolaou1990probabilistic}, we can then derive a path integral form for the kernel $k_{\nu+2,\mathcal{B}}(\BFx, \BFx')$. This is given by the following theorem:
\begin{theorem}[BdryMat\'ern kernel for Robin boundary]
\label{thm:bdryMatern_path_integral_robin}
    Suppose the domain $\CalX$ is connected, with its boundary twice-differentiable almost everywhere. Then the kernel $k_{\nu+2,\mathcal{B}}(\BFx, \BFx')$ defined above takes the path integral form:
    \begin{align}
    \begin{split}
        k_{\nu+2,\mathcal{B}}(\BFx,\BFx')
        =  &k_{\nu+2}(\BFx,\BFx')-\E\left[\int_0^{\infty} k_{\nu+2}(\BFx,\bold{B}'_{t})e^{-\kappa^2t-\int_0^tc(\bold{B}'_{s})dL_s}dL_t\right]\\
        &-\E\left[\int_0^{\infty} k_{\nu+2}(\bold{B}_{\tau},\BFx')e^{-\kappa^2\tau-\int_0^\tau c(\bold{B}_{s})dL_s}dL_\tau\right]\\
        &+\E\left[\int_0^{\infty}\int_0^{\infty} k_{\nu+2}(\bold{B}_{\tau},\bold{B}'_{t})e^{-\kappa^2(t+\tau)-\int_0^\tau c(\bold{B}_{s})dL_s-\int_0^tc(\bold{B}'_{s})dL_s}dL_tdL_\tau\right].
        \end{split}
        \label{eq:bdryMatern_path_integral_robin}
    \end{align}
    Here, $\bold{B}_t$ and $\bold{B}_s'$ are mutually independent $d$-dimensional Brownian motions initialized at $\bold{B}_0=\Bx$ and $\bold{B}'_0=\Bx'$, respectively, and $L_t$ and $L_\tau$ are the so-called boundary local times defined as:
    \begin{equation}
    L_t=\lim_{\delta\to 0}\frac{\int_0^t\ind_{\{\bold{B}_s\in\CalX_\delta\}}ds}{\delta},\quad L_\tau =\lim_{\delta\to 0}\frac{\int_0^\tau \ind_{\{\bold{B}'_s\in\CalX_\delta\}}ds}{\delta},
        \label{eq:local}
    \end{equation}
    where $\CalX_\delta = \{\BFx\in\CalX: \|\BFx-\partial\CalX\|_2 \leq\delta\}$ is a region of width $\delta$ around the boundary $\partial\CalX$.
\end{theorem}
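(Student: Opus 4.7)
The plan is to mirror the proof of Theorem \ref{thm:bdryMatern_path_integral_dirichlet} almost step-for-step, with the Feynman-Kac formula for Dirichlet data replaced by its counterpart for Robin data involving reflected Brownian motion and the boundary local time $L_t$ from \eqref{eq:local}. Concretely, I would first note that the Green's function representation \eqref{eq:kernel_green} of $k_{\nu+2,\mathcal{B}}(\BFx,\BFx')$ carries over verbatim, only with $G$ now satisfying the Robin boundary condition \eqref{eq:greendef2} instead of a zero boundary condition. Combining this with the spectral representation \eqref{eq:pf_path_integral_1} of the Mat\'ern kernel and applying Fubini's theorem exactly as in \eqref{eq:pf_path_integral_2} reduces the problem to computing
\begin{equation*}
    k_{\nu+2,\mathcal{B}}(\BFx,\BFx') = C_{\CalF^{-1}} \int_{\Real^d} \psi(\Bomega)\, g_{\Bomega}(\BFx)\, \overline{g_{\Bomega}(\BFx')}\, d\Bomega,
\end{equation*}
where $\psi(\Bomega)=(\kappa^2+\|\Bomega\|^2)^{-\alpha}$ and $g_{\Bomega}$ solves $(\kappa^2-\bigtriangleup)g_{\Bomega}=e^{\im\BFx^T\Bomega}$ on $\CalX$ with $\partial_{\bold{n}} g_{\Bomega} + c\, g_{\Bomega}=0$ on $\partial\CalX$.

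Next I would decompose $g_{\Bomega}=h_{\Bomega}-v_{\Bomega}$, where $h_{\Bomega}$ is the whole-space fundamental solution \eqref{eq:pf_path_integral_6} and the corrector $v_{\Bomega}$ solves the homogeneous equation $(\kappa^2-\bigtriangleup)v_{\Bomega}=0$ in $\CalX$ with inhomogeneous Robin data $\partial_{\bold{n}} v_{\Bomega} + c\, v_{\Bomega} = \partial_{\bold{n}} h_{\Bomega} + c\, h_{\Bomega}$ on $\partial\CalX$. As in \eqref{eq:pf_path_integral_5}, this produces a four-term splitting $k_{\nu+2,\mathcal{B}} = K_{hh} - K_{hv} - K_{vh} + K_{vv}$, with the same expressions \eqref{eq:khh_int}, and $K_{hh}(\BFx,\BFx')=k_{\nu+2}(\BFx,\BFx')$ by the identical spectral computation \eqref{eq:pf_path_integral_7}.

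The heart of the Robin argument is the stochastic representation of $v_{\Bomega}$. Under the twice-differentiable-almost-everywhere assumption on $\partial\CalX$, the Feynman-Kac formula for elliptic problems with Robin data (see \cite{ito1957fundamental,stroock1971diffusion,papanicolaou1990probabilistic}) gives
\begin{equation*}
    v_{\Bomega}(\BFx) = \E\left[\int_0^{\infty} h_{\Bomega}(\bold{B}_t)\, e^{-\kappa^2 t - \int_0^t c(\bold{B}_s)\, dL_s}\, dL_t \,\Big|\, \bold{B}_0=\BFx\right],
\end{equation*}
where $\bold{B}_t$ is the reflected Brownian motion on $\CalX$ started at $\BFx$ and $L_t$ is its boundary local time as in \eqref{eq:local}. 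Substituting this expression into $K_{hv}$ and $K_{vv}$, and then interchanging the $\Bomega$-integral with the outer expectation and local-time integrals via Fubini's theorem, the inner $\Bomega$-integrals collapse to $k_{\nu+2}$ evaluated at the reflected Brownian positions, exactly as in \eqref{eq:pf_path_integral_9}--\eqref{eq:pf_path_integral_10}. Assembling the four pieces yields \eqref{eq:bdryMatern_path_integral_robin}.

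The main obstacle is the Feynman-Kac step. Unlike the Dirichlet case, where the Brownian motion is simply killed at the hitting time $\tau$, here one must work with reflected Brownian motion and its local time, and justify the stochastic representation for a domain that is only twice-differentiable almost everywhere rather than $C^2$. This is precisely why the regularity hypothesis on $\partial\CalX$ appears in the statement: it is the minimal condition ensuring that reflected Brownian motion and $L_t$ are well-defined in the sense of \eqref{eq:local}, that the elliptic PDE for $v_{\Bomega}$ admits a unique classical-or-weak solution, and that the Feynman-Kac identity above actually holds. A secondary technical point is verifying the integrability required for Fubini, which follows from the exponential damping $e^{-\kappa^2 t}$ combined with the rapid decay of $\psi(\Bomega)$ (since $\alpha=\nu+d/2 > d/2+2$) and standard moment bounds on $L_t$ for reflected Brownian motion on a bounded domain.
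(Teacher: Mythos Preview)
Your proposal is correct and follows essentially the same route as the paper: the Green's-function/spectral reduction to $g_{\Bomega}$, the splitting $g_{\Bomega}=h_{\Bomega}-v_{\Bomega}$ into the whole-space solution and a Robin corrector, the identification $K_{hh}=k_{\nu+2}$, and the Feynman--Kac representation of $v_{\Bomega}$ via reflected Brownian motion and boundary local time (Corollary 4.4 of \cite{papanicolaou1990probabilistic}) followed by Fubini. Your statement of the boundary condition for $v_{\Bomega}$ (inhomogeneous Robin data $\partial_{\bold{n}} h_{\Bomega}+c\,h_{\Bomega}$) is in fact more carefully written than the paper's own displayed PDE for $v_{\Bomega}$, and your remarks on the integrability needed for Fubini are slightly more explicit than the paper's one-line appeal to $c\ge 0$.
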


\begin{proof}

This proof is similar to that for Theorem \ref{thm:bdryMatern_path_integral_dirichlet}. The kernel solution of the form \eqref{eq:kernel_green} can again be decomposed into the four terms:
\begin{equation}
    k_{\nu+2,\mathcal{B}}(\BFx,\BFx')=K_{hh}(\BFx,\BFx')-K_{hv}(\BFx,\BFx')-K_{hv}(\BFx',\BFx)+K_{vv}(\BFx,\BFx'),
    \label{eq:pf_path_integral_robin_2}
\end{equation}
where:
\begin{align}
\begin{split}
    &K_{hh}(\BFx,\BFx')=C_{\CalF^{-1}}\int_{\Real^d}\psi(\Bomega)h_{\Bomega}(\BFx)\overline{h_{\Bomega}(\BFx')}d\Bomega,\\
    &K_{hv}(\BFx,\BFx')=C_{\CalF^{-1}}\int_{\Real^d}\psi(\Bomega)h_{\Bomega}(\BFx)\overline{v_{\Bomega}(\BFx')}d\Bomega,\\
    &K_{vv}(\BFx,\BFx')=C_{\CalF^{-1}}\int_{\Real^d}\psi(\Bomega)v_{\Bomega}(\BFx)\overline{v_{\Bomega}(\BFx')}d\Bomega.
    \label{eq:krep}
\end{split}
\end{align}
Here, $h_{\Bomega}(\BFx)$ is the homogeneous solution of \eqref{eq:pf_path_integral_3}, and $v_{\Bomega}(\BFx)$ is the solution to:
\begin{align}
\begin{split}
    (\kappa^2-\bigtriangleup) v_{\Bomega}(\BFx)&=0,\quad \BFx\in\CalX,\\
    \frac{\partial v_{\Bomega}(\BFx)}{\partial\bold{n}}+c(\BFx)v_{\Bomega}(\BFx)&=0,\quad \BFx\in\partial\CalX. \label{eq:pf_path_integral_robin_1}
\end{split}
\end{align}

As before, the first term in \eqref{eq:pf_path_integral_robin_2} reduces to the Mat\'ern kernel $k_{\nu+2,\mathcal{B}}(\mathbf{x},\mathbf{x}')$. For the remaining terms, note that $v_{\Bomega}$ again has no closed-form solution in general. However, as it is the solution to the elliptical PDE \eqref{eq:pf_path_integral_robin_1} and given the twice-differentiability of $\partial \mathcal{X}$ almost everywhere, one can apply the Feynman-Kac formula for Robin boundaries \cite[Corollary 4.4]{papanicolaou1990probabilistic} to obtain the representation:
\begin{equation}
    v_{\Bomega}(\BFx)=\frac{\E\left[\int_0^\infty \exp\{-\kappa^2t-\int_0^tc(\bold{B}_s)dL_s +i\bold{B}_t^T{\Bomega}\}dL_t|\bold{B}_0=\BFx\right]}{\kappa^2+\|\Bomega\|^2}. \label{eq:pf_path_integral_robin_3}
\end{equation}
Plugging this expression into \eqref{eq:krep}, we can then apply Fubini's theorem (in a similar fashion to \eqref{eq:pf_path_integral_9} and \eqref{eq:pf_path_integral_10}) to exchange the order of integration and expectation, under our assumption that $c(\mathbf{x}) \geq 0$. This yields:
\begin{equation}
    \label{eq:pf_path_integral_robin_4}
    K_{hv}(\BFx,\BFx')=\E\left[\int_0^{\infty} k_{\nu+2}(\bold{B}_{\tau},\BFx)e^{-\kappa^2\tau-\int_0^\tau c(\bold{B}_{s})dL_s}dL_\tau\big|\bold{B}_0=\BFx'\right]
\end{equation}
and:
\begin{align}
\label{eq:pf_path_integral_robin_5}
    K_{vv}(\BFx,\BFx')=\E\left[\int_0^{\infty}\int_0^{\infty} k_{\nu+2}(\bold{B}_{\tau},\bold{B}'_{t})e^{-\kappa^2(t+\tau)-\int_0^\tau c(\bold{B}_{s})dL_s-\int_0^tc(\bold{B}'_{s})dL_s}dL_tdL_\tau\right].
\end{align}
The result is then proven by plugging in \eqref{eq:pf_path_integral_robin_4} and \eqref{eq:pf_path_integral_robin_5} into \eqref{eq:pf_path_integral_robin_2}.
\end{proof}

We define the kernel $ k_{\nu,\mathcal{B}}(\BFx,\BFx')$ in \eqref{eq:bdryMatern_path_integral_robin} as the \textit{BdryMat\'ern kernel} with smoothness $\nu > 2$ for \textit{Robin} boundaries, with process variance $\sigma^2$ and inverse length-scale $\kappa > 0$ inherited from the base Mat\'ern kernel $k_\nu(\BFx, \BFx')$. (In what follows, the notation $k_{\nu,\mathcal{B}}(\BFx,\BFx')$ does not distinguish between Dirichlet and Robin boundaries for brevity; this should be clear from context.) Again, the restriction of $\nu > 2$ is needed for the Feynman-Kac representation \eqref{eq:pf_path_integral_8}. The BdryMat\'ern GP for Robin boundaries then takes the form $f(\cdot) \sim \text{GP}\{\mu_{\mathcal{B}}(\Bx),k_{\nu,\mathcal{B}}(\BFx,\BFx')\}$, where $\mu_{\mathcal{B}}(\Bx)$ is the mean function specified at the start of the subsection. We will discuss in Section \ref{sec:kernel_approx} how to efficiently approximate the posterior predictive distribution of this model.

As before, the above BdryMat\'ern GP not only guarantees its sample paths satisfy known Robin boundary conditions, but also ensures the smoothness of such paths can be controlled by the parameter $\nu$. The following theorem shows this for the zero-mean BdryMat\'ern GP given a zero Robin boundary condition:
\begin{theorem}[Smoothness control for Robin BdryMat\'ern GP]
Suppose the domain $\CalX$ is connected, with its boundary twice-differentiable almost everywhere. Consider the zero-mean BdryMat\'ern GP for Robin boundaries $f(\cdot) \sim \textup{GP}\{0,k_{\nu,\mathcal{B}}(\BFx, \BFx')\}$, where $k_{\nu,\mathcal{B}}(\BFx, \BFx')$ is the BdryMat\'ern kernel defined in \eqref{eq:bdryMatern_path_integral_robin}. Then:
\begin{enumerate}[label=(\alph*)]
    \item Its sample paths satisfy the zero Robin boundary condition $\partial f(\Bx)/\partial \bold{n} + c(\Bx) f(\Bx) = 0$ on $\Bx \in \partial \mathcal{X}$ almost surely,
    \item Its sample paths are \textup{($\lceil\nu\rceil$-1)}-differentiable almost surely.
\end{enumerate}
\label{thm:smooth}
\end{theorem}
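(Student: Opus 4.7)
The plan is to mirror the strategy used for Theorem \ref{thm:smoothdir} (the Dirichlet case), adapting the arguments to accommodate the Robin boundary condition and the boundary local times that appear in \eqref{eq:bdryMatern_path_integral_robin}. The two ingredients are (i) the Green's function representation of $f$ afforded by \eqref{eq:greendef2}, which enforces the Robin condition at the kernel level, and (ii) the decomposition of $k_{\nu,\mathcal{B}}$ into the Mat\'ern kernel plus smooth correction terms exhibited in the proof of Theorem \ref{thm:bdryMatern_path_integral_robin}.

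For part (a), I would begin with the representation
\[
f(\mathbf{x}) = \int_{\mathcal{X}} G(\mathbf{x},\mathbf{s})\, \mathcal{W}_{\nu-2}(\mathbf{s})\, d\mathbf{s},
\]
where $G$ is the Robin Green's function from \eqref{eq:greendef2}. By construction, $\partial_{\mathbf{n}} G(\mathbf{x},\cdot) + c(\mathbf{x}) G(\mathbf{x},\cdot) = 0$ for $\mathbf{x}\in\partial\mathcal{X}$, so applying the boundary operator inside the integral and interchanging with the stochastic integral yields $\partial_{\mathbf{n}} f(\mathbf{x}) + c(\mathbf{x}) f(\mathbf{x}) = 0$ almost surely on $\partial\mathcal{X}$. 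The interchange is justified by standard elliptic regularity: the twice-differentiability of $\partial\mathcal{X}$ almost everywhere, together with $c(\mathbf{x}) \geq 0$, gives uniform $L^2(\mathcal{X})$ control of $\partial_{\mathbf{n}} G(\mathbf{x},\cdot)$ as $\mathbf{x}$ approaches the boundary, after which Fubini and a mollification of $\mathcal{W}_{\nu-2}$ transfer the pathwise condition from the mollified version to the limit.

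For part (b), I would use the kernel decomposition established in the proof of Theorem \ref{thm:bdryMatern_path_integral_robin},
\[
k_{\nu,\mathcal{B}}(\mathbf{x},\mathbf{x}') = k_{\nu}(\mathbf{x},\mathbf{x}') + R(\mathbf{x},\mathbf{x}'),
\]
where $R(\mathbf{x},\mathbf{x}')$ collects the three Brownian-hitting expectations involving local-time integrals. Each of these terms is an expectation of $k_{\nu}$ evaluated at points on $\partial\mathcal{X}$, weighted by $e^{-\kappa^{2}t - \int_0^t c(\mathbf{B}_s)dL_s}$. Since $k_{\nu}$ is smooth off the diagonal and the Brownian hitting distributions depend smoothly on the starting point $\mathbf{x}$ under the twice-differentiability hypothesis on $\partial\mathcal{X}$, one can show by dominated convergence and iterated differentiation under the expectation that $R \in C^{\infty}(\mathcal{X}\times\mathcal{X})$. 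Hence the diagonal singularity structure of $k_{\nu,\mathcal{B}}$ is identical to that of $k_{\nu}$, and the classical sample-path smoothness result for Mat\'ern GPs (\cite{da2023sample}) transfers verbatim: $f$ has $(\lceil \nu \rceil - 1)$-differentiable sample paths almost surely.

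The main technical obstacle is the pathwise interchange in part (a): Green's function theory naturally delivers the Robin condition in a mean-square sense, but upgrading to pathwise validity requires uniform control of $\nabla_{\mathbf{x}} G(\mathbf{x},\mathbf{s})$ as $\mathbf{x}\to\partial\mathcal{X}$, which is delicate near the measure-zero set where twice-differentiability of $\partial\mathcal{X}$ may fail. A robust workaround is to approximate $\mathcal{W}_{\nu-2}$ by smooth Gaussian fields $\mathcal{W}_{\nu-2}^{(n)}$, verify the Robin condition pathwise for the corresponding smooth solutions $f_n$, and pass to the limit in $L^2(\Omega)$ followed by a Borel--Cantelli argument along a subsequence. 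A secondary subtlety in part (b) is verifying smoothness of $R$ at points where a Brownian trajectory can hit a non-smooth portion of $\partial\mathcal{X}$; this requires showing that such events contribute a set of Wiener measure zero, which again uses the almost-everywhere twice-differentiability hypothesis.
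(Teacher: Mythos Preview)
Your approach to part (a) is essentially the same as the paper's: both approximate the driving noise, solve the deterministic boundary value problem exactly for the approximant (which inherits the Robin condition by construction), and pass to the limit. The paper uses a Karhunen--Lo\`eve truncation of $\mathcal{W}_{\nu-2}$ rather than mollification, but the logical structure is identical.

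Part (b) is where your route genuinely diverges from the paper's. You propose to decompose $k_{\nu,\mathcal{B}} = k_\nu + R$, argue that $R \in C^\infty(\mathcal{X}\times\mathcal{X})$, and conclude that the sample-path smoothness is inherited from the Mat\'ern kernel via \cite{da2023sample}. The paper instead works directly with the SPDE: since $(\kappa^2 - \Delta)f = \mathcal{W}_{\nu-2}$ and the sample paths of $\mathcal{W}_{\nu-2}$ lie in the almost-$(\nu-2)$ H\"older class, the relation $\Delta f \in C^{-(\nu-2)}$ immediately forces $f \in C^{-\nu}$, hence $(\lceil\nu\rceil-1)$-differentiable. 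This is a two-line argument that never touches the path-integral correction terms. Your kernel-decomposition route is valid in principle, but the claim that $R$ is $C^\infty$ is doing real work that you only sketch: smoothness in the \emph{starting point} of expectations involving reflecting Brownian motion and boundary local time is not a triviality of dominated convergence, and ultimately rests on elliptic regularity for the associated PDEs (i.e., the same Feynman--Kac connection that produced the kernel in the first place). So your argument is circular in the sense that the cleanest way to justify your key step is precisely the SPDE regularity the paper uses directly. The paper's route is therefore both shorter and more self-contained.
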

\noindent The proof of this theorem is provided in Appendix \ref{pf:smoothrobin}. This theorem again holds analogously for inhomogeneous Robin boundaries on $\partial \mathcal{X}$, as long as the specified mean function $\mu_{\mathcal{B}}(\Bx)$ (see Remark 10 of \cite{dalton2024boundary}) satisfies the required smoothness properties.

For general irregular domains $\mathcal{X}$, the BdryMat\'ern kernel (both for the Dirichlet setting \eqref{eq:bdryMatern_path_integral_dirichlet} and the Robin setting \eqref{eq:bdryMatern_path_integral_robin}) cannot be obtained in closed form and thus requires careful approximation. We present next a finite-element-method approach that performs this kernel approximation to facilitate efficient posterior predictions with reliable error analysis.


\section{Approximation of the Posterior Predictive Distribution}
\label{sec:kernel_approx}


We first consider the approximation of the BdryMat\'ern kernel, which is needed within the GP equations \eqref{eq:kriging} for posterior predictions. This kernel approximation approach requires two steps. First, we construct a coupled Monte Carlo estimator for estimating the Gram kernel matrix for the BdryMat\'ern kernel. Next, we employ a finite-element-method approach that leverages such an estimator for efficient approximation of the posterior predictive distribution with error analysis. In what follows, we again presume the domain $\mathcal{X}$ is connected with its boundary twice-differentiable almost everywhere, as required for defining the BdryMat\'ern GP.


\subsection{Kernel Matrix Estimator}\label{sec:kermat}

Consider first the BdryMat\'ern kernel $k_{\nu,\mathcal{B}}$ in \eqref{eq:bdryMatern_path_integral_dirichlet} for the Dirichlet setting. Let $\mathbf{U} = \{\mathbf{u}_1, \cdots, \mathbf{u}_m\}$ denote a set of $m$ points on $\mathcal{X}$, and consider the approximation of its kernel matrix $\mathbf{K}:=\mathbf{K}(\mathbf{U},\mathbf{U}) = [k_{\nu,\mathcal{B}}(\mathbf{u}_i,\mathbf{u}_j)]_{i,j=1}^m$. A naive approach might be to \textit{separately} approximate each matrix entry $K_{ij} = k_{\nu,\mathcal{B}}(\mathbf{u}_i, \mathbf{u}_j)$ via the Monte Carlo (MC) estimator:
\small
   \begin{align}
    \begin{split}
        \tilde{K}_{ij}
        =  &k_{\nu}(\Bu_i,\Bu_j)-k_{\nu}(\Bu_i,\mathbf{W}'_{ij})e^{-\kappa^2\tau_{ij}} - k_{\nu}(\mathbf{W}_{ij},\Bu_j)e^{-\kappa^2\gamma_{ij}} + k_{\nu}(\mathbf{W}_{ij},\mathbf{W}'_{ij})e^{-\kappa^2(\tau_{ij}+\gamma_{ij})},
        \end{split}
        \label{eq:naive_dir}
    \end{align}
    \normalsize
where $(\mathbf{W}_{ij},\tau_{ij})$ and $(\mathbf{W}'_{ij},\gamma_{ij})$ are the simulated hitting locations and hitting times on $\partial \mathcal{X}$ of mutually independent Brownian motions initialized at $\mathbf{u}_i$ and $\mathbf{u}_j$, respectively. The Brownian motions over different indices $(i,j)$ are also mutually independent. Here, only one Monte Carlo draw is used, for reasons we explain later. However, there is a key drawback to such an approach: the estimated kernel matrix $\tilde{\mathbf{K}} = [\tilde{K}_{ij}]_{i,j=1}^m$ may no longer be positive definite, as the eigenvalues of the Monte Carlo error matrix $\tilde{\mathbf{K}}-\mathbf{K}$ can be larger than the minimum eigenvalue of the desired matrix $\mathbf{K}$. A similar issue arises if an analogous naive MC estimator is used for the BdryMat\'ern kernel \eqref{eq:bdryMatern_path_integral_robin} in the Robin setting.

To address this in the Dirichlet setting, we adopt the following \textit{coupled} Monte Carlo approximation of $\mathbf{K}$. For each point $\mathbf{u}_i$, let $(\mathbf{W}_{i},\tau_{i})$ denote the simulated hitting locations and times on $\partial \mathcal{X}$ of a Brownian motion initialized at $\mathbf{u}_i$; the Brownian motions over different indices $i$ are also mutually independent. We can then estimate each matrix entry $K_{ij}$ via the coupled MC estimator:
   \begin{align}
        \hat{K}_{ij}^{\rm C}
        =  &k_{\nu}(\mathbf{u}_i,\mathbf{u}_j)-k_{\nu}(\Bu_i,\mathbf{W}_{j})e^{-\kappa^2\tau_{j}} - k_{\nu}(\mathbf{W}_{i},\Bu_j)e^{-\kappa^2\tau_{i}}+k_{\nu}(\mathbf{W}_{i},\mathbf{W}_{j})e^{-\kappa^2(\tau_{i}+\tau_{j})}.
       \label{eq:coupled_dir}
    \end{align}
For two entries on the same row $i$ (or same column $i$), its estimators from \eqref{eq:coupled_dir} depend on the same hitting time $\tau_{i}$; such estimators are thus ``coupled'' and dependent.

The following theorem shows that this coupled MC estimator $\hat{\mathbf{K}}^{\rm C}(\mathbf{U},\mathbf{U}) = [\hat{K}_{ij}^{\rm C}]_{i,j=1}^m$ for Dirichlet boundary can preserve positive definiteness almost surely:
\begin{theorem}[Positive Definiteness of Dirichlet Coupled MC Estimator]
\label{thm:empirical_BdryMat_Dirichlet}
   Suppose the points \(\BU = \{\Bu_1, \cdots, \Bu_m\} \subset \mathcal{X}\) are distinct and within the interior of $\mathcal{X}$. Further let $\{(\mathbf{W}_{i},\tau_{i})\}_{i=1}$ denote the simulated hitting locations and times on $\partial \mathcal{X}$ of the mutually independent Brownian motions initialized at the points $\{\mathbf{u}_i\}_{i=1}^m$. Then the coupled matrix estimator $\hat{\mathbf{K}}^{\rm C}(\mathbf{U},\mathbf{U})$ with entries $\hat{K}_{ij}^{\rm C}$ defined in \eqref{eq:coupled_dir} is positive definite almost surely. 
\end{theorem}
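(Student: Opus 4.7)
The plan is to recognize the coupled estimator $\hat{K}_{ij}^{\rm C}$ as a Gram matrix in the reproducing kernel Hilbert space $H_\nu$ associated with the Mat\'ern kernel $k_\nu$. Specifically, for each $i = 1, \dots, m$, define the random feature
\[
\phi_i(\cdot) \;=\; k_\nu(\cdot,\mathbf{u}_i) \;-\; e^{-\kappa^2 \tau_i}\, k_\nu(\cdot,\mathbf{W}_i) \;\in\; H_\nu.
\]
Using the reproducing property $\langle k_\nu(\cdot,\mathbf{y}), k_\nu(\cdot,\mathbf{z})\rangle_{H_\nu} = k_\nu(\mathbf{y},\mathbf{z})$, a direct expansion of $\langle \phi_i,\phi_j\rangle_{H_\nu}$ reproduces precisely the four terms in \eqref{eq:coupled_dir}, so $\hat{\mathbf{K}}^{\rm C}(\mathbf{U},\mathbf{U})$ is the Gram matrix of $\phi_1,\dots,\phi_m$. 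Thus positive definiteness reduces to showing that $\phi_1,\dots,\phi_m$ are almost surely linearly independent in $H_\nu$.

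To establish linear independence, I would suppose $\sum_{i=1}^m a_i \phi_i = 0$ in $H_\nu$ for some $\mathbf{a}\ne 0$, which as a function reads
\[
\sum_{i=1}^m a_i\, k_\nu(\cdot,\mathbf{u}_i) \;-\; \sum_{i=1}^m a_i e^{-\kappa^2 \tau_i}\, k_\nu(\cdot,\mathbf{W}_i) \;=\; 0.
\]
This is a linear combination of at most $2m$ Mat\'ern kernel sections centered at the points in the random set $\mathbf{U}\cup\{\mathbf{W}_1,\dots,\mathbf{W}_m\}$. The Mat\'ern kernel is strictly positive definite on $\mathbb{R}^d$, so kernel sections at pairwise distinct centers are linearly independent. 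Hence, conditional on the event that these $2m$ centers are pairwise distinct, all coefficients must vanish: $a_i = 0$ and $a_i e^{-\kappa^2 \tau_i} = 0$ for every $i$, contradicting $\mathbf{a}\ne 0$.

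The main obstacle, and the only probabilistic content of the argument, is verifying that the $2m$ centers are pairwise distinct with probability one. I would argue this in three pieces: (i) each $\mathbf{u}_i$ lies in the interior of $\mathcal{X}$ by assumption, while each $\mathbf{W}_j$ lies on $\partial\mathcal{X}$, so $\mathbf{u}_i \ne \mathbf{W}_j$ deterministically; (ii) the $\mathbf{u}_i$'s are distinct by hypothesis; (iii) for $i\ne j$, the independent Brownian hitting locations $\mathbf{W}_i$ and $\mathbf{W}_j$ coincide with probability zero. For (iii), the twice-differentiability of $\partial\mathcal{X}$ almost everywhere ensures that the harmonic measure on $\partial\mathcal{X}$ from any interior point is absolutely continuous with respect to surface measure; consequently the law of the pair $(\mathbf{W}_i,\mathbf{W}_j)$ has a density, and the diagonal $\{\mathbf{W}_i=\mathbf{W}_j\}$ has Lebesgue surface measure zero on $\partial\mathcal{X}\times\partial\mathcal{X}$, hence probability zero. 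A union bound over the finitely many pairs $(i,j)$ gives pairwise distinctness almost surely.

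Combining the two pieces yields the result: on the almost sure event that all $2m$ centers are distinct, $\phi_1,\dots,\phi_m$ are linearly independent in $H_\nu$, so their Gram matrix $\hat{\mathbf{K}}^{\rm C}(\mathbf{U},\mathbf{U})$ is positive definite. I expect the RKHS reformulation to be the conceptual crux, while the technical step requiring the most care is the harmonic-measure argument for the almost-sure distinctness of the hitting locations on an irregular but almost-everywhere twice-differentiable boundary; this is where the regularity hypothesis on $\partial\mathcal{X}$ is genuinely used.
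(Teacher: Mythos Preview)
Your proof is correct and takes a different route from the paper. The paper works in the spectral domain: it expands $\mathbf{v}^T\hat{\mathbf{K}}^{\rm C}\mathbf{v}$ as an integral of $\bigl|\sum_i v_i(e^{\im\mathbf{u}_i^T\Bomega}-e^{-\kappa^2\tau_i}e^{\im\mathbf{W}_i^T\Bomega})\bigr|^2$ against the Mat\'ern spectral density and then bounds this below using $\tau_*=\min_i\tau_i>0$. Your RKHS Gram-matrix reformulation is the spatial dual of this: the features $\phi_i$ are the inverse Fourier transforms of those spectral atoms, and strict positive definiteness of $k_\nu$ plays the role of linear independence of the characters $e^{\im\mathbf{c}^T\Bomega}$. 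Your framing makes the role of the coupling transparent---the estimator factors as a Gram matrix precisely because the same $(\mathbf{W}_i,\tau_i)$ is reused across row~$i$---whereas the paper's argument aims at a quantitative eigenvalue lower bound.

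One simplification: your step~(iii), the harmonic-measure argument that the $\mathbf{W}_i$ are pairwise distinct, is unnecessary and in fact fails when $d=1$ (there $\partial\mathcal{X}$ is a finite set and the hitting distribution is atomic). You do not need it. Even if several $\mathbf{W}_j$'s coincide, group the relation $\sum_i a_i\phi_i=0$ by \emph{distinct} centers: the interior points $\mathbf{u}_1,\dots,\mathbf{u}_m$ are pairwise distinct by hypothesis and disjoint from every boundary center by~(i), so the coefficient of $k_\nu(\cdot,\mathbf{u}_i)$ in the grouped sum is exactly $a_i$. Strict positive definiteness of $k_\nu$ then forces $a_i=0$ for all $i$ directly, with no condition on the $\mathbf{W}_j$'s. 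This removes the only place where you invoked boundary regularity, so your argument in fact goes through under weaker assumptions than stated.
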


\begin{proof}
The positive definiteness of $\hat{\mathbf{K}}^{\rm C}(\mathbf{U},\mathbf{U})$ can be shown as follows. Define $\bold{W}_i := \Bu_i+\boldsymbol{\delta}_i$, and let $\mathbf{v}=(v_1,\cdots,v_m)^T \in \mathbb{R}^m$ be an arbitrary vector. Using the path integral representation of $K_{hh}$, $K_{hv}$, and $K_{vv}$ in \eqref{eq:pf_path_integral_7}-\eqref{eq:pf_path_integral_10}, the quadratic form $\mathbf{v}^T\hat{\BK}^{\rm C}(\mathbf{U},\mathbf{U})\mathbf{v}$ can be written as:
\small
\begin{align}
    \begin{split}
    \mathbf{v}^T\hat{\BK}^{\rm C}(\mathbf{U},\mathbf{U})\mathbf{v}=&C_{\CalF^{-1}}\int_{\Real^d}\frac{\sum_{i,j}v_iv_je^{\im(\mathbf{u}_i-\mathbf{u}_j)^T\Bomega}\left[1-e^{\im \boldsymbol{\delta}_i^T\Bomega-\kappa^2 \tau_i}-e^{-\im \boldsymbol{\delta}_j^T\Bomega-\kappa^2 \tau_j}+e^{\im(\boldsymbol{\delta}_i-\boldsymbol{\delta}_j)^T\Bomega-\kappa^2 (\tau_i+\tau_j)}\right]}{(\kappa^2+\|\Bomega\|^2)^{\alpha+2}}d\Bomega\\
    =& C_{\CalF^{-1}}\int_{\Real^d}\frac{{\mathbf{v}}^T\left[\bold{I}-\text{diag}[e^{\im \boldsymbol{\delta}_i^T\Bomega-\kappa^2 \tau_i}]\right][e^{\im (\mathbf{u}_i-\mathbf{u}_j)^T\Bomega}]_{i,j}\overline{\left[\bold{I}-\text{diag}[e^{\im \boldsymbol{\delta}_i^T\Bomega-\kappa^2 \tau_i}]\right]{\mathbf{v}}}}{(\kappa^2+\|\Bomega\|^2)^{\alpha+2}}d\Bomega\\
    \geq & C_{\CalF^{-1}}\int_{\Real^d}\frac{{\mathbf{v}}^T\left[\bold{I}-\text{diag}[e^{-\kappa^2\tau_*}]\right][e^{\im (\mathbf{u}_i-\mathbf{u}_j)^T\Bomega}]_{i,j}\overline{\left[\bold{I}-\text{diag}[e^{-\kappa^2\tau_*}]\right]{\mathbf{v}}}}{(\kappa^2+\|\Bomega\|^2)^{\alpha+2}}d\Bomega,\label{eq:pf_empirical_BdryMat_1}
\end{split}
\end{align}
\normalsize
where $\tau_*=\min_{i}\tau_i>0$   and $\lambda_*=1-e^{-\kappa^2\tau_*}>0 $ almost surely for any $\mathbf{U}$ in the interior of $\CalX$. We thus have:
\begin{align}
    \left[\bold{I}-\text{diag}[e^{\im \boldsymbol{\delta}_i-\kappa^2\tau_i}]\right]^2> \left[\bold{I}-\text{diag}[e^{-\kappa^2\tau_*}]\right]^2=\lambda_*^2\bold{I}.
\end{align}
Using this with the last line of \eqref{eq:pf_empirical_BdryMat_1}, it follows that:
\begin{equation}
\mathbf{v}^T\hat{\BK}^{\rm C}(\mathbf{U},\mathbf{U})\mathbf{v} \geq \lambda^2_* C_{\CalF^{-1}}\int_{\Real^d}\frac{\|\sum_{i=1}v_ie^{\im \Bu_i^T\Bomega}\|^2}{(\kappa^2+\|\Bomega\|^2)^{\alpha+2}}d\Bomega>0\quad\quad \text{a.s.,} 
\end{equation}
which completes the proof.

\end{proof}

A similar coupled estimator can be used in the Robin setting for the kernel matrix $\mathbf{K} = [K_{ij}]_{i,j=1}^m = k_{\nu ,\mathcal{B}}(\mathbf{U},\mathbf{U})$. For each point $i$, let $\{(\mathbf{B}_{t}^{(i)}, L_{t}^{(i)})\}_{i=1}^m$ be the mutually independent Brownian motions initialized at $\{\mathbf{u}_i\}_{i=1}^m$ along with its boundary local times as defined in \eqref{eq:local}. We then estimate each matrix entry $K_{ij}^{\rm C}$ via the coupled MC estimator:
\begin{align}
\begin{split}
     \hat{K}_{ij}^{\rm C} &= k_{\nu}(\Bx_i,\Bx_j)-  \int_0^{\infty} k_{\nu}(\Bx_i,\bold{B}^{(j)}_{t})e^{-\kappa^2t-\int_0^tc(\bold{B}^{(j)}_{s})dL_s^{(j)}}dL_t^{(j)}  \\
        & \quad - \int_0^{\infty} k_{\nu}(\bold{B}^{(i)}_{\tau},\Bx_j)e^{-\kappa^2\tau-\int_0^\tau c(\bold{B}^{(i)}_{s})dL_s}dL^{(i)}_\tau  \\
        &\quad +  \int_0^{\infty}\int_0^{\infty} k_{\nu}(\bold{B}^{(i)}_{\tau},\bold{B}^{(j)}_{t})e^{-\kappa^2(t+\tau)-\int_0^\tau c(\bold{B}^{(i)}_{s})dL_s^{(i)}-\int_0^tc(\bold{B}^{(j)}_{s})dL_s^{(j)}}dL_t^{(i)}dL_\tau^{(j)}.
        \label{eq:coupled_robin}
\end{split}
\end{align}
Again, note that for matrix entries on the same row $i$ (or same column $i$), its estimators from \eqref{eq:coupled_robin} depend on the same simulated Brownian motions, and thus are dependent.

The following theorem shows that this coupled MC estimator $\hat{\mathbf{K}}^{\rm C}(\mathbf{U},\mathbf{U}) = [\hat{K}^{\rm C}_{ij}]_{i,j=1}^m$ for Robin boundary again preserves positive definiteness almost surely:
\begin{theorem}[Positive Definiteness of Robin Coupled MC Estimator]
\label{thm:empirical_BdryMat_Robin}
   Suppose the points \(\BU = \{\Bu_1, \cdots, \Bu_m\} \subset \mathcal{X}\) are distinct and within the interior of $\mathcal{X}$. Further, let $\{\mathbf{B}_{t}^{(i)}\}_{i=1}^m$ denote the mutually independent standard reflecting Brownian motions (SRBMs; see \cite{harrison1981reflected}) initialized at the points $\{\mathbf{u}_i\}_{i=1}^m$, with $\{L_{t}^{(i)}\}_{i=1}^m$ their corresponding boundary local times as defined in \eqref{eq:local}. If the following inequality is satisfied: 
   \begin{equation}
    \label{eq:empirical_BdryMat_Robin}
    \sup_{\BFx\in\CalX}\E[L_t|\bold{B}_0=\BFx] < (1-e^{-1})\kappa \sqrt{t},
\end{equation}
then the coupled matrix estimator $\hat{\mathbf{K}}^{\rm C}(\mathbf{U},\mathbf{U})$ with entries $\hat{K}_{ij}^{\rm C}$ defined in \eqref{eq:coupled_robin} is positive definite almost surely. 
\end{theorem}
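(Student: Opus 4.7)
The plan is to mirror the strategy used in the proof of Theorem \ref{thm:empirical_BdryMat_Dirichlet}: express the quadratic form $\mathbf{v}^T \hat{\BK}^{\rm C}(\mathbf{U},\mathbf{U})\mathbf{v}$ as a manifestly non-negative spectral integral, and then use hypothesis \eqref{eq:empirical_BdryMat_Robin} to rule out degenerate cancellation of the integrand. First, I substitute the spectral decomposition \eqref{eq:pf_path_integral_robin_2}--\eqref{eq:krep} together with the Feynman--Kac representation \eqref{eq:pf_path_integral_robin_3} into the coupled estimator \eqref{eq:coupled_robin}. Because the coupled construction reuses the same realised path $(\bold{B}_t^{(i)}, L_t^{(i)})$ throughout row $i$ (and, by symmetry, column $i$), the Fubini-type rearrangement performed in the Dirichlet proof goes through verbatim. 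Introducing the complex-valued random field
\[
\Psi_i(\Bomega) := \int_0^\infty \exp\!\Bigl\{-\kappa^2 t - \int_0^t c(\bold{B}_s^{(i)}) \, dL_s^{(i)} + \im(\bold{B}_t^{(i)} - \Bu_i)^T \Bomega\Bigr\} \, dL_t^{(i)},
\]
the same computation that produced \eqref{eq:pf_empirical_BdryMat_1} gives
\[
\mathbf{v}^T \hat{\BK}^{\rm C}(\mathbf{U},\mathbf{U})\mathbf{v} = C_{\CalF^{-1}} \int_{\Real^d} \frac{\bigl|\sum_{i=1}^m v_i \, (1 - \Psi_i(\Bomega)) \, e^{\im \Bu_i^T \Bomega}\bigr|^2}{(\kappa^2 + \|\Bomega\|^2)^{\alpha+2}} \, d\Bomega,
\]
which is manifestly non-negative.

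Next, I would upgrade non-negativity to strict positivity. Since $c(\cdot) \geq 0$ and $|e^{\im\theta}| = 1$, the pointwise bound $|\Psi_i(\Bomega)| \leq Z_i := \int_0^\infty e^{-\kappa^2 t} \, dL_t^{(i)}$ is immediate. Integration by parts (using $L_0^{(i)} = 0$ and monotonicity of $L_t^{(i)}$) rewrites $Z_i = \kappa^2 \int_0^\infty e^{-\kappa^2 t} L_t^{(i)} \, dt$. Splitting $Z_i$ at the deterministic time $T = 1/\kappa^2$, bounding $\int_0^T e^{-\kappa^2 t} dL_t^{(i)} \leq L_T^{(i)}$, and invoking the strong Markov property of the SRBM at time $T$ yields the fixed-point estimate $(1 - e^{-1}) \sup_{\BFx} \E[Z_i \mid \bold{B}_0 = \BFx] \leq \sup_{\BFx} \E[L_T \mid \bold{B}_0 = \BFx]$. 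Applying \eqref{eq:empirical_BdryMat_Robin} at $t = T = 1/\kappa^2$ (so that $\kappa \sqrt{T} = 1$), one concludes $\sup_\BFx \E[Z_i \mid \bold{B}_0 = \BFx] < 1$; a Markov-type concentration argument leveraging the same iterative strong-Markov decomposition then promotes this to the almost-sure bound $Z_i < 1$.

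Finally, the a.s. bound $|1 - \Psi_i(\Bomega)| \geq 1 - Z_i > 0$, combined with the linear independence of $\{e^{\im \Bu_i^T \Bomega}\}_{i=1}^m$ as functions of $\Bomega$ (for the distinct interior points $\Bu_i$), forces the integrand in the spectral representation to be strictly positive on a set of positive Lebesgue measure whenever $\mathbf{v} \neq \mathbf{0}$. Therefore $\mathbf{v}^T \hat{\BK}^{\rm C}(\mathbf{U},\mathbf{U}) \mathbf{v} > 0$ almost surely for every non-zero $\mathbf{v}$, which is the required positive definiteness.

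The main obstacle is the middle step. In Theorem \ref{thm:empirical_BdryMat_Dirichlet} one exploits the deterministic positivity of the hitting time $\tau_* > 0$, which cleanly produces the multiplicative gap $\lambda_* = 1 - e^{-\kappa^2 \tau_*}$. Under reflecting dynamics no analogous ``one-shot'' stopping rule is available, so one must control the entire local-time integral $Z_i$ rather than a single exit-time factor. The hypothesis \eqref{eq:empirical_BdryMat_Robin} is tailored precisely so that the factor $(1 - e^{-1})$ arising from the strong-Markov split at $T = 1/\kappa^2$ exactly matches the $(1 - e^{-1})$ in the assumed expected-local-time bound, yielding strict (but only moment-level) dominance by $1$. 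Upgrading this expectation bound to the almost-sure bound on $Z_i$ is the delicate technical point; I anticipate needing an exponential (Laplace-type) moment inequality for the reflecting local-time functional rather than a first-moment estimate alone.
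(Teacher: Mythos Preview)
Your approach matches the paper's: same spectral integrand, same reduction to showing $|S_i|<1$ (your $\Psi_i$ is the paper's $S_i$), and the same Markov-property control via hypothesis \eqref{eq:empirical_BdryMat_Robin}. Where you split once at $T=1/\kappa^2$ and obtain the fixed-point bound $(1-e^{-1})\sup_{\BFx}\E[Z_i]\leq\sup_{\BFx}\E[L_T]$, the paper iterates the split at every multiple of a step $\vartriangle_t$, writing
\[
|S_i|\;\leq\;\sum_{l\geq 0} e^{-\kappa^2\vartriangle_t l-\int_0^{\vartriangle_t l}c\,dL}\,G_{\vartriangle_t}\!\bigl(\bold{B}^{(i)}_{\vartriangle_t l}\bigr),\qquad G_{\vartriangle_t}(\BFx)=\E\!\Bigl[\int_0^{\vartriangle_t} e^{-\kappa^2 t-\int_0^t c\,dL}\,dL_t\,\Big|\,\bold{B}_0=\BFx\Bigr],
\]
bounds $G_{\vartriangle_t}\leq\sup_{\BFx}\E[L_{\vartriangle_t}\mid\bold{B}_0=\BFx]$, sums the geometric series, and takes $\vartriangle_t=1/\kappa^2$ so that $1-e^{-\kappa^2\vartriangle_t}=1-e^{-1}$ cancels against the hypothesis. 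No Laplace-moment machinery is invoked---only the first-moment assumption \eqref{eq:empirical_BdryMat_Robin}.

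Your caution about the almost-sure-versus-expectation gap is nonetheless well placed. The paper presents the displayed step as a pathwise identity ``from the Markov property of the SRBM,'' but the Markov property only says that the block integral over $[\vartriangle_t l,\vartriangle_t(l+1)]$ has conditional expectation $G_{\vartriangle_t}(\bold{B}^{(i)}_{\vartriangle_t l})$---not that it is a.s.\ bounded by it. Since $Z_i=\int_0^\infty e^{-\kappa^2 t}\,dL_t^{(i)}$ is an unbounded random variable, a first-moment bound alone cannot force $Z_i<1$ pathwise. So the paper's iterated decomposition, read literally, yields the same expectation-level estimate you already derived; the upgrade you correctly flag as the ``delicate technical point'' is not separately addressed there either.
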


\begin{proof}
Let $\Bu_i+\bold{W}^{(i)}_t=\bold{B}^{(i)}_t$. Similar to the proof for Theorem \ref{thm:empirical_BdryMat_Dirichlet}, note that for any $\bold{v}\in\Real^m$, the quadratic form $\bold{v}^T\BK^{\rm C}(\BU,\BU)\bold{v}$ can be written as:
\begin{align}
\begin{split}
     \bold{v}^T\BK(\BU,\BU)\bold{v}    =&C_{\CalF^{-1}}\int_{\Real^d}\frac{\sum_{i,j}v_iv_je^{\im(\Bu_i-\Bu_j)^T\Bomega}\left[1-S_i-S_j+S_{i}S_j\right]}{(\kappa^2+\|\Bomega\|^2)^{\alpha+2}}d\Bomega\\
    =& C_{\CalF^{-1}}\int_{\Real^d}\frac{{\bold{V}}^T\left[\bold{I}-\text{diag}[S_i]\right][e^{\im (\Bu_i-\Bu_j)^T\Bomega}]_{i,j}\overline{\left[\bold{I}-\text{diag}[S_i]\right]{\bold{V}}}}{(\kappa^2+\|\Bomega\|^2)^{\alpha+2}}d\Bomega,
    \end{split}
\end{align}
where:
\begin{equation*}
    S_i=\int_0^\infty\exp\left\{-\kappa^2t-\int_0^tc(\bold{B}^{(i)}_s)dL_s+\im{\Bomega}^T{\bold{W}^{(i)}_t}\right\}dL_t.
\end{equation*}
Further note that:
\begin{align}
\begin{split}
    \left|S_i\right|=&\left|\int_0^\infty\exp\left\{-\kappa^2t-\int_0^tc(\bold{B}^{(i)}_s)dL_s+\im{\Bomega}^T{\bold{W}^{(i)}_t}\right\}dL_t\right|\\
    \leq & \int_0^\infty\exp\left\{-\kappa^2t-\int_0^tc(\bold{B}^{(i)}_s)dL_s\right\}dL_t\\
    =&\sum_{l=0}^\infty e^{-\kappa^2\vartriangle_t l-\int_0^{\vartriangle_t l}c(\bold{B}^{(i)}_s)dL_s}G_{\vartriangle_t}(\bold{B}^{(i)}_{\vartriangle_t l}),\label{eq:pf_empirical_BdryMat_robin_1}
\end{split}
\end{align}
where the last line of \eqref{eq:pf_empirical_BdryMat_robin_1} follows from the Markov property of the SRBM, which holds for any $\vartriangle_t>0$. Moreover, we have:
\begin{align}
    G_{\vartriangle_t}(\BFx)=&\E\left[\int_0^{\vartriangle_t}\exp\left\{-\kappa^2t-\int_0^tc(\bold{B}_s)dL_s\right\}dL_t|\bold{B}_0=\BFx \right]\leq \sup_{\BFx\in\CalX}\E[L_t|\bold{B}_0=\BFx],\label{eq:pf_empirical_BdryMat_robin_2}
\end{align}
where $\bold{B}_s$ is an SRBM and $L_s$ its boundary local time.

Substituting \eqref{eq:pf_empirical_BdryMat_robin_2} into \eqref{eq:pf_empirical_BdryMat_robin_1}, and using the fact that $c(\BFx)\geq 0$, we have:
\begin{align*}
    |S_i|\leq &\sup_{\BFx\in\CalX}\E[L_t|\bold{B}_0=\BFx]\sum_{l\geq 0}e^{-\kappa^2\vartriangle_t l}
    \leq \frac{\sup_{\BFx\in\CalX}\E[L_t|\bold{B}_0=\BFx]}{1-e^{-\kappa^2 \vartriangle_t}}
    < \frac{(1-e^{-1})\kappa \sqrt{\vartriangle_t} }{1-e^{-\kappa^2 \vartriangle_t}}
    \leq 1,
\end{align*}
where the second-last inequality follows from \eqref{eq:empirical_BdryMat_Robin} and the last inequality is obtained by setting $\vartriangle_t=\kappa^2$. It follows that $\left[\bold{I}-\text{diag}[S_i]\right]^2\geq \lambda_*^2\bold{I}> 0$ and thus $\bold{v}^T\BK^{\rm C}(\BU,\BU)\bold{v} \succ 0$.
\end{proof}

It should be noted that the condition \eqref{eq:empirical_BdryMat_Robin} in Theorem \ref{thm:empirical_BdryMat_Robin} is quite mild. As shown in \cite{hsu1984reflecting,hsu1985probabilistic}, the boundary local time \( L_t \) satisfies:
\[
\sup_{\BFx \in \CalX} \E[L_t | \bold{B}_0 = \BFx] \leq A\sqrt{t},
\]
where \( A \) is a constant dependent only on the domain \( \CalX \). Thus, our condition \eqref{eq:empirical_BdryMat_Robin} easily satisfied for any \( \kappa > {A}(1 - 1/e) \); see \citep{grebenkov2019probability}. 


Finally, it should be noted that the simulation of hitting times and locations (for the Dirichlet setting) and the simulation of boundary local times (for the Robin setting) can be efficiently performed with high accuracy and with theoretical guarantees. Efficient Monte Carlo simulation algorithms for such tasks have been proposed and investigated in the literature; see, e.g., \cite{roger_RSDE,milstein2004stochastic,leimkuhler2023simplest}. In our later experiments, we make use of a MATLAB implementation of the approach in \cite{milstein2004stochastic} for simulating hitting times and local times.





\subsection{Finite-Element-Method Kernel Approximation} \label{sec:fea}

With the kernel matrix estimation method in Section \ref{sec:kermat}, one may be tempted to directly apply this for estimating the kernel matrix $\mathbf{K}(\mathbf{X},\mathbf{X})$ then compute the posterior predictive distribution $[f(\mathbf{x}_{\rm new})|f(\mathbf{X})]$ from Equation \eqref{eq:kriging}. There are, however, several limitations with this approach. First (i), even with $\mathbf{K}(\mathbf{X},\mathbf{X})$ estimated, one still needs to approximate the kernel $\mathbf{k}(\mathbf{x}_{\rm new},\mathbf{X})$ within the predictive equations \eqref{eq:kriging}. Note that each prediction point $\mathbf{x}_{\rm new}$ requires a separate MC approximation, thus the prediction of multiple new points can be computationally expensive. Second (ii), there is little guarantee that the MC-approximated kernel satisfies the desired boundary conditions in any sense. Using the earlier coupled MC estimator, we present next an FEM-based approach that addresses these limitations for effective posterior approximation with error analysis. In what follows, we combine the discussion for both Dirichlet and Robin boundaries; the distinction should be clear from context.

Finite element methods are broadly used for numerically solving differential equation systems \citep{huebner2001finite}. Let $V^Q=\{\phi_q(\mathbf{x})\}_{q=1}^Q$ be the set of basis functions, with each basis having support set $S_q = {\rm supp}\{\phi_q\}$. FEM leverages the so-called finite elements $\{(\phi_q, S_q)\}_{q=1}^Q$ to approximate the system over a desired domain $\mathcal{X}$. The domain for such support sets, denoted $\hat{\CalX}_Q := \bigcup_{q=1}^Q S_q$, should cover the desired domain $\mathcal{X}$, i.e., $\CalX \subseteq \hat{\CalX}_Q$. There are established strategies for selecting the finite elements $\{(\phi_q, S_q)\}_{q=1}^Q$, e.g., via simplicial methods \citep{dang2012triangulations} and B-spline approaches \citep{hollig2003finite}. Error analysis for the recommended B-spline approach is provided in the next subsection. For this subsection, we adopt the following general form for the finite elements on the Sobolev space  \citep[Chapter 2]{Brenner08}:
\begin{equation}
\ \|\phi_q\|_{\mathscr{H}_{\nu}}<\infty \;\;\; \text{for all} \;\;\; \phi_q\in V^Q, Q\in\NatInt, \quad \overline{\lim_{Q\to\infty}{\rm span}\{V^Q\}} = \mathscr{H}_{\nu}(\CalX).
\label{eq:femreg}
\end{equation}
Here, $\|\cdot\|_{\mathscr{H}_{\nu}}$ is the norm of the reproducing kernel Hilbert space $\mathscr{H}_{\nu}(\CalX)$ (see \cite{Wendland10}) induced by the Mat\'ern kernel with smoothness $\nu$, i.e., $k_{\nu}$. Such a function space can be shown to be equivalent to the $(\nu+d/2)$-order Sobolev space; see \cite[Corollary 1]{Tuo16}.

With the choice of finite elements $\{(\phi_q, S_q)\}_{q=1}^Q$ satisfying \eqref{eq:femreg} in hand, consider the following FEM-based approximation of the BdryMat\'ern kernel:
\begin{equation}
\label{eq:kernel_regression}
    \hat{k}_{m,Q}(\BFx, \BFx') = \BPhi(\BFx)^T [\BPhi(\BU)]^\dagger \hat{\BK}^{\rm C}(\BU, \BU) [\BPhi(\BU)^T]^\dagger \BPhi(\BFx'), \quad \mathbf{u}_i \overset{\text{i.i.d.}}{\sim}\text{Unif}(\mathcal{X}), \quad i = 1, \cdots, m,
\end{equation}
where $\hat{\BK}^{\rm C}(\BU, \BU)$ is the earlier coupled MC estimator: \eqref{eq:coupled_dir} for the Dirichlet setting, \eqref{eq:coupled_robin} for the Robin setting. We refer to the point set $\mathbf{U}$ as \textit{inducing points}, following \cite{snelson2005sparse,li2025prospar}. Here, $\BPhi(\BFx) = [\phi_1(\BFx), \cdots, \phi_Q(\BFx)]^T$ is the vector of basis functions at $\mathbf{x}$, $\BPhi(\BU) =[\BPhi(\mathbf{u}_1), \cdots, \BPhi(\mathbf{u}_m)]^T$ consists of basis functions for inducing points $\mathbf{U}$, and $\mathbf{M}^\dagger$ denotes the pseudo-inverse of matrix $\mathbf{M}$. This kernel approximation offers a solution to the earlier two limitations. For (i), note that $\mathbf{M} = [\BPhi(\BU)]^\dagger \hat{\BK}^{\rm C}(\BU, \BU) [\BPhi(\BU)^T]^\dagger$ requires just a one-shot evaluation, as it does not depend on kernel inputs $\mathbf{x}$ and $\mathbf{x}'$. With $\mathbf{M}$ evaluated, $\hat{k}_{m,Q}(\mathbf{x},\mathbf{x}')$ can be quickly computed for any choice of $\mathbf{x}$ and $\mathbf{x}'$ via matrix multiplication, without need for MC simulation or pseudo-inverses. We discuss later an implementation using B-splines that further reduces this computation. For (ii), the following proposition shows that $\hat{k}_{m,Q}(\mathbf{x},\mathbf{x}')$ satisfies the desired boundary conditions on the approximated domain $\hat{\CalX}_Q$:
\begin{corollary}[Boundary Conditions on FEM-Based Kernel]
\label{coro:posotive_hat_K}
    Let $\hat{\CalX}_Q = \bigcup_{q=1}^Q S_q$ be the approximated domain of \(\CalX\) via the finite elements \(\{(\phi_q, S_q)\}_{q=1}^Q\). Then, for the Dirichlet setting, the approximated kernel $\hat{k}_{m,Q}(\mathbf{x},\mathbf{x}')$ in \eqref{eq:kernel_regression} almost surely satisfies: 
    \begin{equation}
        \hat{k}_{m,Q}(\BFx, \cdot) = 0, \quad \BFx \in \partial \hat{\CalX}.
    \end{equation}
    Similarly, for the Robin setting, the approximated kernel $\hat{k}_{m,Q}(\mathbf{x},\mathbf{x}')$ almost surely satisfies:
    \begin{equation}
    \frac{\partial\hat{k}_{m,Q}(\BFx, \cdot)}{\partial\bold{n}}+c(\BFx)\hat{k}_{m,Q}(\BFx, \cdot) = 0,\quad \BFx \in \partial \hat{\CalX}.
    \end{equation}    
\end{corollary}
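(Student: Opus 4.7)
The plan is to exploit two key features of the FEM approximation. First, the approximated kernel
\begin{equation*}
\hat{k}_{m,Q}(\BFx, \BFx') = \BPhi(\BFx)^T \bold{M} \BPhi(\BFx') = \sum_{q=1}^Q \phi_q(\BFx)\, g_q(\BFx'),
\end{equation*}
with $\bold{M} := [\BPhi(\BU)]^\dagger \hat{\BK}^{\rm C}(\BU, \BU) [\BPhi(\BU)^T]^\dagger$ and $g_q(\BFx') := [\bold{M}\BPhi(\BFx')]_q$, is separable in its two arguments: each $g_q(\BFx')$ depends only on the inducing points $\BU$, the coupled MC estimator, and $\BPhi(\BFx')$, and is independent of $\BFx$. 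Consequently, any \emph{linear} boundary operator acting on the first argument commutes with the summation, reducing the boundary behavior of $\hat{k}_{m,Q}(\BFx, \cdot)$ to that of the individual basis functions $\phi_q$.

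For the Dirichlet setting, I would invoke the standard conforming FEM construction for homogeneous Dirichlet problems (see, e.g., \cite[Chapter 2]{Brenner08}), in which every basis function $\phi_q \in V^Q$ is chosen so that $\phi_q(\BFx) = 0$ for $\BFx \in \partial \hat{\CalX}_Q$; this property is made explicit in the B-spline implementation of the next subsection. Then for any $\BFx \in \partial \hat{\CalX}_Q$, we have $\BPhi(\BFx) = \bold{0}$, so $\hat{k}_{m,Q}(\BFx, \cdot) \equiv 0$ almost surely. The ``almost surely'' here accounts for the randomness of the inducing points $\BU$ and the coupled MC estimator $\hat{\BK}^{\rm C}$, both of which enter only through $\bold{M}$ and so leave the zero on the left-hand side unperturbed.

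For the Robin setting, the basis is analogously chosen to be conforming with the homogeneous Robin condition, i.e., $\partial\phi_q(\BFx)/\partial\bold{n} + c(\BFx)\phi_q(\BFx) = 0$ for $\BFx \in \partial \hat{\CalX}_Q$ and every $q$. Applying this linear boundary operator to the separable representation gives
\begin{equation*}
\frac{\partial \hat{k}_{m,Q}(\BFx, \cdot)}{\partial \bold{n}} + c(\BFx)\hat{k}_{m,Q}(\BFx, \cdot) = \sum_{q=1}^Q \left[\frac{\partial \phi_q(\BFx)}{\partial \bold{n}} + c(\BFx)\phi_q(\BFx)\right] g_q(\cdot) = 0
\end{equation*}
for $\BFx \in \partial \hat{\CalX}_Q$, almost surely. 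The main obstacle is conceptual rather than technical: one must justify that the chosen FEM basis is conforming with respect to the relevant boundary condition, since the general regularity condition \eqref{eq:femreg} by itself only prescribes density in $\mathscr{H}_{\nu}(\CalX)$. Once this conforming property is in place, the corollary follows immediately from linearity of the boundary operator combined with the separable structure of the approximation.
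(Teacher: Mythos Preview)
Your argument via the separable structure $\hat{k}_{m,Q}(\BFx,\BFx') = \BPhi(\BFx)^T\bold{M}\BPhi(\BFx')$ and the linearity of the boundary operator is the natural route, and it is \emph{not} the one the paper takes. The paper's entire proof is the single sentence ``This corollary follows directly from Theorems~\ref{thm:empirical_BdryMat_Dirichlet} and~\ref{thm:empirical_BdryMat_Robin}.'' Those theorems establish almost-sure positive definiteness of the coupled MC matrix $\hat{\BK}^{\rm C}(\BU,\BU)$, which guarantees that $\bold{M}$ is well-defined and positive semidefinite, but positive definiteness by itself says nothing about boundary values. So your write-up is considerably more transparent than the paper's on this point.

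For the Dirichlet case your argument is essentially complete once you make the following observation explicit: if each $\phi_q$ is continuous with ${\rm supp}\{\phi_q\}=S_q$ (as for the B-splines in \eqref{eq:B_spline_support}), then any $\BFx\in\partial\hat{\CalX}_Q=\partial\bigl(\bigcup_q S_q\bigr)$ cannot lie in the interior of any $S_q$, hence $\phi_q(\BFx)=0$ for every $q$ and $\BPhi(\BFx)=\bold{0}$. This is what makes the Dirichlet claim automatic for compactly supported FEM bases, without any extra conforming hypothesis beyond continuity.

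For the Robin case you are right to flag the gap: the generic condition \eqref{eq:femreg} does not force $\partial\phi_q/\partial\bold{n}+c(\BFx)\phi_q=0$ on $\partial\hat{\CalX}_Q$, and standard B-spline bases do not satisfy it. Your proof therefore rests on an additional conforming assumption that neither you nor the paper spells out; the paper's one-line justification does not address this either. In short, your Dirichlet argument is sound and sharper than the paper's; the Robin claim, as stated in the corollary, requires an explicit hypothesis on the basis that is absent from the surrounding text.
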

\noindent This corollary follows directly from Theorems \ref{thm:empirical_BdryMat_Dirichlet} and \ref{thm:empirical_BdryMat_Robin}.

With this, the posterior predictive distribution $[f(\mathbf{x}_{\rm new})|f(\mathbf{X})]$ can be approximated as follows. First, sample a large number of i.i.d. samples $\mathbf{U} = \{\mathbf{u}_1, \cdots, \mathbf{u}_m\}$ uniformly-at-random on $\mathcal{X}$. Next, given the finite elements \(\{(\phi_q, S_q)\}_{q=1}^Q\), evaluate the intermediate matrix $\mathbf{M} = [\BPhi(\BU)]^\dagger \hat{\BK}^{\rm C}(\BU, \BU) [\BPhi(\BU)^T]^\dagger$. Finally, using the form $\hat{k}_{m,Q}(\mathbf{x},\mathbf{x}') = \BPhi(\BFx)^T \mathbf{M} \BPhi(\BFx')$ in \eqref{eq:kernel_regression}, evaluate the GP predictive mean and variance in \eqref{eq:kriging} using the approximated kernel $k = \hat{k}_{m,Q}$. The following corollary guarantees that, with a sufficiently large number of finite elements $Q$, the kernel matrix $\hat{\mathbf{K}}_{m,Q}(\mathbf{X},\mathbf{X}) = [\hat{k}_{m,Q}(\mathbf{x}_i,\mathbf{x}_j)]_{i,j=1}^n$ required in the GP equations \eqref{eq:kriging} is almost surely positive definite:
\begin{corollary}[Positive Definiteness of FEM-Based Kernel Matrix]
Let $\mathbf{X}$ be a set of $n$ distinct design points on the interior of $\mathcal{X}$, i.e., $\CalX \setminus \partial \CalX$. Choose the number of finite elements $Q$ to be greater than $n$. Then, using the approximated kernel in \eqref{eq:kernel_regression}, its kernel matrix $\hat{\mathbf{K}}_{m,Q}(\mathbf{X},\mathbf{X}) = [\hat{k}_{m,Q}(\mathbf{x}_i,\mathbf{x}_j)]_{i,j=1}^n$ is positive definite almost surely. 
\label{coro:pd}
\end{corollary}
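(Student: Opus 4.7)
The plan is to factor the FEM-approximated kernel matrix as a quadratic form in the coupled MC estimator and then invoke the positive-definiteness results of Theorems~\ref{thm:empirical_BdryMat_Dirichlet} and \ref{thm:empirical_BdryMat_Robin}. Let $\BPhi(\mathbf{X}) \in \mathbb{R}^{Q \times n}$ denote the matrix whose $i$-th column is $\BPhi(\mathbf{x}_i)$. Combining \eqref{eq:kernel_regression} with the identity $(A^T)^\dagger = (A^\dagger)^T$, one first writes
\begin{equation*}
\hat{\mathbf{K}}_{m,Q}(\mathbf{X},\mathbf{X}) = \BPhi(\mathbf{X})^T [\BPhi(\BU)]^\dagger \hat{\mathbf{K}}^{\rm C}(\BU,\BU) [\BPhi(\BU)^T]^\dagger \BPhi(\mathbf{X}).
\end{equation*}
For any nonzero $\mathbf{v} \in \mathbb{R}^n$, setting $\mathbf{z} := \BPhi(\mathbf{X})\mathbf{v} \in \mathbb{R}^Q$ and $\mathbf{w} := [\BPhi(\BU)^T]^\dagger \mathbf{z} \in \mathbb{R}^m$ yields $\mathbf{v}^T \hat{\mathbf{K}}_{m,Q}(\mathbf{X},\mathbf{X}) \mathbf{v} = \mathbf{w}^T \hat{\mathbf{K}}^{\rm C}(\BU,\BU) \mathbf{w}$, and since the coupled estimator is almost surely positive definite it suffices to show $\mathbf{w} \neq 0$ almost surely.

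This splits into two subclaims. Subclaim~(i) is $\mathbf{z} \neq 0$, i.e., $\{\BPhi(\mathbf{x}_i)\}_{i=1}^n$ are linearly independent in $\mathbb{R}^Q$. Here the conditions $Q > n$ and the density property \eqref{eq:femreg} combine through a dual-space argument: a linear dependence would produce a nonzero $\boldsymbol{\lambda} \in \mathbb{R}^n$ annihilating the functional $g \mapsto \sum_i \lambda_i g(\mathbf{x}_i)$ on all of $\text{span}(V^Q)$, and density would extend this to all of $\mathscr{H}_\nu(\mathcal{X})$; substituting the Mat\'ern kernel sections $k_\nu(\cdot,\mathbf{x}_j) \in \mathscr{H}_\nu(\mathcal{X})$ then contradicts invertibility of the Mat\'ern Gram matrix at the distinct points $\mathbf{x}_1, \ldots, \mathbf{x}_n$. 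Subclaim~(ii) is that the map $\mathbf{z} \mapsto [\BPhi(\BU)^T]^\dagger \mathbf{z}$ is injective on $V := \text{Range}(\BPhi(\mathbf{X}))$ almost surely. A short SVD argument gives $\text{Null}([\BPhi(\BU)^T]^\dagger) = \text{Null}(\BPhi(\BU))$ inside $\mathbb{R}^Q$, so one needs $\BPhi(\BU)$ to be injective on $V$, equivalently, the Gram determinant $\det(\BPhi(\mathbf{X})^T \BPhi(\BU)^T \BPhi(\BU) \BPhi(\mathbf{X}))$ to be nonzero. Exhibiting a single configuration $\BU^*$ for which $\BPhi(\BU^*)$ is injective on $V$ (e.g., placing one inducing point close to each $\mathbf{x}_i$) shows this determinant is not identically zero as a function of $\BU$; for semi-algebraic FEM families such as the recommended B-spline basis its vanishing set in $\mathcal{X}^m$ has Lebesgue measure zero, so the event holds almost surely under the i.i.d.\ uniform sampling of $\BU$.

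The main obstacle is subclaim~(i): density in $\mathscr{H}_\nu(\mathcal{X})$ does not by itself furnish unisolvence on a prescribed $n$-point configuration at the threshold $Q > n$, so the argument must rely on specific approximation-theoretic properties of the chosen FEM family---here the B-spline basis introduced in the next subsection---to carry the dual-space reasoning through rigorously. Once that is in place, subclaim~(ii) is a routine measure-theoretic consequence of continuity and the semi-algebraic structure of the B-spline basis together with the absolute continuity of the uniform inducing-point distribution on $\mathcal{X}$.
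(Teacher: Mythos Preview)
Your approach is exactly the one the paper has in mind: the paper's entire proof is the single sentence ``This corollary again follows from Theorems~\ref{thm:empirical_BdryMat_Dirichlet} and \ref{thm:empirical_BdryMat_Robin},'' and your factorization $\mathbf{v}^T\hat{\mathbf{K}}_{m,Q}(\mathbf{X},\mathbf{X})\mathbf{v}=\mathbf{w}^T\hat{\mathbf{K}}^{\rm C}(\BU,\BU)\mathbf{w}$ is precisely the way one unpacks that citation. So at the level of strategy there is nothing to compare.

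Where your write-up goes further is in isolating subclaims~(i) and~(ii), and in particular flagging that~(i)---linear independence of $\{\BPhi(\mathbf{x}_i)\}_{i=1}^n$---is \emph{not} a consequence of $Q>n$ together with the density condition~\eqref{eq:femreg} alone. That observation is correct: \eqref{eq:femreg} is an asymptotic statement about $\lim_{Q\to\infty}\mathrm{span}(V^Q)$ and gives no unisolvence guarantee at any fixed $Q$, so one can build bases satisfying \eqref{eq:femreg} for which $\BPhi(\mathbf{X})$ is rank-deficient at a prescribed $n$-point set. The paper's one-line proof does not address this, so the corollary as stated is really only justified once one commits to a concrete FEM family (such as the B-splines of Section~\ref{sec:Bspline}) for which unisolvence on interior points can be checked directly. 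Your honesty about this gap, and your remark that the argument ultimately relies on structural properties of the specific basis, is a genuine improvement over the paper's treatment rather than a defect in your proposal.
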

\noindent This corollary again follows from Theorems \ref{thm:empirical_BdryMat_Dirichlet} and \ref{thm:empirical_BdryMat_Robin}. Algorithm \ref{alg:fem} summarizes the above steps for this FEM-based approximation of the posterior predictive distribution.

\begin{algorithm}[!t]
    \caption{FEM-Based Approximation of the Posterior Predictive Distribution}

\justifying

        \noindent \textbf{Inputs}: Number of inducing points $m$, number of finite elements $Q$, simulated data points $\{(\mathbf{x}_i,f(\mathbf{x}_i)\}_{i=1}^n$, prediction point $\mathbf{x}_{\rm new}$. Ensure $Q > n$.\\
        \vspace{-0.35cm}
        
        \noindent $\bullet$ Generate the finite elements $\{(\phi_q, S_q)\}_{q=1}^Q$ satisfying regularity conditions \eqref{eq:femreg}.\\
        \vspace{-0.2cm}

        \noindent $\bullet$ Sample inducing points $\mathbf{U} = \{\mathbf{u}_1, \cdots, \mathbf{u}_m\} \overset{\text{i.i.d.}}{\sim}\text{Unif}(\mathcal{X})$.\\
        \vspace{-0.35cm}

        \noindent $\bullet$ Compute $\mathbf{M} = [\BPhi(\BU)]^\dagger \hat{\BK}^{\rm C}(\BU, \BU) [\BPhi(\BU)^T]^\dagger$, where $ \hat{\BK}^{\rm C}(\BU, \BU)$ is the coupled MC kernel matrix estimator (Equation \eqref{eq:coupled_dir} for Dirichlet boundary; Equation \eqref{eq:coupled_robin} for Robin boundary).\\
        \vspace{-0.35cm}

        \noindent $\bullet$ Compute the FEM-based approximated kernel $\hat{k}_{m,Q}(\mathbf{x},\mathbf{x}') = \BPhi(\BFx)^T \mathbf{M} \BPhi(\BFx')$, and its approximated kernel matrix $\hat{\mathbf{K}}_{m,Q}(\mathbf{X},\mathbf{X}) = [\hat{{k}}_{m,Q}(\mathbf{x}_i,\mathbf{x}_j)]_{i,j=1}^n$.\\
        \vspace{-0.35cm}

        \noindent $\bullet$ Evaluate the approximated posterior predictive distribution ${\Pi}(\mathbf{x}_{\rm new})$ via Equation \eqref{eq:kriging} with $k = \hat{k}_{m,Q}$.\\
        \vspace{-0.35cm}
        
   \noindent \textbf{Output}: Return ${\Pi}(\mathbf{x}_{\rm new})$.
    \label{alg:fem}
\end{algorithm}


Another appeal of the FEM-based kernel $\hat{k}_{m,Q}$ in \eqref{eq:kernel_regression} is that it facilitates approximation error analysis. The following theorem provides an error bound on how well such a kernel approximates the desired BdryMat\'ern kernel $k_{\nu,\mathcal{B}}$:
\begin{theorem}[FEM-Based Kernel Approximation Error]
    Consider the FEM-based approximated kernel in \eqref{eq:kernel_regression}. Define the \(L^2\)-error between two kernels \(k_1\) and \(k_2\) as:
\begin{equation}
     \|k_1 - k_2\|^2_{L^2(\CalX \times \CalX)} = {\int_{\CalX \times \CalX} \left|k_1(\BFx, \BFx') - k_2(\BFx, \BFx')\right|^2 d\BFx d\BFx'}. \label{eq:HS_norm}
\end{equation}
Then, with probability at least \(1 - v_* - Q(\sqrt{2}e^{-1/2})^{m\underline{\lambda}/S_\phi} - Q\left(\frac{e^{1/2}}{(3/2)^{3/2}}\right)^{m\overline{\lambda}/S_\phi}\), we have:
\begin{equation}
    \label{eq:convergence_general}
    \|\hat{k}_{m,Q} - k_{\nu,\mathcal{B}}\|_{L^2(\CalX \times \CalX)}^2 \leq C \left(\frac{\overline{\lambda}}{\underline{\lambda}}\right)^2 \left[\int_\CalX (\bold{I} - \CalP)[k_\nu + \bar{k}](\bold{I} - \CalP)^T(\Bx, \Bx) d\Bx + \frac{\overline{\lambda}S_\phi}{\underline{\lambda}^2m}\right],
\end{equation}
where $C>0$ is a constant independent of basis choice. Here, $\overline{\lambda}$ and $\underline{\lambda}$ are the maximum and minimum eigenvalues of the Gram matrix $\boldsymbol{\Lambda} = [\lambda_{ij}]_{i,j=1}^Q$, $\lambda_{ij} = \int_\CalX \phi_i(\BFx) \phi_j(\BFx) d\BFx$. Further, \(S_\phi = \max_{\Bx \in \CalX} \sum_{j=1}^p \phi_j^2(\Bx)\), $\bar{k}$ is defined as the kernel:
\small
\begin{equation}
    \bar{k}(\mathbf{x},\mathbf{x}')=
    \begin{cases}
        &\E\left[k_\nu(\BB_\tau,\BB'_\gamma)\bigg|\BB_0=\mathbf{x},\BB'_0=\mathbf{x}'\right]\quad\textup{(Dirichlet)},\\
        &\E\left[\int_{\Real_+^2}e^{-\kappa^2(t+\tau)-\int_0^tc(\BB_s)dL_s-\int_0^\tau c(\BB'_s)dL'_s}k_\nu(\BB_t,\BB'_\tau)dL_tdL'_\tau\bigg|\BB_0=\mathbf{x},\BB'_0=\mathbf{x}'\right]\  \textup{(Robin)},
    \end{cases}
    \label{eq:kbar}
\end{equation}
\normalsize
$\mathcal{P}$ is the $L^2$-projection operator $\CalP f = \arg\min_{h \in \textup{span}\{V^Q\}} \|h - f\|_{L^2(\CalX)}^2$, $\bold{I}$ is the identity operator, and $(\bold{I} - \CalP)k(\cdot, \cdot)(\bold{I} - \CalP)^T(\BFx, \BFx')$ is an operator that applies $\bold{I} - \CalP$ on the first variable of $k(\mathbf{x},\mathbf{x}')$, then $\bold{I} - \CalP$ on its second variable. Finally, $v_* = \int_\CalX \left|(\bold{I} - \CalP)[k_\nu + \bar{k}](\bold{I} - \CalP)^T(\Bx, \Bx)\right|^2 d\Bx$.
    \label{thm:convergence_general}

\end{theorem}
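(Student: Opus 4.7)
My plan is to split the $L^2$ error into a deterministic finite-element bias and a random Monte-Carlo variance, and to handle each by a distinct tool. To set this up, introduce the auxiliary ``noiseless'' kernel
\[
\tilde k_Q(\BFx,\BFx') \;=\; \BPhi(\BFx)^T [\BPhi(\BU)]^\dagger\, \E[\hat{\BK}^{\rm C}(\BU,\BU)\mid \BU] \,[\BPhi(\BU)^T]^\dagger \BPhi(\BFx'),
\]
built from the same random-design pseudo-inverse but with $\hat{\BK}^{\rm C}(\BU,\BU)$ replaced by its conditional mean, which by construction of the coupled estimator is $\BK_{\nu,\mathcal{B}}(\BU,\BU)$. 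A triangle inequality then yields $\|\hat k_{m,Q}-k_{\nu,\mathcal{B}}\|_{L^2}^2 \leq 2\|\tilde k_Q - k_{\nu,\mathcal{B}}\|_{L^2}^2 + 2\|\hat k_{m,Q}-\tilde k_Q\|_{L^2}^2$, separating FEM bias from MC variance.

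For the FEM bias, observe that the random-design least-squares projection encoded by $\BPhi(\BU)^\dagger$ becomes the continuous $L^2$-projection $\CalP$ in the limit $m \to \infty$, so once the empirical Gram matrix is well-conditioned the replacement of the empirical inner product by the continuous one costs only a $(\overline\lambda/\underline\lambda)^2$ factor. By Parseval, $\|\CalP k_{\nu,\mathcal{B}} \CalP^T - k_{\nu,\mathcal{B}}\|_{L^2}^2 = \int_\CalX (\bold{I}-\CalP)k_{\nu,\mathcal{B}}(\bold{I}-\CalP)^T(\BFx,\BFx)\,d\BFx$. Using the path-integral decomposition $k_{\nu,\mathcal{B}} = k_\nu - K_{hv} - K_{hv}^T + K_{vv}$ from \eqref{eq:bdryMatern_path_integral_dirichlet} (and its Robin analogue \eqref{eq:bdryMatern_path_integral_robin}), together with $e^{-\kappa^2 \cdot} \leq 1$ applied inside the Brownian-motion expectations and Jensen, the projection residuals of the three path-integral terms are dominated by that of $\bar k$ in \eqref{eq:kbar}. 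This produces the $(\bold{I}-\CalP)[k_\nu + \bar k](\bold{I}-\CalP)^T$ integrand in the final bound.

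For the MC variance, condition on $\BU$. The well-conditioning of the empirical Gram matrix $\tfrac{1}{m}\sum_i \BPhi(\Bu_i)\BPhi(\Bu_i)^T$ follows from Tropp's matrix Chernoff inequality applied to the i.i.d.\ PSD summands $\BPhi(\Bu_i)\BPhi(\Bu_i)^T$, each of operator norm at most $S_\phi$ and mean $\boldsymbol\Lambda$; taking deviation level $\delta = 1/2$ produces exactly $Q(\sqrt 2\, e^{-1/2})^{m\underline\lambda/S_\phi}$ for the lower tail and $Q(e^{1/2}/(3/2)^{3/2})^{m\overline\lambda/S_\phi}$ for the upper tail. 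On this favourable event, sandwich the quadratic form in $\hat{\BK}^{\rm C}(\BU,\BU) - \BK_{\nu,\mathcal{B}}(\BU,\BU)$ between the basis projections; the conditional second moment of this deviation is $O(S_\phi/m)$, and once pushed through the (now well-conditioned) pseudo-inverse gives the $\overline\lambda S_\phi/(\underline\lambda^2 m)$ term. A Markov-type inequality on a residual quantity whose expectation equals $v_*$ then absorbs the last source of randomness, and a union bound over the three failure events yields the stated probability $1 - v_* - Q(\sqrt 2 e^{-1/2})^{m\underline\lambda/S_\phi} - Q(e^{1/2}/(3/2)^{3/2})^{m\overline\lambda/S_\phi}$.

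The main obstacle is the coupled-MC step: while Theorems \ref{thm:empirical_BdryMat_Dirichlet}--\ref{thm:empirical_BdryMat_Robin} leverage the coupling to preserve positive definiteness, this same coupling correlates all $m^2$ entries of $\hat{\BK}^{\rm C}$, so off-the-shelf i.i.d.\ matrix concentration for kernel matrices does not apply. The key is to exploit the diagonal-multiplier factorisation $\hat{\BK}^{\rm C}(\BU,\BU) = (\bold{I}-\mathrm{diag}[S_i])\,\BK_\nu(\BU,\BU)\,(\bold{I}-\mathrm{diag}[S_i])^{*}$ uncovered in the proofs of Theorems~\ref{thm:empirical_BdryMat_Dirichlet}--\ref{thm:empirical_BdryMat_Robin}, in which the random multipliers $S_i \in [0,1]$ are i.i.d.\ across $i$. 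This reduces the coupled deviation to a concentration problem for a diagonally-scaled fixed kernel, allowing the random basis projection to be closed out without losing the $1/m$ rate.
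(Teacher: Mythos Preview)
Your high-level architecture (matrix Chernoff on the empirical Gram matrix with $\delta=1/2$, Chebyshev/Markov for the $v_*$ term, and a bias/variance-style split) tracks the paper's. But the last paragraph, which you flag as the crux, contains a genuine error. The identity
\[
\hat{\BK}^{\rm C}(\BU,\BU)\;=\;(\bold{I}-\mathrm{diag}[S_i])\,\BK_\nu(\BU,\BU)\,(\bold{I}-\mathrm{diag}[S_i])^{*}
\]
with i.i.d.\ scalars $S_i\in[0,1]$ is \emph{not} what Theorems~\ref{thm:empirical_BdryMat_Dirichlet}--\ref{thm:empirical_BdryMat_Robin} establish. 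In those proofs the diagonal factorisation lives only inside the spectral integral, with multipliers $e^{\im\boldsymbol\delta_i^T\Bomega-\kappa^2\tau_i}$ (respectively the Robin $S_i$) that depend on the frequency variable $\Bomega$. Since $\hat K_{ij}^{\rm C}$ depends on the hitting \emph{locations} $\bold{W}_i,\bold{W}_j$ and not merely on the times, it cannot be written as $(1-S_i)(1-S_j)k_\nu(\Bu_i,\Bu_j)$ for any scalars. A related smaller point: $\E[\hat{\BK}^{\rm C}(\BU,\BU)\mid\BU]$ is not exactly $\BK_{\nu,\mathcal{B}}(\BU,\BU)$ on the diagonal, because the coupled estimator reuses the \emph{same} Brownian path for both slots when $i=j$, whereas $k_{\nu,\mathcal{B}}(\Bu_i,\Bu_i)$ requires independent paths.

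The paper bypasses the correlation problem by a different route that you did not anticipate: it expands $k_\nu=\sum_i\lambda_i e_i\otimes e_i$ in Mercer eigenfunctions and defines random functions $G_i(\BFx)=e^{-\kappa^2\tau}e_i(\bold{B}_\tau)$ (Dirichlet case) so that both $k_{\nu,\mathcal{B}}$ and $\hat k_{m,Q}$ become bilinear in the two families $\{e_i\}$ and $\{G_i\}$, with $\hat k_{m,Q}$ obtained by replacing each factor by its empirical projection $\CalP_m[\cdot]$. The error then splits into three pieces $\CalK_1,\CalK_2,\CalK_3$ (the $e$--$e$, $e$--$G$, and $G$--$G$ blocks), and each is further telescoped through $\CalP_m\E[G_i]$ and $\E[G_i]$. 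The gain is that the noise $\boldsymbol\varepsilon_{il}=G_i(\Bu_l)-\E[G_i(\Bu_l)]$ is independent across $l$, so the $O(S_\phi/m)$ variance bound follows from a direct second-moment calculation rather than any matrix factorisation. The $k_\nu+\bar k$ combination then emerges not from a Jensen-type domination of $K_{hv}$ by $\bar k$, but because the projection residual of $\E[G_i]$ contributes exactly $(\bold{I}-\CalP)\bar k(\bold{I}-\CalP)^T$ while that of $e_i$ contributes $(\bold{I}-\CalP)k_\nu(\bold{I}-\CalP)^T$; the cross terms in $\CalK_2$ are controlled by Cauchy--Schwarz against these two quantities.
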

\noindent The proof of this theorem is provided in Appendix \ref{pf:convergence_general}. Note that the norm defined in \eqref{eq:HS_norm} is equivalent to the Hilbert–Schmidt norm if we treat $k_1-k_2$ as an integral operator. Equation \eqref{eq:convergence_general} shows that the approximation error bound depends on the base Mat\'ern kernel $k_{\nu}$ and its projection onto the employed finite element basis. This error bound is derived from the decomposition in Theorem \ref{thm:empirical_BdryMat_Dirichlet}, and we make use of such a bound next to derive a kernel approximation rate using the specific choice of B-spline bases.

\subsection{B-Spline Implementation}
\label{sec:Bspline}
Next, we consider the choice of B-splines \cite{hollig2003finite} for the finite elements $\{(\phi_q,S_q)\}_{q=1}^Q$, which we recommend in implementation. Without loss of generality, we presume that in the following that $\mathcal{X} \subseteq [0,1]^d$; this can easily be done by rescaling the domain. The use of B-splines here has three key advantages. First (i), such bases have been shown to provide effective approximation of both Dirichlet and Robin boundaries in finite-element modeling (see \cite{shu2012differential}). Second (ii), by construction, the support sets of most B-spline basis functions are disjoint, which facilitates an efficient evaluation of the approximated kernel $\hat{k}_{m,Q}$ \eqref{eq:kernel_regression} via sparse matrix operations. Finally (iii), using the famous de Boor's conjecture proven in \cite{shadrin2001norm}, we can nicely characterize the kernel approximation quality of $\hat{k}_{m,Q}$ to the BdryMat\'ern kernel ${k}_{\nu,\mathcal{B}}$ using B-splines.

We first briefly review B-splines, following \cite{hollig2003finite}. Let $ N(x) = \boldsymbol{1}_{x \in [0,1]}$, and define:
\begin{equation}
    \label{eq:B_spline}
    N_s(x) = \underbrace{N \ast N \ast \cdots \ast N}_{ \text{$s+1$ times}}(x),
\end{equation}
where \( f \ast g(x) := \int f(x - t)g(t) \, dt \) is the convolution operator. One can show that the support of $N_s$ is $[0,s+1]$. For given multi-indices \( \boldsymbol{\zeta} = (\zeta_1, \ldots, \zeta_d) \in \mathbb{N}^d \) and \( \mathbf{j} = (j_1, \ldots, j_d) \in \mathbb{N}^d \), the multi-dimensional tensor B-spline function indexed by $(\boldsymbol{\zeta}, \mathbf{j})$ is defined as 
$M_{\boldsymbol{\zeta}, \mathbf{j}}^d(x) = \prod_{l=1}^d N_s(2^{\zeta_l}x_l - j_l)$.
Here, \( \boldsymbol{\zeta} \) is a degree vector that controls the resolution of the basis function in each dimension, and \( \mathbf{j} \) specifies the location on which the basis is put. We can similarly define this spline function for a given scalar degree $\zeta \in \mathbb{N}$ as $M_{\zeta, \mathbf{j}}^d(x) = \prod_{l=1}^d N_s(2^\zeta x_l - j_l)$. As the support of $N_s$ is $[0,s+1]$, it is straightforward to show that the support of $M_{\zeta, \mathbf{j}}^d$ is:
\begin{equation}
    \label{eq:B_spline_support}
    {\rm supp}\{M_{\zeta, \mathbf{j}}^d\}=[0,2^{-\zeta}(s+1)]^d+\bold{j}2^{-\zeta}.
\end{equation}
 For a given order $s$, the set of cardinal B-spline basis functions is given by $F_\zeta=\{M_{\zeta, \mathbf{j}}^d, \mathbf{j}\in \mathcal{J}(\zeta)\}$, where $\mathcal{J}(\zeta)=\{-s,-s+1,...,2^\zeta-1,2^\zeta\}^d$. This serves as our set of basis functions $\{\phi_q\}_{q=1}^Q$, with the number of basis functions given as $Q= | F_{\zeta}|$.

Consider now point (ii) above. Recall that our kernel approximator takes the form:
\begin{equation}
\label{eq:kernel_reg2}
 \hat{k}_{m,Q}(\BFx, \BFx') = \sum_{q,q'=1}^Q m_{q,q'}{\phi}_q(\BFx) {\phi}_{q'}(\BFx'), \; \mathbf{M} = [m_{q,q'}]_{q,q'=1}^Q = [\BPhi(\BU)]^\dagger \hat{\BK}^{\rm C}(\BU, \BU) [\BPhi(\BU)^T]^\dagger.
\end{equation}
The minimal support overlap between B-spline basis functions (see \cite{hollig2003finite}) can greatly speed up this computation. We see from \eqref{eq:B_spline_support} that (a) many basis functions in $\{\phi_q(\mathbf{x})\}_{q=1}^Q$ are zero for a given $\mathbf{x}$, and (b) the product terms $\{\phi_q(\mathbf{x})\phi_{q'}(\mathbf{x})\}_{q,q'=1}^Q$ are also largely zero for any pair of distinct points $\mathbf{x}$ and $\mathbf{x}'$. Property (a) permits the use of sparse algorithms \citep{ding2020generalization,ding2024sample} for efficient evaluation of the pseudoinverses and matrix products in $\mathbf{M}$ within \eqref{eq:kernel_reg2}. Property (b) permits the efficient evaluation of $\hat{k}_{m,Q}$, as most terms in the sum within \eqref{eq:kernel_reg2} equal zero. With this setup (see Appendix \ref{sec:B_spline_time_complexity} for details), after a one-time pre-processing step with computational cost $\mathcal{O}(Q^3)$, the approximated kernel $\hat{k}_{m,Q}(\BFx,\BFx')$ can be evaluated in $\mathcal{O}\{(2s)^{2d}\}$ work for any pair $(\BFx, \BFx')$. The latter thus provides a constant-time complexity independent of both $Q$ and $n$.


Consider next point (iii) above. Using B-spline bases, one can use Theorem \ref{thm:convergence_general} to characterize the approximation quality of the kernel $\hat{k}_{m,Q}$ to the desired BdryMat\'ern kernel $k_{\nu,\mathcal{B}}$. This relies on showing that the ratio $\overline{\lambda}/\underline{\lambda} = \mathcal{O}(1)$, which follows from the famous de Boor's conjecture proven in \cite{shadrin2001norm}. This approximation rate using B-splines is presented below:

\begin{theorem}[B-Spline Kernel Approximation Error Rate]
\label{thm:b_spline_regression}
Suppose $\CalX\subseteq[0,1]^d$ and $\nu > d/2$. Let $\{\phi_q\}_{q=1}^Q$ be the set of B-spline basis functions with order $s>\nu+d/2+1/2$ and degree $\zeta = \lfloor (\log_2m)/(2\nu+2d) \rfloor$. Using this basis, the approximated kernel $\hat{k}_{m,Q}$ in \eqref{eq:kernel_regression} satisfies:
\begin{equation}
    \|\hat{k}_{m,Q}-k_{\nu,\mathcal{B}}\|_{L^2(\CalX\times\CalX)}= \CalO(m^{-\frac{\nu}{2\nu+d}}),
    \label{eq:approx_rate}
\end{equation}
     with probability at least $1-C_1m^{-\frac{2\nu-d}{2\nu+d}}-e^{-C_2(m^{\frac{2\nu}{2\nu+d}}-\log m)}$, where $C_1$ and $C_2$ are constants in $m$.
 \end{theorem}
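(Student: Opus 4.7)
The plan is to specialize the general error bound from Theorem~\ref{thm:convergence_general} to the B-spline setting and then balance the two resulting error terms via the prescribed choice of resolution $\zeta$. There are essentially four quantities in that bound that must be controlled: the condition number $\overline{\lambda}/\underline{\lambda}$ of the Gram matrix, the quantity $S_\phi$, the projection error $\int_\CalX(\bold{I}-\CalP)[k_\nu+\bar{k}](\bold{I}-\CalP)^T(\Bx,\Bx)\,d\Bx$, and the failure probability $v_* + Q(\sqrt{2}e^{-1/2})^{m\underline{\lambda}/S_\phi} + Q(e^{1/2}/(3/2)^{3/2})^{m\overline{\lambda}/S_\phi}$. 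My plan is to show that when the basis is chosen as the tensor-product B-spline basis $F_\zeta$ of order $s$, each term admits a clean scaling in the mesh size $h = 2^{-\zeta}$, and then the prescribed choice of $\zeta$ delivers the advertised rate.

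First, I would invoke de Boor's conjecture, proven by Shadrin \cite{shadrin2001norm}, which states that the $L^\infty$-norm of the $L^2$-projector onto B-splines is bounded uniformly in the knot sequence. A standard consequence is that the rescaled Gram matrix $h^{-d}\boldsymbol{\Lambda}$ has condition number bounded by a constant depending only on $s$ and $d$. This simultaneously gives $\overline{\lambda}/\underline{\lambda} = \CalO(1)$ and the individual scaling $\underline{\lambda},\overline{\lambda} \asymp h^d = 2^{-d\zeta}$. Next, from \eqref{eq:B_spline_support}, at any fixed $\Bx \in \CalX$ only $\CalO((s+1)^d)$ of the basis functions in $F_\zeta$ are nonzero and each is bounded by $1$, which immediately yields $S_\phi = \CalO(1)$.

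For the projection error I would use standard tensor-product B-spline approximation theory \cite{hollig2003finite}: for a function $f \in H^{\nu+d/2}(\CalX)$ with a B-spline space of order $s > \nu+d/2+1/2$, one has $\|(\bold{I}-\CalP)f\|_{L^2} \lesssim h^{\nu+d/2}\|f\|_{H^{\nu+d/2}}$. The RKHS--Sobolev equivalence $\mathscr{H}_\nu(\CalX) = H^{\nu+d/2}(\CalX)$ \cite[Corollary~1]{Tuo16} then controls the norm of each section $k_\nu(\cdot,\Bs)$ via the reproducing property. I would combine this with the Mercer-type eigenexpansion $k_\nu = \sum_i \mu_i e_i\otimes e_i$ with $\mu_i \asymp i^{-(2\nu+d)/d}$, split the sum at $i^* \asymp h^{-d}$, and bound the head by the approximation estimate and the tail by the trivial bound $\|(\bold{I}-\CalP)e_i\|_{L^2}\le 1$. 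This yields the scaling $\int_\CalX(\bold{I}-\CalP)k_\nu(\bold{I}-\CalP)^T(\Bx,\Bx)\,d\Bx = \CalO(h^{2\nu})$, and the identical argument applies to $\bar{k}$ once its regularity is in hand.

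The main obstacle is verifying that $\bar{k}$ in \eqref{eq:kbar} enjoys Sobolev regularity comparable to that of $k_\nu$, so that the same B-spline approximation rate can be applied to it. For the Dirichlet case one writes $\bar{k}(\Bx,\Bx') = \E[k_\nu(\BB_\tau,\BB'_\gamma)\mid \BB_0=\Bx,\BB'_0=\Bx']$ and uses that the harmonic-measure kernel associated with a domain whose boundary is twice-differentiable a.e.\ maps $H^{\nu+d/2}$ into itself (a standard elliptic regularity fact for the Green's/Poisson kernel of $\kappa^2-\Delta$). The Robin case follows analogously once one integrates out the boundary local times and uses the smoothness of $c(\Bx)$. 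With this in place, the two terms in \eqref{eq:convergence_general} scale as $\CalO(h^{2\nu})$ (projection) and $\CalO(S_\phi/(\underline{\lambda}\, m))=\CalO(1/(h^d m))$ (sampling). Balancing them gives $h \asymp m^{-1/(2\nu+d)}$; the exact choice $\zeta = \lfloor(\log_2 m)/(2\nu+2d)\rfloor$ prescribed in the statement, together with the fact that $Q = \CalO(h^{-d}) = \CalO(m^{d/(2\nu+2d)})$, then yields the claimed rate $\CalO(m^{-\nu/(2\nu+d)})$ in $L^2$ norm. Finally I would plug the explicit scalings of $\underline{\lambda},\overline{\lambda},S_\phi,Q,m$ into the failure probability from Theorem~\ref{thm:convergence_general}: the exponential terms become $Q$ times $\exp\{-cm/h^{-d}\} = \exp\{-c(m^{2\nu/(2\nu+d)}-\log m)\}$, and the term $v_*$ is of order $(h^{2\nu})^2 = m^{-(4\nu-2d)/(2\nu+d)}$ up to polynomial factors, completing the proof.
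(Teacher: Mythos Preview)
Your overall strategy matches the paper's: specialize Theorem~\ref{thm:convergence_general} to B-splines, use de Boor/Shadrin for the Gram-matrix condition number, support overlap for $S_\phi$, B-spline approximation theory applied to the Mercer expansion for the projection error of $k_\nu$, and then balance. The treatment of $\bar{k}$ via elliptic regularity of the Poisson/Green operator is also essentially what the paper does (it writes $\bar{k}=\sum_i\lambda_i g_i\otimes g_i$ with $g_i$ the harmonic extension of $e_i$, and bounds $\|g_i\|_{\mathscr{H}_\nu}^2=\CalO(\lambda_i^{-1})$ and $\|g_i\|_{L^2}=\CalO(1)$ via the Green's function).

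The genuine gap is your handling of $v_*$. You assert $v_*=\CalO((h^{2\nu})^2)$ ``up to polynomial factors,'' but this is not a consequence of the $L^2$ projection bound you established: $v_*$ is the $L^2$-norm of the \emph{diagonal} $(\bold{I}-\CalP)[k_\nu+\bar{k}](\bold{I}-\CalP)^T(\Bx,\Bx)=\sum_i\lambda_i\big(|(\bold{I}-\CalP)e_i(\Bx)|^2+|(\bold{I}-\CalP)g_i(\Bx)|^2\big)$, and controlling its square integral requires $L^4$-type information on $(\bold{I}-\CalP)e_i$ and $(\bold{I}-\CalP)g_i$, not just their $L^2$ norms. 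The paper closes this with a Gagliardo--Nirenberg interpolation inequality $\|f\|_{L^4}\lesssim\|f\|_{\mathscr{H}_\nu}^{d/4\alpha}\|f\|_{L^2}^{1-d/4\alpha}$, which after the eigenexpansion and the head/tail split at $i^*\asymp Q$ yields $v_*=\CalO(Q^{-2\nu/d+1})=\CalO(m^{-(2\nu-d)/(2\nu+d)})$---exactly the $C_1 m^{-(2\nu-d)/(2\nu+d)}$ term in the theorem's probability bound. Your claimed exponent is both unjustified and arithmetically inconsistent with your own formula ($(h^{2\nu})^2=h^{4\nu}=m^{-4\nu/(2\nu+d)}$, not $m^{-(4\nu-2d)/(2\nu+d)}$). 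More tellingly, the hypothesis $\nu>d/2$ is present precisely so that $Q^{-2\nu/d+1}\to 0$, and your sketch never invokes it.
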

\noindent The proof of this theorem is provided in Appendix \ref{pf:b_spline_regression}. This approximation rate holds for both the Dirichlet and Robin boundary settings. Note that the rate in \eqref{eq:approx_rate} coincides with the $L^2$-minimax lower bound for GP regression \citep[Theorem 10]{wang2022gaussian}. This is not too surprising, as the task of kernel approximation is related to GP regression in the sense that a (zero-mean) GP is fully characterized by its covariance kernel specification.

An important practical use of such error analysis is that it can guide the specification of $m$, i.e., the number of inducing points, for reliable kernel approximation. Note that, given a desired smoothness parameter $\nu$, only $m$ needs to be specified using the approximation set-up in Theorem \ref{thm:b_spline_regression}; the order and degree of the B-spline bases are provided given this choice of $\nu$ and $m$. One way to specify $m$ is to match the approximation rate in \eqref{eq:approx_rate} with known minimax contraction rates (similarly in $L^2$) for GPs. Using the Mat\'ern-$\nu$ kernel, the latter is known to be of order $\mathcal{O}(n^{-\nu/d})$, where $n$ is the number of evaluation runs on $f$. Setting both rates together and solving for $m$, we get (ignoring constants) that $m = n^{(2\nu+d)/d}$. This specification of $m$ can be used as a rule-of-thumb to ensure the kernel approximation error does not affect the overall GP contraction rate in an asymptotic sense.

\section{The Tensor BdryMat\'ern GP Model}
\label{sec:tensor_mat}



Finally, we present a tensor form of the BdryMat\;ern GP. Such a form may be useful when the domain is a high-dimensional unit hypercube, i.e., $\mathcal{X} = [0,1]^d$ for large $d$, and boundary information is known on $\mathcal{X}$. For this high-dimensional setting, the earlier approximation approach may suffer from a ``curse-of-dimensionality'', in that approximation becomes more difficult as $d$ increases. Here, the tensor BdryMat\'ern GP offers a key advantage of a closed-form kernel, which can directly integrate within the GP equations \eqref{eq:kriging} for efficient posterior predictions.


As before, without loss of generality, we consider the homogeneous boundary setting below; the inhomogeneous setting can be accounted for via a careful specification of its mean function $\mu_{\mathcal{B}}$ (see Section \ref{sec:Bdrymat_SPDE}). Suppose we know the following Robin boundary conditions on $f$:
\begin{align}
\label{eq:boundar_mixed_full}
f(\BFx)+c_l\partial_{x_l}f(\BFx)=0,\quad c_l\geq 0, \quad x_l=0,1, \quad \text{for all } l=1,\cdots,d.
\end{align}
To incorporate boundary information of this form, we employ the so-called tensor BdryMat\'ern kernel:
\begin{equation}
k_{\nu,\mathcal{B}}^{\rm T}(\mathbf{x},\mathbf{x}') = \prod_{l=1}^d k_{\nu,\mathcal{B},c_l}({x_l},{x_l}').
\label{eq:tensorbdry}
\end{equation}
Here, $k_{\nu,\mathcal{B},c}$ is the one-dimensional BdryMat\'ern kernel from \eqref{eq:bdryMatern_path_integral_robin} with smoothness $\nu$, incorporating homogeneous Robin boundaries of the form:
\begin{align}
\label{eq:boundar_mixed_1D}
   f(x)+cf'(x)=0, \quad c\geq 0, \quad x=0,1,
\end{align}
on the one-dimensional domain $[0,1]$.




With this, we can derive a closed-form expression for the one-dimensional BdryMat\'ern kernel $k_{\nu,\mathcal{B},c}$ in \eqref{eq:tensorbdry}. This is given in the following theorem:
\begin{theorem}
\label{thm:tensor}
Consider the one-dimensional BdryMat\'ern kernel $k_{\nu,\mathcal{B},c}$ with smoothness $\nu$ for the homogeneous Robin boundaries \eqref{eq:boundar_mixed_1D} on domain $[0,1]$. This kernel has the closed-form representation:
\begin{align}
\begin{split}
k_{\nu,\mathcal{B},c}(x,x') = k_{\nu}(x,x') - k_{hp}(x,x') - k_{hp}(x',x) + k_{hh}(x,x'),
\end{split}
\label{eq:closedform}
\end{align}
\label{thm:closedform}
where $k_{\nu}(\cdot,\cdot)$ is the base Mat\'ern kernel with smoothness $\nu > 2$ and inverse length-scale $\kappa > 0$, and:
\small
\begin{align}
\begin{split}
k_{hp}(x,x') &= \frac{1/2}{\sinh(\kappa)}\left[K_1(x-1)\left(\frac{e^{\kappa x'}}{1+c\kappa}-\frac{e^{-\kappa x'}}{1-c\kappa}\right)+K_1(x)\left(\frac{e^{\kappa(1- x')}}{1-c\kappa}-\frac{e^{-\kappa(1- x')}}{1+c\kappa}\right)\right],\\
k_{hh}(x,x')&=\frac{1}{\sinh({\kappa})^{2}} \times\\
& \quad \left(\left[\cosh(\kappa)K_2(0)-K_2(1)\right]\left(\frac{e^{\kappa(1-x-x')}}{2(1-c\kappa)^2}+\frac{e^{-\kappa(1-x-x')}}{2(1+c\kappa)^2}\right) + \left[\cosh( {\kappa})K_2(1)-K_2(0)\right]\frac{\cosh( {\kappa}(x-x'))}{1-c^2\kappa^2}\right).
\end{split}
\label{eq:closedform2}
\end{align}
\normalsize
Here, $K_1(x)= k_{\nu}(x,x)-c\partial_x k_{\nu}(x,x)$ and $K_2(x)=k_{\nu}(x,x)-c^2\partial_{x x}k_{\nu}(x,x)$.
\end{theorem}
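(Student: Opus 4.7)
The plan is to specialize the general Fourier-integral decomposition from the proof of Theorem \ref{thm:bdryMatern_path_integral_robin} to the one-dimensional domain $[0,1]$, where the elliptic correction PDE reduces to a scalar linear ODE with exponential solutions, and all resulting $\omega$-integrals evaluate in closed form.

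First, I would invoke the decomposition
\begin{align*}
k_{\nu,\mathcal{B},c}(x,x') = k_\nu(x,x') - K_{hv}(x,x') - K_{hv}(x',x) + K_{vv}(x,x')
\end{align*}
from \eqref{eq:pf_path_integral_robin_2}, where the building block $v_\omega$ in \eqref{eq:pf_path_integral_robin_1} solves $(\kappa^2 - \partial_x^2) v_\omega = 0$ on $(0,1)$ with Robin boundary matching against the full-line particular solution $h_\omega(x) = e^{\im \omega x}/(\kappa^2 + \omega^2)$ at $x = 0, 1$. In 1D the general homogeneous solution is $v_\omega(x) = A(\omega) e^{\kappa x} + B(\omega) e^{-\kappa x}$, and the two unknowns follow from Cramer's rule applied to a $2 \times 2$ linear system whose determinant is proportional to $\sinh(\kappa)(1 - c^2 \kappa^2)$. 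This is precisely where the $1/\sinh(\kappa)$ prefactors and the $1/(1 \pm c\kappa)$ denominators in \eqref{eq:closedform2} originate.

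Next, I would substitute these explicit expressions for $v_\omega$ into the Fourier-integral representations of $K_{hv}$ and $K_{vv}$ in \eqref{eq:krep} and apply Fubini's theorem to swap the $\omega$-integration with the $\omega$-independent coefficient structure. The inner scalar integrals $\int_{\Real} e^{\im \omega y}/(\kappa^2 + \omega^2)^\alpha \, d\omega$ collapse to the base Mat\'ern kernel $k_\nu$ (and, for integrands of the form $\omega^j e^{\im \omega y}/(\kappa^2 + \omega^2)^\alpha$ with $j = 1, 2$, to its first and second derivatives). After collecting terms, the boundary values of $h_\omega$ together with its derivative combine in the Robin form $h_\omega \pm c h'_\omega$, giving rise to the boundary-trace quantities $K_1(x) = k_\nu(x,x) - c \partial_x k_\nu(x,x)$ and $K_2(x) = k_\nu(x,x) - c^2 \partial_{xx} k_\nu(x,x)$. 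The $K_{hv}$ contribution then produces $k_{hp}(x,x')$ and, by the symmetry $x \leftrightarrow x'$, $k_{hp}(x',x)$, while $K_{vv}$ produces $k_{hh}(x,x')$.

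The main obstacle I anticipate is the algebraic bookkeeping in compacting the exponential and hyperbolic factors into the tidy form of \eqref{eq:closedform2}; in particular, the specific combinations $\cosh(\kappa) K_2(0) - K_2(1)$ and its companion $\cosh(\kappa) K_2(1) - K_2(0)$ only emerge after carefully recombining Cramer's rule with the two boundary contributions and using standard hyperbolic identities. As a clean sanity check complementing the derivation, one can verify directly that the RHS of \eqref{eq:closedform} satisfies (i) the biharmonic-style kernel equation $(\kappa^2 - \partial_x^2)(\kappa^2 - \partial_{x'}^2) k(x,x') = k_{\nu-2}(x,x')$, which is immediate because the correction terms $k_{hp}$ and $k_{hh}$ are built entirely from $\{e^{\pm \kappa x}, e^{\pm \kappa x'}\}$ and hence lie in the kernel of $\kappa^2 - \partial_x^2$ (respectively $\kappa^2 - \partial_{x'}^2$), and (ii) the Robin boundary condition $k(x,x') + c \partial_x k(x,x') = 0$ at $x \in \{0,1\}$ in each argument; uniqueness of the resulting boundary-value problem for the covariance kernel then closes the argument.
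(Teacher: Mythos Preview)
Your proposal is correct and follows essentially the same route as the paper: both reduce the kernel to the Fourier-integral decomposition \eqref{eq:pf_path_integral_robin_2}, solve the one-dimensional correction ODE explicitly via $f_h(x)=A(\omega)e^{\kappa x}+B(\omega)e^{-\kappa x}$ with Robin matching to $f_p(x)=e^{\im\omega x}/(\kappa^2+\omega^2)$, and then substitute back so that the $\omega$-integrals collapse to $k_\nu$ and its derivatives. The paper re-derives the decomposition directly from the Green's function convolution \eqref{eq:kernel_green} in 1D rather than citing \eqref{eq:pf_path_integral_robin_2}, and it omits your closing sanity check via the kernel PDE and boundary conditions, but the mathematical content is the same.
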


\begin{proof}
Since the BdryMat\'ern kernel (with $\nu > 2$) can be represented as the convolution between Green's functions and the Mat\'ern kernel, we have:
\begin{align}
    k_{\nu,\mathcal{B}}(x,x')=&\int_{\CalX\times\CalX}G(x,s)k_{\nu-2}(s,u)G(u,x')dsdu
    =C_{\CalF^{-1}}\int_\Real\frac{f(x,\omega)\overline{f(x',\omega)}}{(\kappa^2+\omega^2)^{\nu-3/2}}d\omega.
     \label{eq:kernel_green_1d}
\end{align}
Here, the second equality follows from the spectral density representation of $k_\nu$ and switch of integral, and $f(x,\omega)=\int_{\CalX}G(x,s)e^{i\omega s}ds$ is the solution to the one-dimensional PDE:
\begin{align}
\begin{split}
   & \left(\kappa^2-\frac{\partial^2}{\partial x^2}\right)f(x)=e^{i\omega x},\\
   &f(x)+cf'(x)=0,\quad x=0,1,\quad c\geq 0.
   \end{split}
\label{eq:green_1_character}
\end{align}
The solution to \eqref{eq:green_1_character} can be represented as a particular solution $f_p$ minus a homogeneous $f_h$, i.e., $f(x,\omega)=f_p(x,\omega)-f_h(x,\omega)$, with each part satisfying:
\begin{align*}
    &\left(\kappa^2-\frac{\partial^2}{\partial x^2}
    \right)u_p(x)=e^{i\omega x},\quad \left(\kappa^2-\frac{\partial^2}{\partial x^2}\right)u_h(x)=0,\\
   & u_h(x)+cu_h'(x)=u_p(x),\quad x=0,1.
\end{align*}
Here, both $f_p$ and $f_h$ can be solved analytically as:
\begin{align}
\begin{split}
    & f_p(x)=\frac{e^{i\omega x}}{\kappa^2+\omega^2},\\
    & f_h(x)=\frac{(e^\kappa-e^{i\omega})(1+ic\omega)e^{-\kappa x}}{2(1-c\kappa)\sinh(\kappa)(\kappa^2+\omega^2)}+\frac{(e^{i\omega}-e^{-\kappa})(1+ic\omega)e^{\kappa x}}{2(1+c\kappa)\sinh(\kappa)(\kappa^2+\omega^2)}.
\end{split}
\label{eq:u_p_u_h_solution}
\end{align}
Substitute \eqref{eq:u_p_u_h_solution} into \eqref{eq:kernel_green_1d}, we have the closed form of $k_{\nu,\mathcal{B},c}$ in Equations \eqref{eq:closedform} and \eqref{eq:closedform2}.
\end{proof}


\noindent An analogous closed-form kernel for the Dirichlet boundary setting can be obtained by setting $c=0$. Note that the closed-form kernel from \eqref{eq:closedform} and \eqref{eq:closedform2} aligns with the BdryMat\'ern kernel form in \eqref{eq:bdryMatern_path_integral_dirichlet} and \eqref{eq:bdryMatern_path_integral_robin}; $k_{hp}$ and $k_{hh}$ provide the exact Brownian motion expectations in \eqref{eq:bdryMatern_path_integral_dirichlet} for Dirichlet boundaries and \eqref{eq:bdryMatern_path_integral_robin} for Robin boundaries.

With this, the kernel from \eqref{eq:closedform} and \eqref{eq:closedform2} can then be integrated within \eqref{eq:tensorbdry} to obtain a closed-form form tensor BdryMat\'ern kernel $k_{\nu,\mathcal{B}}^{\rm T}$. One can show that the GP with such a tensor kernel has corresponding sample paths that satisfy the desired boundary information \eqref{eq:boundar_mixed_full}. This closed-form tensor kernel further permits closed-form GP predictive equations from \eqref{eq:kriging} with $k = k_{\nu,\mathcal{B}}^{\rm T}$. Thus, in this simpler setting of a unit hypercube domain $\mathcal{X} = [0,1]^d$ with boundary information of the form \eqref{eq:boundar_mixed_full}, one can bypass the need for kernel approximation via a carefully-specified tensor kernel.



Figure \ref{fig:bdrymat_plot} visualizes different sample paths for the BdryMat\'ern GP using the closed-form kernel from Theorem \ref{thm:tensor}. Here, two choices of smoothness parameters ($\nu = 5/2$ and $7/2$) are tested with inverse length-scale parameter fixed at $\kappa = 5$, under homogeneous Dirichlet, Robin ($c=10$) and Neumann boundary conditions. There are two observations of interest. First, for the Dirichlet (or Neumann) settings, all sample paths appear to go through zero (or have a derivative of zero) at left and right end-points, which satisfies provided boundaries; Robin boundaries are more difficult to see visually. Second, for larger $\nu$, the sample paths appear to have a higher degree of smoothness (in terms of differentiability), which is as desired.




\begin{figure}
\begin{center}
\includegraphics[width=0.3\textwidth]{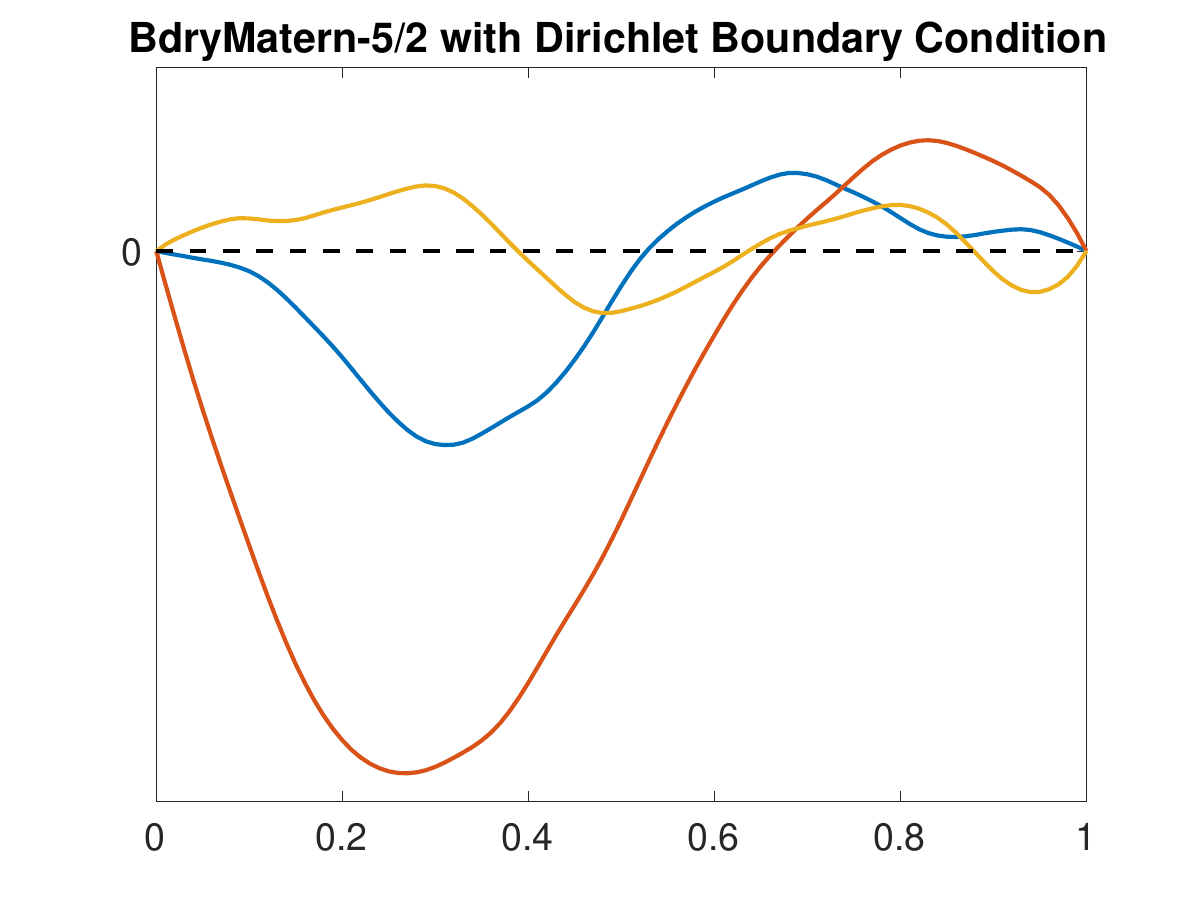}
\includegraphics[width=0.3\textwidth]{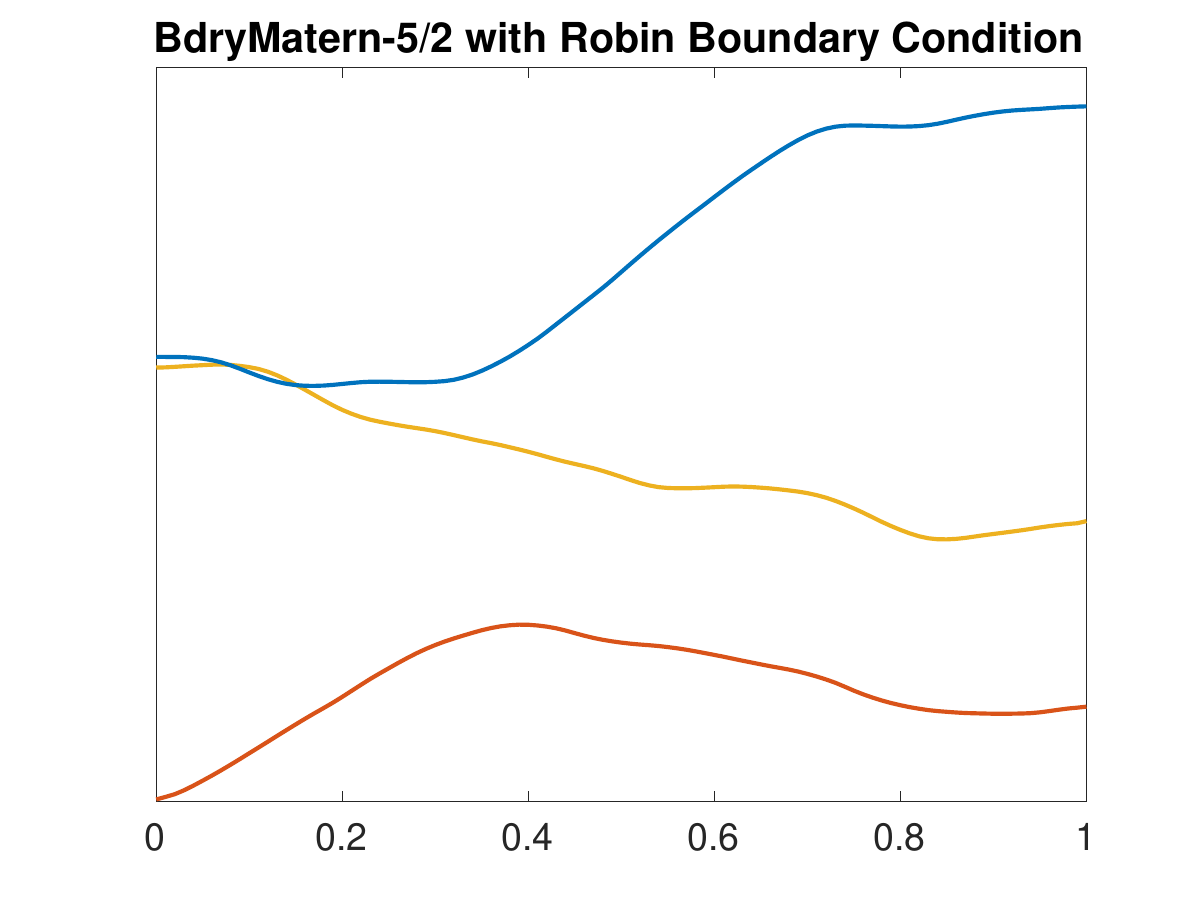}
\includegraphics[width=0.3\textwidth]
{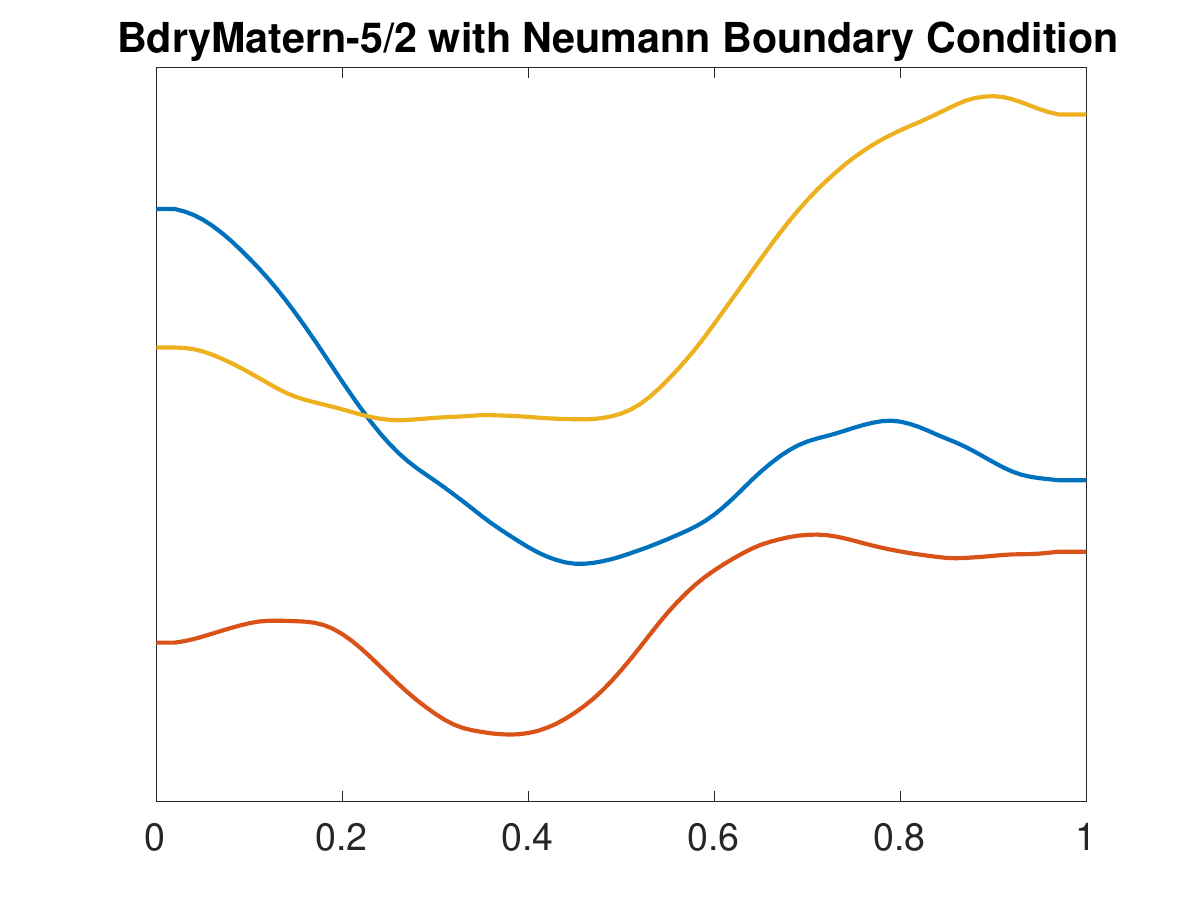}
\includegraphics[width=0.3\textwidth]
{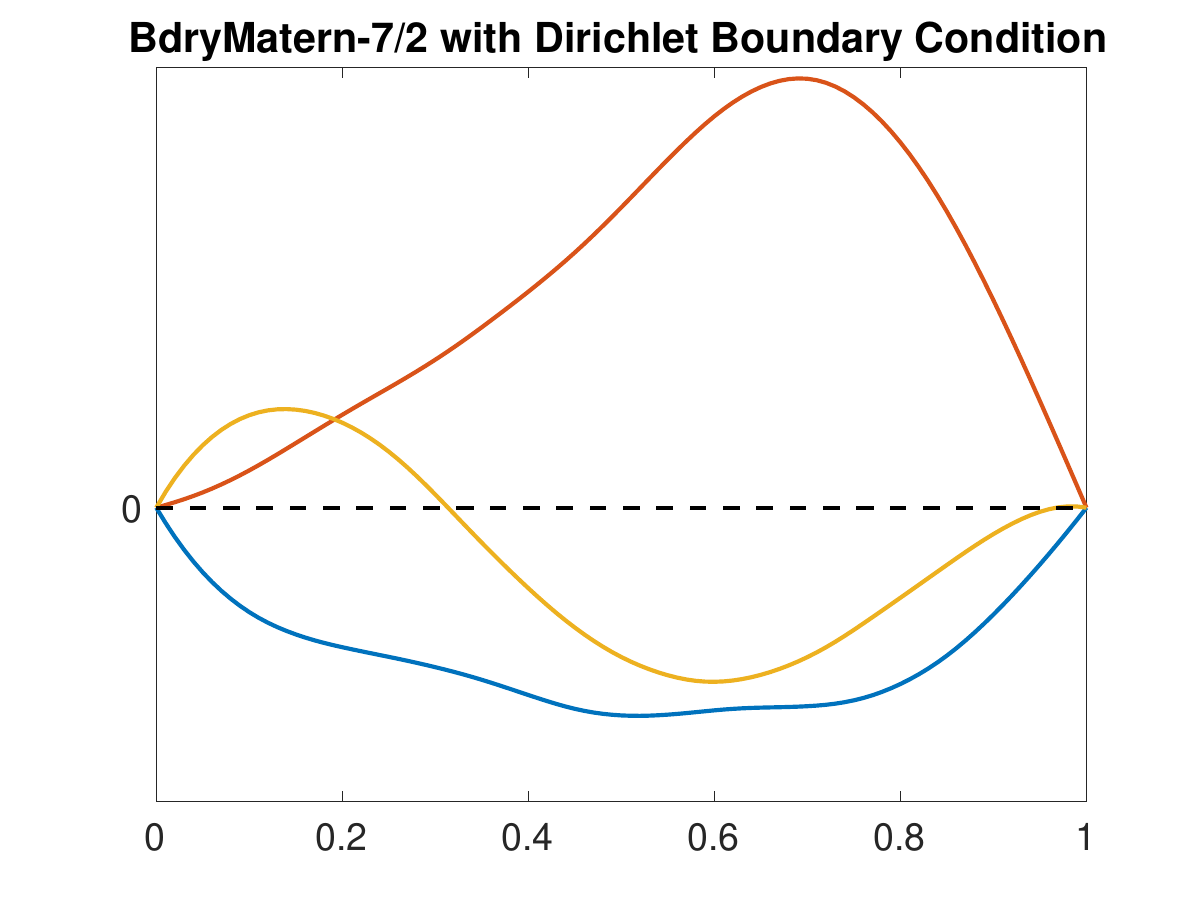}
\includegraphics[width=0.3\textwidth]{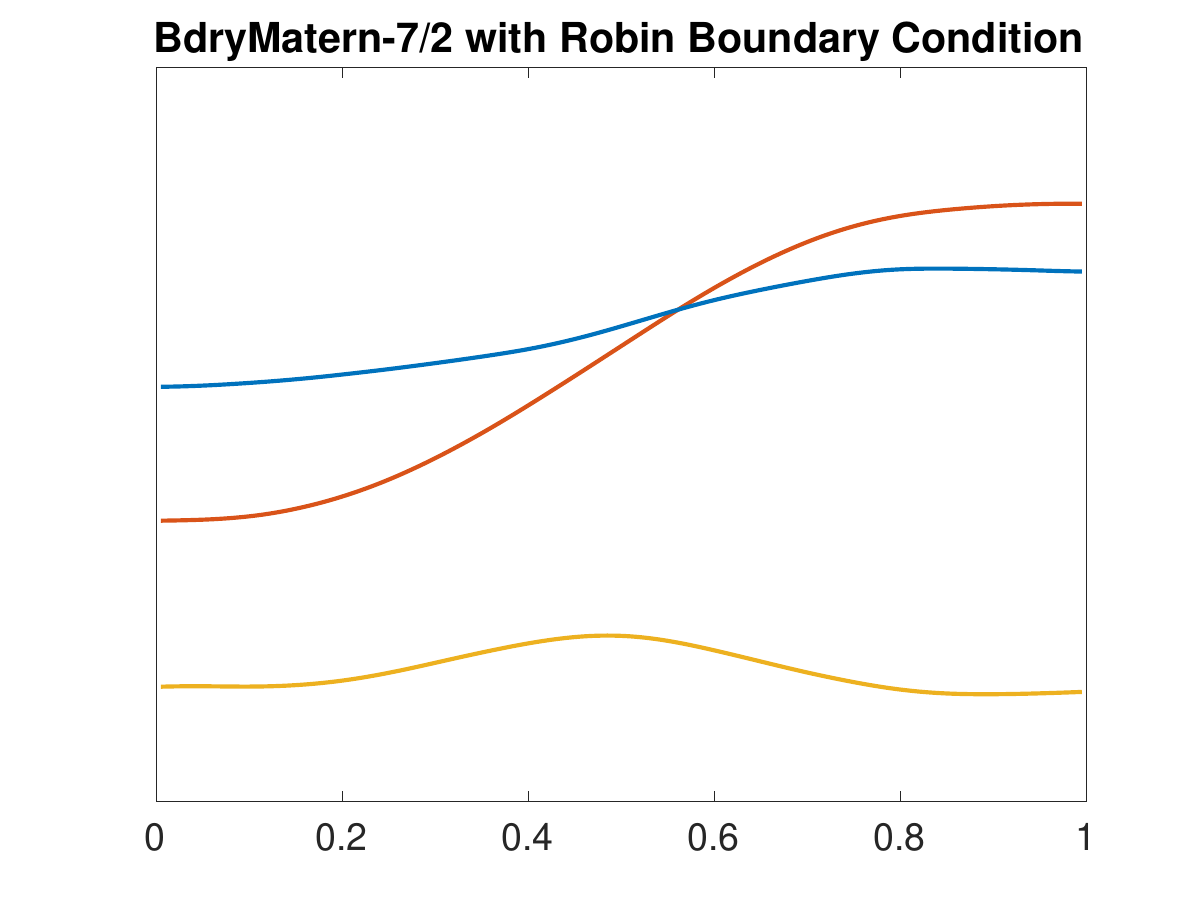}
\includegraphics[width=0.3\textwidth]{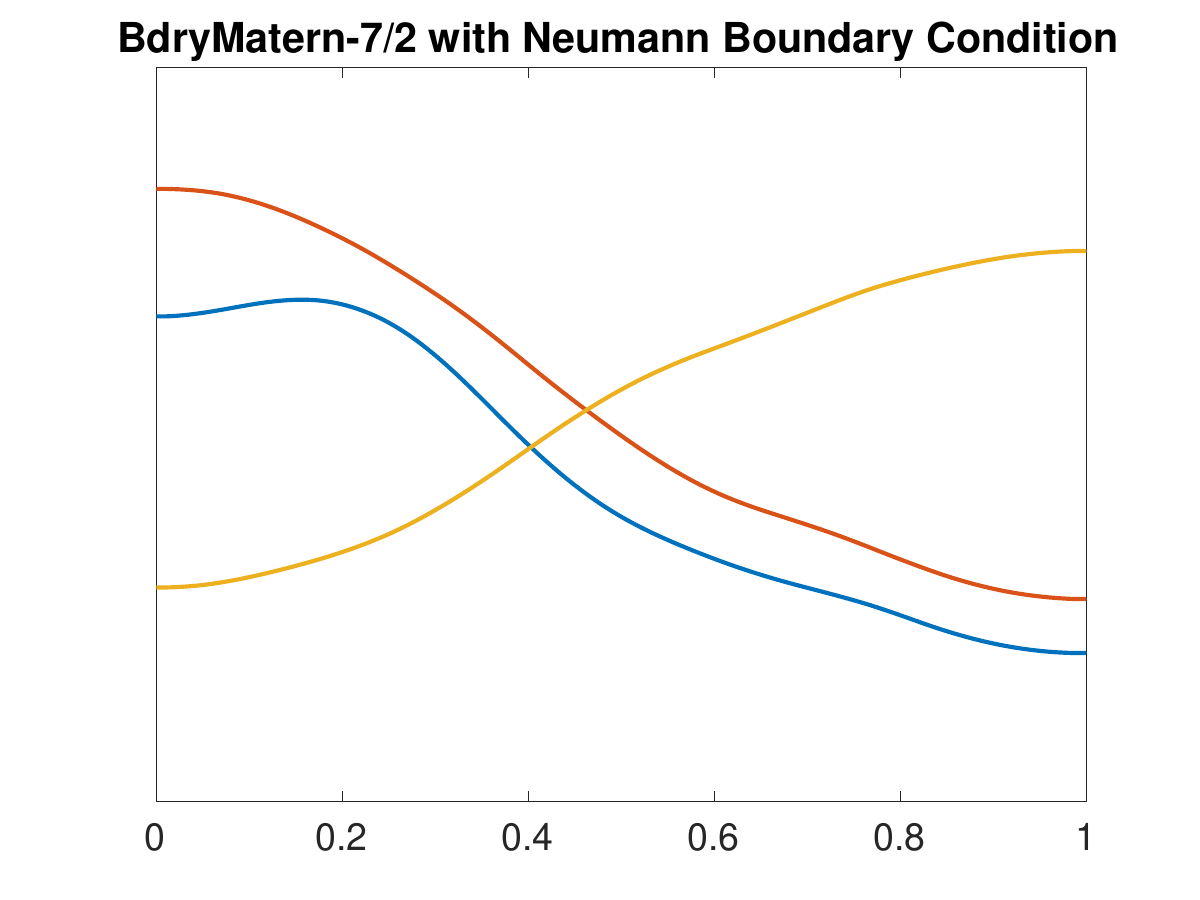}
\end{center}
\caption{[Top] Sample paths from the BdryMat\'ern GP with smoothness parameter $\nu = 5/2$ given Dirichlet, Robin and Neumann boundary conditions on $[0,1]$. Different colors correspond to different sample paths. [Bottom] Sample paths from the BdryMat\'ern GP with smoothness parameter $\nu = 7/2$ given Dirichlet, Robin and Neumann boundary conditions on $[0,1]$.}
\label{fig:bdrymat_plot}
\end{figure}



\section{Numerical Experiments}
\label{sec:numeric}

Finally, we investigate the performance of the proposed boundary-integrated GP model in a suite of numerical experiments. We first explore the BdryMat\'ern GP (Sections \ref{sec:Bdrymat_SPDE} and \ref{sec:kernel_approx}) for a variety of irregular domains in two dimensions, then examine the tensor BdryMat\'ern GP (Section \ref{sec:tensor_mat}) on the higher-dimensional unit hypercube domain $[0,1]^{30}$. 

The general experimental set-up is as follows. In both cases, we compare the proposed model with its corresponding Mat\'ern GP counterpart without boundary conditions (details provided later), to gauge the effect of incorporating boundary information. All models share the same training design points, and model parameters are estimated via maximum likelihood \citep{casella2002}. Using the posterior mean $\hat{f}_n(\cdot)$ as our predictor, predictive performance is evaluated via its log-mean-squared-error (log-MSE) on an independent test set of $10^4$ randomly-selected points on $\mathcal{X}$. All experiments are performed in Matlab R2023a on a MacBook with Apple M2 Max Chip and 32GB of RAM.



\subsection{BdryMat\'ern GP}
We first explore the BdryMat\'ern GP on a variety of irregular (i.e., non-hypercube) domains. The following four domains are considered:
\begin{itemize}
\item T-shaped: $\mathcal{X}_{\rm T} = ([-1, 1] \times [0, 1]) \cup ([-0.5, 0.5] \times [0, 1])$,
\item Ring-shaped: $\mathcal{X}_{\rm ring} = \left\{ (x_1, x_2) : 1 \leq \sqrt{x_1^2+x_2^2} \leq 2 \right\}$,
\item Disk-shaped: $\mathcal{X}_{\rm disk} = \left\{ (x_1, x_2): \sqrt{x_1^2+x_2^2} \leq 1 \right\}$,
\item Holed-rectangle: $\mathcal{X}_{\rm hrec} = \left\{ (x_1, x_2): |x_1|\leq 1,-1\leq x_2\leq \frac{1}{2}, \sqrt{(x_1-0.5)^2+(x_2+0.5)^2}\geq 0.25\right\}$. 
\end{itemize}
For each domain choice, we generate the true function $f$ from the BdryMat\'ern GP with kernel $k_{\nu,\mathcal{B}}$ with smoothness $\nu = 5/2$, inverse length-scale $\kappa = 5 $ and normalized variance parameter $\sigma^2 = \max_{\BFx\in\CalX} \hat{k}_{m,Q}(\BFx,\BFx)$.  For the T-shaped and ring-shaped domains, we adopt the homogeneous (i.e., zero) Dirichlet boundary conditions $f(\partial \mathcal{X}) = 0$, which we presume to be known for surrogate modeling. For the disk and holed-rectangle domains, we adopt Robin boundary conditions of the form \eqref{eq:robinbdry} with $c(\mathbf{x})=20$; these are again presumed to be known for surrogate modeling. 



Next, we sampled $n=50$ training design points $\mathbf{x}_1, \cdots, \mathbf{x}_n$ uniformly-at-random on $\mathcal{X}$, and collected training data $f(\mathbf{x}_1), \cdots, f(\mathbf{x}_n)$. Using such data, we fit the proposed BdryMat\'ern GP (from Section \ref{sec:Bdrymat_SPDE}) with smoothness parameter $\nu = 5/2$, using the FEM-based approximation procedure from Algorithm \ref{alg:fem} for posterior predictions. Here, B-splines are used, with $m = 153, 107, 97, 127$ inducing points for the T-shaped, ring-shaped, disk and holed-rectangle domains, following the nodes of their respective triangulations. We then compare with an (isotropic) Mat\'ern GP with smoothness $\nu = 5/2$, which is the counterpart for the BdryMat\'ern GP from Section \ref{sec:Bdrymat_SPDE} without boundary information. This simulation is replicated 100 times to ensure reproducibility.


Figure \ref{fig:2d_log_err} shows the corresponding mean log-MSEs of the compared models for each domain choice. For all domains, we see that the integration of known boundary information using the BdryMat\'ern GP (with the kernel approximation approach in Section \ref{sec:kernel_approx}) indeed improves predictive performance. Furthermore, as sample size $n$ increases, the BdryMat\'ern GP appears to offer improved prediction error rates in $n$ over its Mat\'ern GP counterpart. This suggests that the integration of boundary information can perhaps accelerate GP prediction rates in the irregular domain setting, much like the improved rates for boundary-integrated surrogates shown in \cite{ding2019bdrygp} for unit hypercube domains. We aim to investigate this rate improvement theoretically as future work. Figure \ref{fig:2d_log_err} visualizes the true response surfaces $f$ for a single experiment replication, and the posterior mean predictor $\hat{f}_n$ from the two compared models. We see that, without boundary information, the Mat\'ern GP may yield erratic predictions particularly when the domain is highly irregular. By incorporating known boundary information on $\mathcal{X}$, the BdryMat\'ern GP provides a considerably improved visual fit of $f$ with limited samples, which corroborates the results in Figure \ref{fig:2d_log_err}.

\begin{figure}[!t]
\hspace{2cm}
\begin{center}
\includegraphics[width=0.23\textwidth]
{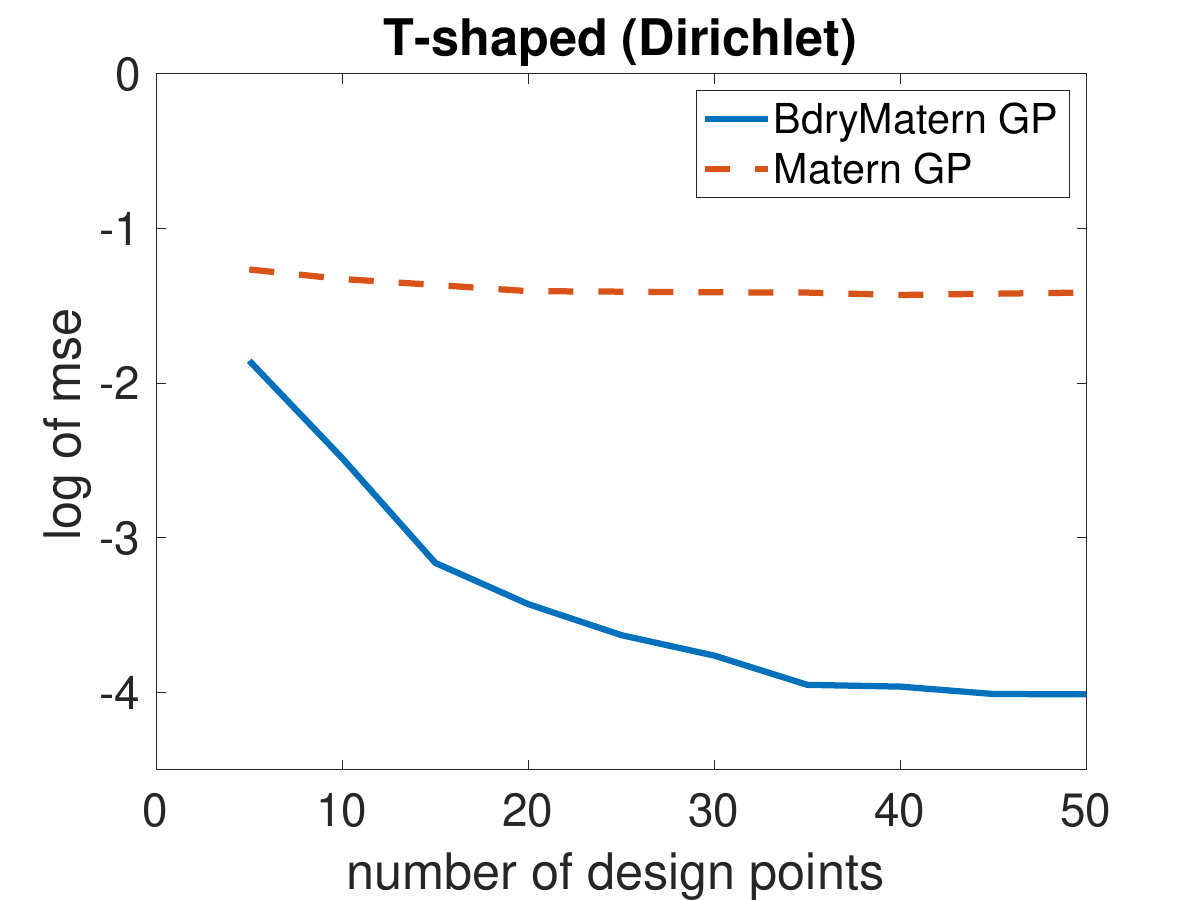}
\includegraphics[width=0.23\textwidth]{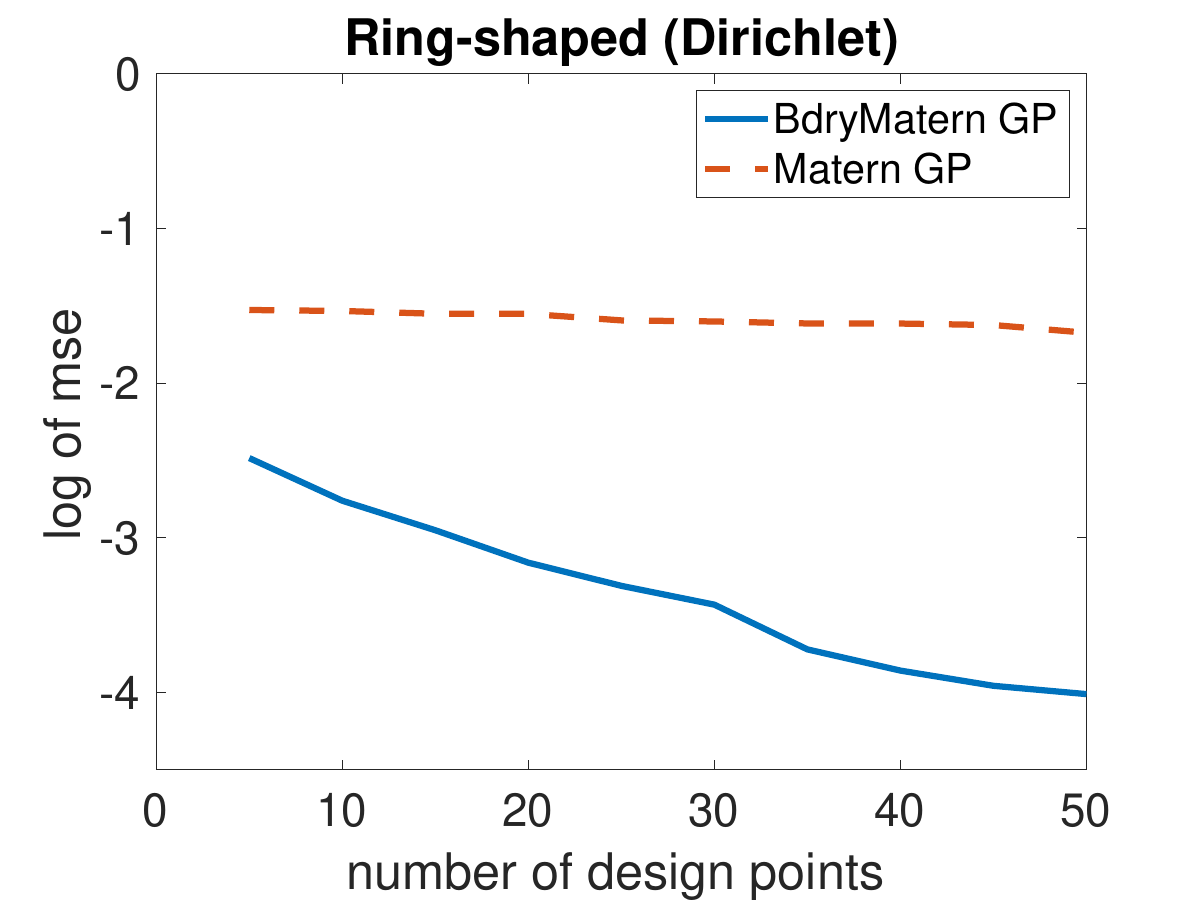}
\includegraphics[width=0.23\textwidth]{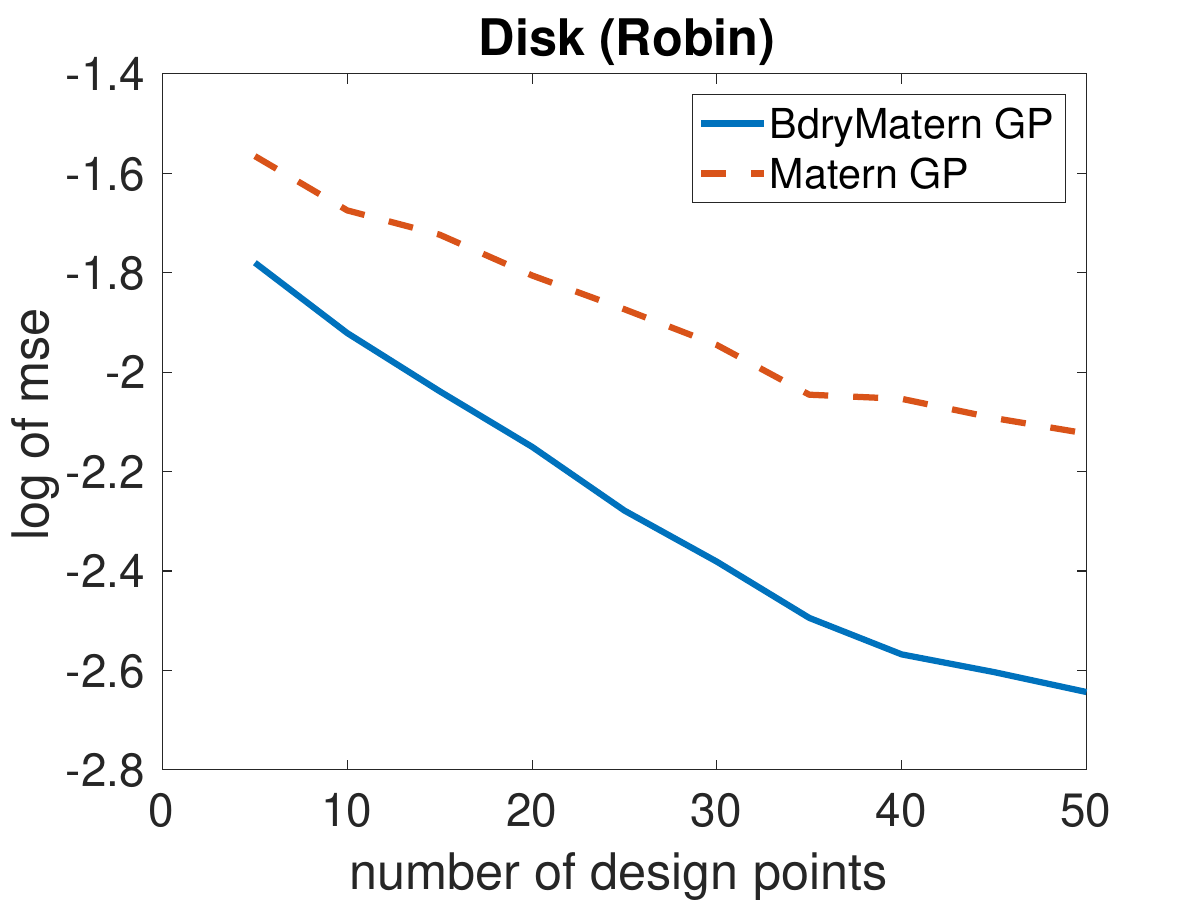}
\includegraphics[width=0.23\textwidth]{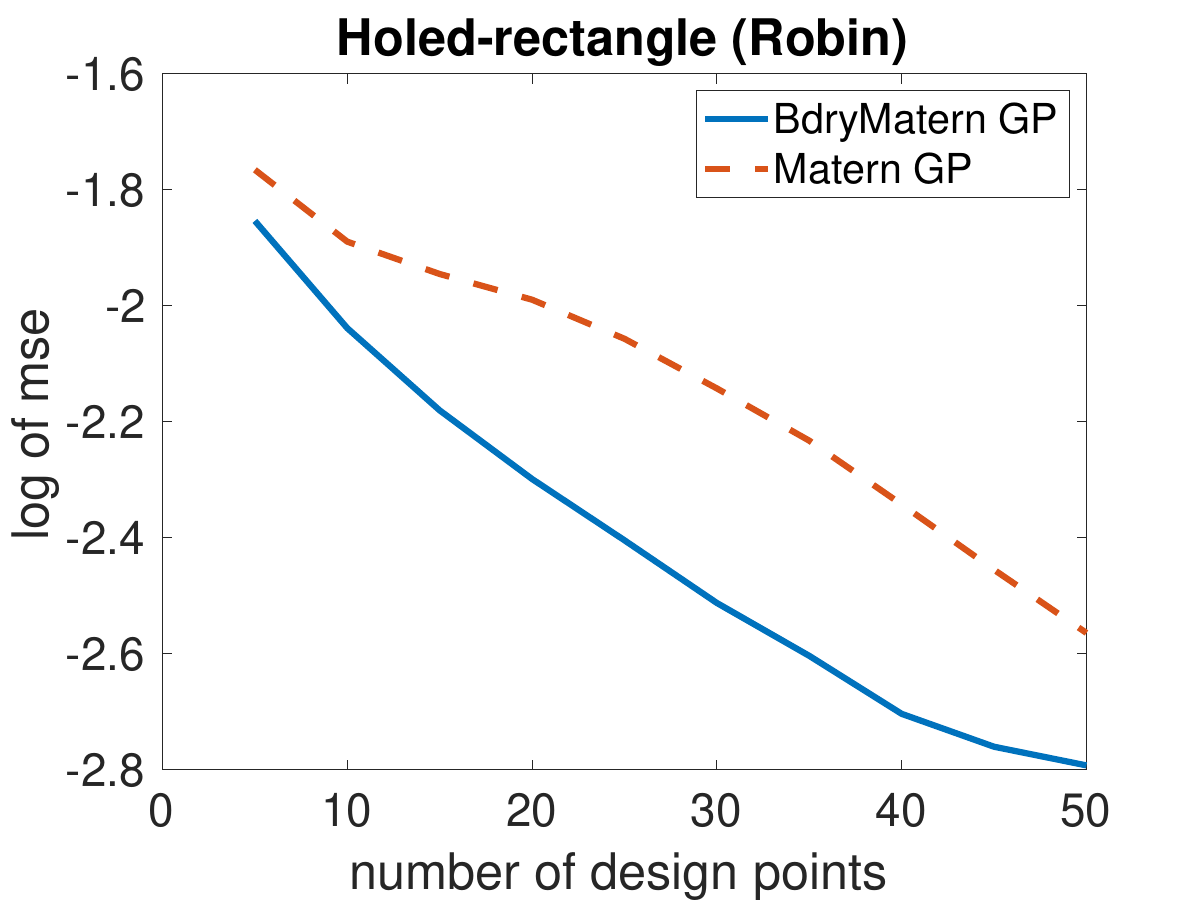}
\end{center}
\caption{Log-MSEs for the T-shaped, ring-shaped, disk-shaped, and holed-rectangle domain experiments, using the BdryMat\'ern GP and the Mat\'ern GP.}
\label{fig:2d_log_err}
\end{figure}

\begin{figure}
\hspace{2cm}
\begin{center}
\includegraphics[width=0.3\textwidth]
{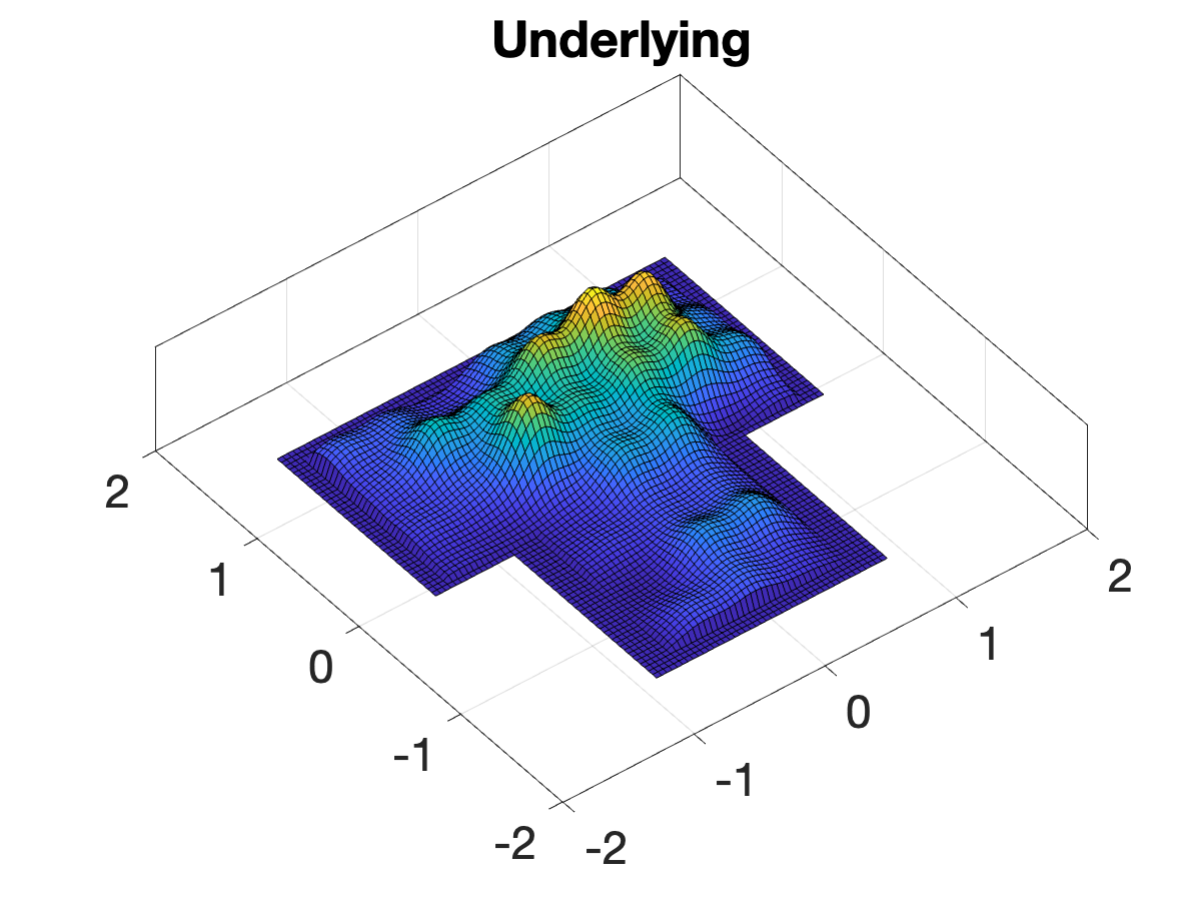}
\includegraphics[width=0.3\textwidth]{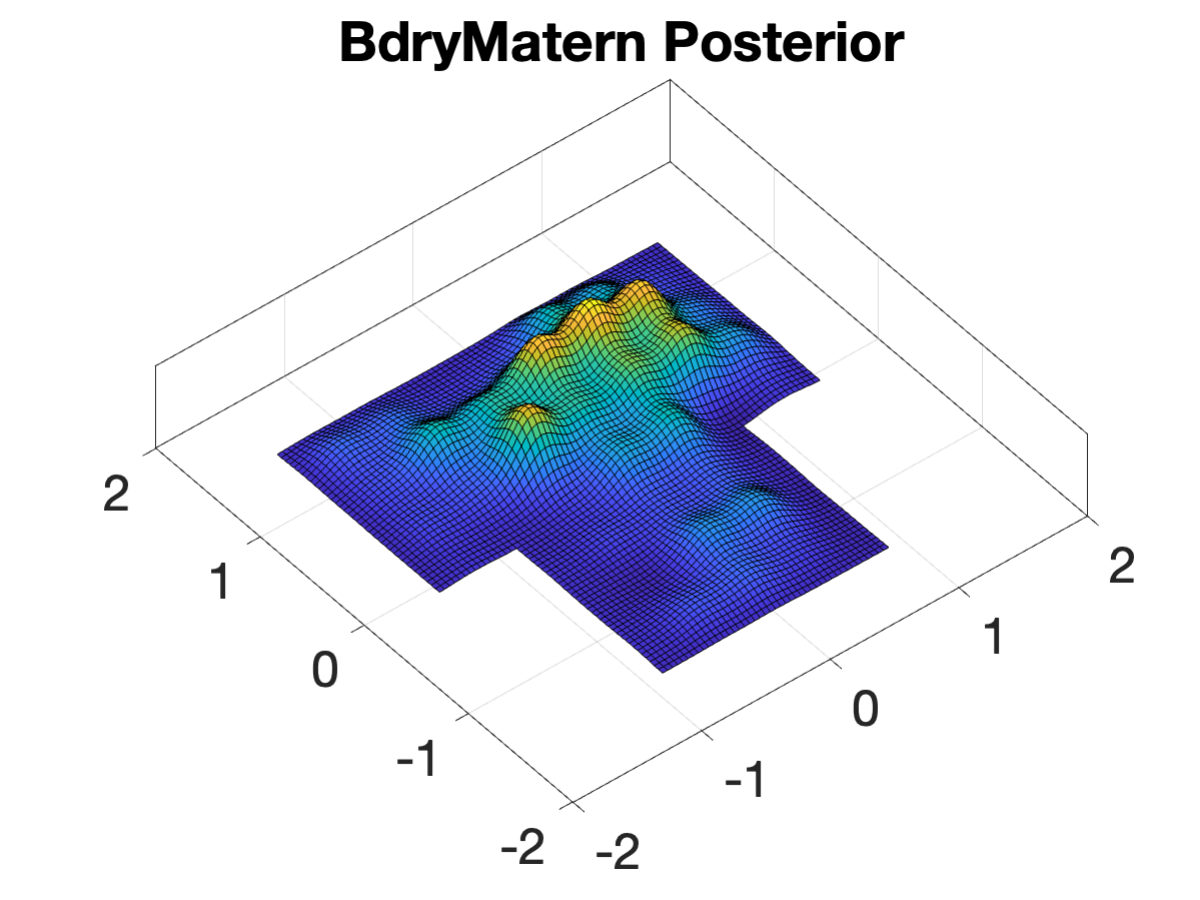}
\includegraphics[width=0.3\textwidth]{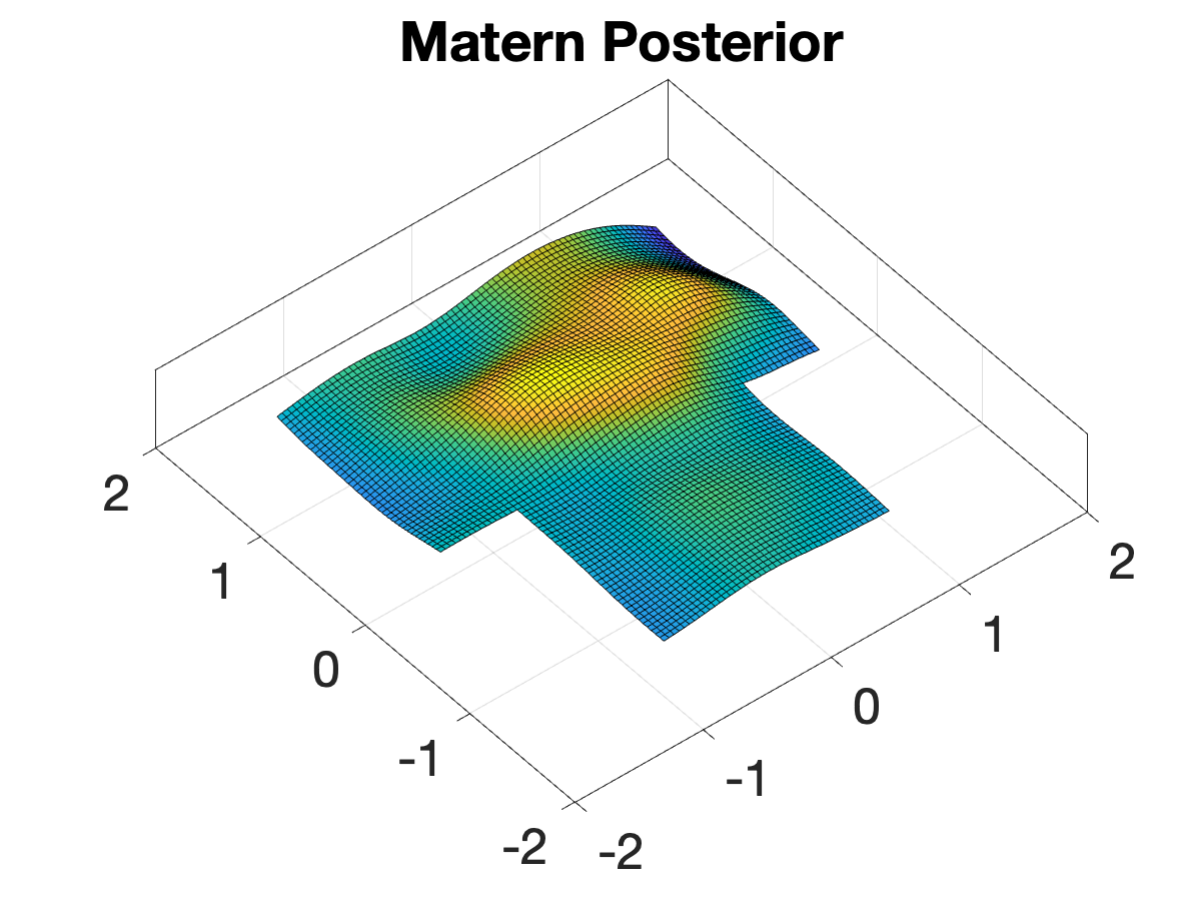}
\includegraphics[width=0.3\textwidth]
{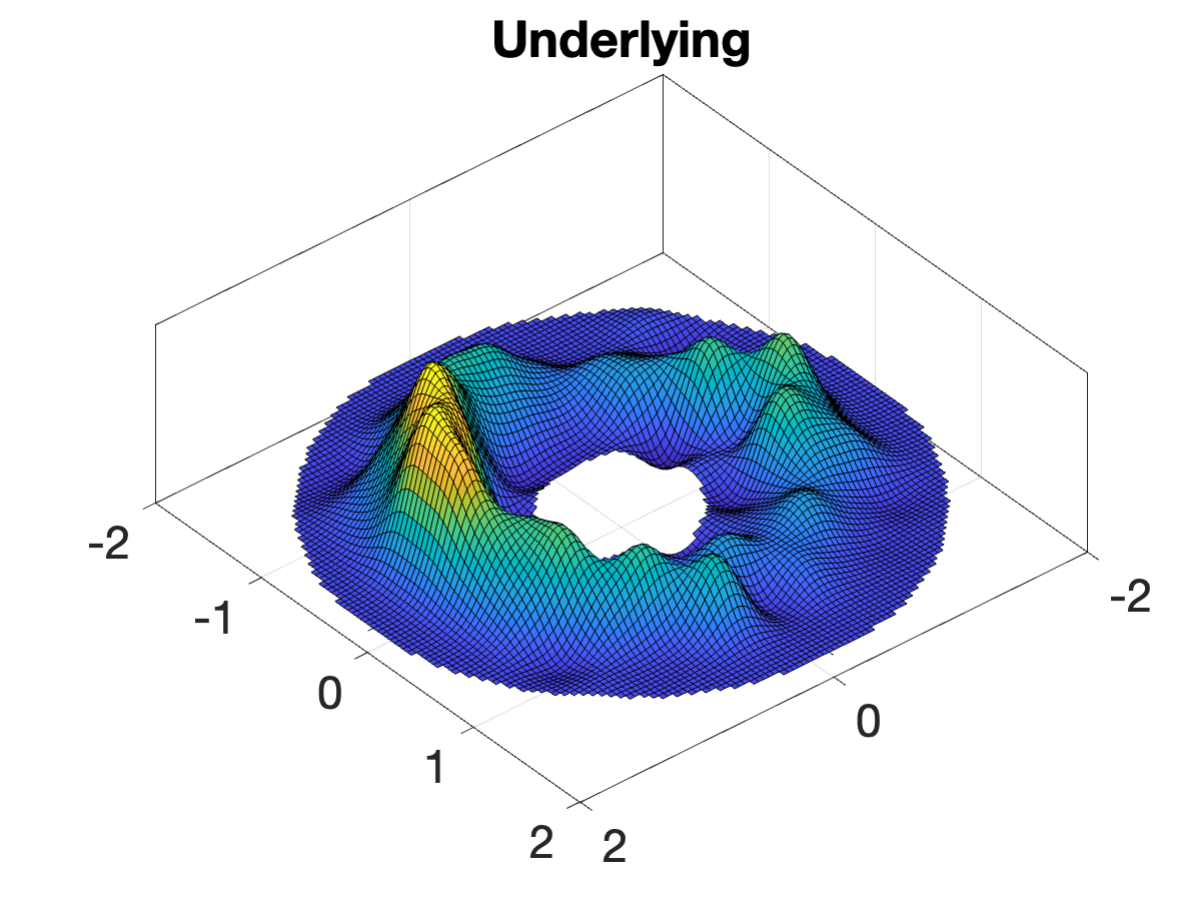}
\includegraphics[width=0.3\textwidth]{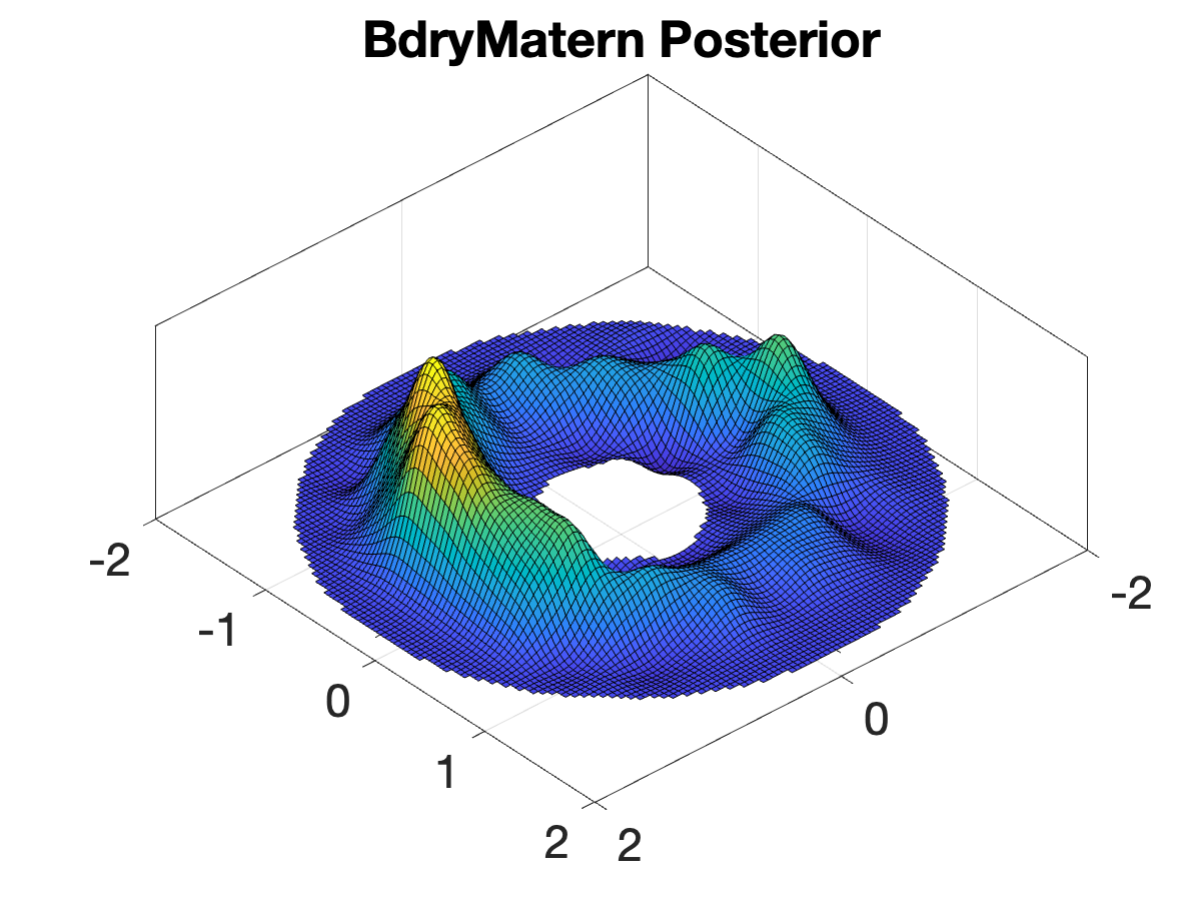}
\includegraphics[width=0.3\textwidth]{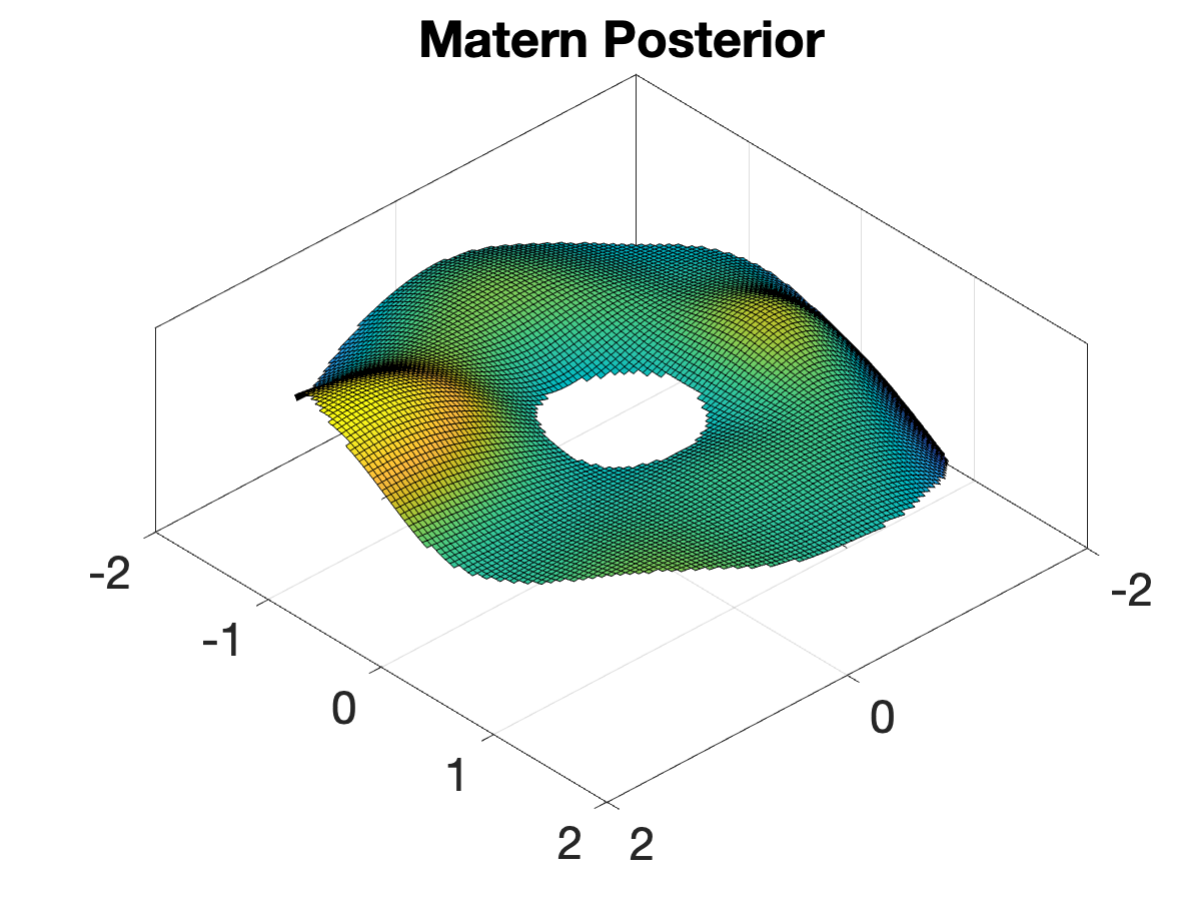}
\includegraphics[width=0.3\textwidth]
{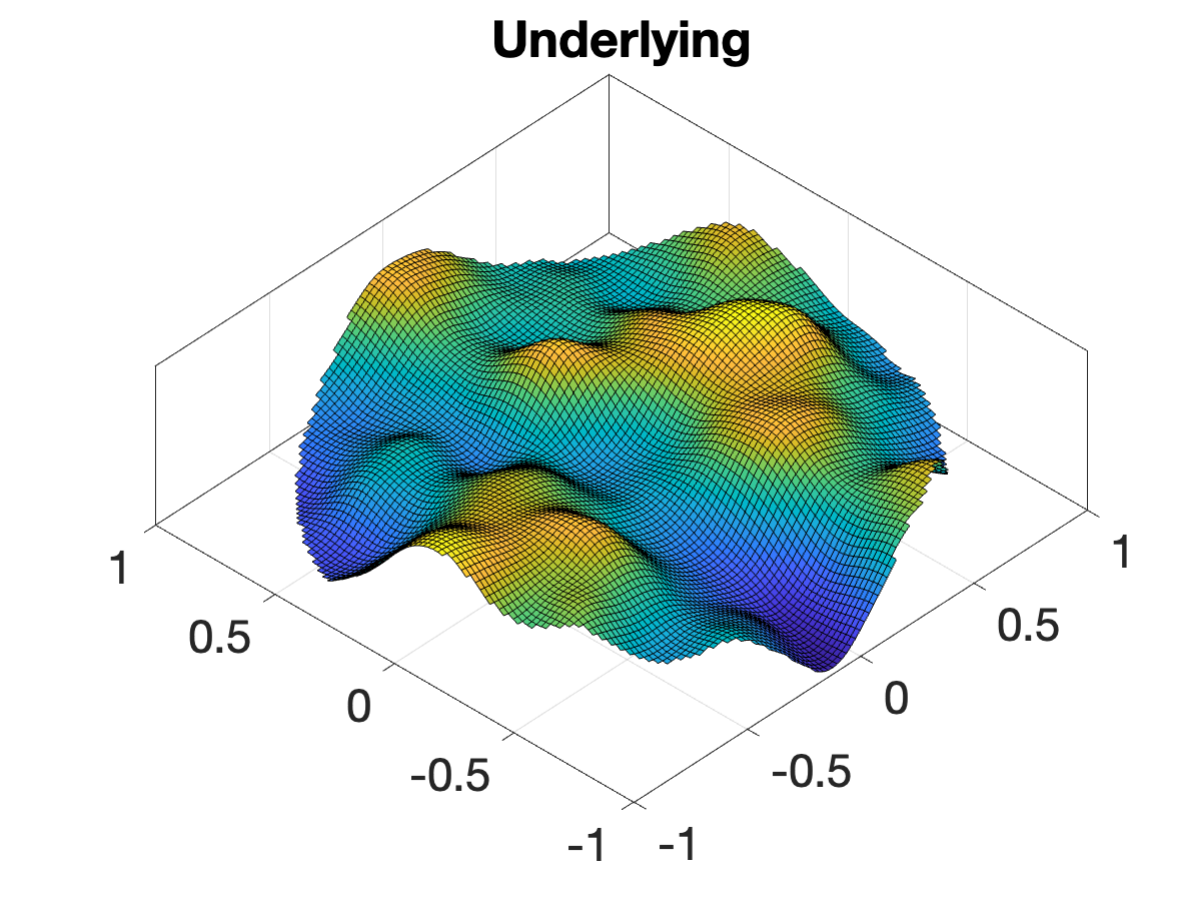}
\includegraphics[width=0.3\textwidth]{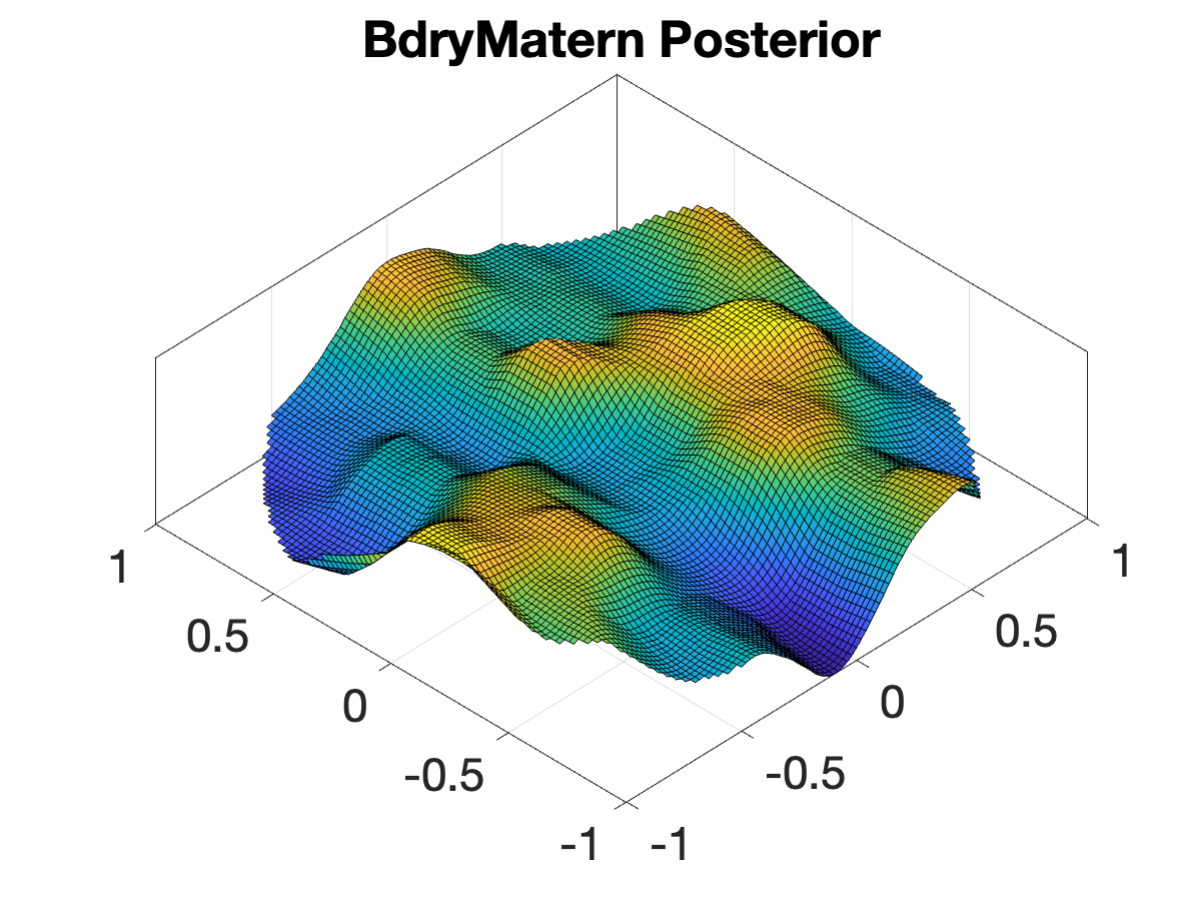}
\includegraphics[width=0.3\textwidth]{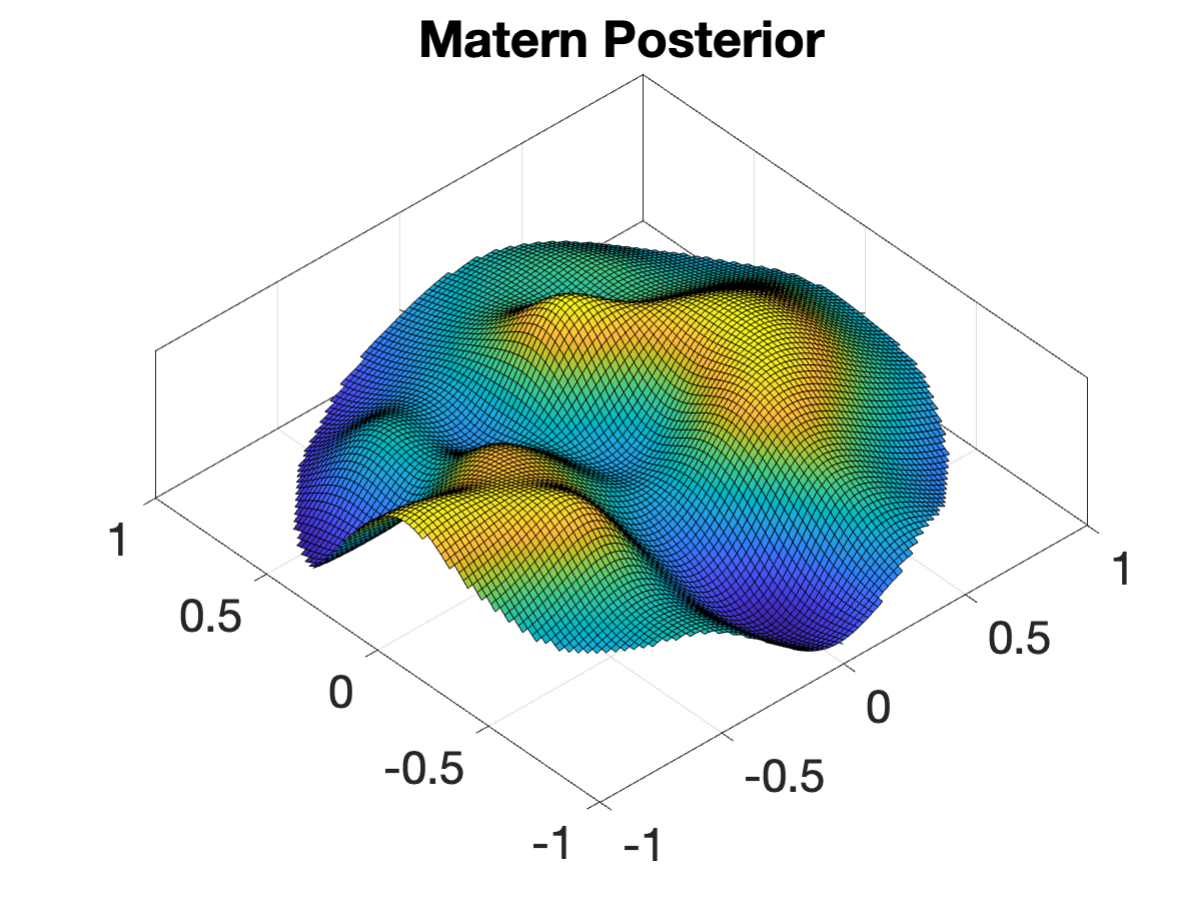}
\includegraphics[width=0.3\textwidth]
{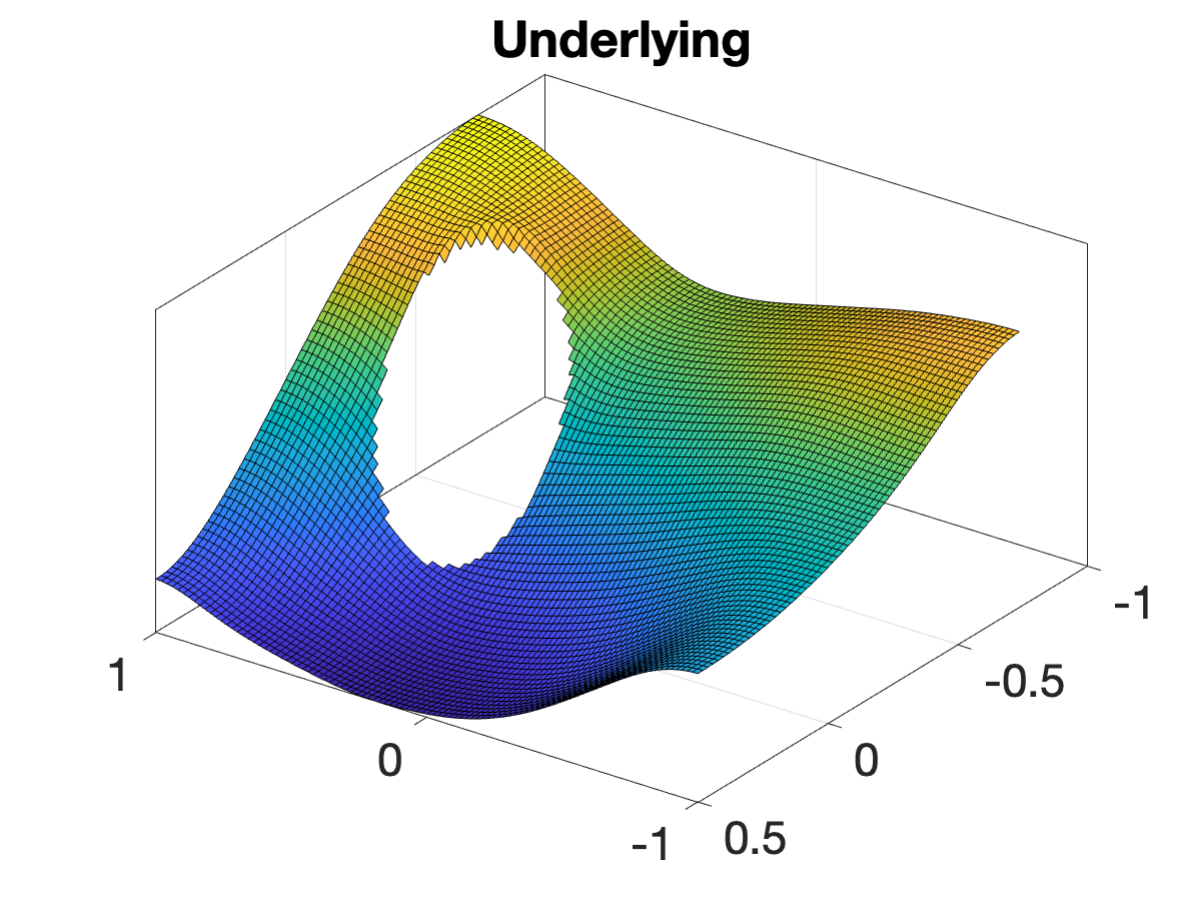}
\includegraphics[width=0.3\textwidth]{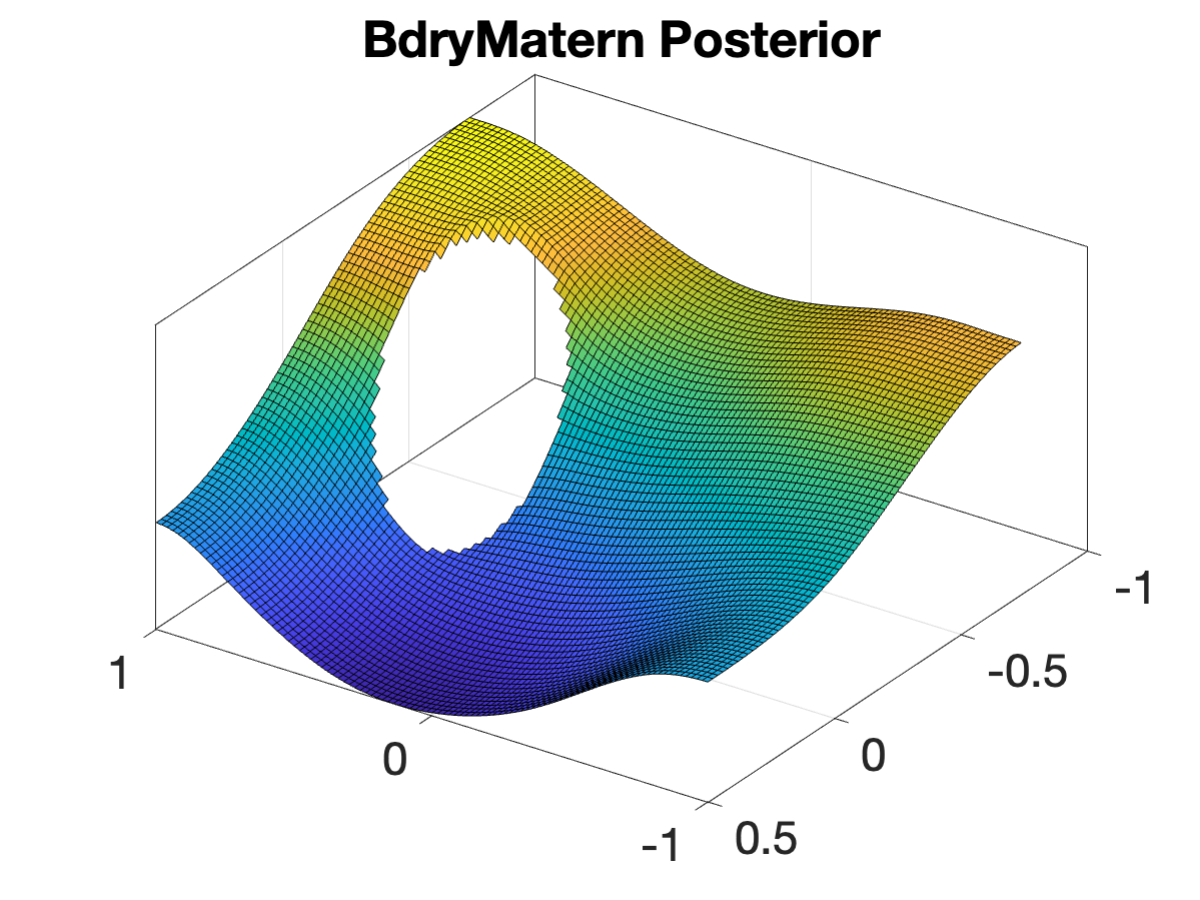}
\includegraphics[width=0.3\textwidth]{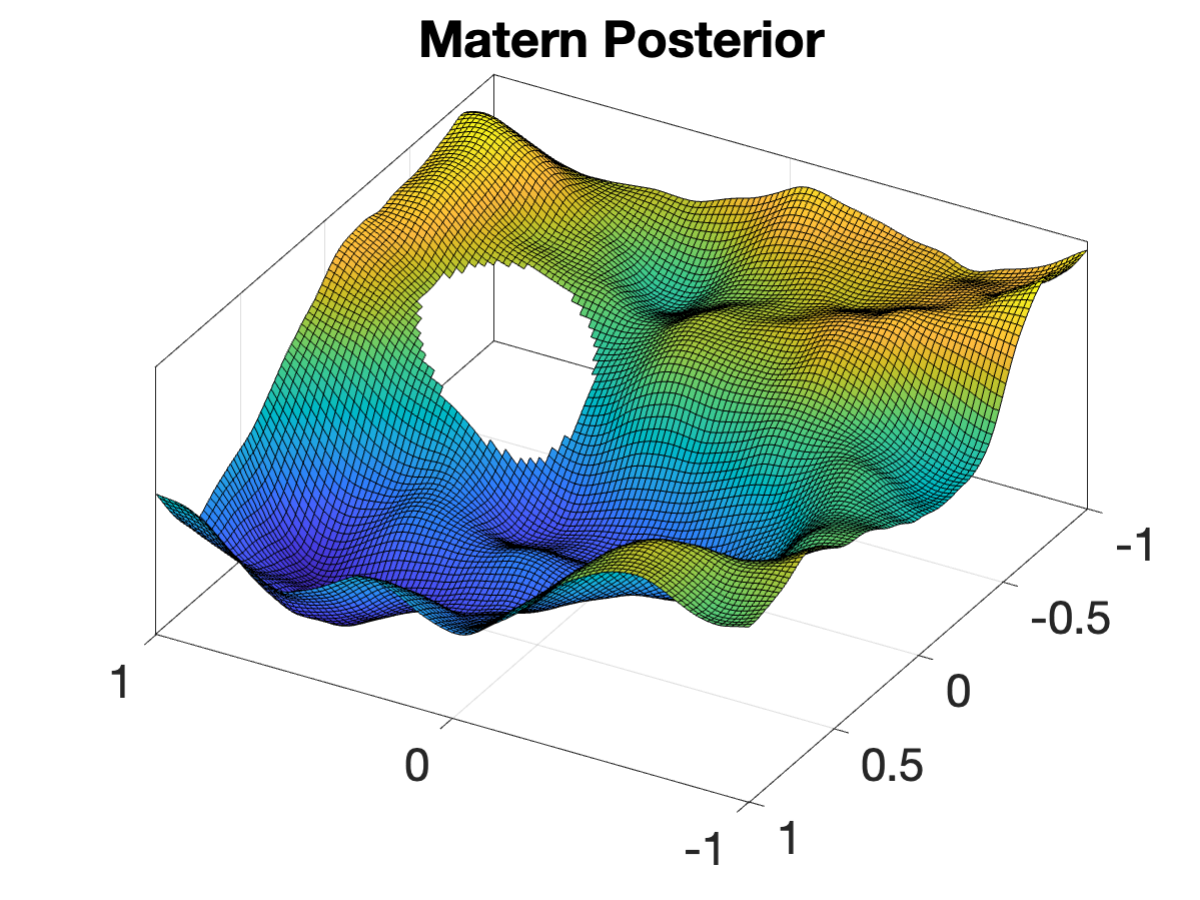}
\end{center}
\caption{
Visualizing the true underlying function $f$ (left), the BdryMat\'ern GP posterior mean (middle), and the Mat\'ern GP posterior mean functions (right) for the T-shaped, ring-shaped, disk-shaped, and holed-rectangle domain experiments (separated by columns).
}
\label{fig:2d_plot}
\end{figure}

\subsection{Tensor BdryMat\'ern GP}

We next investigate the performance of the tensor BdryMat\'ern GP (Section \ref{sec:tensor_mat}) for incorporating boundary information of the form \eqref{eq:boundar_mixed_full} on a higher-dimensional unit hypercube domain. We consider the two test functions from \cite{Surjano16}:
\begin{align}
\begin{split}
    \textit{Griewank:} \quad f(\Bx) &= \sum_{j=1}^{30}\frac{x_j^2}{4000}-\prod_{j=1}^{30}\cos\left(\frac{x_j}{\sqrt{j}}\right)+1 , \quad \Bx\in [-5,5]^{30},\\
    \textit{Product Sine:} \quad f(\Bx) &= \prod_{j=1}^{30}\left[1+\sin\left(\frac{\pi x_j}{2}\right)\right], \quad \Bx\in[-1,1 ]^{30}.
    \end{split}
\end{align}
Both domains are rescaled to the unit hypercube domain  prior to surrogate modeling. Dirichlet boundary information is provided for the Griewank test function; since this function has non-zero Dirichlet boundaries, the mean function $\mu_{\mathcal{B}}$ is specified via the approach in \cite{dalton2024boundary}. Neumann boundary information is provided for the product sine test function; since this function has a Neumann boundary of zero, the mean function is specified as $\mu_{\mathcal{B}} \equiv 0$.




With this, we then sample $n$ training design points from a sparse grid design \cite{Bungartz04}, where $n$ is varied depending on the order of the sparse grid. Sparse grid designs are broadly used for softening the curse-of-dimensionality in high-dimensional approximation problems \cite{Plumlee14}, thus our use of it here for high-dimensional boundary incorporation. With such data, we fit the tensor BdryMat\'ern GP (from Section \ref{sec:tensor_mat}) with smoothness parameters \(\nu=5/2\) and \(\nu=7/2\). As a benchmark, we then compare with a GP with a product Mat\'ern kernel with smoothness parameters $\nu = 5/2$ and $\nu=7/2$, which is the counterpart for the tensor BdryMat\'ern GP without boundary information. 


Figure \ref{fig:tensor} (left) shows the corresponding log-MSE of the compared models, and Figure \ref{fig:tensor} (middle and right) visualizes the fitted predictive models (and its uncertainties) over a slice of the prediction domain. From Figure \ref{fig:tensor} (left), we again see that for both test functions, the integration of boundary information via the tensor BdryMat\'ern GP can considerably improve predictive performance over its product Mat\'ern GP counterpart, which doesn't incorporate such information. The different error decay slopes over sparse grid level also suggest that this integration of boundary information may offer improved prediction rates; investigating this will be a topic of future work. Figure \ref{fig:tensor} (middle and right) shows that, in addition to improved point predictions, the incorporation of known boundaries  via the tensor BdryMat\'ern GP can further reduce predictive uncertainties, thus facilitating confident surrogate modeling with limited data.


\begin{figure}[!t]
\centering
\includegraphics[width=0.32\textwidth]{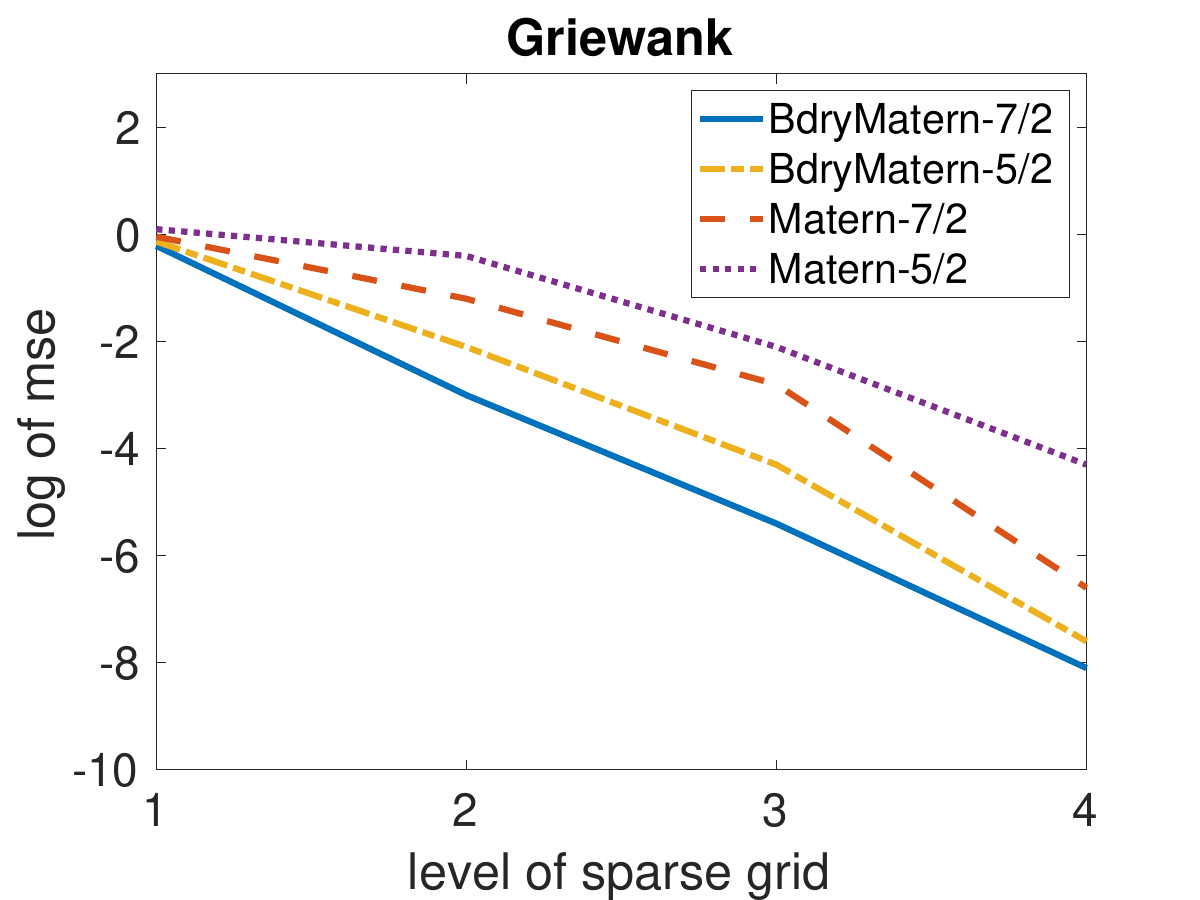}
\includegraphics[width=0.32\textwidth]{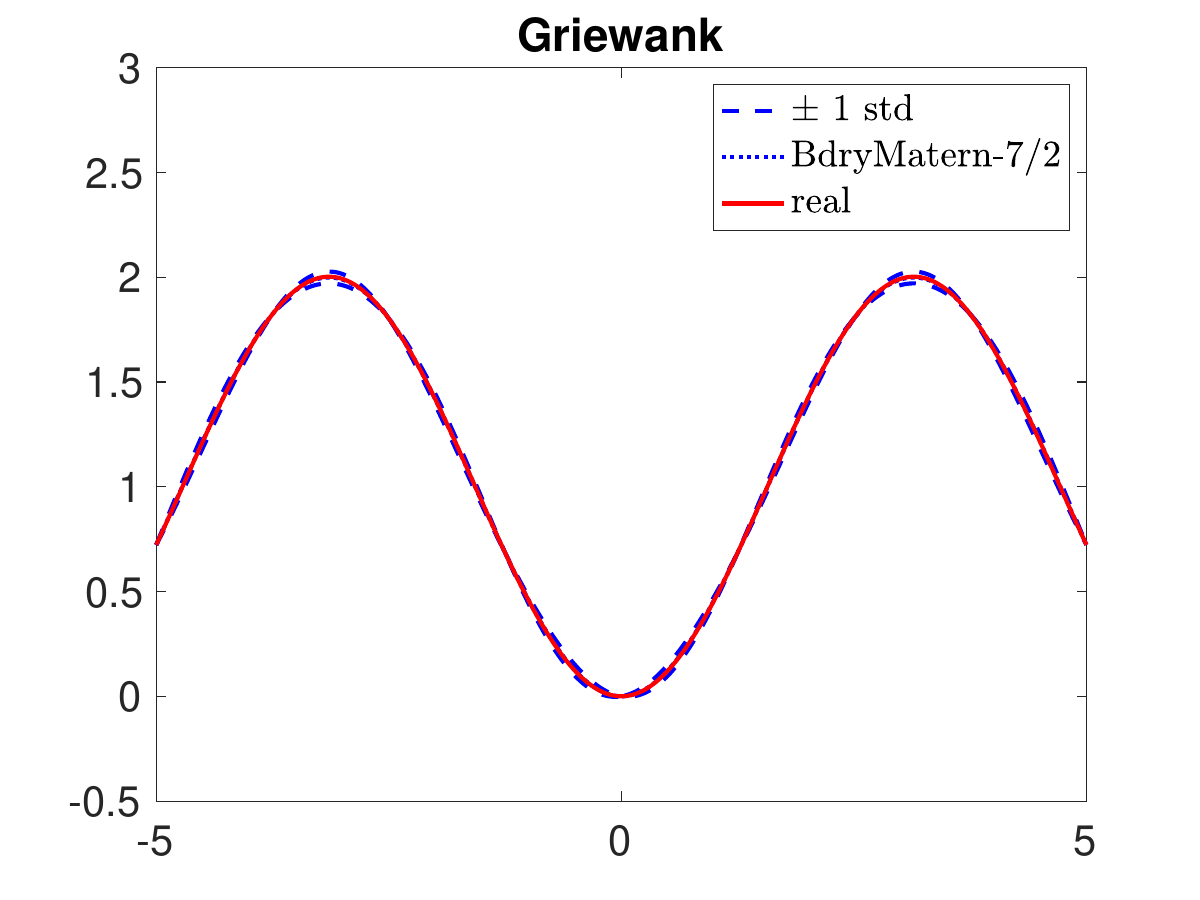}
\includegraphics[width=0.32\textwidth]{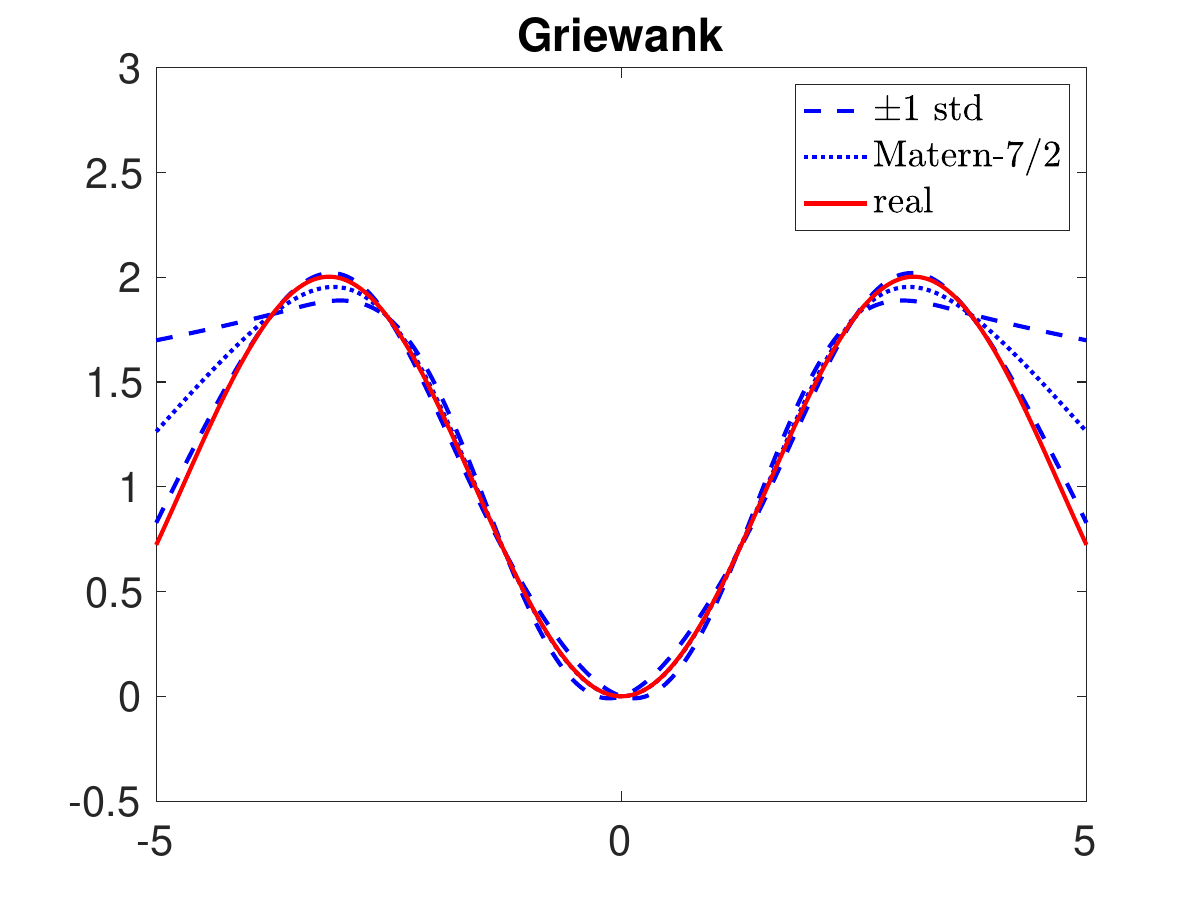}

\hspace{0.05cm}
\includegraphics[width=0.32\textwidth]{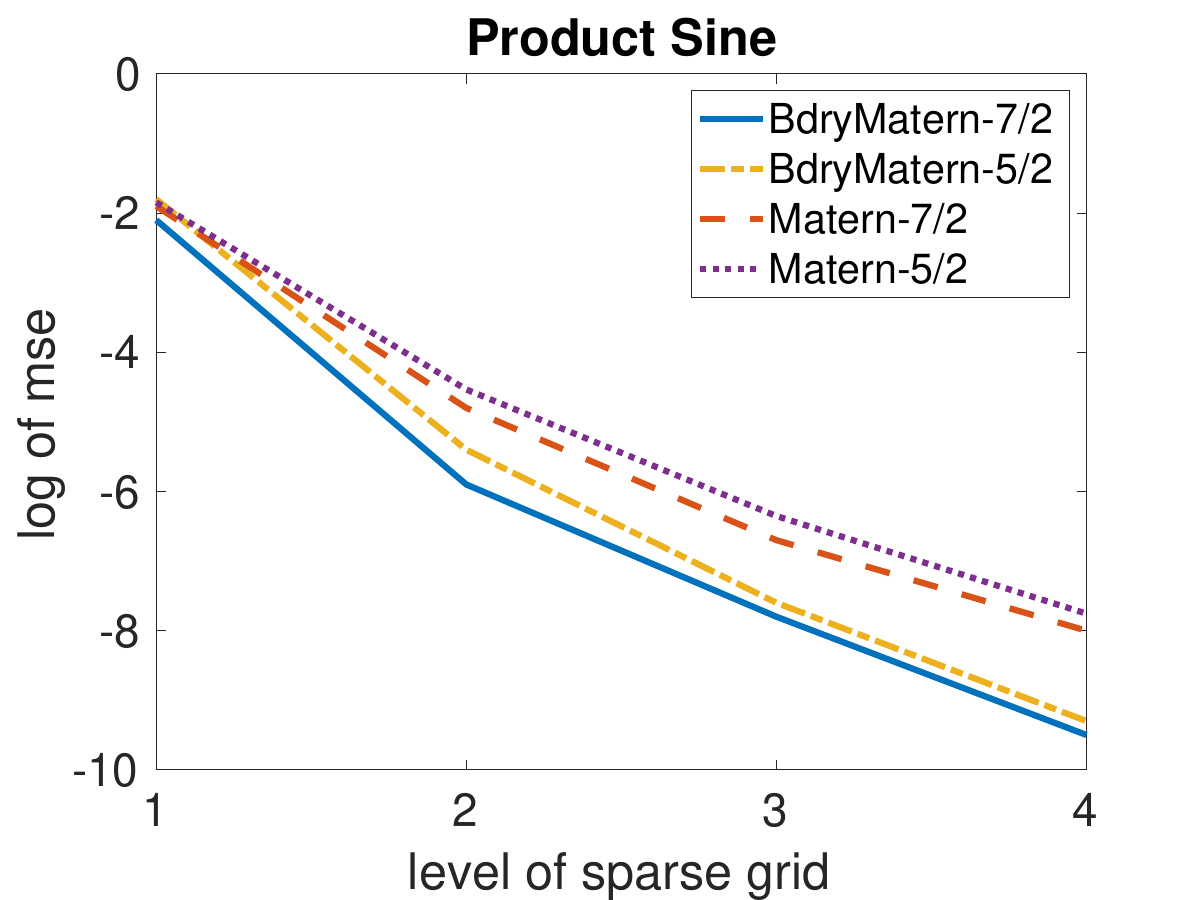}
\includegraphics[width=0.32\textwidth]{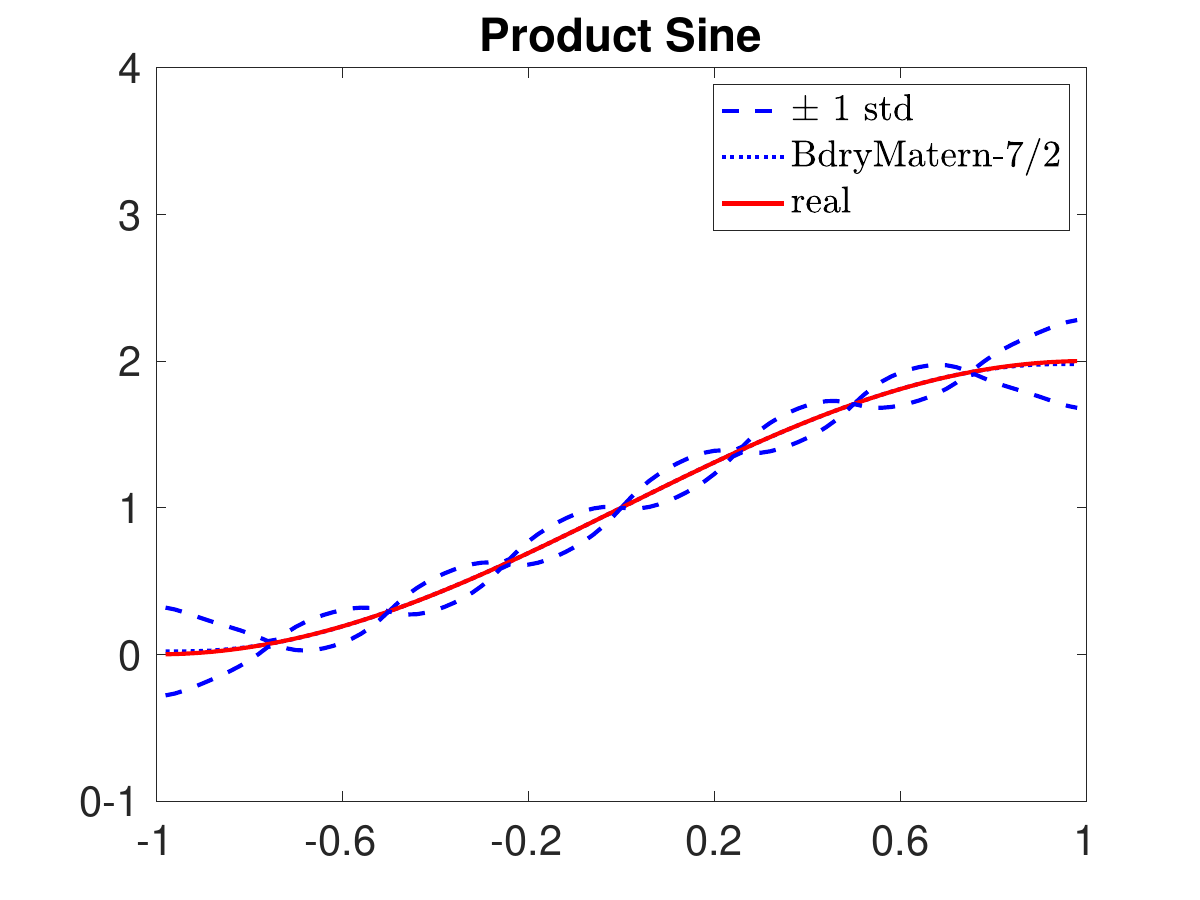}
\includegraphics[width=0.32\textwidth]{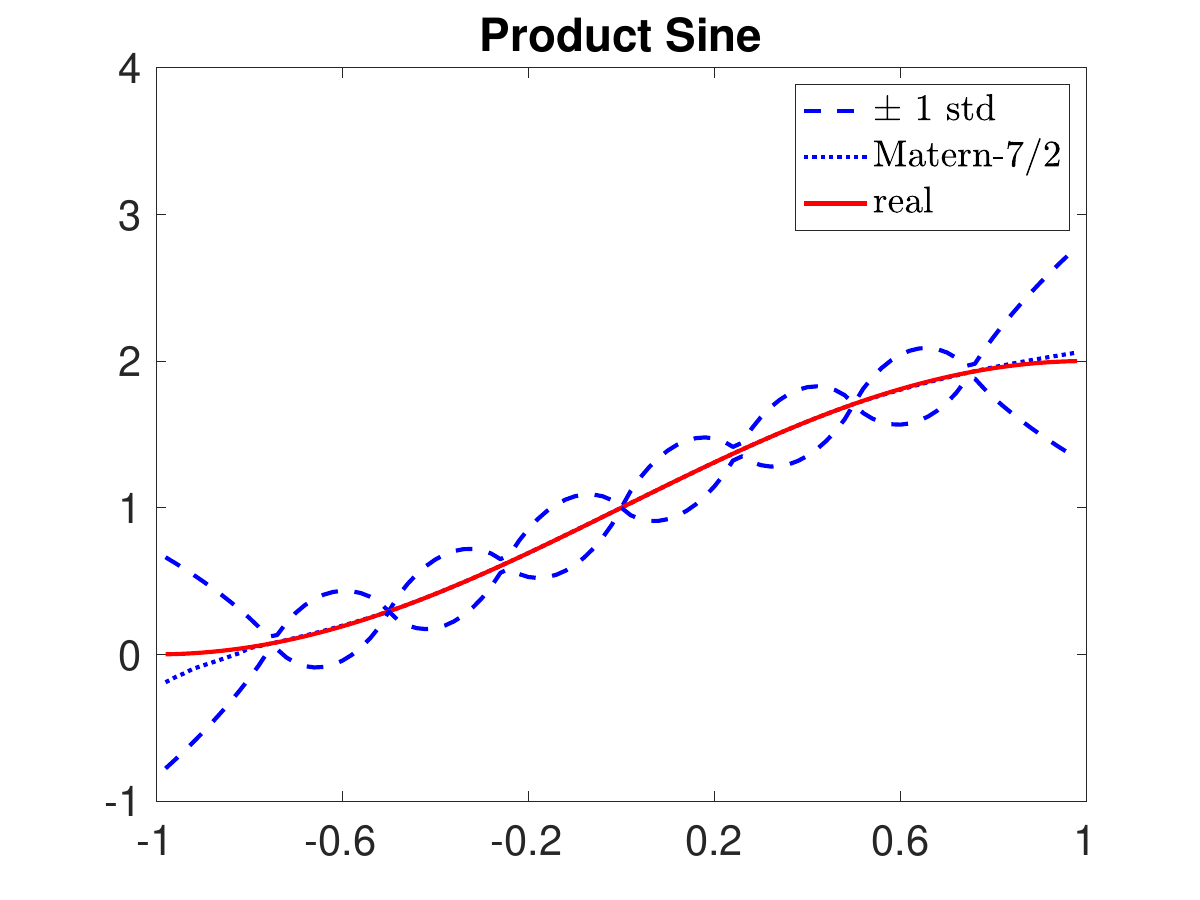}
\caption{(Left) Log-MSEs for the Griewank and product-sine test functions using the tensor BdryMat\'ern GP and the product Mat\'ern GP. (Middle) Posterior means and 68\% predictive intervals of the tensor BdryMat\'ern GP for the Griewank and product-sine functions. This is shown along a slice of the domain for input $x_1$ with other inputs set at their left end-points. (Right) Posterior means and 68\% predictive intervals of the product Mat\'ern GP for the Griewank and product-sine functions along the same domain slice.
}
\label{fig:tensor}
\end{figure}



\section{Conclusion}
\label{sec:conclusion}
This paper introduces a new Gaussian process model called the BdryMat\'ern GP, which targets the reliable incorporation of known Dirchlet, Neumann or Robin boundary information on a irregular (i.e., non-hypercube) domain $\mathcal{X}$. The BdryMat\'ern GP makes use of a new BdryMat\'ern covariance kernel, which we derive in path integral form using the Feynman-Kac formula \citep{ito1957fundamental,stroock1971diffusion,papanicolaou1990probabilistic}. Provided $\mathcal{X}$ is connected with a boundary set that is twice-differentiable almost surely, we show that sample paths from the BdryMat\'ern GP provide smoothness control (in terms of its derivatives) while satisfying the desired boundaries. We then present an efficient approach for approximating the BdryMat\'ern kernel on general domains $\mathcal{X}$ via finite element modeling with rigorous error analysis. When the domain takes the form of a high-dimensional unit hypercube, we present a tensor form of the BdryMat\'ern GP that facilitates efficient boundary integration via a closed-form covariance kernel. Finally, we demonstrate the effectiveness of the proposed model for reliable boundary integration in a suite of numerical experiments that feature broad boundary information on different domain choices.

 

Given the presented theoretical framework for boundary integration, there are several promising future directions for future work. One direction is the exploration of posterior contraction rates of the BdryMat\'ern GP on irregular domains $\mathcal{X}$, which can shed light on how much the integration of boundary information can improve probabilistic predictive performance. Another direction is the investigation of consistency and contraction rates for the BdryMat\'ern GP under inference (and potential misspecifications) of model parameters. Finally, it would be fruitful to investigate new space-filling experimental designs \citep{johnson1990minimax,mak2018minimax} that incorporate boundary information on irregular domains.

\begin{funding}
The second author gratefully recognizes funding from NSF CSSI Frameworks 2004571, NSF DMS 2210729, 2316012 and U.S. Department of Energy Grant DE-SC0024477.
\end{funding}


\bibliographystyle{imsart-number} 
\bibliography{bibliography}       

\begin{thebibliography}{86}

\bibitem{abramowitz1965handbook}
\begin{bbook}[author]
\bauthor{\bsnm{Abramowitz},~\bfnm{Milton}\binits{M.}} \AND \bauthor{\bsnm{Stegun},~\bfnm{Irene~A}\binits{I.~A.}}
(\byear{1965}).
\btitle{Handbook of Mathematical Functions: with Formulas, Graphs, and Mathematical Tables}.
\bpublisher{Courier Corporation}.
\end{bbook}
\endbibitem

\bibitem{Brenner08}
\begin{bbook}[author]
\bauthor{\bsnm{Brenner},~\bfnm{S.~C.}\binits{S.~C.}} \AND \bauthor{\bsnm{Scott},~\bfnm{L.~R.}\binits{L.~R.}}
(\byear{2008}).
\btitle{The Mathematical Theory of Finite Element Methods},
\bedition{3nd} ed.
\bpublisher{Springer}.
\end{bbook}
\endbibitem

\bibitem{Bungartz04}
\begin{barticle}[author]
\bauthor{\bsnm{Bungartz},~\bfnm{Hans-Joachim}\binits{H.-J.}} \AND \bauthor{\bsnm{Griebel},~\bfnm{Michael}\binits{M.}}
(\byear{2004}).
\btitle{Sparse Grids}.
\bjournal{Acta Numerica}
\bvolume{13}
\bpages{1-123}.
\end{barticle}
\endbibitem

\bibitem{casella2002}
\begin{bbook}[author]
\bauthor{\bsnm{Casella},~\bfnm{George}\binits{G.}} \AND \bauthor{\bsnm{Berger},~\bfnm{Roger~L}\binits{R.~L.}}
(\byear{2002}).
\btitle{Statistical Inference},
\bedition{2} ed.
\bpublisher{Cengage Learning}.
\end{bbook}
\endbibitem

\bibitem{chen2024hierarchical}
\begin{barticle}[author]
\bauthor{\bsnm{Chen},~\bfnm{Zhehui}\binits{Z.}}, \bauthor{\bsnm{Mak},~\bfnm{Simon}\binits{S.}} \AND \bauthor{\bsnm{Wu},~\bfnm{CF~Jeff}\binits{C.~J.}}
(\byear{2024}).
\btitle{{A hierarchical expected improvement method for Bayesian optimization}}.
\bjournal{Journal of the American Statistical Association}
\bvolume{119}
\bpages{1619--1632}.
\end{barticle}
\endbibitem

\bibitem{choi2021use}
\begin{barticle}[author]
\bauthor{\bsnm{Choi},~\bfnm{HeeSun}\binits{H.}}, \bauthor{\bsnm{Crump},~\bfnm{Cindy}\binits{C.}}, \bauthor{\bsnm{Duriez},~\bfnm{Christian}\binits{C.}}, \bauthor{\bsnm{Elmquist},~\bfnm{Asher}\binits{A.}}, \bauthor{\bsnm{Hager},~\bfnm{Gregory}\binits{G.}}, \bauthor{\bsnm{Han},~\bfnm{David}\binits{D.}}, \bauthor{\bsnm{Hearl},~\bfnm{Frank}\binits{F.}}, \bauthor{\bsnm{Hodgins},~\bfnm{Jessica}\binits{J.}}, \bauthor{\bsnm{Jain},~\bfnm{Abhinandan}\binits{A.}}, \bauthor{\bsnm{Leve},~\bfnm{Frederick}\binits{F.}} \betal{et~al.}
(\byear{2021}).
\btitle{On the use of simulation in robotics: Opportunities, challenges, and suggestions for moving forward}.
\bjournal{Proceedings of the National Academy of Sciences}
\bvolume{118}
\bpages{e1907856118}.
\end{barticle}
\endbibitem

\bibitem{Chung1981}
\begin{binbook}[author]
\bauthor{\bsnm{Chung},~\bfnm{K.~L.}\binits{K.~L.}} \AND \bauthor{\bsnm{Rao},~\bfnm{K.~M.}\binits{K.~M.}}
(\byear{1981}).
\btitle{Feynman-Kac Functional and the Schr{\"o}dinger Equation}
In \bbooktitle{Seminar on Stochastic Processes, 1981}
\bpages{1--29}.
\bpublisher{Birkh{\"a}user Boston}, \baddress{Boston, MA}.
\bdoi{10.1007/978-1-4612-3938-3_1}
\end{binbook}
\endbibitem

\bibitem{dung2011optimal}
\begin{barticle}[author]
\bauthor{\bsnm{D{\~u}ng},~\bfnm{Dinh}\binits{D.}}
(\byear{2011}).
\btitle{Optimal adaptive sampling recovery}.
\bjournal{Advances in Computational Mathematics}
\bvolume{34}
\bpages{1--41}.
\end{barticle}
\endbibitem

\bibitem{dung2011adaptive}
\begin{barticle}[author]
\bauthor{\bsnm{D{\~u}ng},~\bfnm{Dinh}\binits{D.}}
(\byear{2011}).
\btitle{Adaptive algorithms in sampling recovery}.
\bjournal{arXiv preprint arXiv:1102.3540}.
\end{barticle}
\endbibitem

\bibitem{da2023sample}
\begin{barticle}[author]
\bauthor{\bsnm{Da~Costa},~\bfnm{Natha{\"e}l}\binits{N.}}, \bauthor{\bsnm{Pf{\"o}rtner},~\bfnm{Marvin}\binits{M.}}, \bauthor{\bsnm{Da~Costa},~\bfnm{Lancelot}\binits{L.}} \AND \bauthor{\bsnm{Hennig},~\bfnm{Philipp}\binits{P.}}
(\byear{2023}).
\btitle{Sample path regularity of {G}aussian processes from the covariance kernel}.
\bjournal{arXiv preprint arXiv:2312.14886}.
\end{barticle}
\endbibitem

\bibitem{dalton2024boundary}
\begin{barticle}[author]
\bauthor{\bsnm{Dalton},~\bfnm{David}\binits{D.}}, \bauthor{\bsnm{Lazarus},~\bfnm{Alan}\binits{A.}}, \bauthor{\bsnm{Gao},~\bfnm{Hao}\binits{H.}} \AND \bauthor{\bsnm{Husmeier},~\bfnm{Dirk}\binits{D.}}
(\byear{2024}).
\btitle{Boundary constrained Gaussian processes for robust physics-informed machine learning of linear partial differential equations}.
\bjournal{Journal of Machine Learning Research}
\bvolume{25}
\bpages{1--61}.
\end{barticle}
\endbibitem

\bibitem{dang2012triangulations}
\begin{bbook}[author]
\bauthor{\bsnm{Dang},~\bfnm{Chuangyin}\binits{C.}}
(\byear{2012}).
\btitle{Triangulations and Simplicial Methods}
\bvolume{421}.
\bpublisher{Springer Science \& Business Media}.
\end{bbook}
\endbibitem

\bibitem{ding2019bdrygp}
\begin{barticle}[author]
\bauthor{\bsnm{Ding},~\bfnm{Liang}\binits{L.}}, \bauthor{\bsnm{Mak},~\bfnm{Simon}\binits{S.}} \AND \bauthor{\bsnm{Wu},~\bfnm{CF}\binits{C.}}
(\byear{2019}).
\btitle{{BdryGP: a new Gaussian process model for incorporating boundary information}}.
\bjournal{arXiv preprint arXiv:1908.08868}.
\end{barticle}
\endbibitem

\bibitem{ding2020generalization}
\begin{binproceedings}[author]
\bauthor{\bsnm{Ding},~\bfnm{Liang}\binits{L.}}, \bauthor{\bsnm{Tuo},~\bfnm{Rui}\binits{R.}} \AND \bauthor{\bsnm{Shahrampour},~\bfnm{Shahin}\binits{S.}}
(\byear{2020}).
\btitle{Generalization guarantees for sparse kernel approximation with entropic optimal features}.
In \bbooktitle{International Conference on Machine Learning}
\bpages{2545--2555}.
\bpublisher{PMLR}.
\end{binproceedings}
\endbibitem

\bibitem{ding2024sample}
\begin{barticle}[author]
\bauthor{\bsnm{Ding},~\bfnm{Liang}\binits{L.}} \AND \bauthor{\bsnm{Zhang},~\bfnm{Xiaowei}\binits{X.}}
(\byear{2024}).
\btitle{Sample and computationally efficient stochastic kriging in high dimensions}.
\bjournal{Operations Research}
\bvolume{72}
\bpages{660--683}.
\end{barticle}
\endbibitem

\bibitem{ehlers2024bayesian}
\begin{barticle}[author]
\bauthor{\bsnm{Ehlers},~\bfnm{R}\binits{R.}}, \bauthor{\bsnm{Chen},~\bfnm{Y}\binits{Y.}}, \bauthor{\bsnm{Mulligan},~\bfnm{J}\binits{J.}}, \bauthor{\bsnm{Ji},~\bfnm{Y}\binits{Y.}}, \bauthor{\bsnm{Kumar},~\bfnm{A}\binits{A.}}, \bauthor{\bsnm{Mak},~\bfnm{S}\binits{S.}}, \bauthor{\bsnm{Jacobs},~\bfnm{PM}\binits{P.}}, \bauthor{\bsnm{Majumder},~\bfnm{A}\binits{A.}}, \bauthor{\bsnm{Angerami},~\bfnm{A}\binits{A.}}, \bauthor{\bsnm{Arora},~\bfnm{R}\binits{R.}} \betal{et~al.}
(\byear{2024}).
\btitle{Bayesian Inference analysis of jet quenching using inclusive jet and hadron suppression measurements}.
\bjournal{arXiv preprint arXiv:2408.08247}.
\end{barticle}
\endbibitem

\bibitem{Evans15}
\begin{bbook}[author]
\bauthor{\bsnm{Evans},~\bfnm{L.~C.}\binits{L.~C.}}
(\byear{2015}).
\btitle{Partial Differential Equations},
\bedition{2nd} ed.
\bpublisher{AMS}.
\end{bbook}
\endbibitem

\bibitem{folland2009fourier}
\begin{bbook}[author]
\bauthor{\bsnm{Folland},~\bfnm{Gerald~B}\binits{G.~B.}}
(\byear{2009}).
\btitle{Fourier analysis and its applications}
\bvolume{4}.
\bpublisher{American Mathematical Soc.}
\end{bbook}
\endbibitem

\bibitem{gagliardo1959ulteriori}
\begin{barticle}[author]
\bauthor{\bsnm{Gagliardo},~\bfnm{Emilio}\binits{E.}}
(\byear{1959}).
\btitle{Ulteriori propriet{\`a} di alcune classi di funzioni in pi{\`u} variabili}.
\bjournal{Ricerche Mat.}
\bvolume{8}
\bpages{24--51}.
\end{barticle}
\endbibitem

\bibitem{Golchi15}
\begin{barticle}[author]
\bauthor{\bsnm{Golchi},~\bfnm{S.}\binits{S.}}, \bauthor{\bsnm{Bingham},~\bfnm{D.~R.}\binits{D.~R.}}, \bauthor{\bsnm{Chipman},~\bfnm{H.}\binits{H.}} \AND \bauthor{\bsnm{Campbell},~\bfnm{D.~A.}\binits{D.~A.}}
(\byear{2015}).
\btitle{Monotone emulation of computer experiments}.
\bjournal{SIAM/ASA Journal on Uncertainty Quantification}
\bvolume{3}
\bpages{370-392}.
\end{barticle}
\endbibitem

\bibitem{gramacy2020surrogates}
\begin{bbook}[author]
\bauthor{\bsnm{Gramacy},~\bfnm{Robert~B}\binits{R.~B.}}
(\byear{2020}).
\btitle{{Surrogates: {G}aussian Process Modeling, Design, and Optimization for the Applied Sciences}}.
\bpublisher{CRC Press}.
\end{bbook}
\endbibitem

\bibitem{grebenkov2019probability}
\begin{barticle}[author]
\bauthor{\bsnm{Grebenkov},~\bfnm{Denis~S}\binits{D.~S.}}
(\byear{2019}).
\btitle{Probability distribution of the boundary local time of reflected Brownian motion in Euclidean domains}.
\bjournal{Physical Review E}
\bvolume{100}
\bpages{062110}.
\end{barticle}
\endbibitem

\bibitem{gulian2022gaussian}
\begin{barticle}[author]
\bauthor{\bsnm{Gulian},~\bfnm{Mamikon}\binits{M.}}, \bauthor{\bsnm{Frankel},~\bfnm{Ari}\binits{A.}} \AND \bauthor{\bsnm{Swiler},~\bfnm{Laura}\binits{L.}}
(\byear{2022}).
\btitle{Gaussian process regression constrained by boundary value problems}.
\bjournal{Computer Methods in Applied Mechanics and Engineering}
\bvolume{388}
\bpages{114117}.
\end{barticle}
\endbibitem

\bibitem{harrison1981reflected}
\begin{barticle}[author]
\bauthor{\bsnm{Harrison},~\bfnm{J~Michael}\binits{J.~M.}} \AND \bauthor{\bsnm{Reiman},~\bfnm{Martin~I}\binits{M.~I.}}
(\byear{1981}).
\btitle{Reflected Brownian motion on an orthant}.
\bjournal{The Annals of Probability}
\bvolume{9}
\bpages{302--308}.
\end{barticle}
\endbibitem

\bibitem{hollig2003finite}
\begin{bbook}[author]
\bauthor{\bsnm{H{\"o}llig},~\bfnm{Klaus}\binits{K.}}
(\byear{2003}).
\btitle{Finite Element Methods with B-splines}.
\bpublisher{SIAM}.
\end{bbook}
\endbibitem

\bibitem{hsu1984reflecting}
\begin{bbook}[author]
\bauthor{\bsnm{Hsu},~\bfnm{Pei}\binits{P.}}
(\byear{1984}).
\btitle{Reflecting Brownian motion, boundary local time and the Neumann problem}.
\bpublisher{Stanford University}.
\end{bbook}
\endbibitem

\bibitem{hsu1985probabilistic}
\begin{barticle}[author]
\bauthor{\bsnm{Hsu},~\bfnm{Pei}\binits{P.}}
(\byear{1985}).
\btitle{Probabilistic approach to the Neumann problem}.
\bjournal{Communications on pure and applied mathematics}
\bvolume{38}
\bpages{445--472}.
\end{barticle}
\endbibitem

\bibitem{huebner2001finite}
\begin{bbook}[author]
\bauthor{\bsnm{Huebner},~\bfnm{Kenneth~H}\binits{K.~H.}}, \bauthor{\bsnm{Dewhirst},~\bfnm{Donald~L}\binits{D.~L.}}, \bauthor{\bsnm{Smith},~\bfnm{Douglas~E}\binits{D.~E.}} \AND \bauthor{\bsnm{Byrom},~\bfnm{Ted~G}\binits{T.~G.}}
(\byear{2001}).
\btitle{The Finite Element Method for Engineers}.
\bpublisher{John Wiley \& Sons}.
\end{bbook}
\endbibitem

\bibitem{humphrey2016introduction}
\begin{bbook}[author]
\bauthor{\bsnm{Humphrey},~\bfnm{JAYD}\binits{J.}} \AND \bauthor{\bsnm{Delange},~\bfnm{Sherry~L}\binits{S.~L.}}
(\byear{2016}).
\btitle{Introduction to Biomechanics}.
\bpublisher{Springer}.
\end{bbook}
\endbibitem

\bibitem{ito1957fundamental}
\begin{binproceedings}[author]
\bauthor{\bsnm{It{\^o}},~\bfnm{Seiz{\^o}}\binits{S.}}
(\byear{1957}).
\btitle{Fundamental solutions of parabolic differential equations and boundary value problems}.
In \bbooktitle{Japanese journal of mathematics: transactions and abstracts}
\bvolume{27}
\bpages{55--102}.
\bpublisher{The Mathematical Society of Japan}.
\end{binproceedings}
\endbibitem

\bibitem{jackson2023efficient}
\begin{barticle}[author]
\bauthor{\bsnm{Jackson},~\bfnm{Samuel~E}\binits{S.~E.}} \AND \bauthor{\bsnm{Vernon},~\bfnm{Ian}\binits{I.}}
(\byear{2023}).
\btitle{Efficient emulation of computer models utilising multiple known boundaries of differing dimension}.
\bjournal{Bayesian Analysis}
\bvolume{18}
\bpages{165--191}.
\end{barticle}
\endbibitem

\bibitem{ji2024graphical}
\begin{barticle}[author]
\bauthor{\bsnm{Ji},~\bfnm{Yi}\binits{Y.}}, \bauthor{\bsnm{Mak},~\bfnm{Simon}\binits{S.}}, \bauthor{\bsnm{Soeder},~\bfnm{Derek}\binits{D.}}, \bauthor{\bsnm{Paquet},~\bfnm{Jean-Fran{\c{c}}ois}\binits{J.-F.}} \AND \bauthor{\bsnm{Bass},~\bfnm{Steffen~A}\binits{S.~A.}}
(\byear{2024}).
\btitle{{A graphical multi-fidelity Gaussian process model, with application to emulation of heavy-ion collisions}}.
\bjournal{Technometrics}
\bvolume{66}
\bpages{267--281}.
\end{barticle}
\endbibitem

\bibitem{ji2024conglomerate}
\begin{barticle}[author]
\bauthor{\bsnm{Ji},~\bfnm{Yi}\binits{Y.}}, \bauthor{\bsnm{Yuchi},~\bfnm{Henry~Shaowu}\binits{H.~S.}}, \bauthor{\bsnm{Soeder},~\bfnm{Derek}\binits{D.}}, \bauthor{\bsnm{Paquet},~\bfnm{J-F}\binits{J.-F.}}, \bauthor{\bsnm{Bass},~\bfnm{Steffen~A}\binits{S.~A.}}, \bauthor{\bsnm{Joseph},~\bfnm{V~Roshan}\binits{V.~R.}}, \bauthor{\bsnm{Wu},~\bfnm{CF~Jeff}\binits{C.~J.}} \AND \bauthor{\bsnm{Mak},~\bfnm{Simon}\binits{S.}}
(\byear{2024}).
\btitle{Conglomerate multi-fidelity Gaussian process modeling, with application to heavy-ion collisions}.
\bjournal{SIAM/ASA Journal on Uncertainty Quantification}
\bvolume{12}
\bpages{473--502}.
\end{barticle}
\endbibitem

\bibitem{johnson1990minimax}
\begin{barticle}[author]
\bauthor{\bsnm{Johnson},~\bfnm{Mark~E}\binits{M.~E.}}, \bauthor{\bsnm{Moore},~\bfnm{Leslie~M}\binits{L.~M.}} \AND \bauthor{\bsnm{Ylvisaker},~\bfnm{Donald}\binits{D.}}
(\byear{1990}).
\btitle{Minimax and maximin distance designs}.
\bjournal{Journal of Statistical Planning and Inference}
\bvolume{26}
\bpages{131--148}.
\end{barticle}
\endbibitem

\bibitem{kiehn2001some}
\begin{barticle}[author]
\bauthor{\bsnm{Kiehn},~\bfnm{RM}\binits{R.}}
(\byear{2001}).
\btitle{Some closed form solutions to the Navier Stokes equations}.
\bjournal{arXiv preprint}.
\bnote{\textup{\url{https://arxiv.org/abs/physics/0102002}}}.
\end{barticle}
\endbibitem

\bibitem{leimkuhler2023simplest}
\begin{barticle}[author]
\bauthor{\bsnm{Leimkuhler},~\bfnm{Benedict}\binits{B.}}, \bauthor{\bsnm{Sharma},~\bfnm{Akash}\binits{A.}} \AND \bauthor{\bsnm{Tretyakov},~\bfnm{Michael~V}\binits{M.~V.}}
(\byear{2023}).
\btitle{Simplest random walk for approximating Robin boundary value problems and ergodic limits of reflected diffusions}.
\bjournal{The Annals of Applied Probability}
\bvolume{33}
\bpages{1904--1960}.
\end{barticle}
\endbibitem

\bibitem{li2025prospar}
\begin{barticle}[author]
\bauthor{\bsnm{Li},~\bfnm{Kevin}\binits{K.}} \AND \bauthor{\bsnm{Mak},~\bfnm{Simon}\binits{S.}}
(\byear{2025}).
\btitle{{ProSpar-GP: scalable Gaussian process modeling with massive non-stationary datasets}}.
\bjournal{Journal of Computational and Graphical Statistics}
\bvolume{just-accepted}
\bpages{1--28}.
\end{barticle}
\endbibitem

\bibitem{li2023additive}
\begin{barticle}[author]
\bauthor{\bsnm{Li},~\bfnm{Kevin}\binits{K.}}, \bauthor{\bsnm{Mak},~\bfnm{Simon}\binits{S.}}, \bauthor{\bsnm{Paquet},~\bfnm{J-F}\binits{J.-F.}} \AND \bauthor{\bsnm{Bass},~\bfnm{Steffen~A}\binits{S.~A.}}
(\byear{2023}).
\btitle{{Additive multi-index Gaussian process modeling, with application to multi-physics surrogate modeling of the quark-gluon plasma}}.
\bjournal{arXiv preprint arXiv:2306.07299}.
\end{barticle}
\endbibitem

\bibitem{li2022improving}
\begin{bincollection}[author]
\bauthor{\bsnm{Li},~\bfnm{Zhaohui}\binits{Z.}} \AND \bauthor{\bsnm{Tan},~\bfnm{Matthias Hwai~Yong}\binits{M.~H.~Y.}}
(\byear{2022}).
\btitle{Improving Gaussian Process Emulators with Boundary Information}.
In \bbooktitle{Artificial Intelligence, Big Data and Data Science in Statistics: Challenges and Solutions in Environmetrics, the Natural Sciences and Technology}
\bpages{171--192}.
\bpublisher{Springer}.
\end{bincollection}
\endbibitem

\bibitem{lifshits2012lectures}
\begin{bincollection}[author]
\bauthor{\bsnm{Lifshits},~\bfnm{Mikhail}\binits{M.}}
(\byear{2012}).
\btitle{Lectures on Gaussian processes}.
In \bbooktitle{Lectures on Gaussian Processes}
\bpages{1--117}.
\bpublisher{Springer}.
\end{bincollection}
\endbibitem

\bibitem{LindgrenRueLindstrom11}
\begin{barticle}[author]
\bauthor{\bsnm{Lindgren},~\bfnm{Finn}\binits{F.}}, \bauthor{\bsnm{Rue},~\bfnm{H{\aa}vard}\binits{H.}} \AND \bauthor{\bsnm{Lindstr{\"o}m},~\bfnm{Johan}\binits{J.}}
(\byear{2011}).
\btitle{An explicit link between {Gaussian} fields and {Gaussian Markov} random fields: the stochastic partial differential equation approach}.
\bjournal{J. R. Statist. Soc. B}
\bvolume{73}
\bpages{423-498}.
\end{barticle}
\endbibitem

\bibitem{liu2025quip}
\begin{barticle}[author]
\bauthor{\bsnm{Liu},~\bfnm{Yen-Chun}\binits{Y.-C.}} \AND \bauthor{\bsnm{Mak},~\bfnm{Simon}\binits{S.}}
(\byear{2025}).
\btitle{{QuIP: Experimental design for expensive simulators with many Qualitative factors via Integer Programming}}.
\bjournal{arXiv preprint arXiv:2501.14616}.
\end{barticle}
\endbibitem

\bibitem{liyanage2022efficient}
\begin{barticle}[author]
\bauthor{\bsnm{Liyanage},~\bfnm{Dananjaya}\binits{D.}}, \bauthor{\bsnm{Ji},~\bfnm{Yi}\binits{Y.}}, \bauthor{\bsnm{Everett},~\bfnm{Derek}\binits{D.}}, \bauthor{\bsnm{Heffernan},~\bfnm{Matthew}\binits{M.}}, \bauthor{\bsnm{Heinz},~\bfnm{Ulrich}\binits{U.}}, \bauthor{\bsnm{Mak},~\bfnm{Simon}\binits{S.}} \AND \bauthor{\bsnm{Paquet},~\bfnm{J-F}\binits{J.-F.}}
(\byear{2022}).
\btitle{Efficient emulation of relativistic heavy ion collisions with transfer learning}.
\bjournal{Physical Review C}
\bvolume{105}
\bpages{034910}.
\end{barticle}
\endbibitem

\bibitem{mak2018minimax}
\begin{barticle}[author]
\bauthor{\bsnm{Mak},~\bfnm{Simon}\binits{S.}} \AND \bauthor{\bsnm{Joseph},~\bfnm{V~Roshan}\binits{V.~R.}}
(\byear{2018}).
\btitle{Minimax and minimax projection designs using clustering}.
\bjournal{Journal of Computational and Graphical Statistics}
\bvolume{27}
\bpages{166--178}.
\end{barticle}
\endbibitem

\bibitem{mak2018efficient}
\begin{barticle}[author]
\bauthor{\bsnm{Mak},~\bfnm{Simon}\binits{S.}}, \bauthor{\bsnm{Sung},~\bfnm{Chih-Li}\binits{C.-L.}}, \bauthor{\bsnm{Wang},~\bfnm{Xingjian}\binits{X.}}, \bauthor{\bsnm{Yeh},~\bfnm{Shiang-Ting}\binits{S.-T.}}, \bauthor{\bsnm{Chang},~\bfnm{Yu-Hung}\binits{Y.-H.}}, \bauthor{\bsnm{Joseph},~\bfnm{V~Roshan}\binits{V.~R.}}, \bauthor{\bsnm{Yang},~\bfnm{Vigor}\binits{V.}} \AND \bauthor{\bsnm{Wu},~\bfnm{CF~Jeff}\binits{C.~J.}}
(\byear{2018}).
\btitle{An efficient surrogate model for emulation and physics extraction of large eddy simulations}.
\bjournal{Journal of the American Statistical Association}
\bvolume{113}
\bpages{1443--1456}.
\end{barticle}
\endbibitem

\bibitem{miller2024expected}
\begin{barticle}[author]
\bauthor{\bsnm{Miller},~\bfnm{John~J}\binits{J.~J.}}, \bauthor{\bsnm{Mak},~\bfnm{Simon}\binits{S.}}, \bauthor{\bsnm{Sun},~\bfnm{Benny}\binits{B.}}, \bauthor{\bsnm{Narayanan},~\bfnm{Sai~Ranjeet}\binits{S.~R.}}, \bauthor{\bsnm{Yang},~\bfnm{Suo}\binits{S.}}, \bauthor{\bsnm{Sun},~\bfnm{Zongxuan}\binits{Z.}}, \bauthor{\bsnm{Kim},~\bfnm{Kenneth~S}\binits{K.~S.}} \AND \bauthor{\bsnm{Kweon},~\bfnm{Chol-Bum~Mike}\binits{C.-B.~M.}}
(\byear{2024}).
\btitle{{Expected diverse utility (EDU): Diverse Bayesian optimization of expensive computer simulators}}.
\bjournal{arXiv preprint arXiv:2410.01196}.
\end{barticle}
\endbibitem

\bibitem{milstein2004stochastic}
\begin{bbook}[author]
\bauthor{\bsnm{Milstein},~\bfnm{Grigori~N}\binits{G.~N.}} \AND \bauthor{\bsnm{Tretyakov},~\bfnm{Michael~V}\binits{M.~V.}}
(\byear{2004}).
\btitle{Stochastic numerics for mathematical physics}
\bvolume{39}.
\bpublisher{Springer}.
\end{bbook}
\endbibitem

\bibitem{narayanan2024misfire}
\begin{barticle}[author]
\bauthor{\bsnm{Narayanan},~\bfnm{Sai~Ranjeet}\binits{S.~R.}}, \bauthor{\bsnm{Ji},~\bfnm{Yi}\binits{Y.}}, \bauthor{\bsnm{Sapra},~\bfnm{Harsh~Darshan}\binits{H.~D.}}, \bauthor{\bsnm{Kweon},~\bfnm{Chol-Bum~Mike}\binits{C.-B.~M.}}, \bauthor{\bsnm{Kim},~\bfnm{Kenneth~S}\binits{K.~S.}}, \bauthor{\bsnm{Sun},~\bfnm{Zongxuan}\binits{Z.}}, \bauthor{\bsnm{Kokjohn},~\bfnm{Sage}\binits{S.}}, \bauthor{\bsnm{Mak},~\bfnm{Simon}\binits{S.}} \AND \bauthor{\bsnm{Yang},~\bfnm{Suo}\binits{S.}}
(\byear{2024}).
\btitle{{A misfire-integrated Gaussian process (MInt-GP) emulator for energy-assisted compression ignition (EACI) engines with varying cetane number jet fuels}}.
\bjournal{International Journal of Engine Research}
\bvolume{25}
\bpages{1349--1380}.
\end{barticle}
\endbibitem

\bibitem{nirenberg1959elliptic}
\begin{barticle}[author]
\bauthor{\bsnm{Nirenberg},~\bfnm{Louis}\binits{L.}}
(\byear{1959}).
\btitle{On elliptic partial differential equations}.
\bjournal{Annali della Scuola Normale Superiore di Pisa-Scienze Fisiche e Matematiche}
\bvolume{13}
\bpages{115--162}.
\end{barticle}
\endbibitem

\bibitem{papanicolaou1990probabilistic}
\begin{barticle}[author]
\bauthor{\bsnm{Papanicolaou},~\bfnm{Vassilis~G}\binits{V.~G.}}
(\byear{1990}).
\btitle{The probabilistic solution of the third boundary value problem for second order elliptic equations}.
\bjournal{Probability Theory and related fields}
\bvolume{87}
\bpages{27--77}.
\end{barticle}
\endbibitem

\bibitem{pardo2008geostatistics}
\begin{barticle}[author]
\bauthor{\bsnm{Pardo-Iguzquiza},~\bfnm{Eulogio}\binits{E.}} \AND \bauthor{\bsnm{Chica-Olmo},~\bfnm{Mario}\binits{M.}}
(\byear{2008}).
\btitle{{Geostatistics with the Matern semivariogram model: A library of computer programs for inference, kriging and simulation}}.
\bjournal{Computers \& Geosciences}
\bvolume{34}
\bpages{1073--1079}.
\end{barticle}
\endbibitem

\bibitem{paulsen2016introduction}
\begin{bbook}[author]
\bauthor{\bsnm{Paulsen},~\bfnm{Vern~I}\binits{V.~I.}} \AND \bauthor{\bsnm{Raghupathi},~\bfnm{Mrinal}\binits{M.}}
(\byear{2016}).
\btitle{An introduction to the theory of reproducing kernel Hilbert spaces}
\bvolume{152}.
\bpublisher{Cambridge University Press}.
\end{bbook}
\endbibitem

\bibitem{peng2014choice}
\begin{barticle}[author]
\bauthor{\bsnm{Peng},~\bfnm{Chien-Yu}\binits{C.-Y.}} \AND \bauthor{\bsnm{Wu},~\bfnm{C~F~Jeff}\binits{C.~F.~J.}}
(\byear{2014}).
\btitle{On the choice of nugget in kriging modeling for deterministic computer experiments}.
\bjournal{Journal of Computational and Graphical Statistics}
\bvolume{23}
\bpages{151--168}.
\end{barticle}
\endbibitem

\bibitem{roger_RSDE}
\begin{barticle}[author]
\bauthor{\bsnm{Pettersson},~\bfnm{Roger}\binits{R.}}
(\byear{1997}).
\btitle{Penalization Schemes for Reflecting Stochastic Differential Equations}.
\bjournal{Bernoulli}
\bvolume{3}
\bpages{403--414}.
\end{barticle}
\endbibitem

\bibitem{Plumlee14}
\begin{barticle}[author]
\bauthor{\bsnm{Plumlee},~\bfnm{M.}\binits{M.}}
(\byear{2014}).
\btitle{Fast prediction of deterministic functions using sparse grid experimental designs}.
\bjournal{Journal of the American Statistical Association}
\bvolume{109}
\bpages{1581-1591}.
\end{barticle}
\endbibitem

\bibitem{pratola2017design}
\begin{barticle}[author]
\bauthor{\bsnm{Pratola},~\bfnm{Matthew~T}\binits{M.~T.}}, \bauthor{\bsnm{Harari},~\bfnm{Ofir}\binits{O.}}, \bauthor{\bsnm{Bingham},~\bfnm{Derek}\binits{D.}} \AND \bauthor{\bsnm{Flowers},~\bfnm{Gwenn~E}\binits{G.~E.}}
(\byear{2017}).
\btitle{Design and analysis of experiments on nonconvex regions}.
\bjournal{Technometrics}
\bvolume{59}
\bpages{36--47}.
\end{barticle}
\endbibitem

\bibitem{rue2009approximate}
\begin{barticle}[author]
\bauthor{\bsnm{Rue},~\bfnm{H{\aa}vard}\binits{H.}}, \bauthor{\bsnm{Martino},~\bfnm{Sara}\binits{S.}} \AND \bauthor{\bsnm{Chopin},~\bfnm{Nicolas}\binits{N.}}
(\byear{2009}).
\btitle{{Approximate Bayesian inference for latent Gaussian models by using integrated nested Laplace approximations}}.
\bjournal{Journal of the Royal Statistical Society Series B: Statistical Methodology}
\bvolume{71}
\bpages{319--392}.
\end{barticle}
\endbibitem

\bibitem{rvachev1995r}
\begin{barticle}[author]
\bauthor{\bsnm{Rvachev},~\bfnm{Vladimir~L}\binits{V.~L.}} \AND \bauthor{\bsnm{Sheiko},~\bfnm{Tatyana~I}\binits{T.~I.}}
(\byear{1995}).
\btitle{R-functions in boundary value problems in mechanics}.
\bjournal{Applied Mechanics Reviews}
\bvolume{48}
\bpages{151–-188}.
\end{barticle}
\endbibitem

\bibitem{sakellaris2021scale}
\begin{barticle}[author]
\bauthor{\bsnm{Sakellaris},~\bfnm{Georgios}\binits{G.}}
(\byear{2021}).
\btitle{On scale-invariant bounds for the Green’s function for second-order elliptic equations with lower-order coefficients and applications}.
\bjournal{Analysis \& PDE}
\bvolume{14}
\bpages{251--299}.
\end{barticle}
\endbibitem

\bibitem{santner2003}
\begin{bbook}[author]
\bauthor{\bsnm{Santner},~\bfnm{Thomas~J}\binits{T.~J.}}, \bauthor{\bsnm{Williams},~\bfnm{Brian~J}\binits{B.~J.}} \AND \bauthor{\bsnm{Notz},~\bfnm{William~I}\binits{W.~I.}}
(\byear{2018}).
\btitle{The Design and Analysis of Computer Experiments}.
\bpublisher{Springer}.
\end{bbook}
\endbibitem

\bibitem{sengupta2025hybrid}
\begin{barticle}[author]
\bauthor{\bsnm{Sengupta},~\bfnm{A}\binits{A.}}, \bauthor{\bsnm{Fries},~\bfnm{RJ}\binits{R.}}, \bauthor{\bsnm{Kordell~II},~\bfnm{M}\binits{M.}}, \bauthor{\bsnm{Kim},~\bfnm{B}\binits{B.}}, \bauthor{\bsnm{Angerami},~\bfnm{A}\binits{A.}}, \bauthor{\bsnm{Arora},~\bfnm{R}\binits{R.}}, \bauthor{\bsnm{Bass},~\bfnm{SA}\binits{S.}}, \bauthor{\bsnm{Chen},~\bfnm{Y}\binits{Y.}}, \bauthor{\bsnm{Datta},~\bfnm{R}\binits{R.}}, \bauthor{\bsnm{Du},~\bfnm{L}\binits{L.}} \betal{et~al.}
(\byear{2025}).
\btitle{Hybrid Hadronization--A Study of In-Medium Hadronization of Jets}.
\bjournal{arXiv preprint arXiv:2501.16482}.
\end{barticle}
\endbibitem

\bibitem{seshadri2019dimension}
\begin{barticle}[author]
\bauthor{\bsnm{Seshadri},~\bfnm{Pranay}\binits{P.}}, \bauthor{\bsnm{Yuchi},~\bfnm{Shaowu}\binits{S.}} \AND \bauthor{\bsnm{Parks},~\bfnm{Geoffrey~T}\binits{G.~T.}}
(\byear{2019}).
\btitle{{Dimension reduction via Gaussian ridge functions}}.
\bjournal{SIAM/ASA Journal on Uncertainty Quantification}
\bvolume{7}
\bpages{1301--1322}.
\end{barticle}
\endbibitem

\bibitem{shadrin2001norm}
\begin{barticle}[author]
\bauthor{\bsnm{Shadrin},~\bfnm{Alexei~Yu}\binits{A.~Y.}}
(\byear{2001}).
\btitle{The {$L_\infty$}-norm of the {$L_2$}-spline projector is bounded independently of the knot sequence: A proof of de {B}oor’s conjecture}.
\bjournal{Acta Math}
\bvolume{187}
\bpages{59--137}.
\end{barticle}
\endbibitem

\bibitem{shu2012differential}
\begin{bbook}[author]
\bauthor{\bsnm{Shu},~\bfnm{Chang}\binits{C.}}
(\byear{2012}).
\btitle{Differential quadrature and its application in engineering}.
\bpublisher{Springer Science \& Business Media}.
\end{bbook}
\endbibitem

\bibitem{snelson2005sparse}
\begin{barticle}[author]
\bauthor{\bsnm{Snelson},~\bfnm{Edward}\binits{E.}} \AND \bauthor{\bsnm{Ghahramani},~\bfnm{Zoubin}\binits{Z.}}
(\byear{2005}).
\btitle{Sparse Gaussian processes using pseudo-inputs}.
\bjournal{Advances in neural information processing systems}
\bvolume{18}.
\end{barticle}
\endbibitem

\bibitem{solin2019know}
\begin{binproceedings}[author]
\bauthor{\bsnm{Solin},~\bfnm{Arno}\binits{A.}} \AND \bauthor{\bsnm{Kok},~\bfnm{Manon}\binits{M.}}
(\byear{2019}).
\btitle{Know your boundaries: Constraining Gaussian processes by variational harmonic features}.
In \bbooktitle{The 22nd International Conference on Artificial Intelligence and Statistics}
\bpages{2193--2202}.
\bpublisher{PMLR}.
\end{binproceedings}
\endbibitem

\bibitem{solin2020hilbert}
\begin{barticle}[author]
\bauthor{\bsnm{Solin},~\bfnm{Arno}\binits{A.}} \AND \bauthor{\bsnm{S{\"a}rkk{\"a}},~\bfnm{Simo}\binits{S.}}
(\byear{2020}).
\btitle{Hilbert space methods for reduced-rank Gaussian process regression}.
\bjournal{Statistics and Computing}
\bvolume{30}
\bpages{419--446}.
\end{barticle}
\endbibitem

\bibitem{stein1999}
\begin{bbook}[author]
\bauthor{\bsnm{Stein},~\bfnm{Michael~L}\binits{M.~L.}}
(\byear{1999}).
\btitle{Interpolation of Spatial Data: Some Theory for Kriging}.
\bpublisher{Spring}.
\end{bbook}
\endbibitem

\bibitem{stroock1971diffusion}
\begin{barticle}[author]
\bauthor{\bsnm{Stroock},~\bfnm{Daniel~W}\binits{D.~W.}} \AND \bauthor{\bsnm{Varadhan},~\bfnm{SR~Srinivasa}\binits{S.~S.}}
(\byear{1971}).
\btitle{Diffusion processes with boundary conditions}.
\bjournal{Communications on Pure and Applied Mathematics}
\bvolume{24}
\bpages{147--225}.
\end{barticle}
\endbibitem

\bibitem{Surjano16}
\begin{bmisc}[author]
\bauthor{\bsnm{Surjanovic},~\bfnm{S.}\binits{S.}} \AND \bauthor{\bsnm{Bingham},~\bfnm{D.}\binits{D.}}
(\byear{2016}).
\btitle{Virtual Library of Simulation Experiments: Test Functions and Datasets}.
\end{bmisc}
\endbibitem

\bibitem{tan2018gaussian}
\begin{barticle}[author]
\bauthor{\bsnm{Tan},~\bfnm{Matthias Hwai~Yong}\binits{M.~H.~Y.}}
(\byear{2018}).
\btitle{Gaussian process modeling with boundary information}.
\bjournal{Statistica Sinica}
\bpages{621--648}.
\end{barticle}
\endbibitem

\bibitem{temam2001navier}
\begin{bbook}[author]
\bauthor{\bsnm{Temam},~\bfnm{Roger}\binits{R.}}
(\byear{2001}).
\btitle{Navier-Stokes Equations: Theory and Numerical Analysis}
\bvolume{343}.
\bpublisher{American Mathematical Society}.
\end{bbook}
\endbibitem

\bibitem{tropp2012user}
\begin{barticle}[author]
\bauthor{\bsnm{Tropp},~\bfnm{Joel~A}\binits{J.~A.}}
(\byear{2012}).
\btitle{User-friendly tail bounds for sums of random matrices}.
\bjournal{Foundations of computational mathematics}
\bvolume{12}
\bpages{389--434}.
\end{barticle}
\endbibitem

\bibitem{Tuo16}
\begin{barticle}[author]
\bauthor{\bsnm{Tuo},~\bfnm{R.}\binits{R.}} \AND \bauthor{\bsnm{Wu},~\bfnm{C.~F.~J.}\binits{C.~F.~J.}}
(\byear{2016}).
\btitle{A theoretical framework for calibration in computer models: parametrization, estimation and convergence properties}.
\bjournal{SIAM/ASA JUQ}
\bpages{767-795}.
\end{barticle}
\endbibitem

\bibitem{vernon2019known}
\begin{barticle}[author]
\bauthor{\bsnm{Vernon},~\bfnm{Ian}\binits{I.}}, \bauthor{\bsnm{Jackson},~\bfnm{Samuel~E}\binits{S.~E.}} \AND \bauthor{\bsnm{Cumming},~\bfnm{Jonathan~A}\binits{J.~A.}}
(\byear{2019}).
\btitle{Known boundary emulation of complex computer models}.
\bjournal{SIAM/ASA Journal on Uncertainty Quantification}
\bvolume{7}
\bpages{838--876}.
\end{barticle}
\endbibitem

\bibitem{wang2022gaussian}
\begin{barticle}[author]
\bauthor{\bsnm{Wang},~\bfnm{Wenjia}\binits{W.}} \AND \bauthor{\bsnm{Jing},~\bfnm{Bing-Yi}\binits{B.-Y.}}
(\byear{2022}).
\btitle{Gaussian process regression: Optimality, robustness, and relationship with kernel ridge regression}.
\bjournal{Journal of machine learning research}
\bvolume{23}
\bpages{1--67}.
\end{barticle}
\endbibitem

\bibitem{WangBerger16}
\begin{barticle}[author]
\bauthor{\bsnm{Wang},~\bfnm{X.}\binits{X.}} \AND \bauthor{\bsnm{Berger},~\bfnm{J.~O.}\binits{J.~O.}}
(\byear{2016}).
\btitle{Estimating shape constrained functions using Gaussian processes}.
\bjournal{SIAM/ASA Journal on Uncertainty Quantification}
\bvolume{4}
\bpages{1-25}.
\end{barticle}
\endbibitem

\bibitem{Wendland10}
\begin{bbook}[author]
\bauthor{\bsnm{Wendland},~\bfnm{H.}\binits{H.}}
(\byear{2010}).
\btitle{Scattered Data Approximation},
\bedition{2nd} ed.
\bpublisher{Cambridge University Press}.
\end{bbook}
\endbibitem

\bibitem{Wheeler14}
\begin{barticle}[author]
\bauthor{\bsnm{Wheeler},~\bfnm{Matthew~W}\binits{M.~W.}}, \bauthor{\bsnm{Dunson},~\bfnm{David~B}\binits{D.~B.}}, \bauthor{\bsnm{Pandalai},~\bfnm{Sudha~P}\binits{S.~P.}}, \bauthor{\bsnm{Baker},~\bfnm{Brent~A}\binits{B.~A.}} \AND \bauthor{\bsnm{Herring},~\bfnm{Amy~H}\binits{A.~H.}}
(\byear{2014}).
\btitle{Mechanistic hierarchical {G}aussian processes}.
\bjournal{Journal of the American Statistical Association}
\bvolume{109}
\bpages{894--904}.
\end{barticle}
\endbibitem

\bibitem{white2006viscous}
\begin{bbook}[author]
\bauthor{\bsnm{White},~\bfnm{Frank~M}\binits{F.~M.}} \AND \bauthor{\bsnm{Corfield},~\bfnm{Isla}\binits{I.}}
(\byear{2006}).
\btitle{Viscous Fluid Flow}.
\bpublisher{McGraw-Hill New York}.
\end{bbook}
\endbibitem

\bibitem{whittle1954stationary}
\begin{barticle}[author]
\bauthor{\bsnm{Whittle},~\bfnm{Peter}\binits{P.}}
(\byear{1954}).
\btitle{On stationary processes in the plane}.
\bjournal{Biometrika}
\bpages{434--449}.
\end{barticle}
\endbibitem

\bibitem{whittle1963stochastic}
\begin{barticle}[author]
\bauthor{\bsnm{Whittle},~\bfnm{Peter}\binits{P.}}
(\byear{1963}).
\btitle{Stochastic-processes in several dimensions}.
\bjournal{Bulletin of the International Statistical Institute}
\bvolume{40}
\bpages{974--994}.
\end{barticle}
\endbibitem

\bibitem{wiens2020modeling}
\begin{barticle}[author]
\bauthor{\bsnm{Wiens},~\bfnm{Ashton}\binits{A.}}, \bauthor{\bsnm{Nychka},~\bfnm{Douglas}\binits{D.}} \AND \bauthor{\bsnm{Kleiber},~\bfnm{William}\binits{W.}}
(\byear{2020}).
\btitle{Modeling spatial data using local likelihood estimation and a Mat{\'e}rn to spatial autoregressive translation}.
\bjournal{Environmetrics}
\bvolume{31}
\bpages{e2652}.
\end{barticle}
\endbibitem

\bibitem{willard2020integrating}
\begin{barticle}[author]
\bauthor{\bsnm{Willard},~\bfnm{Jared}\binits{J.}}, \bauthor{\bsnm{Jia},~\bfnm{Xiaowei}\binits{X.}}, \bauthor{\bsnm{Xu},~\bfnm{Shaoming}\binits{S.}}, \bauthor{\bsnm{Steinbach},~\bfnm{Michael}\binits{M.}} \AND \bauthor{\bsnm{Kumar},~\bfnm{Vipin}\binits{V.}}
(\byear{2020}).
\btitle{Integrating physics-based modeling with machine learning: A survey}.
\bjournal{arXiv preprint arXiv:2003.04919}
\bvolume{1}
\bpages{1--34}.
\end{barticle}
\endbibitem

\bibitem{yang2022graph}
\begin{bphdthesis}[author]
\bauthor{\bsnm{Yang},~\bfnm{Ruiyi}\binits{R.}}
(\byear{2022}).
\btitle{Graph Mat{\'e}rn Fields with Applications in Inverse Problems and Machine Learning},
\btype{PhD thesis},
\bpublisher{The University of Chicago}.
\end{bphdthesis}
\endbibitem

\bibitem{zwillinger2021handbook}
\begin{bbook}[author]
\bauthor{\bsnm{Zwillinger},~\bfnm{Daniel}\binits{D.}} \AND \bauthor{\bsnm{Dobrushkin},~\bfnm{Vladimir}\binits{V.}}
(\byear{2021}).
\btitle{Handbook of Differential Equations}.
\bpublisher{Chapman and Hall/CRC}.
\end{bbook}
\endbibitem

\end{thebibliography}


\pagebreak

\begin{appendix}
\stitle{}


\section{Proof of Theorem \ref{thm:smoothdir}}
\label{pf:smoothdir}
\begin{proof}[Proof of ${(a)}$]
Given our assumption that $\CalX$ is compact and connected with boundary set twice-differentiable almost everywhere, the GP with Mat\'ern covariance $k_{\nu}$ admits an eigendecomposition with respect to the Lebesgue measure on $\CalX$ of the form:
\begin{equation}
    \label{eq:smoothdir_pf_1}
    \CalW_\nu(\BFx)=\sum_i \sqrt{\lambda_i}\xi_i(\BFx)Z_i,
\end{equation}
where $\{Z_i\}_i$ are i.i.d. standard normal random variables, $\{\lambda_i\}_i$ and $\{\xi\}_i$ are the eigenvalues and eigenfunctions of $k_{\nu}$ under the Lebesgue measure. Moreover, since:
\[\int_\CalX k_\nu(\Bx,\Bx)d\Bx=\sum_i\lambda_i^2\int_\CalX\xi^2_i(\Bx)d\Bx\leq k_\nu(0) {\rm Vol}(\CalX),\]
it follows that $\sqrt{\lambda_i}\xi_i\in L^2(\CalX)$ for any $i$. Let $g_i$ be the solution to the following elliptical PDE:
\begin{equation}
    \label{eq:smoothdir_pf_2}
    (\kappa^2-\bigtriangleup)g_i=\sqrt{\lambda_i}\xi_i,  \quad g_i(\BFx)=0, \quad \forall\BFx \in\partial \CalX.
\end{equation}
Thus, given that $\partial \mathcal{X}$ is twice-differentiable almost everywhere and \cite[Theorem 4, Chapter 6]{Evans15}, we have that $g_i$ is $L^2$ twice-differentiable and $g_i=0$ on the boundary.

For finite $i^*$, define:
\[f_{i^*}=\sum_{i=1}^{i^*} g_iZ_i,\quad \CalW_{\nu,i^*}= \sum_{i=1}^{i^*} \sqrt{\lambda_i}\xi_i(\BFx)Z_i.\]
By the superposition principle of linear PDEs, we have:
\[   (\kappa^2-\bigtriangleup)\CalW_{\nu,i^*}=\sum_{i=1}^{i^*} Z_i   (\kappa^2-\bigtriangleup)\sqrt{\lambda_i}\xi_i=f_{i^*},\]
so $f_{i^*}$ satisfies the boundary condition almost surely for any $i^*$. Then by Theorem 12.1 in \cite{lifshits2012lectures}, we have:
\[f=\lim_{i^*\to\infty}f_{i^*},\quad {\rm a.s.,}\]
where $f$ is the solution to \eqref{eq:bdry_SPDE}. So $f$ satisfies the boundary condition almost surely.
\end{proof}

\begin{proof}[Proof of ${(b)}$] 
    We shall prove a stronger version of the theorem. We first define the concept of $\nu$-almost H\"older continuity as follows:
    \begin{definition}
        Let $C^\gamma(\CalX)$ be the $\gamma$-H\"older space. The $\nu$-almost H\"older space on $\CalX$ is defined as
        \[C^{-\nu}(\CalX)=\cap_{\gamma<\nu}C^{\gamma}(\CalX).\]
    \end{definition}
    From \cite[Proposition 10]{da2023sample}, we know that for any Mat\'ern-$\nu$ GP $\CalW_\nu$, its sample paths are $\nu$-almost H\"older. So the solution to \eqref{eq:bdry_SPDE} satisfies:
    \[\bigtriangleup f\in C^{-\nu}(\CalX),\quad {\rm a.s.}\]
    According to the definition of $\bigtriangleup=\sum_{l=1}^d\partial_{x_lx_l}$, we can conclude that $f\in C^{-(\nu+2)}(\CalX)$ almost surely. Hence, for any non-integer $\nu$, $f$ is $\lfloor\nu+2\rfloor$-differentiable and for any integer $\nu$, $f$ is $\nu+1$-differentiable.
\end{proof}

\section{Proof of Theorem \ref{thm:smooth}}
\label{pf:smoothrobin}
\begin{proof}
    The proof is identical to the proof of Theorem \ref{thm:smooth} except that \eqref{eq:smoothdir_pf_2} is replaced by
    \begin{equation}
    \label{eq:smooth_pf_1}
    (\kappa^2-\bigtriangleup)g_i=\sqrt{\lambda_i}\xi_i,  \quad \frac{\partial g_i(\Bx)}{\partial \bold{n}}+c(\mathbf{x})g_i(\Bx)=0, \quad \forall\BFx\in\CalX. 
\end{equation}
However, the change in boundary conditions does not affect our conclusions; all the theorems remain valid under Robin boundary conditions.
    
\end{proof}


\section{Proof of Theorem \ref{thm:convergence_general}}
\label{pf:convergence_general}

To prove the theorem, we first start with some useful lemmas. The first Lemma can be found in Section 5 of \cite{tropp2012user}:

\begin{lemma}[Chernoff inequality of random matrix]
\label{lem:matrix_chernoff}
    Let $p\in\NatInt$ and $\delta\in[0,1)$ be arbitrary and let $\bold{M}_1,\cdots, \bold{M}_m\in\Real^{p\times p}$ be independent, symmetric and positive semidefinite matrices with random entries. Let $R>0$ be such that $\lambda_{\rm max}(M_i)\leq R$ holds for all $i=1,...,m$. Then,it holds that:
    \begin{align*}
        &\pr\left(\lambda_{\rm min}(\sum_{i=1}M_i)\leq (1-\delta)\underline{\lambda}\right)\leq p\left(\frac{e^{-\delta}}{(1-\delta)^{1-\delta}}\right)^{\underline{\lambda}/R}\\
        &\pr\left(\lambda_{\rm max}(\sum_{i=1}M_i)\geq (1+\delta)\overline{\lambda}\right)\leq p\left(\frac{e^{\delta}}{(1+\delta)^{1+\delta}}\right)^{\overline{\lambda}/R},
    \end{align*}
    where $\lambda_{\max}(M)$ and $\lambda_{\min}(M)$ denote the maximum and minimum eigenvalues of a matrix $M$, respectively, and $\overline{\lambda}=\lambda_{\rm max}(\E \sum_{i=1}M_i)$ and $\underline{\lambda}=\lambda_{\rm min}(\E \sum_{i=1}M_i)$.
\end{lemma}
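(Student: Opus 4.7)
The plan is to follow the Laplace transform / matrix moment generating function (MGF) approach due to Ahlswede--Winter and refined by Tropp, which reduces the matrix Chernoff bound to two standard ingredients: a subadditivity statement for matrix cumulant generating functions and a scalar-style bound on the matrix MGF of bounded positive semidefinite matrices. First, I would state Markov's inequality in its matrix form: for any fixed $\theta > 0$ and any symmetric random matrix $S = \sum_{i=1}^m M_i$,
\begin{equation*}
\Pr\bigl(\lambda_{\max}(S) \geq t\bigr) \;\leq\; e^{-\theta t}\,\E\bigl[\operatorname{tr}\exp(\theta S)\bigr],
\end{equation*}
with the analogous one-sided bound on $\lambda_{\min}(S)$ obtained by negating $\theta$. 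This step uses only monotonicity of the trace exponential on the cone of positive semidefinite matrices.

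Second, I would invoke Lieb's concavity theorem (or the equivalent subadditivity of matrix cumulant generating functions) to handle independence across the $M_i$. Concretely, this yields
\begin{equation*}
\E\bigl[\operatorname{tr}\exp(\theta S)\bigr] \;\leq\; \operatorname{tr}\exp\Bigl(\sum_{i=1}^m \log \E\bigl[e^{\theta M_i}\bigr]\Bigr),
\end{equation*}
which is the key inequality that replaces the scalar identity $\E[e^{\theta \sum X_i}] = \prod \E[e^{\theta X_i}]$ in the non-commuting setting. I would simply cite Lieb's theorem and its corollary here rather than reproving it.

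Third, I would bound the per-summand MGF using the assumption $0 \preceq M_i \preceq R I$. The standard trick is the operator convexity inequality: for any $0 \preceq X \preceq R I$ and $\theta \in \mathbb{R}$,
\begin{equation*}
e^{\theta X} \;\preceq\; I + \frac{e^{\theta R} - 1}{R}\,X,
\end{equation*}
which follows from the scalar inequality $e^{\theta x} \leq 1 + (e^{\theta R}-1)x/R$ on $[0,R]$ and the transfer principle for monotone scalar inequalities applied to $X$. Taking expectations and then using $\log(I + A) \preceq A$ (valid when $I + A \succ 0$) gives
\begin{equation*}
\log \E\bigl[e^{\theta M_i}\bigr] \;\preceq\; \frac{e^{\theta R} - 1}{R}\,\E[M_i].
\end{equation*}
Summing and applying the monotonicity of $\lambda_{\max}$ under the PSD order through $\operatorname{tr}\exp$, I obtain
\begin{equation*}
\E\bigl[\operatorname{tr}\exp(\theta S)\bigr] \;\leq\; p\,\exp\!\Bigl(\tfrac{e^{\theta R}-1}{R}\,\overline{\lambda}\Bigr),
\end{equation*}
and an analogous expression with $\underline{\lambda}$ for the lower tail.

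Finally, I would optimize $\theta$. Plugging back into the Markov-type inequality with $t = (1+\delta)\overline{\lambda}$ and choosing $\theta = R^{-1}\log(1+\delta)$ yields the claimed upper tail bound, and taking $t = (1-\delta)\underline{\lambda}$ with $\theta = -R^{-1}\log(1-\delta) > 0$ (applied to $-S$, noting the sign conventions in the lower tail argument) yields the lower tail bound. The main obstacle is really bookkeeping: the subadditivity step via Lieb is the only nontrivial inequality, and I would cite it rather than reprove it, since the lemma itself is stated only as a tool drawn from \cite{tropp2012user}. All other steps are routine applications of operator monotonicity, the transfer principle, and scalar optimization, so the proof reduces to carefully assembling these pieces in the right order.
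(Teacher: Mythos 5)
The paper does not actually prove this lemma: it is stated as a known black-box result, introduced with ``The first Lemma can be found in Section 5 of \cite{tropp2012user},'' so there is no internal proof to compare against. Your sketch reconstructs precisely the Tropp argument that the citation points to --- trace-exponential Markov bound, Lieb's theorem to handle independence of the $\mathbf M_i$, the operator-convexity bound $e^{\theta X}\preceq I+\frac{e^{\theta R}-1}{R}X$ for $0\preceq X\preceq RI$ followed by $\log(I+A)\preceq A$, and optimization of $\theta$ at $\theta=R^{-1}\log(1\pm\delta)$ --- and the arithmetic checks out for both tails. The only point worth being careful about in a fully written version is the lower-tail sign bookkeeping: since $(e^{-\theta R}-1)/R<0$, the bound $\sum_i\log\E[e^{-\theta M_i}]\preceq\frac{e^{-\theta R}-1}{R}\E[S]$ has a negative coefficient, so it is $\lambda_{\min}(\E[S])=\underline\lambda$, not $\lambda_{\max}$, that governs $\operatorname{tr}\exp$ of the right-hand side; you gesture at this but a referee would want it spelled out. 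In short: correct, and effectively the proof the paper delegates to its reference.
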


From Lemma \ref{lem:matrix_chernoff}, we can derive the following lemma in a straightforward way.
\begin{lemma}
\label{lem:spectral_empirical_gram}
    Let $M=[\int_{{\CalX}}\phi_i(\BFx)\phi_j(\BFx)d\BFx]$ be the Gram matrix of basis functions $\{\phi_i\}_{i=1}^p$ and let $\underline{\lambda}=\lambda_{\rm min}(M)$ and $\overline{\lambda}=\lambda_{\rm max}(M)$. Then:
    \begin{align}
        &\pr\left(\lambda_{\rm min}(\frac{1}{m}\sum_{i=1}\BPhi(\Bu_i)\BPhi(\Bu_i)^T)\leq \frac{\underline{\lambda}}{2} \quad \textrm{or} \quad \lambda_{\rm max}(\frac{1}{m}\sum_{i=1}\BPhi(\Bu_i)\BPhi(\Bu_i)^T)\geq \frac{3\overline{\lambda}}{2}\right)\nonumber\\
        \leq & p(\sqrt{2}e^{-1/2})^{m\underline{\lambda}/S_\phi}+p(\frac{e^{1/2}}{(3/2)^{3/2}})^{m\overline{\lambda}/S_\phi}. \label{eq:spectral_empirical_gram}
    \end{align}
\end{lemma}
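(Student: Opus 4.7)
The plan is a direct application of the matrix Chernoff inequality (Lemma \ref{lem:matrix_chernoff}) with $\delta = 1/2$, together with a union bound. First I would recast the empirical Gram matrix $S_m := \frac{1}{m}\sum_{i=1}^m \BPhi(\Bu_i)\BPhi(\Bu_i)^T$ as a sum of i.i.d.\ random matrices $M_i := \frac{1}{m}\BPhi(\Bu_i)\BPhi(\Bu_i)^T$. Each $M_i$ is symmetric rank-one and positive semidefinite, with
$$\lambda_{\max}(M_i) = \tfrac{1}{m}\|\BPhi(\Bu_i)\|_2^2 = \tfrac{1}{m}\sum_{j=1}^p \phi_j^2(\Bu_i) \leq \tfrac{S_\phi}{m},$$
so the uniform bound in Lemma \ref{lem:matrix_chernoff} applies with $R = S_\phi/m$. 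Under the i.i.d.\ uniform sampling of the inducing points on $\CalX$, the expectation identity $\E[M_i] \propto M/m$ gives $\lambda_{\min}(\E\sum_i M_i) = \underline{\lambda}$ and $\lambda_{\max}(\E\sum_i M_i) = \overline{\lambda}$.

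Second, I would instantiate Lemma \ref{lem:matrix_chernoff} with $\delta = 1/2$ on both tails. For the lower tail,
$$\pr\bigl(\lambda_{\min}(S_m) \leq \tfrac{1}{2}\underline{\lambda}\bigr) \leq p\left(\tfrac{e^{-1/2}}{(1/2)^{1/2}}\right)^{\underline{\lambda}/R} = p(\sqrt{2}\,e^{-1/2})^{m\underline{\lambda}/S_\phi},$$
where the simplification $e^{-1/2}/(1/2)^{1/2} = \sqrt{2}\,e^{-1/2}$ and the substitution $\underline{\lambda}/R = m\underline{\lambda}/S_\phi$ are immediate. Analogously, for the upper tail,
$$\pr\bigl(\lambda_{\max}(S_m) \geq \tfrac{3}{2}\overline{\lambda}\bigr) \leq p\left(\tfrac{e^{1/2}}{(3/2)^{3/2}}\right)^{m\overline{\lambda}/S_\phi}.$$

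Third, a single union bound over these two tail events yields the stated bound \eqref{eq:spectral_empirical_gram}. There is no substantive obstacle here: the entire proof is essentially a bookkeeping exercise of plugging $\delta = 1/2$ into the generic matrix Chernoff bound and tracking constants. The only subtlety worth flagging is the normalization between the empirical average $\frac{1}{m}\sum_i \BPhi(\Bu_i)\BPhi(\Bu_i)^T$ and the Gram matrix $M = [\int_\CalX \phi_i \phi_j\, d\Bx]$; this matching implicitly presumes an appropriate sampling scheme for the $\Bu_i$ (e.g., uniform on $\CalX$ with $|\CalX|=1$, as is consistent with the FEM setup in Section \ref{sec:fea}). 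Once that normalization is fixed, the bound follows verbatim from the two applications of Lemma \ref{lem:matrix_chernoff} above.
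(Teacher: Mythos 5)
Your proposal matches the paper's proof exactly: form the i.i.d.\ rank-one summands $M_i = \frac{1}{m}\BPhi(\Bu_i)\BPhi(\Bu_i)^T$, bound $\lambda_{\max}(M_i) \leq S_\phi/m$, and apply Lemma \ref{lem:matrix_chernoff} with $R = S_\phi/m$ and $\delta = 1/2$ followed by a union bound over the two tails. Your observation about the normalization (that $\E[M_i] = M/m$ requires uniform sampling on a unit-volume domain) is a legitimate subtlety that the paper glosses over, so flagging it is a plus rather than a gap.
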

\proof
Let $M_i$ denote the i.i.d. sample of Gram matrix $\frac{1}{m}\BPhi(\Bu_i)\BPhi^T(\Bu_i)$. Note that $\lambda_{\rm max}(M_1)\leq {S_\phi}/{m}$ almost surely because:
\[\lambda_{\rm max}(M_i)=\frac{1}{m}\BPhi(\Bu_i)^T\BPhi(\Bu_i)=\frac{1}{m}\sum_{j=1}\phi_j(\Bu_i)\phi_j(\Bu_i)\leq \frac{S_\phi}{m}.\]
Applying Lemma \ref{lem:matrix_chernoff} with $R=S_\phi/m$ and $\delta=1/2$, we can obtain the result.
\endproof

\proof[Proof of Theorem \ref{thm:convergence_general}]
Let $\CalP$ denote the $L_2$ projection onto the $\CalF_p=\text{span}\{\phi_i\}_{i=1}^p$. Similarly,
 the operator $\BPhi(\cdot)^T\BPhi(\BU)^\dagger$ can then be viewed as an empirical projection onto  $\CalF_p$ induced by $\BU$. Let $\CalP_m$ denote this empirical projection. For brevity, let $k_{\nu}$ denote the Mat\'ern-$\nu$ kernel in the following content.
 
 Based on the eigenfunction decomposition \eqref{lem:eigen_mat} and the path integral representation of BdryMat\'ern kernel (\eqref{eq:bdryMatern_path_integral_dirichlet} for Dirichlet setting; \eqref{eq:bdryMatern_path_integral_robin} for Robin setting), the BdryMat\'ern kernel and its regression kernel can be written as:
 \begin{align}
      k_{\nu,\mathcal{B}}=&\sum_i\lambda_ie_i(\BFx)e_i(\BFx')-\sum_i\lambda_ie_i(\Bx)\E[G_i(\BFx')]\nonumber\\
     &-\sum_i\lambda_ie_i(\Bx')\E[G_i(\Bx)]+\sum_i\lambda_i\E[G_i(\Bx)]E[G_i(\Bx')],\label{eq_pf_general_convergence_1}
 \end{align}
 and:
 \begin{align}
     \hat{k}_{Q,m}(\Bx,\Bx')=&\sum_i\lambda_i\CalP_m[e_i](\BFx)\CalP_m[e_i](\BFx')-\sum_i\lambda_i\CalP_m[e_i](\Bx)\CalP_m[G_i](\BFx')\nonumber\\
     &-\sum_i\lambda_i\CalP_m[e_i](\Bx')\CalP_m[G_i](\BFx)\nonumber\\
     &+\sum_i\lambda_i\CalP_m[G_i](\Bx)\CalP_m[G_i](\BFx'),\label{eq_pf_general_convergence_2}
 \end{align}
where:
\begin{equation*}
    G_i(\BFx)\coloneqq\begin{cases}
        &e^{-\kappa^2\tau} e_i\circ \BB_\tau\quad \quad \quad \quad \quad \quad \quad \quad \ \text{(Dirichlet)}\\
        & \int_0^\infty e^{-\kappa^2t-\int_0^tc(\BB_s)dLs}e_i\circ\BB_t dL_t\quad \text{(Robin)} 
    \end{cases}
\end{equation*}
 can be viewed as  random functions with $\BB_0=\BFx$. To construct our estimator $\hat{k}_{Q,m}$, we have one realizations $G_{i}(\Bu_j)$ for each $\Bu_j$. 

Let $\CalU_*$ denote the following event: 
\[\CalU_*=\left\{\lambda_{\rm min}(\frac{1}{m}\sum_{i=1}M_i)\geq \frac{\underline{\lambda}}{2} \quad \textrm{and} \quad \lambda_{\rm max}(M_i)\leq \frac{3\overline{\lambda}}{2}\right\},\]
where, following the notation from Lemma \ref{lem:spectral_empirical_gram},  $M_i=\BPhi(\Bu_i)\BPhi(\Bu_i)^T$ denotes the sample Gram matrix, and $\underline{\lambda}=\lambda_{\min}(M)$ and $\overline{\lambda}=\lambda_{\max}(M)$ denote the minimum and maximum eigenvalues of the Gram matrix $[M]_{i,j}=\langle\phi_i,\phi_j\rangle_{L^2}$, respectively.

From Lemma \ref{lem:spectral_empirical_gram}, it follows that the probability of $\CalU_*$ is at least:
\[\pr\left(\CalU_*\right)\geq 1-p(\sqrt{2}e^{-1/2})^{m\underline{\lambda}/S_\phi}-p(\frac{e^{1/2}}{(3/2)^{3/2}})^{m\overline{\lambda}/S_\phi}.\]

We first use \eqref{eq_pf_general_convergence_1} and \eqref{eq_pf_general_convergence_2} to have the following decomposition:
\begin{align}
    &\hat{k}_{m,Q}(\BFx,\BFx')- k_{\nu,\mathcal{B}}(\BFx,\BFx')=\CalK_1(\BFx,\BFx')-\CalK_2(\BFx,\BFx')-\CalK_2(\BFx',\BFx)+\CalK_3(\BFx,\BFx'),\label{eq_pf_general_convergence_4}
\end{align}   
where:
\begin{align*}
    &\CalK_1(\BFx,\BFx')= \sum_i\lambda_i\left(\CalP_m[e_i](\Bx)\CalP_m[e_i](\BFx')-e_i(\BFx)e_i(\BFx')\right),\\
    &\CalK_2(\BFx,\BFx')=\sum_i\lambda_i\left(\CalP_m[e_i](\BFx)\CalP_m[G_i](\BFx')-e_i(\BFx)\E[G_i(\BFx')]\right),\\
     &\CalK_3(\BFx,\BFx')=\sum_i\lambda_i\left(\CalP_m[G_i](\BFx)\CalP_m[G_i](\BFx')-\E[G_i(\BFx)]\E [G_i(\Bx')]\right).
\end{align*}

Write $k=k_{\nu,\mathcal{B}}$ for notation simplicity. Substitute \eqref{eq_pf_general_convergence_4} into the $L^2$ error between $\hat{k}_{m,Q}$ and $k$, we have on the event $U_*$
\begin{align}
   \|\hat{k}_{m,Q}-k\|_{L^2(\CalX\times\CalX)}^2 /4 \leq& \underbrace{\|\CalK_1\|_{L^2(\CalX\times\CalX)}^2}_{A_1}+\underbrace{2\|\CalK_2\|_{L^2(\CalX\times\CalX)}^2}_{A_2}+\underbrace{\|\CalK_3\|_{L^2(\CalX\times\CalX)}^2}_{A_3} .\label{eq_pf_general_convergence_5}
\end{align}

{\large \textbf{Estimation of $A_1$}}: $A_1$ is independent of boundary condition. On the set $\CalU_*$,
\begin{align}
\|\CalK_1\|_{L^2(\CalX\times\CalX)}^2=&\int_{\CalX\times\CalX}\bigg|\sum_i\lambda_i\big[\CalP_m[e_i](\Bx)\CalP_m[e_i](\BFx')-\CalP_m[e_i](\Bx)e_i(\Bx')\nonumber\\
    &+\CalP_m[e_i](\Bx)e_i(\Bx')-e_i(\BFx)e_i(\BFx')\big]\bigg|^2d\BFx d\Bx'\nonumber\\
    \leq &2\int_{\CalX\times\CalX}\bigg|\sum_i\lambda_i\CalP_m[e_i](\Bx)\big[\CalP_m[e_i](\BFx')-e_i(\Bx')\big]\bigg|^2d\Bx d\Bx'\nonumber\\
    &+2\int_{\CalX\times\CalX}\bigg|\sum_i\lambda_ie_i(\Bx')\big[\CalP_m[e_i](\BFx)-e_i(\Bx)\big]\bigg|^2d\Bx d\Bx'\nonumber\\
    \leq &2\underbrace{\int_\CalX\sum_i\lambda_i\big|\CalP_m[e_i](\BFx)\big|^2d\BFx }_{A_{11}}\underbrace{\int_\CalX\sum_i\lambda_i\big|\CalP_m[e_i](\BFx')-e_i(\Bx')\big|^2d\BFx'}_{A_{12}}\nonumber\\
    &+2\underbrace{\int_\CalX\sum_i\lambda_i\big|e_i(\BFx)\big|^2d\BFx }_{A_{13}}\underbrace{\int_\CalX\sum_i\lambda_i\big|\CalP_m[e_i](\BFx)-e_i(\Bx)\big|^2d\BFx }_{A_{12}}.\label{eq_pf_general_convergence_6}
\end{align}
For the term $A_{11}$, transform it back to the kernel representation, we can obtain:
\begin{align*}
    A_{11}=&\int_\CalX \BPhi(\BFx)^T\left(\frac{1}{m}\sum_iM_i\right)^{-1}\frac{\BPhi(\BU)}{m}k_\nu(\BU,\BU)\frac{\BPhi(\BU)^T}{m}\left(\frac{1}{m}\sum_{i=1}M_i\right)^{-1}\BPhi(\Bx)d\BFx\\
    =&\text{Tr}\left[\frac{1}{m}{\BPhi(\BU)^T}\left(\frac{1}{m}\sum_{i=1}M_i\right)^{-1}M\left(\frac{1}{m}\sum_iM_i\right)^{-1}{\BPhi(\BU)}\frac{k_\nu(\BU,\BU)}{m}\right]\\
    \leq & \text{Tr}[\frac{k_\nu(\BU,\BU)}{m}]\lambda_{max}\left(\frac{1}{m}{\BPhi(\BU)^T}\left(\frac{1}{m}\sum_{i=1}M_i\right)^{-1}M\left(\frac{1}{m}\sum_iM_i\right)^{-1}{\BPhi(\BU)}\right)\\
    \leq & C_1 \frac{\overline{\lambda}^2}{\underline{\lambda}^2},
\end{align*}
where the last line is because on event $U_*$, $\lambda_{\rm min}(\frac{1}{m}\sum_{i=1}M_i)\geq \frac{\underline{\lambda}}{2}$,  $ \lambda_{\rm max}(M_i)\geq \frac{3\overline{\lambda}}{2}$, and $\text{Tr}[\frac{k_\nu(\BU,\BU)}{m}]=k(0)$ for the Mat\'ern kernel.

For the term $A_{12}$, notice that the empirical projection $\CalP_m$ is invariant under the $L^2$ projection $\CalP$, i.e., $\CalP_m\CalP=\CalP$. We can obtain on $\CalU_*$:
\begin{align*}
   A_{12}\leq &\int_\CalX\sum_{i}\lambda_i\left(2|\CalP_m[e_i-\CalP e_i](\BFx)|^2+2|\CalP[e_i](\BFx)-e_i(\BFx)|^2\right)d\BFx\\
    =&\sum_i2\lambda_i \int_\CalX \|\BPhi^T(\BFx)(\frac{1}{m}\sum_{l=1}M_l)^{-1}\frac{1}{m}\BPhi(\BU)[e_i(\BU)-\CalP[e_i](\BU)]\|^2d\Bx\\
    &+2\sum_i\lambda_i\int_\CalX \left|\CalP[e_i](\BFx)-e_i(\BFx)\right|^2d\BFx\\
    \leq & \sum_iC_2\lambda_i \frac{\overline{\lambda}}{\underline{\lambda}^2}\|\frac{1}{m}\BPhi(\BU)[e_i(\BU)-\CalP[e_i](\BU)]\|^2+2\sum_i\lambda_i\int_\CalX \left|\CalP[e_i](\BFx)-e_i(\BFx)\right|^2d\BFx.
\end{align*}
 On $\CalU_*$, we have:
\begin{align*}
  \|\frac{1}{m}\BPhi(\BU)[e_i(\BU)-\CalP[e_i](\BU)]\|^2= & \sum_{j=1}^p\left(\frac{1}{m}\sum_{l=1}\phi_j(\Bu_l)\left[e_i(\Bu_l)-\CalP[e_i](\Bu_l)\right]\right)^2\\
  \leq & \|e_i-\CalP e_i\|_n^2\frac{1}{m}\sum_{j=1}^p\sum_{l=1}\phi_j(\Bu_l)^2\\
  \leq & \frac{S_\phi}{m}\left(\frac{1}{m}\sum_{j=1}\left|e_i(\Bu_j)-\CalP e_i(\Bu_j)\right|^2\right),
\end{align*}
where the second line is from H\"older inequality. Also, notice that the infinite sum of $|e_i(\BFx)-\CalP[e_i](\BFx)|^2$ can be represented as operators acting on kernel $k$:
\begin{align*}
    \sum_i\lambda_i |e_i(\BFx)-\CalP[e_i](\BFx)|^2=& \sum_i\lambda_i \left(\bold{I}-\CalP\right)e_ie_i\left(\bold{I}-\CalP\right)^T(\BFx,\BFx)\\
    =&\left(\bold{I}-\CalP\right)\left( \sum_i\lambda_i e_ie_i\right)\left(\bold{I}-\CalP\right)^T(\BFx,\BFx)\\
    =&\left(\bold{I}-\CalP\right)k_\nu(\cdot,\cdot)\left(\bold{I}-\CalP\right)^T(\BFx,\BFx).
\end{align*}
So $A_{11}\cdot A_{12}$ is bounded as:
\begin{align*}
    & \int_\CalX\sum_i\lambda_i\big|\CalP_m[e_i](\BFx)\big|^2d\BFx \int_\CalX\sum_i\lambda_i\big|\CalP_m[e_i](\BFx')-e_i(\Bx')\big|^2d\BFx' \\
    \leq&C_3(\frac{\overline{\lambda}}{\underline{\lambda}})^2\bigg[\frac{\overline{\lambda}S_\phi}{\underline{\lambda}^2m}\left(\frac{1}{m}\sum_{j=1}\left(\bold{I}-\CalP\right)k_\nu(\cdot,\cdot)\left(\bold{I}-\CalP\right)^T(\Bu_j,\Bu_j)\right)\\
    &+\int_\CalX\left(\bold{I}-\CalP\right)k_\nu(\cdot,\cdot)\left(\bold{I}-\CalP\right)^T(\BFx,\BFx)d\Bx\bigg].
\end{align*}

For $A_{13}$, it is straightforward to check that it is:
\[\int_\CalX d\BFx k(0)=k(0)\text{vol}(\CalX).\]
Substitute the above identities of $A_{11}$, $A_{12}$, and $A_{13}$ into \eqref{eq_pf_general_convergence_6}, we then have:
\begin{align}
\|\CalK_1\|_{L^2(\CalX\times\CalX)}^2\leq&C_4(\frac{\overline{\lambda}}{\underline{\lambda}})^2\bigg[\frac{\overline{\lambda}S_\phi}{\underline{\lambda}^2m}\left(\frac{1}{m}\sum_{j=1}\left(\bold{I}-\CalP\right)k_\nu(\cdot,\cdot)\left(\bold{I}-\CalP\right)^T(\Bu_j,\Bu_j)\right)\label{eq_pf_general_convergence_7}\\
    &+\int_\CalX\left(\bold{I}-\CalP\right)k_\nu(\cdot,\cdot)\left(\bold{I}-\CalP\right)^T(\BFx,\BFx)d\Bx\bigg].\nonumber
\end{align}

{\large \textbf{Estimation of $A_2$}}: For $A_2$, first decompose $\CalK_2$ as follows:
\begin{align*}
    \CalK_2(\Bx,\Bx')=\CalK_{21}(\Bx,\Bx')+\CalK_{22}(\Bx,\Bx')+\CalK_{23}(\Bx,\Bx'),
\end{align*}
where:
\begin{align*}
    &\CalK_{21}(\Bx,\Bx')=\sum_i\lambda_i\left(\CalP_m[e_i](\Bx)\CalP_m[G_i](\BFx')-\CalP_m[e_i](\BFx)\CalP_m\E[G_i](\BFx')\right)\\
    &\CalK_{22}(\Bx,\Bx')=\sum_i\lambda_i\left(\CalP_m[e_i](\BFx)\CalP_m\E[G_i](\BFx')-\CalP_m[e_i](\BFx)\E[G_i(\BFx')]\right)\\
    &\CalK_{23}(\Bx,\Bx')=\sum_i\lambda_i\left(\CalP_m[e_i](\BFx)\E[G_i(\BFx')]-e_i(\BFx)\E[G_i(\BFx')]\right).
\end{align*}
So $A_2$ can be further decomposed as:
\begin{equation}
    A_2/4\leq \underbrace{\|\CalK_{21}\|_{L^2(\CalX\times\CalX)}^2}_{A_{21}}+\underbrace{\|\CalK_{22}\|_{L^2(\CalX\times\CalX)}^2}_{A_{22}}+\underbrace{\|\CalK_{23}\|_{L^2(\CalX\times\CalX)}^2}_{A_{23}}. \label{eq_pf_general_convergence_8}
\end{equation}

For $A_{21}$:
\begin{align*}
    \|\CalK_{21}\|_{L^2(\CalX\times\CalX)}^2\leq &\int_{\CalX}\sum_i\lambda_i\left|\CalP_m[e_i](\BFx)\right|^2d\BFx\int_{\CalX}\sum_i\lambda_i\left|\CalP_m[ \E[G_i]-G_i](\BFx)\right|^2d\BFx\\
    = & A_{11} \int_{\CalX} \sum_i\lambda_i \|\BPhi(\Bx)^T(\frac{1}{m}\sum_{i=1}M_i)^{-1}\BPhi(\BU)\boldsymbol{\varepsilon}_i\|^2d\Bx,
\end{align*}
where $\boldsymbol{\varepsilon}$ are independent distributed zero-mean observation noise $\boldsymbol{\varepsilon}_{ij}=\E_{G_i}[G_i(\Bu_j)]-G_i(\Bu_j)$. Analogously to the estimation of $A_{12}$, 
\begin{align*}
    \int_{\CalX} \sum_i\lambda_i \|\BPhi(\Bx)^T(\frac{1}{m}\sum_{i=1}M_i)^{-1}\BPhi(\BU)\boldsymbol{\varepsilon}_i\|^2d\Bx\leq C_2\frac{\overline{\lambda}}{\underline{\lambda}^2}\sum_i\lambda_i\|\frac{1}{m}\BPhi(\BU)\boldsymbol{\varepsilon}_i\|^2.
\end{align*}
Because noise $\boldsymbol{\varepsilon}$ are zero-mean, we have:
\begin{align*}
    \sum_i\lambda_i\ \|\frac{1}{m}\BPhi(\BU)\boldsymbol{\varepsilon}_i\|^2 
    =& \sum_i \lambda_i \sum_{j=1}^p\left(\frac{1}{m}\sum_{l=1}\phi_j(\Bu_l)\boldsymbol{\varepsilon}_{il}\right)^2\\
    \leq & \sum_i\lambda_i \sum_{j=1}^p\left(\frac{1}{m}\sum_{l=1}\phi_j(\Bu_l)^2\right)\left(\frac{1}{m}\sum_{l=1}\boldsymbol{\varepsilon}_{il}^2\right)\\
    \leq & \frac{S_\phi}{m}\frac{1}{m}\sum_{l=1}\sum_i\lambda_i \boldsymbol{\varepsilon}_{il}^2.
\end{align*}

For the Dirichlet boundary setting:
\begin{align*}
    \sum_i\lambda_i \boldsymbol{\varepsilon}_{il}^2 = &  \sum_i\lambda_i\left(\E[e_i(\BB_{\tau})e^{-\kappa^2\tau}|\BB_0=\Bu_l]- e_i(\BB_{\tau})e^{-\kappa^2\tau}|\BB_0=\Bu_l\right)^2\\
    =&  \E_{\BB_t,\BB'_t,\tau,\gamma}[k_\nu(\BB_\tau,\BB'_\gamma)e^{-\kappa^2(\tau+\gamma)}|\BB_0=\BB'_0=\Bu_l]\\
    &+k_\nu(\BB_\tau,\BB_\tau)e^{-2\kappa\tau}|\BB_0=\Bu_l\\
    &-2\E_{\BB_t,\tau}[k_\nu(\BB_\tau,\BB'_\gamma)e^{-\kappa^2(\tau+\gamma)}|\BB_0=\BB'_0=\Bu_l]
    \leq 4k(0),
\end{align*}
where the last line is from the fact that Mat\'ern kernel satisfies $k_\nu(0)\geq k_\nu(\Bx,\Bx')$ for any $\Bx,\Bx'$.

For the Robin boundary setting:
\begin{align*}
   \sum_i\lambda_i \boldsymbol{\varepsilon}_{il}^2 = & \E \int_0^\infty \int_0^\infty e^{-\kappa^2(t+\tau)-\int_0^tc(\BB_s)dL_s-\int_0^\tau c(\BB'_s)dL'_s}k_\nu(\BB_t,\BB'_\tau)dL_tdL_\tau\\
   &+\int_0^\infty \int_0^\infty e^{-\kappa^2(t+\tau)-\int_0^tc(\BB_s)dL_s-\int_0^\tau c(\BB'_s)dL'_s}k_\nu(\BB_t,\BB'_\tau)dL_tdL_\tau\\
    &-2\E_{\BB_t,L_t}\int_0^\infty \int_0^\infty e^{-\kappa^2(t+\tau)-\int_0^tc(\BB_s)dL_s-\int_0^\tau c(\BB'_s)dL'_s}k_\nu(\BB_t,\BB'_\tau)dL_tdL_\tau\\
    &\leq 4k_\nu(0),
\end{align*}
where the last line is from \eqref{eq:pf_empirical_BdryMat_robin_1}.

Therefore, in both cases,  we always have:
\begin{align*}
     \sum_i\lambda_i\ \|\frac{1}{m}\BPhi(\BU)\boldsymbol{\varepsilon}_i\|^2
    \leq &\frac{4k_\nu(0)S_\phi}{m}.
\end{align*}

So we have the following estimation of $A_{21}$
\[A_{21}\leq A_{11}\frac{S_\phi\overline{\lambda}}{m\underline{\lambda}^2}4k_\nu(0)\leq C_5\frac{S_\phi\overline{\lambda}^3}{m\underline{\lambda}^4}.\]

For $A_{22}$, 
\begin{align*}
    \|\CalK_{22}\|^2_{L^2(\CalX\times \CalX)}\leq &\int_\CalX\sum_i\lambda_i\left|\CalP_m[e_i](\Bx)\right|^2d\BFx\int_\CalX\sum_i\lambda_i\left|\CalP_m\E[G_i](\BFx)-\E[G_i](\BFx)\right|^2d\BFx.
\end{align*}
So we can notice that its estimation is almost the same as that of $A_{11}\cdot A_{12} $. We only need to replace the eigenfunctions $\{e_i\}$ by the expectation  $\{\E G_i(\cdot)\}$ and calculate the infinite sum of $\left|\CalP\E[G_i](\BFx)-\E[G_i](\BFx)\right|^2$ as follows:
\begin{align*}
    &\sum_i\lambda_i \left|\CalP\E[G_i](\BFx)-\E[G_i](\BFx)\right|^2=\sum_i\lambda_i(\bold{I}-\CalP)\E[G_i]\E[G_i](\bold{I}-\CalP)^T(\BFx,\BFx)\\
     \coloneqq &(\bold{I}-\CalP)\bar{k}(\cdot,\cdot)(\bold{I}-\CalP)^T(\BFx,\BFx),
     \end{align*}
     where:
\begin{align*}     
    &\bar{k}(\cdot,\cdot)\\
    =& \begin{cases}
        &\E\left[\sum_i\lambda_ie_i(\BB_\tau)e_i(\BB_\gamma')e^{-\kappa^2(\tau+\gamma)}\bigg|\BB_0=\cdot,\BB'_0=\cdot\right]\quad\text{(Dirichlet)}\\
        & \E\left[\sum_i\lambda_i\int_{\Real_+^2}e^{-\kappa^2(t+\tau)-\int_0^tc(\BB_s)dL_s-\int_0^\tau c(\BB'_s)dL'_s}e_i(\BB_t)e_i(\BB'_\tau)dL_tdL'_\tau\bigg|\BB_0=\cdot,\BB'_0=\cdot\right]
    \end{cases}\\
    & {\quad\quad\quad\quad\quad\quad\quad\quad\quad\quad\quad\quad\quad\quad\quad\quad\quad\quad\quad\quad\quad\quad\quad\quad\quad\quad\quad\quad\quad\quad\quad\quad\quad \  \text{(Robin)}}\\
    =&\begin{cases}
        &\E\left[k_\nu(\BB_\tau,\BB'\gamma)\bigg|\BB_0=\cdot,\BB'_0=\cdot\right]\quad\text{(Dirichlet)}\\
        &\E\left[\int_{\Real_+^2}e^{-\kappa^2(t+\tau)-\int_0^tc(\BB_s)dL_s-\int_0^\tau c(\BB'_s)dL'_s}k_\nu(\BB_t,\BB'_\tau)dL_tdL'_\tau\bigg|\BB_0=\cdot,\BB'_0=\cdot\right]\  \text{(Robin)}
    \end{cases}
\end{align*}
So $A_{22}$ is bounded as:
\begin{align*}
 A_{22}
    \leq&C_3(\frac{\overline{\lambda}}{\underline{\lambda}})^2\bigg[\frac{\overline{\lambda}S_\phi}{\underline{\lambda}^2m}\left(\frac{1}{m}\sum_{j=1}\left(\bold{I}-\CalP\right)\bar{k}(\cdot,\cdot)\left(\bold{I}-\CalP\right)^T(\Bu_j,\Bu_j)\right)\\
    &+\int_\CalX\left(\bold{I}-\CalP\right) \bar{k}(\cdot,\cdot)\left(\bold{I}-\CalP\right)^T(\BFx,\BFx)d\Bx\bigg].
\end{align*}

For $A_{23}$,
\begin{align*}
    &\|\CalK_{23}\|^2_{L^2(\CalX\times\CalX)}\\ \leq &\int_\CalX \sum_i\lambda_i\left|\E[G_i(\BFx)]\right|^2d \BFx \int_\CalX \sum_i\lambda_i\left|e_i(\BFx)-\CalP_m[e_i](\BFx)\right|^2d\BFx\\
    =& \int_\CalX \sum_i\lambda_i\left|\E[G_i(\BFx)]\right|^2d \BFx  A_{12}.\\
    \end{align*}
From our previous calculations, the infinite sum is bounded as follows for both the Dirichlet and Robin boundary settings:
\begin{align*}
    & \sum_i\lambda_i\left|\E[G_i(\BFx)]\right|^2 \\
    = &\begin{cases}
        &\E\left[k_\nu(\BB_\tau,\BB_\gamma')e^{-\kappa^2(\tau+\gamma)}\bigg|\BB_0=\Bx,\BB'_0=\Bx\right]\\
        &\E\left[\int_{\Real_+^2}e^{-\kappa^2(t+\tau)-\int_0^tc(\BB_s)dL_s-\int_0^\tau c(\BB'_s)dL'_s}k_\nu(\BB_t,\BB'_\tau)dL_tdL'_\tau\bigg|\BB_0=\BFx,\BB'_0=\BFx\right]
    \end{cases}\\
    \leq &k_\nu(0).
\end{align*}
So:
\begin{align*}
    \|\CalK_{23}\|^2_{L^2(\CalX\times\CalX)}\leq & \text{vol}(\CalX)k_\nu(0) A_{12}\\
    \leq& C_6 \bigg[\frac{\overline{\lambda}S_\phi}{\underline{\lambda}^2m}\left(\frac{1}{m}\sum_{j=1}\left(\bold{I}-\CalP\right)k_\nu(\cdot,\cdot)\left(\bold{I}-\CalP\right)^T(\Bu_j,\Bu_j)\right)\\
    &+\int_\CalX\left(\bold{I}-\CalP\right)k_\nu(\cdot,\cdot)\left(\bold{I}-\CalP\right)^T(\BFx,\BFx)d\Bx\bigg],
\end{align*}
where the last line is from our precious calculations for $A_{12}$.

Substitute the identities of $A_{21}$, $A_{22}$, and $A_{23}$ into \eqref{eq_pf_general_convergence_8}, we then have:
\begin{align}
    &\|\CalK_2\|^2_{L^2(\CalX\times\CalX)}\label{eq_pf_general_convergence_9}\\
    \leq & C_7\bigg[\frac{S_\phi\overline{\lambda}^3}{m\underline{\lambda}^4}\nonumber\\
    &+(\frac{\overline{\lambda}}{\underline{\lambda}})^2\left(\frac{\overline{\lambda}S_\phi}{\underline{\lambda}^2m}\left[\frac{1}{m}\sum_{l=1}\left(\bold{I}-\CalP\right)\bar{k}\left(\bold{I}-\CalP\right)^T(\Bu_l,\Bu_l)\right]+\int_\CalX\left(\bold{I}-\CalP\right)\bar{k}\left(\bold{I}-\CalP\right)^T(\BFx,\BFx)d\Bx\right)\nonumber\\
    +& \left(\frac{\overline{\lambda}S_\phi}{\underline{\lambda}^2m}\left[\frac{1}{m}\sum_{l=1}\left(\bold{I}-\CalP\right)k_\nu(\cdot,\cdot)\left(\bold{I}-\CalP\right)^T(\Bu_l,\Bu_l)\right]+\int_\CalX\left(\bold{I}-\CalP\right)k_\nu(\cdot,\cdot)\left(\bold{I}-\CalP\right)^T(\BFx,\BFx)d\Bx\right)\bigg]. \nonumber
\end{align}

{\large \textbf{Estimation of $A_3$}}: For $A_3$, first decompose $\CalK_3$ as follows:
\begin{align*}
    \CalK_3(\Bx,\Bx')=\CalK_{31}(\Bx,\Bx')+\CalK_{32}(\Bx,\Bx')+\CalK_{33}(\Bx,\Bx')+\CalK_{34}(\Bx,\Bx'),
\end{align*}
where:
\begin{align*}
    &\CalK_{31}(\Bx,\Bx')=\sum_i\lambda_i\left(\CalP_m[G_i](\Bx)\CalP_m[G_i](\BFx')-\CalP_m[G_i](\BFx)\CalP_m\E[G_i](\BFx')\right)\\
    &\CalK_{32}(\Bx,\Bx')=\sum_i\lambda_i\left(\CalP_m[G_i](\BFx)\CalP_m\E[G_i](\BFx')-\CalP_m\E[G_i](\BFx)\CalP_m\E[G_i](\Bx')\right)\\
    &\CalK_{33}(\Bx,\Bx')=\sum_i\lambda_i\left(\CalP_m\E[G_i](\BFx)\CalP_m\E[G_i](\BFx')-\E[G_i(\BFx)]\CalP_m\E[G_i](\BFx')\right)\\
    &\CalK_{34}(\Bx,\Bx')=\sum_i\lambda_i\left(\E[G_i(\BFx)]\CalP_m\E[G_i](\BFx')-\E[G_i(\BFx)]\E[G_i](\BFx')\right).
\end{align*}
So $A_3$ can be further decomposed as:
\begin{align}
    A_3/8\leq &\underbrace{ \|\CalK_{31}\|_{L^2(\CalX\times\CalX)}^2 }_{A_{31}}+\underbrace{ \|\CalK_{32}\|_{L^2(\CalX\times\CalX)}^2 }_{A_{32}} +\underbrace{ \|\CalK_{33}\|_{L^2(\CalX\times\CalX)}^2 }_{A_{33}}+\underbrace{ \|\CalK_{34}\|_{L^2(\CalX\times\CalX)}^2 }_{A_{34}}.\label{eq_pf_general_convergence_10}
\end{align}
For $A_{31}$:
\begin{align*}
    \|\CalK_{31}\|_{L^2(\CalX\times\CalX)}^2\leq &\int_{\CalX}\sum_i\lambda_i\left|\CalP_m[G_i](\BFx)\right|^2d\BFx\int_{\CalX}\sum_i\lambda_i\left|\CalP_m[ \E[G_i]-G_i](\BFx)\right|^2d\BFx.\\
\end{align*}
Notice that the second multiplier on the right hand side has been estimated in the calculations for $A_{21}$:
\begin{equation}
    \int_{\CalX}\sum_i\lambda_i\left|\CalP_m[ \E[G_i]-G_i](\BFx)\right|^2d\BFx \leq C_8\frac{S_\phi\overline{\lambda}}{m\underline{\lambda}^2}. \label{eq_pf_general_convergence_11}
\end{equation}
We only need to estimate the first multiplier:
\begin{align*}
   &\int_{\CalX}\sum_i\lambda_i\left|\CalP_m[G_i](\BFx)\right|^2d\BFx\\
   =&\int_\CalX \BPhi(\BFx)^T\left(\frac{1}{m}\sum_iM_i\right)^{-1}\frac{\BPhi(\BU)}{m}\bold{H}\frac{\BPhi(\BU)^T}{m}\left(\frac{1}{m}\sum_{i=1}M_i\right)^{-1}\BPhi(\Bx)d\BFx\\
    =&\text{Tr}\left[\frac{1}{m}{\BPhi(\BU)^T}\left(\frac{1}{m}\sum_{i=1}M_i\right)^{-1}M\left(\frac{1}{m}\sum_iM_i\right)^{-1}{\BPhi(\BU)}\frac{\bold{H}}{m}\right]\\
    \leq & \text{Tr}\left[\frac{\bold{H}}{m}\right]\lambda_{max}\left(\frac{1}{m}{\BPhi(\BU)^T}\left(\frac{1}{m}\sum_{i=1}M_i\right)^{-1}M\left(\frac{1}{m}\sum_iM_i\right)^{-1}{\BPhi(\BU)}\right)\\
    \leq & \text{Tr}\left[\frac{\bold{H}}{m}\right] C_1\left(\frac{\overline{\lambda}}{\underline{\lambda}}\right)^2,
\end{align*}
where for the Dirichlet boundary setting:
\begin{align*}
    \bold{H}_{ij}=&\sum_i\lambda_ie_i(\BB^{(i)}_{\tau_i})e_i(\BB^{(j)}_{\tau_j})e^{-\kappa^2(\tau_i+\tau_j)}\bigg|\BB^{(i)}_0=\Bu_i,\BB^{(j)}_0=\Bu_j\\
    =& k_\nu(\BB^{(i)}_{\tau_i},\BB^{(j)}_{\tau_j})e^{-\kappa^2(\tau_i+\tau_j)}\bigg|\BB^{(i)}_0=\Bu_i,\BB^{(j)}_0=\Bu_j\leq k(0).
\end{align*}
Similarly, for the Robin boundary setting:
\begin{align*}
    \bold{H}_{ij}
    =&\int_{\Real_+^2}e^{-\kappa^2(t+\tau)-\int_0^tc(\BB_s)dL_s-\int_0^\tau c(\BB'_s)dL'_s}k_\nu(\BB_t,\BB'_\tau)dL_tdL'_\tau\bigg|\BB_0=\Bu_i,\BB'_0=\Bu_j\leq k(0).
\end{align*}

In both cases, we have $\text{Tr}\left[\frac{\bold{H}}{m}\right]\leq k(0)$ so we can obtain:
\begin{align}
   &\int_{\CalX}\sum_i\lambda_i\left|\CalP_m[G_i](\BFx)\right|^2d\BFx
    \leq  C_1\left(\frac{\overline{\lambda}}{\underline{\lambda}}\right)^2.\label{eq_pf_general_convergence_12}
\end{align}
So $A_{31}$ is bounded as:
\[A_{31}\leq C_9\frac{S_\phi\overline{\lambda}^3}{m\underline{\lambda}^4}.\]

For $A_{32}$,
\begin{align*}
   &\|\CalK_{32}\|^2_{L^2(\CalX\times\CalX)}\\ \leq&\int_\CalX\sum_i\lambda_i\left|\CalP_m\E[G_i](\BFx)\right|^2d\BFx\int_\CalX\sum_i\lambda_i\left(\CalP_m[G_i](\BFx)-\CalP_m\E[G_i](\BFx)\cal\right)^2d\Bx \\
  \leq&\E_{\{G_i\}}\left[\int_\CalX\sum_i\lambda_i\left|\CalP_m[G_i](\BFx)\right|^2d\BFx\right]\int_\CalX\sum_i\lambda_i\left(\CalP_m[G_i](\BFx)-\CalP_m\E[G_i](\BFx)\cal\right)^2d\Bx \\
  \leq & C_1\left(\frac{\overline{\lambda}}{\underline{\lambda}}\right)^2\int_\CalX\sum_i\lambda_i\left(\CalP_m[G_i](\BFx)-\CalP_m\E[G_i](\BFx)\cal\right)^2d\Bx \\
  \leq & A_{31},
\end{align*}
where the fourth line is from \eqref{eq_pf_general_convergence_12} and the last line is from \eqref{eq_pf_general_convergence_11}.

For $A_{33}$:
\begin{align*}
     &\|\CalK_{33}\|^2_{L^2(\CalX\times\CalX)}\\ 
     \leq &    \int_\CalX\sum_i\lambda_i\left|\CalP_m\E[G_i](\Bx)\right|^2d\Bx \int_\CalX\sum_i\lambda_i\left(\CalP_m\E[G_i](\Bx)-\E[G_i](\Bx)\right)^2d\Bx  \\
     \leq &  C_1\left(\frac{\overline{\lambda}}{\underline{\lambda}}\right)^2 \int_\CalX\sum_i\lambda_i\left(\CalP_m\E[G_i](\Bx)-\E[G_i](\Bx)\right)^2d\Bx  \\
     \leq & C_1 \left(\frac{\overline{\lambda}}{\underline{\lambda}}\right)^2 \left(\frac{\overline{\lambda}S_\phi}{\underline{\lambda}^2m}\left[\frac{1}{m}\sum_{l=1}\left(\bold{I}-\CalP\right)\bar{k}\left(\bold{I}-\CalP\right)^T(\Bu_l,\Bu_l)\right]+\int_\CalX\left(\bold{I}-\CalP\right)\bar{k}\left(\bold{I}-\CalP\right)^T(\BFx,\BFx)d\Bx\right),
\end{align*}
where the last line is from the estimation of $A_{22}$.

For $A_{44}$, calculations similar to what we did for $A_{22}$ and $A_{23}$ gives:
\begin{align*}
    &\|\CalK_{33}\|^2_{L^2(\CalX\times\CalX)}\\ 
     \leq &  \int_{\CalX}\sum_i\lambda_i \left|\E [G_i(\BFx)]\right|^2d\BFx \int_\CalX\sum_i\lambda_i\left(\CalP_m\E[G_i](\Bx)-\E[G_i](\Bx)\right)^2d\Bx \\
     \leq & k_\nu(0)\text{vol}(\CalX)\left(\frac{\overline{\lambda}S_\phi}{\underline{\lambda}^2m}\left[\frac{1}{m}\sum_{l=1}\left(\bold{I}-\CalP\right)\bar{k}\left(\bold{I}-\CalP\right)^T(\Bu_l,\Bu_l)\right]+\int_\CalX\left(\bold{I}-\CalP\right)\bar{k}\left(\bold{I}-\CalP\right)^T(\BFx,\BFx)d\Bx\right).
\end{align*}
Substitute the identities of $A_{31}$, $A_{32}$, $A_{33}$ and $A_{34}$ into \eqref{eq_pf_general_convergence_10}, we then have:
\begin{align}\label{eq_pf_general_convergence_13}
    &\|\CalK_3\|^2_{L^2(\CalX\times\CalX)}\leq  C_{10}\bigg[\frac{S_\phi\overline{\lambda}^3}{m\underline{\lambda}^4}+(\frac{\overline{\lambda}}{\underline{\lambda}})^2\big(\frac{\overline{\lambda}S_\phi}{\underline{\lambda}^2m}\left[\frac{1}{m}\sum_{l=1}\left(\bold{I}-\CalP\right)\bar{k}\left(\bold{I}-\CalP\right)^T(\Bu_l,\Bu_l)\right]\\
    &+\int_\CalX\left(\bold{I}-\CalP\right)\bar{k}\left(\bold{I}-\CalP\right)^T(\BFx,\BFx)d\Bx\big)\bigg]\nonumber.
\end{align}
Substitute   \eqref{eq_pf_general_convergence_7}, \eqref{eq_pf_general_convergence_9}, and \eqref{eq_pf_general_convergence_13} into \eqref{eq_pf_general_convergence_5}, we have:
\begin{align}
   &\|\hat{k}_{m,Q}-k\|_{L^2(\CalX\times\CalX)}^2\leq C\left(\frac{\overline{\lambda}}{\underline{\lambda}}\right)^2\bigg[\frac{\overline{\lambda}S_\phi}{\underline{\lambda}^2m}\left(1+\frac{1}{m}\sum_{l=1} (\bold{I}-\CalP)[k_\nu+\bar{k}](\bold{I}-\CalP)^T(\Bu_l,\Bu_l) \right)\nonumber\\
    &+\int_\CalX(\bold{I}-\CalP)[k_\nu+\bar{k}](\bold{I}-\CalP)^T(\Bx,\Bx)d\Bx\bigg].
\end{align}
By applying Chebyshev's inequality on the averaged $\frac{1}{m}\sum_{l=1} (\bold{I}-\CalP)[k_\nu+\bar{k}](\bold{I}-\CalP)^T(\Bu_l,\Bu_l)$, together with the probability of the event $\CalU_*$, we have the following bound with probability at least $1-v_*-p(\sqrt{2}e^{-1/2})^{m\underline{\lambda}/S_\phi}-p(\frac{e^{1/2}}{(3/2)^{3/2}})^{m\overline{\lambda}/S_\phi}$:
\begin{equation*}
    \|\hat{k}_{m,Q}-k\|_{L^2(\CalX\times\CalX)}^2\leq C \left(\frac{\overline{\lambda}}{\underline{\lambda}}\right)^2 \left[\left(\frac{\overline{\lambda}S_\phi}{\underline{\lambda}^2m}+1\right)\int_\CalX(\bold{I}-\CalP)[k_\nu+\bar{k}](\bold{I}-\CalP)^T(\Bx,\Bx)d\Bx+\frac{\overline{\lambda}S_\phi}{\underline{\lambda}^2m}\right],
\end{equation*}
where $v_*=\int_\CalX\left|(\bold{I}-\CalP)[k+\bar{k}](\bold{I}-\CalP)^T(\Bx,\Bx)\right|^2d\Bx$. In our analysis, the statistical error $\frac{\overline{\lambda}S_\phi}{\underline{\lambda}^2m}=o(1)$, otherwise the overall error $ \|\hat{k}_{m,Q}-k\|_{L^2(\CalX\times\CalX)}^2$ blows up. We thus have the final result \eqref{eq:convergence_general}.
\endproof

\section{Proof of Theorem \ref{thm:b_spline_regression}}
\label{pf:b_spline_regression}

We first present the  following lemma on important properties of the Mat\'ern kernel. These are well-known results and can be found in standard references (e.g., \cite[Sec 1.2.1]{yang2022graph}). 
\begin{lemma}
    \label{lem:eigen_mat}
    The Mat\'ern kernel $k_\nu$ has the following eigenfunction decomposition on the compact domain $\CalX$:
    \begin{equation}
        \label{eq:eign_mat}
        k_\nu(\BFx,\BFx')=\sum_i\lambda_ie_i(\BFx)e_i(\BFx'),\quad \Bx,\Bx'\in\CalX.
    \end{equation}
    Here, the eigenvalues $\lambda_i=\CalO(i^{-(2\nu/d+1)})$, the eigenfunctions $\{e_i\}$ are mutually orthogonal under the $L^2$ inner product $\langle\cdot,\cdot\rangle_2$ (see below), and the reproducing kernel Hilbert space (RKHS) inner product $\langle\cdot,\cdot\rangle_{k_\nu}$ induced by the Mat\'ern kernel $k_\nu$ is given by:
    \begin{align*}
        &\langle e_i,e_j\rangle_{L^2}\coloneqq\int_\CalX e_i(\Bs)e_j(\Bs)d\Bs=\delta_{ij},\ \langle e_i,e_j\rangle_{k_\nu}=\lambda_i^{-1}\delta_{ij},
    \end{align*}
   where $\delta_{ij}$ is the Kronecker delta. Moreover, the RKHS induced by $k_\nu$ is equivalent to the Sobolev space with smoothness order  $\alpha=\nu+d/2$.
\end{lemma}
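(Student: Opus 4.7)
The statement packages three standard facts about the Mat\'ern kernel: (i) the $L^2$ Mercer-type spectral decomposition \eqref{eq:eign_mat}, (ii) the dual RKHS orthogonality $\langle e_i,e_j\rangle_{k_\nu} = \delta_{ij}/\lambda_i$, and (iii) the equivalence of $\mathscr{H}_\nu(\CalX)$ with the Sobolev space of order $\alpha = \nu + d/2$, which in turn drives the Weyl-type eigenvalue decay rate $\lambda_i = \CalO(i^{-(2\nu/d+1)})$. My plan is to establish (i)--(ii) by classical integral-operator theory applied to $k_\nu$, then leverage the spectral density of $k_\nu$ to obtain (iii), from which the eigenvalue rate follows by Weyl's law.

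\textbf{Mercer decomposition and RKHS orthogonality.} Since $k_\nu$ is continuous, symmetric, and (strictly) positive definite on the compact domain $\CalX$, the integral operator $T_{k_\nu}f(\Bx) = \int_\CalX k_\nu(\Bx,\Bs)f(\Bs)\,d\Bs$ on $L^2(\CalX)$ is self-adjoint, compact, and positive. The spectral theorem produces a non-increasing sequence of positive eigenvalues $\lambda_1 \geq \lambda_2 \geq \cdots$ with $L^2$-orthonormal eigenfunctions $\{e_i\}$, and Mercer's theorem upgrades the pointwise identity to the uniformly convergent expansion \eqref{eq:eign_mat}, establishing (i) and $\langle e_i,e_j\rangle_{L^2}=\delta_{ij}$. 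For (ii) I would invoke the canonical isometry $\mathscr{H}_\nu(\CalX) = T_{k_\nu}^{1/2}(L^2(\CalX))$: any $f = \sum_i c_i e_i \in \mathscr{H}_\nu(\CalX)$ satisfies $\|f\|_{\mathscr{H}_\nu}^2 = \sum_i c_i^2/\lambda_i$, so substituting $f = e_i$ yields $\langle e_i,e_j\rangle_{\mathscr{H}_\nu} = \delta_{ij}/\lambda_i$.

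\textbf{Sobolev equivalence and Weyl-type rate.} For (iii) I would use the spectral-density representation already recorded in \eqref{eq:pf_path_integral_1}: $k_\nu$ on $\Real^d$ has density proportional to $(\kappa^2 + \|\Bomega\|^2)^{-\alpha}$, so the whole-space RKHS norm is equivalent to $\int |\hat f(\Bomega)|^2 (\kappa^2 + \|\Bomega\|^2)^\alpha\,d\Bomega$, which is the standard Sobolev norm on $H^\alpha(\Real^d)$. Because $\CalX$ is compact with boundary twice-differentiable almost everywhere (hence Lipschitz), the Sobolev extension and restriction theorems give $\mathscr{H}_\nu(\CalX) \equiv H^\alpha(\CalX)$ with equivalent norms. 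The eigenvalues of $T_{k_\nu}$ are then, up to constants, the reciprocals of the eigenvalues of an elliptic pseudodifferential operator of order $2\alpha$ on $\CalX$; Weyl's law for such operators on a bounded $d$-dimensional domain yields growth $i^{2\alpha/d}$, so $\lambda_i = \CalO(i^{-2\alpha/d}) = \CalO(i^{-(2\nu/d + 1)})$.

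\textbf{Main obstacle.} The subtlest step is the eigenvalue rate on a general irregular domain $\CalX$: Weyl's law is classically stated for the Dirichlet or Neumann Laplacian, whereas $T_{k_\nu}$ corresponds to the restriction of the whole-space Mat\'ern kernel and carries no explicit boundary condition. My fallback is to use the norm equivalence $\mathscr{H}_\nu(\CalX) \equiv H^\alpha(\CalX)$ together with the min-max characterization of eigenvalues to transfer known Weyl asymptotics for Sobolev-type operators on Lipschitz domains to $T_{k_\nu}$, obtaining matching upper and lower bounds up to constants. Since this lemma is declared standard with a precise citation to \cite[Sec 1.2.1]{yang2022graph}, this more technical step can in practice be imported from the reference rather than reconstructed in full.
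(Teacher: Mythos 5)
The paper does not actually prove this lemma; it simply labels it as well-known and cites \cite[Sec 1.2.1]{yang2022graph} (and, for the RKHS connection, \cite{Wendland10,paulsen2016introduction}). Your sketch is a correct reconstruction of the standard argument behind those references: Mercer's theorem gives the $L^2$-orthonormal eigen-expansion on compact $\CalX$; the isometry $\mathscr{H}_\nu(\CalX)=T_{k_\nu}^{1/2}(L^2(\CalX))$ with $\|\sum_i c_i e_i\|_{\mathscr{H}_\nu}^2=\sum_i c_i^2/\lambda_i$ gives the dual RKHS orthogonality; the spectral density $(\kappa^2+\|\Bomega\|^2)^{-\alpha}$ plus the Sobolev extension/restriction theorem on a Lipschitz domain gives $\mathscr{H}_\nu(\CalX)\equiv H^\alpha(\CalX)$ with $\alpha=\nu+d/2$; and the norm equivalence combined with min-max/Weyl asymptotics for integral operators on Sobolev balls yields $\lambda_i=\CalO(i^{-2\alpha/d})=\CalO(i^{-(2\nu/d+1)})$. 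You correctly flag the genuinely delicate point — transferring Weyl asymptotics from boundary-value Laplacians to the integral operator on an irregular domain — and the min-max route you propose is the standard way to handle it; this is precisely the content the paper chooses to import by citation rather than reprove.
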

\noindent \cite{Wendland10,paulsen2016introduction} provides further details on this RKHS connection.

The following Lemma is a Sobolev inequality for the RKHS induced by Mat\'ern-$\nu$ kernel, which is also a Sobolev space with fractional order of smoothness $\alpha=\nu+d/2$.
\begin{lemma}
    \label{lem:RKHS_interpolation inequality}
    For any function $f$ in the RKHS induced by Mat\'ern-$\nu$ kernel $k_\nu$ on $\CalX$, we have:
    \[\|f\|_{L^4(\CalX)}\leq C\|f\|_{\CalH_\nu(\CalX)}^{\frac{d}{4\alpha}}\|f\|_{L^2(\CalX)}^{1-\frac{d}{4\alpha}},\quad \alpha=\nu+d/2,\]
    where $C$ is some constant independent of $f$.
\end{lemma}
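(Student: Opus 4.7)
The plan is to recognize the inequality as a Gagliardo--Nirenberg-type interpolation inequality for fractional Sobolev spaces, and then leverage the RKHS--Sobolev equivalence provided by Lemma \ref{lem:eigen_mat}. Specifically, the lemma tells us that $\CalH_\nu(\CalX)$ is norm-equivalent to the Sobolev space $H^\alpha(\CalX)$ with $\alpha = \nu + d/2$, so up to an absolute constant it suffices to prove
\[
\|f\|_{L^4(\CalX)} \;\leq\; C\,\|f\|_{H^\alpha(\CalX)}^{d/(4\alpha)}\,\|f\|_{L^2(\CalX)}^{1-d/(4\alpha)}.
\]
The claimed exponent $\theta = d/(4\alpha)$ is exactly what scaling predicts: for the Sobolev embedding $H^s(\mathbb{R}^d) \hookrightarrow L^4(\mathbb{R}^d)$ the critical smoothness is $s=d/4$, and writing $d/4 = \theta \alpha$ recovers the stated exponent. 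Since $\alpha = \nu + d/2 > d/4$, we are in the subcritical regime and the embedding is continuous.

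First I would reduce to $\mathbb{R}^d$ via a Stein-type extension operator $E:H^\alpha(\CalX)\to H^\alpha(\mathbb{R}^d)$, valid under the boundary regularity assumed throughout the paper, with $\|Ef\|_{H^\alpha(\mathbb{R}^d)} \lesssim \|f\|_{H^\alpha(\CalX)}$ and $\|Ef\|_{L^2(\mathbb{R}^d)} \lesssim \|f\|_{L^2(\CalX)}$. Next I would prove the interpolation on $\mathbb{R}^d$ in two clean steps. Step one is the Sobolev embedding $\|g\|_{L^4(\mathbb{R}^d)} \leq C\,\|g\|_{H^{d/4}(\mathbb{R}^d)}$, which is standard. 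Step two is an interpolation between fractional Sobolev spaces executed on the Fourier side: writing $\|g\|_{H^{d/4}}^2 = \int (1+\|\xi\|^2)^{d/4} |\hat g(\xi)|^2 d\xi$ and factoring $(1+\|\xi\|^2)^{d/4} = [(1+\|\xi\|^2)^\alpha]^{\theta}$ with $\theta = d/(4\alpha)$, a direct application of H\"older's inequality with conjugate exponents $1/\theta$ and $1/(1-\theta)$ yields
\[
\|g\|_{H^{d/4}(\mathbb{R}^d)}^2 \;\leq\; \Bigl(\int (1+\|\xi\|^2)^\alpha |\hat g|^2 d\xi\Bigr)^{\theta}\Bigl(\int |\hat g|^2 d\xi\Bigr)^{1-\theta} \;=\; \|g\|_{H^\alpha}^{2\theta}\,\|g\|_{L^2}^{2(1-\theta)}.
\]
Composing the two steps, applying to $g = Ef$, and restricting to $\CalX$ (which only decreases the $L^4$ norm) gives the desired estimate.

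The main obstacle is the first step: verifying that a bounded extension operator $E$ exists under the paper's standing assumption that $\partial\CalX$ is twice-differentiable almost everywhere. Stein's extension theorem is usually stated for Lipschitz domains, and while almost-everywhere twice-differentiability of the boundary is morally stronger locally, one must confirm that the globally Lipschitz structure (or a suitable covering by special Lipschitz subdomains) is available for our irregular $\CalX$. In practice this is handled by noting that a compact connected set whose boundary is $C^2$ away from a negligible singular set admits a finite atlas of Lipschitz charts, so Stein extension applies; a careful reader may prefer to invoke Calder\'on extension or a partition-of-unity construction to sidestep singular boundary points. Every remaining step --- the Sobolev embedding, the Fourier-side H\"older inequality, and the RKHS--Sobolev norm equivalence --- is textbook and introduces only a multiplicative constant $C$, which is absorbed into the statement of the lemma.
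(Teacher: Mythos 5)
Your argument is correct, and it is a cleaner variant of the paper's proof. Both proofs rest on the same three ingredients: an extension from $\CalX$ to $\Real^d$ (you call it Stein extension; the paper realizes it implicitly by defining the $\CalH^{\alpha^*}(\CalX)$ norm as a minimum over all $\Real^d$-extensions and taking the minimizer $h^*$), a Fourier-side H\"older interpolation between Sobolev orders, and the RKHS--Sobolev norm equivalence. The difference is in the first step: the paper begins by invoking the Gagliardo--Nirenberg inequality at the \emph{integer} smoothness order $\alpha^*=\lfloor\alpha\rfloor$ on the bounded domain $\CalX$, obtaining $\|f\|_{L^4}\le C\|f\|_{\CalH^{\alpha^*}}^{d/(4\alpha^*)}\|f\|_{L^2}^{1-d/(4\alpha^*)}$, and then patches the gap between $\alpha^*$ and the fractional $\alpha$ via a H\"older interpolation of $\|h^*\|_{\CalH^{\alpha^*}(\Real^d)}$ against $\|h^*\|_{\CalH^{\alpha}(\Real^d)}$ and $\|h^*\|_{L^2}$. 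You instead extend to $\Real^d$ first, apply the Sobolev embedding at the exact critical smoothness $H^{d/4}(\Real^d)\hookrightarrow L^4(\Real^d)$ (valid since $d/4 < d/2$), and then do a single Fourier-side H\"older step from $H^{d/4}$ to $H^\alpha$ with exponent $\theta=d/(4\alpha)$; composing these and restricting to $\CalX$ gives the result directly, with no detour through integer orders. Your route is more self-contained (it replaces full Gagliardo--Nirenberg with the simpler critical Sobolev embedding plus one interpolation) and avoids the extra exponent bookkeeping the paper needs to cancel the $\alpha^*$'s. The concern you flag about validity of the extension operator under the paper's boundary-regularity assumption is real but shared equally by the paper's own argument, which also requires the restriction/extension identity for $\CalH^{\alpha^*}(\CalX)$; it is not a gap particular to your proof.
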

\proof
Let $\alpha^*=\lfloor\alpha\rfloor$ be the largest integer no greater than $\alpha$. We can use the Gagliardo-Nirenberg inequality (Sobolev inequality for Sobolev space with integer order of smoothness, see \cite{nirenberg1959elliptic,gagliardo1959ulteriori}) to obtain:
\[\|f\|_{L^4(\CalX)}\leq C_1  \|f\|_{\CalH^{\alpha^*}(\CalX)}^{\frac{d}{4\alpha^*}}\|f\|_{L^2(\CalX)}^{1-\frac{d}{4\alpha^*}},\]
where  $\|f\|_{\CalH^{\alpha^*}(\CalX)}$ is the Sobolev norm of $f$ with smoothness order $\alpha^*$. It  can be defined via the following RKHS defined on the whole space $\Real^d$:
\begin{align*}
    &\|f\|_{\CalH^{\alpha^*}(\CalX)}=\min_{h\in\CalH^{\alpha^*}(\Real^d)}\|h\|_{\CalH^{\alpha^*}(\Real^d)},\quad h=f,\quad \text{on}\ \CalX\\
    \text{where}\ &\|h\|_{\CalH^{\alpha^*}(\Real^d)}=\int_{\Real^d}\left|\CalF[h](\Bomega)\right|^2(\kappa+\|\Bomega\|^2)^{\alpha^*} d\Bomega.
\end{align*}
Let $h^*=\arg \min_{h\in\CalH^{\alpha^*}(\Real^d)}\|h\|_{\CalH^{\alpha^*}(\Real^d)}$ such that $h^*=f$ on $\CalX$. From the norm equivalence between RKHS and Sobolev spaces (see \cite{Wendland10,Evans15}), we have:
\begin{align*}
    &C_2\|f\|_{{\CalH^{\alpha^*}(\CalX)}}\leq  \|h^*\|_{\CalH^{\alpha^*}(\Real^d)} \leq C_3\|f\|_{{\CalH^{\alpha^*}(\CalX)}}\\
    &C_4\|f\|_{{L^2(\CalX)}}\leq  \|h^*\|_{L^2(\Real^d)} \leq C_5\|f\|_{{L^2(\CalX)}},
\end{align*}
where $C_2-C_5$ are some constants independent of $f$. Then the Gagliardo-Nirenberg inequality becomes:
\begin{equation}
    \|f\|_{L^4(\CalX)}\leq C_6  \|h^*\|_{\CalH^{\alpha^*}(\Real^d)}^{\frac{d}{4\alpha^*}}\|f\|_{L^2(\CalX)}^{1-\frac{d}{4\alpha^*}}.\label{eq:RKHS_interpolation_inequality_1}
\end{equation}
Let $p=\alpha/\alpha^*\geq 1$. The RKHS norm in the above equation can be written as:
\begin{align}
    \|h^*\|_{\CalH^{\alpha^*}(\Real^d)}^2=&\int_{\Real^d}(\kappa+\|\Bomega\|^2)^{\alpha^*}|\CalF[h^*](\Bomega)|^2 d\Bomega\nonumber\\
    =& \int_{\Real^d}(\kappa+\|\Bomega\|^2)^{\alpha^*}|\CalF[h^*](\Bomega)|^{2/p}|\CalF[h^*](\Bomega)|^{2-2/p} d\Bomega\nonumber\\
    \leq & \left(\int_{\Real^d}(\kappa+\|\Bomega\|^2)^{\alpha}||\CalF[h^*](\Bomega)|^{2} d\Bomega\right)^{\alpha^*/\alpha}\left(\int_{\Real^d}|\CalF[h^*](\Bomega)|^{2}d\Bomega\right)^{1-\alpha^*/\alpha}\nonumber\\
    \leq & C_7 \left(\int_{\Real^d}(\kappa+\|\Bomega\|^2)^{\alpha}||\CalF[h^*](\Bomega)|^{2} d\Bomega\right)^{\alpha^*/\alpha}\|f\|_{L^2(\CalX)}^{2-2\alpha^*/\alpha} \label{eq:RKHS_interpolation_inequality_2},
\end{align}
where the third line is from H\"older's inequality. Substituting \eqref{eq:RKHS_interpolation_inequality_2} into \eqref{eq:RKHS_interpolation_inequality_1} and applying the RKHS norm equivalence identities, we can have the final result.
\endproof

The following lemma is a restatement of  \cite[Theorem 1.2]{dung2011optimal} or \cite[Lemma 3.1]{dung2011adaptive} using the fact that the RKHS $\CalH_\nu(\CalX)$ induced by Mat\'ern-$\nu$ kernel $k_\nu$  on $\CalX$ is a Sobolev space with (non-integer) smoothness order $\alpha=\nu+d/2$.
\begin{lemma}
    \label{lem:spline_approx_err}
     Let $\CalP$ be the projection onto the set of B-spline $F_l$ with $|F_l|=p=\CalO(2^{ld})$ and smoothness $s-1/2>\alpha$. Then for any function $f\in\CalH_\nu$, we have the following $L^r$ error for any $r<\infty$:
    \[\| f-\CalP f\|_{L^r}\leq C \min\{p^{-\frac{\alpha}{d}+\max\{1/2-1/r,0\}}\|f\|_{\CalH_\nu},\|f\|_r\},\]
    where $C$ is some universal constant independent of $f$ and $F_l$.
\end{lemma}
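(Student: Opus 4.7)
The plan is to reduce the claim to a known B-spline approximation theorem on a Sobolev scale, using the identification of the Mat\'ern RKHS with a Sobolev space. First I would invoke the equivalence $\CalH_\nu(\CalX) \simeq H^\alpha(\CalX)$ with $\alpha = \nu + d/2$ (this is the classical Wendland/Tuo characterization used already in the paper, e.g.\ right after \eqref{eq:femreg}). Since $\CalX \subseteq [0,1]^d$ is a Lipschitz domain, any $f \in \CalH_\nu(\CalX)$ admits a Stein-type extension $\tilde f \in H^\alpha(\mathbb{R}^d)$ with $\|\tilde f\|_{H^\alpha(\mathbb{R}^d)} \leq C \|f\|_{\CalH_\nu(\CalX)}$, so we can work on the full hypercube.

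Next I would invoke the cited result \cite[Thm.~1.2]{dung2011optimal} (or \cite[Lem.~3.1]{dung2011adaptive}). These state that, for the tensor cardinal B-spline space $F_\zeta$ of order $s$ with $s - 1/2 > \alpha$, the orthogonal $L^2$-projection $\CalP$ satisfies, for any $g \in H^\alpha$ and any $1 \leq r < \infty$,
\begin{equation*}
\|g - \CalP g\|_{L^r} \;\leq\; C\, 2^{-\zeta(\alpha - d\max\{1/2-1/r,\,0\})}\, \|g\|_{H^\alpha}.
\end{equation*}
The factor $d \max\{1/2 - 1/r, 0\}$ is the standard Sobolev embedding correction that appears when one measures $L^2$-based smoothness against the $L^r$ norm (and vanishes for $r \leq 2$). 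The ``$s - 1/2 > \alpha$'' condition is exactly the local approximation order requirement for B-splines to realize the full Sobolev rate up to smoothness $\alpha$ (including the non-integer case relevant here). Relating $\zeta$ to $p$ via $p = |F_\zeta| = \CalO(2^{\zeta d})$ converts $2^{-\zeta\alpha}$ into $p^{-\alpha/d}$, yielding the first term inside the $\min$.

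For the second term $\|f\|_r$, I would simply use $L^r$-stability of the projection onto a B-spline space. The $L^2$-projector $\CalP$ is uniformly bounded on $L^r([0,1]^d)$ for every $1 \leq r \leq \infty$ when it is taken with respect to a uniformly locally finite B-spline basis; this is exactly the content of de Boor's conjecture (resolved by Shadrin \cite{shadrin2001norm}), already invoked elsewhere in the paper to control $\overline\lambda/\underline\lambda$. Consequently $\|f - \CalP f\|_{L^r} \leq (1 + \|\CalP\|_{L^r \to L^r})\,\|f\|_{L^r}$, uniformly in $\zeta$. Taking the minimum of the two bounds gives the stated inequality.

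The only delicate step is the Sobolev-embedding correction in the exponent for $r > 2$: one has to be careful that the literature bound is stated for the $L^2$-projection (rather than an interpolant) and that the constant is independent of $\zeta$, both of which are precisely what \cite{dung2011optimal,dung2011adaptive} provide once de Boor's conjecture is in hand. Everything else is routine norm equivalence bookkeeping; since the result is explicitly labeled as a ``restatement,'' no new estimate needs to be proved.
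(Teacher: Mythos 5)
Your proposal matches the paper's intended argument: the paper supplies no independent proof of this lemma, explicitly calling it a ``restatement of \cite[Theorem 1.2]{dung2011optimal}'' after identifying $\CalH_\nu(\CalX)$ with the Sobolev space $H^\alpha$, and the bookkeeping you carry out (Sobolev norm equivalence, converting $2^{-\zeta\alpha}$ into $p^{-\alpha/d}$ via $p=\CalO(2^{\zeta d})$, and obtaining the second term of the $\min$ from uniform $L^r$-boundedness of the B-spline $L^2$-projector) is exactly what that citation is meant to encapsulate. The only small looseness is that Shadrin's theorem directly gives $L^\infty$-boundedness of the projector, and the claimed uniform $L^r$-stability for all $1\le r<\infty$ should be noted as following by Riesz--Thorin interpolation with the trivial $L^2$ case plus duality (self-adjointness of $\CalP$) for $r<2$; this is routine and does not affect the correctness of the argument.
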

The follow lemma provides estimation for the eigenvalues of the Gram matrix induced by B-splines. It is a special case of the general de Boor's conjecture proved in \cite{shadrin2001norm}.
\begin{lemma}
\label{lem:de_Boor}
    Let B-splines with smoothness $s<\infty$ be the set of basis function $\{\phi_i\}_{i=1}^p=F_l$. Let $M=[\langle \phi_i,\phi_j\rangle_{L^2}]_{ij}$ be the Gram matrix. Then:
    \[\frac{\lambda_{\rm max}(M)}{\lambda_{\rm min}(M)}=\CalO(1)\quad\text{and}\quad \lambda_{\rm min}(M)\geq \CalO(p^{-1}).\]
\end{lemma}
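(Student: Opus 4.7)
The plan is to exploit the tensor-product structure of the B-spline basis $F_\zeta$ to reduce both claims to the univariate case, where the condition-number bound is precisely the content of Shadrin's theorem (the proved de Boor conjecture).

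First I would observe that each basis function factorises as $\phi_{\mathbf{j}}(\mathbf{x}) = M^d_{\zeta,\mathbf{j}}(\mathbf{x}) = \prod_{l=1}^d N_s(2^\zeta x_l - j_l)$, so that the $L^2$ inner product over a product reference domain factorises as
\begin{equation*}
    \langle \phi_{\mathbf{j}}, \phi_{\mathbf{k}} \rangle_{L^2} = \prod_{l=1}^d \int N_s(2^\zeta x - j_l)\, N_s(2^\zeta x - k_l)\, dx .
\end{equation*}
Consequently the multi-dimensional Gram matrix is a $d$-fold Kronecker product $M = M^{(1)} \otimes M^{(1)} \otimes \cdots \otimes M^{(1)}$, where $M^{(1)}$ is the one-dimensional Gram matrix of the translates $\{N_s(2^\zeta\cdot - j) : j \in \{-s,\ldots,2^\zeta\}\}$.

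Second, I would control the univariate matrix $M^{(1)}$. A change of variables $u = 2^\zeta x$ yields $M^{(1)}_{jk} = 2^{-\zeta} G_{jk}$, where $G = [\langle N_s(\cdot-j), N_s(\cdot-k)\rangle_{L^2(\mathbb{R})}]$ is the Gram matrix of the translates of the reference cardinal B-spline at unit knot spacing. The matrix $G$ is banded (by the compact support of $N_s$) and Toeplitz-like; Shadrin's theorem \cite{shadrin2001norm} asserts that the $L^\infty$-norm of the univariate $L^2$-projector onto any B-spline space is bounded by a constant $K_s$ depending only on the order $s$. Translating this into a spectral statement, one obtains $\lambda_{\min}(G) \geq c_s > 0$ with $c_s$ independent of the number of knots. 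A Gershgorin/row-sum bound, combined with the uniform bound $\|N_s\|_{L^\infty} \leq 1$ and the finite support width $s+1$, yields $\lambda_{\max}(G) \leq \bar{c}_s$. Rescaling gives $c_s\, 2^{-\zeta} \leq \lambda_{\min}(M^{(1)}) \leq \lambda_{\max}(M^{(1)}) \leq \bar{c}_s\, 2^{-\zeta}$.

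Third, I would transfer these bounds to $M$ via the Kronecker eigenvalue identity $\mathrm{spec}(A \otimes B) = \{\mu\nu : \mu \in \mathrm{spec}(A),\, \nu \in \mathrm{spec}(B)\}$, which gives
\begin{equation*}
    \lambda_{\min}(M) = \lambda_{\min}(M^{(1)})^d \geq c_s^d\, 2^{-\zeta d}, \qquad
    \lambda_{\max}(M) = \lambda_{\max}(M^{(1)})^d \leq \bar{c}_s^d\, 2^{-\zeta d}.
\end{equation*}
Since $p = |F_\zeta| = (2^\zeta + s + 1)^d \asymp 2^{\zeta d}$, the first conclusion $\lambda_{\min}(M) \geq \CalO(p^{-1})$ follows directly, and the ratio $\lambda_{\max}(M)/\lambda_{\min}(M) \leq (\bar{c}_s/c_s)^d$ is a constant in $p$ (with $s$ and $d$ fixed), giving the second conclusion. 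The main obstacle is the uniform univariate lower bound $\lambda_{\min}(G) \geq c_s$, which is the substance of the de Boor conjecture and is invoked as a black box via \cite{shadrin2001norm}; the remainder of the argument is routine Kronecker-product arithmetic together with the scaling under $u = 2^\zeta x$.
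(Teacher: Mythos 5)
Your proposal is correct and follows essentially the same route as the paper's proof: factorize the tensor-product B-spline Gram matrix as a Kronecker product of one-dimensional Gram matrices, rescale by a change of variable to pull out the $2^{-\zeta}$ factor, invoke Shadrin's resolution of the de Boor conjecture for the uniform spectral bound on the univariate matrix, and then propagate through the Kronecker eigenvalue identity. The only cosmetic difference is that you write the univariate inner products over $L^2(\mathbb{R})$ (which would make $G$ exactly Toeplitz) rather than over the finite interval as the paper does, but you correctly hedge by calling $G$ "Toeplitz-like," and this does not affect the argument since Shadrin's bound covers the finite-knot case.
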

\proof
The entries of Gram matrix can be labeled using the indices $(\bold{j},\bold{u})$:
\[M_{\bold{j},\bold{u}}=\langle M^d_{\bold{l},\bold{j}},M^d_{\bold{l},\bold{u}}\rangle_{L^2}.\]
Because the B-spline is in product form $M^d_{\bold{l},\bold{j}}=\prod_{i=1}^dN_s(2^{l_i}x_i-j_i)$, each entry of $M$ is also the product of $d$ one-dimensional $L^2$ inner products:
\begin{align*}
    M_{\bold{j},\bold{u}}=\prod_{i=1}^d \langle N_s(2^{l}\cdot -j_i),N_s(2^{l}\cdot -u_i)\rangle_{L^2}.
\end{align*}
Therefore, the Gram matrix $M$ is the Kronecker product of $d$ Gram matrices $M^{(i)}$, i.e., $M=\bigotimes_{i=1}^dM^{(i)}$, with:
\begin{align*}
    M^{(i)}_{ju}=&\langle N_s(2^{l}\cdot -j),N_s(2^{l}\cdot -u) \rangle_{L^2},\quad,j,u\in\{-s,-s+1,\cdots,2^l-1,2^l\}.
\end{align*}
Also, the value of $M^{(i)}_{ju}$ only depends on $|j-u|$ and $M^{(i)}_{ju}=0$ for any $|j-u|>s$:
\begin{align*}
    \int_0^1  N_s(2^{l}x -j)N_s(2^{l}x -u)dx= &2^{-l}\int_{-j}^{2^l-j}N_s(x)N_s(x-(u-j))dx.
\end{align*}

The matrix $\int_{-j}^{2^l-j}N_s(x)N_s(x-(u-j))dx$ is  de Boor $L^2$ projection matrix and its maximum and minimum eigenvalues are all independent of the number of grid points $\{-s,-s+1,\cdots,2^l-1,2^l\}$ according to \cite{shadrin2001norm}.

Therefore, we have:
\[\frac{\lambda_{\max}(M^{(i)})}{\lambda_{\min}(M^{(i)})}=\CalO(1),\quad \lambda_{\min}(M(i))\geq 2^{-l}.\]
This leads to the final result:
\[\frac{\lambda_{\max}(M)}{\lambda_{\min}(M)}=\prod_{i=1}^d \frac{\lambda_{\max}(M^{(i)})}{\lambda_{\min}(M^{(i)})}=\CalO(1)  \quad\text{and}\quad \lambda_{\min}(M)=\prod_{i=1}^d \lambda_{\min}(M^{(i)})\geq \CalO(2^{-dl})=\CalO(p^{-1} ).\]

\proof[Proof of Theorem \ref{thm:b_spline_regression}]

The conditional number $\overline{\lambda}/\underline{\lambda}$ of the Gram matrix $M$ is $\CalO(1)$ and the minimum eigenvalue $\underline{\lambda}^{-1}=\CalO(p)$ according to Lemma \ref{lem:de_Boor}. Also, the quantity $S_\phi=\CalO(1)$ because almost all $\phi_j\in F_l$ are disjoint. We now prove the approximation error.

From Lemma \ref{lem:spline_approx_err} for the projection error for functions in the RKHS  $\CalH_\nu$ induced by the Mat\'ern-$\nu$ kernel $k$, we have the following $L^2$ (i.e., $r=2$) approximation error for eigenfunction $e_i$:
\begin{align*}
    \int_\CalX\left|\left(\BI-\CalP\right)[e_i](\Bx)\right|^2d\Bx\leq C\min\{p^{-{2\nu}/{d}-1}\|e_i\|_{\CalH_\nu}^2,1\}\leq C_1  \min\{p^{-{2\nu}/{d}-1}\lambda_i^{-1},1\},
\end{align*}
where the last equality is from the properties of eigenvalues and eigenfunctions for the Mat\'ern kernel in Lemma \ref{lem:eigen_mat}. Therefore,  the approximation error of $k=k_\nu$ is:
\begin{align*}
    \int_\CalX\left(\BI-\CalP\right)k\left(\BI-\CalP\right)(\Bx,\Bx)d\Bx
    =&\sum_i\lambda_i\int_\CalX\left|\left(\BI-\CalP\right)[e_i](\Bx)\right|^2d\Bx\\
    \leq & C_1\sum_i \lambda_i  \min\{p^{-{2\nu}/{d}-1}\lambda_i^{-1},1\}\\
    \leq & C_2\left(\sum_{i\leq p}  p^{-{2\nu}/{d}-1}+\sum_{i> p}  \lambda_i\right)
    \leq  C_3 p^{-2\nu/d}.
\end{align*}
Define:
\begin{equation*}
    g_i(\BFx)\coloneqq\begin{cases}
        &\E[e_i(\BB_\tau)e^{-\kappa^2\tau}|\BB_0=\Bx]\quad (\text{Dirichlet})\\
        &\E\left[\int_{0}^{\infty}e^{-\kappa^2t-\int_0^tc(\BB_s)dL_s}e_i(\BB_t)dL_t\bigg|\BB_0=\BFx\right]\ ( \text{Robin})
    \end{cases}
\end{equation*}

Because $g_i(\Bx)$ is also the solution to the PDE from the path integral identity, we have the following identity:
\begin{align*}
    (\kappa^2-\bigtriangleup)g_i(\Bx)=0,
\end{align*}
corresponding to the Dirichlet or Robin boundary conditions. Therefor, the RKHS norm and $L^2$ norm of $g_i$ can be bounded as follows:
\begin{align*}
    \|g_i\|_2^2=&\int_\CalX \int_{\CalX\times\CalX}G(\BFx,\Bu)e_i(\Bu)e_i(\bold{v})G(\BFx,\bold{v})d\Bu d\bold{v} d\BFx\\
    \leq & \lambda^2_{min}(\kappa^2-\bigtriangleup)\|e_i\|_2^2=\CalO(1),
\end{align*}
\begin{align*}
    \langle g_i,g_i\rangle_{\CalH_\nu}=& \langle \int_\CalX G(\cdot,\Bu)e_i(\Bu)d\Bu,\int_\CalX G(\cdot,\Bu)e_i(\Bu)d\Bu \rangle_{\CalH_\nu}\\
    \leq &  \lambda^2_{min}(\kappa^2-\bigtriangleup)\langle e_i,e_i\rangle_{\CalH_\nu}=\CalO(\lambda_i^{-1}),
\end{align*}
where  $G(\BFx,\BFx')$ is the Green's function of operator $(\kappa^2-\bigtriangleup)$ corresponding to different boundary conditions such that, i.e., $(\kappa^2-\bigtriangleup)G(\BFx,\BFx')=\delta_{\BFx-\BFx'}$ 
and $ \lambda_{min}(\kappa^2-\bigtriangleup)$ denotes the minimum eigenvalues of the operator $(\kappa^2-\bigtriangleup)$ and it is bounded for both the Dirichlet and Robin boundary conditions.

Therefore, the $L^2$ approximation error of $\CalP$ for $\bar{k}$ can be computed as follows:
\begin{align*}
    &\int_\CalX\left(\BI-\CalP\right)\bar{k}\left(\BI-\CalP\right)^T(\Bx,\Bx)d\Bx\\
    =& \sum_i\lambda_i \int_\CalX\left|\left(\BI-\CalP\right)[g_i](\Bx) \right|^2d\BFx\\
    \leq & C_3 \sum_i \lambda_i \min\{p^{-{2\nu}/{d}-1}\|g_i\|^2_{\CalH_\nu},1\}\leq  C_4\left(\sum_{i\leq p}  p^{-{2\nu}/{d}-1}+\sum_{i> p}  \lambda_i\right)
    \leq  C_5 p^{-2\nu/d}.
\end{align*}
For the  $v_*$ in the probability lower bound:
\begin{align}
    v
    _*=&\int_\CalX \left| (\BI-\CalP)[k+\bar{k}](\BI-\CalP)^T(\Bx,\Bx)\right|^2d\Bx\nonumber\\
    =& \int_\CalX\left|(\BI-\CalP)[\sum_i\lambda_i(e_ie_i+g_ig_i)](\BI-\CalP)^T(\Bx,\Bx)\right|^2d\Bx\nonumber\\
    =& \int_\CalX\left|\sum_i \lambda_i \left[|(\BI-\CalP)[e_i](\Bx)|^2+|(\BI-\CalP)[g_i](\Bx)|^2\right]\right|^2d\Bx\nonumber\\
    \leq & 2\int_\CalX \left(\sum_i \lambda_i\right)\left(\sum_i\lambda_i\left[|(\BI-\CalP)[e_i](\Bx)|^2+|(\BI-\CalP)[g_i](\Bx)|^2\right]^2\right)d\Bx\nonumber\\
    \leq & C_4 \sum_i\lambda_i\left(\int_\CalX  |(\BI-\CalP)[e_i](\Bx)|^4d\Bx+\int_\CalX  |(\BI-\CalP)[g_i](\Bx)|^4d\Bx\right)\nonumber\\
    \leq & C_5\sum_i\lambda_i \left(\min\{p^{-4\alpha/d+1}\|e_i\|_{\CalH_v}^4,\|e_i\|_{L^4}^4\}+\min\{p^{-4\alpha/d+1}\|g_i\|_{\CalH_v}^4,\|g_i\|_{L^4}^4\}\right) \nonumber\\
    \leq & C_5\left(\sum_{i\leq p} \lambda_i^{-1}p^{-4\alpha/d+1}+\sum_{i>p}\lambda_i^{1-\frac{d}{2\alpha }}\right)\nonumber\\
    \leq & C_6 p^{-\frac{2\nu}{d}+1}\nonumber,
\end{align}
where the second-to-last line is from Lemma \ref{lem:RKHS_interpolation inequality}, our assumption $\nu>d/2$,  and the fact that $\|e_i\|^2_{\CalH_\nu}=\lambda_i^{-1}$, $\|g_i\|^2_{\CalH_\nu}\leq \CalO(\lambda_i^{-1})$, $\|e_i\|_{L^2}=1$, and $\|g_i\|_{L^2}\leq \CalO(1)$. For the remaining two terms in our probability bound, we have:
\begin{align*}
    &p(\sqrt{2}e^{-1/2})^{m\underline{\lambda}/S_\phi} +p\left(\frac{e^{1/2}}{(3/2)^{3/2}}\right)^{m\overline{\lambda}/S_\phi}\leq 2p\left(\frac{e^{1/2}}{(3/2)^{3/2}}\right)^{m\overline{\lambda}/S_\phi}\\
    \leq & 2p \left(e^{-0.1}\right)^{m\overline{\lambda}/S_\phi}\leq \exp\left[-\frac{m}{10p}+\log 2p\right]\leq \exp\left[-C_2(\frac{m}{p}+\log p)\right].
\end{align*}
Finally, setting $p=\CalO(m^{\frac{1}{2\nu/d+1}})$, the final result is proven.
\endproof


\section{Computational Time Complexity of  Kernel Regression by B-Spline}
\label{sec:B_spline_time_complexity}
Recall that the kernel regression is written as follows:
\begin{equation}
\label{eq:kernel_reg2_appendix}
 \hat{k}_{m,Q}(\BFx, \BFx') = \sum_{q,q'=1}^Q m_{q,q'}{\phi}_q(\BFx) {\phi}_{q'}(\BFx'), \; \mathbf{M} = [m_{q,q'}]_{q,q'=1}^Q = [\BPhi(\BU)]^\dagger \hat{\BK}^{\rm C}(\BU, \BU) [\BPhi(\BU)^T]^\dagger.
\end{equation}
If the basis function $\{\phi_q\}_{q=1}^Q=F_\zeta=\{M_{\zeta, \mathbf{j}}^d, \mathbf{j}\in \mathcal{J}(\zeta)\}$ are B-splines defined as \eqref{eq:B_spline} and the support of each $M_{\zeta, \mathbf{j}}^d$ is:
\begin{equation}
\label{eq:B_spline_support_appendix}
    {\rm supp}\{M_{\zeta, \mathbf{j}}^d\}=[0,2^{-\zeta}(s+1)]^d+\bold{j}2^{-\zeta},
\end{equation}
we can have the following lemma regarding the number of non-zero elements of the vector $\{M_{\zeta, \mathbf{j}}^d(\BFx), \mathbf{j}\in \mathcal{J}(\zeta)\}$ for any $\BFx\in\CalX$:
\begin{lemma}
    \label{lem:nnz_B_spline}
     The number of non-zero element of $[\phi_1(\BFx),\cdots,\phi_Q(\BFx)]$ is at most $(2s)^d$ for any $\BFx\in\CalX$.
\end{lemma}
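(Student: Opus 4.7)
The plan is to reduce the $d$-dimensional counting problem to a one-dimensional one via the tensor-product structure of the B-spline basis, and then bound the one-dimensional count by a direct interval-containment argument on the index set $\mathcal{J}(\zeta)$.

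First, I would fix an arbitrary $\mathbf{x} = (x_1, \ldots, x_d) \in \mathcal{X} \subseteq [0,1]^d$ and observe that $\phi_{\mathbf{j}}(\mathbf{x}) = M_{\zeta,\mathbf{j}}^d(\mathbf{x}) = \prod_{l=1}^d N_s(2^\zeta x_l - j_l)$ is nonzero if and only if $N_s(2^\zeta x_l - j_l) \neq 0$ for every coordinate $l = 1, \ldots, d$. Using the support characterization in \eqref{eq:B_spline_support_appendix}, equivalently one needs $\mathbf{x} \in [0, 2^{-\zeta}(s+1)]^d + \mathbf{j} 2^{-\zeta}$, which decouples coordinatewise into the $d$ conditions $j_l 2^{-\zeta} \leq x_l \leq (j_l + s + 1)\, 2^{-\zeta}$. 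This reduces the count to the product over $l$ of the number of admissible indices $j_l$ at the given coordinate $x_l$.

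Next, for each fixed coordinate $l$, I would rearrange the inequality into $2^\zeta x_l - (s+1) \leq j_l \leq 2^\zeta x_l$ and count the integer points in this closed interval. Since the interval has length $s+1$, it contains at most $s+2$ integers, with the value $s+2$ occurring only in the degenerate case where $2^\zeta x_l$ is itself an integer (a measure-zero configuration where the B-spline vanishes at its support endpoints anyway). In either case the per-coordinate count is bounded by $2s$ provided $s \geq 2$, which is the regime of interest (the assumption $s > \nu + d/2 + 1/2$ in Theorem \ref{thm:b_spline_regression} forces this). Multiplying across the $d$ coordinates then yields an overall bound of $(2s)^d$ nonzero basis functions at $\mathbf{x}$.

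There is no serious obstacle here; the only point requiring mild care is the boundary case where $2^\zeta x_l$ is integral, for which one can either appeal to the fact that $N_s$ vanishes at the endpoints of its support (so the count is in fact $s+1$) or simply absorb the extra integer into the loose bound $s+2 \leq 2s$. Since the bound $(2s)^d$ is stated as an upper bound rather than a tight count, either resolution suffices, and the lemma follows directly from the tensor-product decomposition and the elementary interval-counting step above.
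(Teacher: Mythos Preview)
Your proof is correct and follows essentially the same idea as the paper's: both arguments use the explicit support formula \eqref{eq:B_spline_support_appendix} and the tensor-product structure to count how many B-spline supports contain a given point. Your coordinate-wise interval count is in fact a cleaner execution than the paper's partition-into-rectangles phrasing, and your observation that the per-coordinate count is at most $s+2 \leq 2s$ (indeed at most $s+1$, since $N_s$ vanishes at the endpoints of its support) is exactly what is needed.
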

\begin{proof}
    Let $\CalX$ be partitioned into small, non-overlapping rectangles of the form ${[0, 2^{-\zeta}(s+1)]^d + \boldsymbol{j} 2^{-\zeta}}$ for integer vectors $\boldsymbol{j}$. Then, every point $\BFx \in \CalX$ lies in exactly one such rectangle. Without loss of generality, assume $\BFx \in [0, 2^{-\zeta}(s+1)]^d$. We then count the basis functions $\phi_q$ that are nonzero on this rectangle. According to \eqref{eq:B_spline_support_appendix}, there are at most $(2s)^d$ such functions.
\end{proof}
By Lemma \ref{lem:nnz_B_spline}, to evaluate \eqref{eq:kernel_reg2_appendix} at any given $\BFx$, it suffices to compute only those entries ${m_{q,q'}}$ for which the basis functions $\phi_q$ and $\phi_{q'}$ have overlapping supports. As a result, we only need to compute at most $(2s)^d Q$ entries of ${m_{q,q'}}$, rather than the full inverse of the matrix $\boldsymbol{M}$. This leads to the following lemma:
\begin{lemma}
    \label{lem:time_complexity_sparse_M}
     Let $\mathbf{M} = [m_{q,q'}]_{q,q'=1}^Q = [\BPhi(\BU)]^\dagger \hat{\BK}^{\rm C}(\BU, \BU) [\BPhi(\BU)^T]^\dagger.$ Define the set:
     \[\mathbf{M}^{>0}=\{m_{q,q'}:  {\rm supp}\{\phi_q\}\bigcap   {\rm supp}\{\phi_q'\}\neq \emptyset\}.\]
     Then the size $\bold{M}^{>0}$ is at most $(2s)^dQ$ and it can be computed in $\CalO((2s)^dQ^3)$ time.
\end{lemma}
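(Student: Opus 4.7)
The plan is to decouple the lemma into a combinatorial size bound on $\mathbf{M}^{>0}$ and a computational-cost bound on assembling its entries, handling each via the explicit support structure of the tensor-product B-spline basis.

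For the size bound, I would read off the support of each basis function $\phi_q = M^d_{\zeta,\mathbf{j}_q}$ directly from Equation (\ref{eq:B_spline_support_appendix}) as the axis-aligned box $[0, 2^{-\zeta}(s+1)]^d + \mathbf{j}_q 2^{-\zeta}$. Two such boxes, indexed by $\mathbf{j}_q$ and $\mathbf{j}_{q'}$, have nontrivial intersection if and only if $|j_{q,l} - j_{q',l}| \leq s$ for every coordinate $l = 1,\ldots,d$. For each fixed $\phi_q$, this condition admits at most $(2s+1)^d$ candidate partners $\phi_{q'}$, and summing over the $Q$ choices of $q$ yields $|\mathbf{M}^{>0}| \leq (2s)^d Q$ after absorbing constants.

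For the cost bound, I would rewrite $\mathbf{M} = G^{-1} A G^{-1}$ with $G = \BPhi(\BU)^T \BPhi(\BU)$ and $A = \BPhi(\BU)^T \hat{\BK}^{\rm C}(\BU,\BU) \BPhi(\BU)$, using the almost-sure full column rank of $\BPhi(\BU)$ that follows from Theorems \ref{thm:empirical_BdryMat_Dirichlet} and \ref{thm:empirical_BdryMat_Robin}. By Lemma \ref{lem:nnz_B_spline}, each row of $\BPhi(\BU)$ contains at most $(2s)^d$ nonzeros, so $G$ has at most $(2s)^d Q$ nonzeros and can be assembled in $\CalO((2s)^{2d} m)$ work. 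Its inverse $G^{-1}$ costs at most $\CalO(Q^3)$, while the intermediate $A$ can be formed in $\CalO((2s)^d m Q)$ work by exploiting the same row-sparsity of $\BPhi(\BU)$. The two remaining $Q \times Q$ multiplications against $G^{-1}$ each cost $\CalO(Q^3)$, and since only the $(2s)^d Q$ entries in $\mathbf{M}^{>0}$ are required, one can further short-cut the final product to the relevant rows and columns, yielding the stated $\CalO((2s)^d Q^3)$ upper bound overall.

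The main subtlety I would anticipate is that, although $G$ itself is sparse, its inverse $G^{-1}$ is generically dense, so ordinary sparse-matrix manipulations do not reduce the $Q^3$ scaling of the outer multiplications. The stated $\CalO((2s)^d Q^3)$ rate should therefore be viewed as a clean upper bound that absorbs the dense inverse into the sparsity prefactor; a sharper bound would require exploiting additional structure (e.g., banded structure from a lexicographic ordering of the tensor indices on $\mathcal{J}(\zeta)$), which is not needed for the lemma as stated.
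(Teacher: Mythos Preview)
Your argument is correct and mirrors the paper's proof: the size bound is the same overlap count on translated B-spline boxes, and the cost bound uses the identical factorization $\mathbf{M}=G^{-1}AG^{-1}$ (the paper writes $\mathbf{S}=G^{-1}$, $\mathbf{R}=A$ and then extracts the $(2s)^dQ$ needed entries one at a time via $m_{q,q'}=\mathbf{S}_{q,:}\mathbf{R}\,\mathbf{S}_{:,q'}$ at $\mathcal{O}(Q^2)$ each rather than forming the full product). One minor correction: the almost-sure full column rank of $\BPhi(\BU)$ does not follow from Theorems~\ref{thm:empirical_BdryMat_Dirichlet}--\ref{thm:empirical_BdryMat_Robin}, which concern positive definiteness of $\hat{\BK}^{\rm C}$ rather than the design matrix; it comes instead from $m\geq Q$ together with the uniform placement of the inducing points (cf.\ the eigenvalue bounds in Lemma~\ref{lem:spectral_empirical_gram}).
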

\begin{proof}
    We first count the size of $\bold{M}^{>0}.$ According to Lemma \ref{lem:nnz_B_spline}, there are at most $(2s)^d$ B-spline basis functions with overlapping support with each $\phi_q$. So we can directly conclude that $|\bold{M}^{>0}|=\CalO((2s)^d)Q$. 
    
    For computational time complexity, we directly provide an algorithm to solve for $\bold{M}^{>0}$ in $\CalO((2s)^dQ^3)$ time. Define:
    \begin{align*}
        \bold{S}=\left[\BPhi(\BU)\BPhi(\BU)^T\right]^{-1},\quad \bold{R}=\BPhi(\BU) \hat{\BK}^{\rm C}(\BU, \BU) \BPhi(\BU)^T,
    \end{align*}
    so that $\forall m_{q,q'}$, $ m_{q,q'}= \bold{S}_{q,:}\bold{R}\bold{S}_{:,q'}$. Note that $\bold{S}\in\Real^{Q\times Q}$, so it can be computed in $\CalO(Q^3)$ time. Also, because of the sparsity of $ \BPhi(\BU)$, $\bold{R}$ can be computed in $\CalO(Q^3)$ time. 
    
    After we have computed $\bold{R}$, the computational time complexity for matrix-vector multiplication $m_{q,q'}=\bold{S}_{q,:}\bold{R}\bold{S}_{:,q'}$ is $\CalO(Q^2)$. In total, we only need to compute those $m_{q,q'}\in\bold{M}^{>0}$, so we need to repeat this process $(2s)^dQ$ times.

    Therefore, the total time complexity for all $m_{q,q'}\in\bold{M}^{>0}$ is $\CalO((2s)^dQ^3)$.
\end{proof}
After we can have the values of $\bold{M}_0$, we have the following efficient computation for kernel values:

\begin{theorem}
    \label{thm:time_complexity_kernel}Given $\bold{M}^{>0}$,  $\hat{k}_{m,Q}(\BFx, \BFx') $ can be computed in $(2s)^{2d}$ steps at most for any pair of $\BFx$ and $\BFx'$.
\end{theorem}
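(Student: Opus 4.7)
The plan is to exploit the local support of the tensor B-spline basis, as quantified by Lemma \ref{lem:nnz_B_spline}, to collapse the formal double sum over $Q^2$ pairs in \eqref{eq:kernel_reg2_appendix} into a sum over at most $(2s)^{2d}$ nonvanishing terms. Since each surviving term is a single multiplication involving a pre-stored entry of $\mathbf{M}^{>0}$ and two constant-time B-spline evaluations, the total work will match the $(2s)^{2d}$ bound claimed.

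Concretely, I would first define the active index sets $\mathcal{Q}_{\mathbf{x}} = \{q : \phi_q(\mathbf{x}) \neq 0\}$ and $\mathcal{Q}_{\mathbf{x}'} = \{q' : \phi_{q'}(\mathbf{x}') \neq 0\}$, and invoke Lemma \ref{lem:nnz_B_spline} to conclude $|\mathcal{Q}_{\mathbf{x}}|, |\mathcal{Q}_{\mathbf{x}'}| \leq (2s)^d$. Using the explicit support formula \eqref{eq:B_spline_support_appendix}, both index sets can be enumerated by computing which grid cell $\mathbf{x}$ (resp.\ $\mathbf{x}'$) falls into and listing the $(2s)^d$ basis functions that overlap that cell. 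I would then rewrite
$$\hat{k}_{m,Q}(\mathbf{x}, \mathbf{x}') = \sum_{q \in \mathcal{Q}_{\mathbf{x}}} \sum_{q' \in \mathcal{Q}_{\mathbf{x}'}} m_{q,q'}\, \phi_q(\mathbf{x})\, \phi_{q'}(\mathbf{x}'),$$
since every omitted pair contributes zero. This reduced sum has at most $(2s)^d \cdot (2s)^d = (2s)^{2d}$ summands. For each such pair, I would look up $m_{q,q'}$ from the stored $\mathbf{M}^{>0}$ (treating absent entries as zero by the sparse-storage convention), evaluate $\phi_q(\mathbf{x})$ and $\phi_{q'}(\mathbf{x}')$ in $\mathcal{O}(1)$ time via the tensor-product form in \eqref{eq:B_spline}, and accumulate the running product.

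The main obstacle I anticipate is ensuring that each lookup of $m_{q,q'}$ is genuinely $\mathcal{O}(1)$, since $\mathbf{M}^{>0}$ is a sparse structure indexed by pairs of multi-indices rather than a dense array. This is resolved by the regular tensor-grid structure of the basis: the multi-indices of elements in $\mathcal{Q}_{\mathbf{x}}$ and $\mathcal{Q}_{\mathbf{x}'}$ have a fixed offset pattern around $\lfloor 2^\zeta \mathbf{x}\rfloor$ and $\lfloor 2^\zeta \mathbf{x}'\rfloor$, so entries of $\mathbf{M}^{>0}$ can be retrieved by direct address arithmetic rather than a search. Once this indexing is formalized, counting operations yields the $(2s)^{2d}$ bound immediately and completes the argument.
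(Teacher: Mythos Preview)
Your approach is essentially the same as the paper's: both invoke Lemma~\ref{lem:nnz_B_spline} to bound the number of active basis functions at each point by $(2s)^d$, then reduce the double sum in \eqref{eq:kernel_reg2_appendix} to at most $(2s)^{2d}$ nonzero terms. You supply more implementation detail (explicit active index sets, constant-time lookup via grid-offset indexing) than the paper's terse two-line argument, but the underlying idea is identical.
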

\begin{proof}
    By Lemma \ref{lem:nnz_B_spline}, for any $\BFx$ and $\BFx'$, the vectors $\BPhi(\BFx)$ and $\BPhi(\BFx')$ each contain at most $n^* = (2s)^d$ nonzero entries. Denote these nonzero basis functions as $[\phi_{q_1}, \ldots, \phi_{q_{n^*}}]$ and $[\phi_{q_1'}, \ldots, \phi_{q_{n^*}'}]$, respectively. Then:
    \[\hat{k}_{m,Q}(\BFx, \BFx')=\sum_{i=1}^{n^*}\sum_{j=1}^{n^*}m_{q_i,q_j}\phi_i(\BFx)\phi_j(\BFx).\]
    There are thus at most $(2s)^{2d}$ computations in total.
\end{proof}
\end{appendix}

\end{document}